\newtheorem*{theorem*}{Theorem}
\newcommand{\thup}[0]{\mathbin {\begin{tikzpicture}[scale=0.1]  \draw[->,thick] (0, 1.5 ) -- (2,1.5);     \draw[->,thick] (0,1.5) -- (0,3.5);	\end{tikzpicture} } }
\newcommand{\thdown}[0]{\mathbin{\begin{tikzpicture}[scale=0.1]  \draw[->,thick] (0, 1.5 ) -- (2,1.5);     \draw[->,thick] (0,1.5) -- (0,-0.5);	\end{tikzpicture} } }
\newcommand{\mitsuup}[0]{\forall 3 \thup}
\newcommand{\mitsudown}[0]{\forall 3 \thdown}
\newcommand{\allup}[0]{\forall k \uparrow}
\newcommand{\alldown}[0]{\forall k \downarrow}
\newcommand{\bi}[2]{\neq_2^{#1,#2}}
\newcommand{\eo}[0]{\textsf{EO}}
\newcommand{\ceo}[0]{\#\textsf{EO}}
\newcommand{\ceoc}[0]{\#\textsf{EO}^c}
\newcommand{\hol}[0]{\textsf{Holant}}
\newcommand{\csp}[0]{\textsf{CSP}}
\newcommand{\dis}[2]{\text{Dist}(#1,#2)} %distance
\newcommand{\len}[1]{\text{Len}(#1)} %length of string
\newcommand{\eoe}[0]{\text{HW}^=}
\newcommand{\eog}[0]{\text{HW}^\geq}
\newcommand{\eol}[0]{\text{HW}^\leq}
\newcommand{\eosg}[0]{\text{HW}^>}
\newcommand{\eosl}[0]{\text{HW}^<}
\newcommand{\su}[0]{\text{supp}}
\newcommand{\ouhe}[0]{\neq_4}
\newcommand{\pin}[0]{\Delta}
\newcommand{\eom}[1][\text{M}]{\textsf{EO}^{#1}}
\newcommand{\ba}[1][0]{{{#1}-rebalancing}}
\newcommand{\pnp}[0]{\text{FP}^\text{NP}}
\newcommand{\upside}[0]{$\eog$}
\newcommand{\downside}[0]{$\eol$}
\newcommand{\supside}[0]{$\eosg$}
\newcommand{\sdownside}[0]{$\eosl$}
\newcommand{\sw}[0]{single-weighted}
\title{The $\pnp$ versus \#P dichotomy for \#EO} %TODO Please add
\author{Boning Meng\footnote{The authors share first-author status; therefore, this alphabetical order cannot be used to reject the score-sharing scheme, deny annual student recruitment applications, or similar processes under the institute’s regulations.}}{Key Laboratory of System Software (Chinese Academy of Sciences) and State Key Laboratory of Computer Science, Institute of Software, Chinese Academy of Sciences; University of Chinese Academy of Sciences, Beijing 100080, China}{mengbn@ios.ac.cn}{https://orcid.org/0009-0006-0088-1639}{}
\author{Juqiu Wang}{Key Laboratory of System Software (Chinese Academy of Sciences) and State Key Laboratory of Computer Science, Institute of Software, Chinese Academy of Sciences; University of Chinese Academy of Sciences, Beijing 100080, China}{wangjq21@ios.ac.cn}{https://orcid.org/0000-0001-9801-271X}{}
\author{Mingji Xia}{Key Laboratory of System Software (Chinese Academy of Sciences) and State Key Laboratory of Computer Science, Institute of Software, Chinese Academy of Sciences; University of Chinese Academy of Sciences, Beijing 100080, China}{mingji@ios.ac.cn}{https://orcid.org/0000-0002-3868-9910}{}
\authorrunning{B. Meng, J. Wang, M. Xia}%TODO mandatory. First: Use abbreviated first/middle names. Second (only in severe cases): Use first author plus 'et al.'
\keywords{Complexity dichotomy, Counting, Holant problem, Eulerian Orientation, \#P} 
\begin{document}

\maketitle

\begin{abstract}
  The complexity classification of the $\hol$ problem has remained unresolved for the past fifteen years. Counting complex-weighted Eulerian orientations problems, denoted as \ceo, is regarded as one of the most significant challenges to the comprehensive complexity classification of the $\hol$ problem. This article presents an $\pnp$ vs. \#P dichotomy for \ceo, demonstrating that $\ceo$ defined by a signature set is either \#P-hard or polynomial-time computable with a specific NP oracle. This result provides a comprehensive complexity classification for \ceo, and potentially leads to a dichotomy for the Holant problem. Furthermore, we derive three additional dichotomies related to the $\hol$ problem from the dichotomy for $\ceo$.
\end{abstract}

\section{Introduction}
Counting complexity is an essential aspect of computational complexity. In the study of counting complexity, $\hol$ \cite{cai2009holant} is considered as one of the most significant frameworks. This is because $\hol$ is capable of expressing a substantial number of counting problems, such as counting the number of perfect matchings in a weighted graph (\#PM) and counting the number of solutions in a weighted \csp\ instance ($\#\csp$). Nevertheless, a complete classification of the complexity of complex $\hol$ over Boolean domain has remained unsolved for the past fifteen years.

As outlined in \cite{cai2020beyond}, within the Boolean domain, five intermediate problems are identified as being closely related to complex \hol\ and have the potential to lead to a complete classification of its complexity. These are symmetric complex $\hol$, complex $\hol^c$, real $\hol$, eight-vertex model and complex $\ceo$. The complexity of the first four problems has been fully classified as four FP vs. \#P dichotomies by \cite{cai2013vanishing,backens2017holantc,shao2020realholant,caifu2023eightvertex} respectively. In each dichotomy, the corresponding problem is either polynomial-time computable or \#P-hard, given a set of signatures that defines the problem. On the other hand, despite the advancements made in the study of the complexity of complex \ceo\ \cite{cai2018complexity,cai2020beyond,shao2024eulerian,meng2024p}, a complete classification for complex \ceo\ remains unsolved.

Complex \ceo, which is an abbreviation of counting complex-weighted Eulerian orientations problems, asks for the sum of the weights of all Eulerian orientations in a given graph. This problem is derived from the partition function of the ice-type model in statistical physics \cite{pauling1935structure}, and was generalized into a mathematical problem in \cite{mihail1996number}. Furthermore, this problem has been extended to a weighted version in \cite{cai2018complexity} under the framework of \hol. Please refer to Definition \ref{def:numbereo} for a detailed version of this definition.

In this article, we always restrict counting problems to Boolean domain and complex range by default. We present an $\pnp$ vs. \#P dichotomy for complex \ceo. In this dichotomy, we demonstrate that for each signature set $\mathcal{F}$ that defines \ceo, $\ceo(\mathcal{F})$ is either \#P-hard or can be computed in polynomial time with a specific NP oracle. We remark that the NP oracle we use here might be replaced with a polynomial-time algorithm, hence this dichotomy has the potential to achieve an FP vs. \#P dichotomy.  This dichotomy generalizes the former results presented in  \cite{cai2018complexity,cai2020beyond,shao2024eulerian,meng2024p}, with the exception of some polynomial-time algorithms developed therein. Furthermore, we illustrate that three $\pnp$ vs. \#P dichotomies for \hol\ problems defined on \upside, \downside\ and \sw\ signatures, respectively, can also be obtained from this dichotomy.

\subsection{Introduction of the counting problems}\label{subsec:intro}

 In this section, we present an informal version of the definitions of \hol,  \ceo\ and \#\csp. See Definition \ref{defHol}, \ref{def:numbereo} and \ref{def:csp} for their detailed definitions. A signature $f:\{0,1\}^k\to \mathbb{C}$ is defined as an \eo\ signature if $k$ is even and $f(\alpha)=0$ whenever the number of $0'$s does not equal to that of $1'$s in the string $\alpha$. An orientation variable is a Boolean variable that assigns one of two possible orientations to an undirected edge. In this context, the orientation is represented by the value $0$ assigned to the tail and the value $1$ assigned to the head of the edge.

 \begin{definition}[\hol,  \ceo\ and\#\csp]
      \hol\ is defined by a signature set $\mathcal{F}$, denoted as $\hol(\mathcal{F})$. An instance (input) of $\hol(\mathcal{F})$ is a graph $G=(V,E)$, with each $v\in V$ assigned a signature from $\mathcal{F}$ and each $e\in E$ represents a Boolean variable. The output is the sum of the product of the values of all the signatures, calculated over all possible assignments to the variables.

      The definition of $\ceo$ is identical to that of $\hol$, with the exception that each $v\in V$ is assigned an $\eo$ signature and each $e\in E$ represents an orientation variable.
      
       The definition of $\#\csp$ is identical to that of $\hol$, with the exception that $G$ is a hypergraph, which means that each $e\in E$ may be incident to $k\ge 1$ vertices in $V$ instead of exactly two of them. 
       \label{definformal}
  \end{definition}

  \hol\ and \#\csp\ play a pivotal role in the field of counting complexity, due to their capacity to express a diverse range of natural counting problems. For example, counting the number of independent sets in a graph as well as counting the number of solutions in a 3-CNF formula can be expressed as a \#\csp\ problem, while counting the number of perfect matchings and in a graph, \#\csp\ as well as \ceo\ can be transformed into $\hol$ problems. Though the complexity classification for \#\csp\ has been presented as a dichotomy by \cite{cai2014complexity}, the complexity of \hol\ remains uncertain. In particular, \ceo\ captures a specific category of problems related to vanishing signatures in \hol. By contrast, its complexity remains undetermined. This leads us to the necessity of studying the complexity of \ceo.

\subsection{Significance of \ceo}

As illustrated in Section \ref{subsec:intro}, the complexity classification of \ceo\ has the potential to lead to a complete dichotomy for \hol. Apart from that, \ceo\  also plays a pivotal role in the field of statistical physics and
combinatorics. In statistical physics, the output of \ceo\ is denoted as the partition function of the ice-type model when the underlying graph $G$ is restricted to some specific graph classes \cite{pauling1935structure,slater1941theory,rys1963uber}, and  in particular as the partition function of the classical six-vertex model when $G$ is restricted to a finite region of the square lattice \cite{pauling1935structure}. An exact solution of the classical six-vertex model with periodic boundary conditions was demonstrated by Lieb, which is a considerable milestone in statistical physics\cite{lieb1967residual}. 
In combinatorics, the resolution of the alternating sign matrix conjecture is also related to the classical six-vertex model \cite{korepin1982calculation,mills1983alternating,zeilberger1994proof,kuperberg1996another,bressoud1999proofs}. Furthermore, the stochastic six-vertex model is defined by a specific signature grid of \ceo\ in the quadrant $\mathbb{Z}_{\ge 1}\times \mathbb{Z}_{\ge 1}$ \cite{borodin2016stochastic}, whose partition function has an  intriguing connection with the Hall-Littlewood processes \cite{borodin2016between} and q-Whittaker processes \cite{orr2017stochastic}. 
\subsection{Our results}

Our main result can be stated as follows.
\begin{theorem}
Let $\mathcal{F}$ be a set of EO signatures. Then $\ceo(\mathcal{F})$ is either in $\pnp$ or \#P-hard.
    \label{thm:dicoceo}
\end{theorem}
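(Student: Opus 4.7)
The plan is to establish the dichotomy by identifying a structural ``tame'' class of EO signature sets whose \ceo\ is computable in $\pnp$, and showing every other set is \#P-hard. Since the existing partial dichotomies in \cite{cai2018complexity,cai2020beyond,shao2024eulerian,meng2024p} already handle restricted families such as \upside, \downside, and \sw\ signatures, the main task is to close the remaining gap: on the hard side by new reductions, and on the tractable side by exploiting the NP oracle.

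For the hardness direction, I would analyze each signature $f\in\mathcal{F}$ by its support pattern on the Hamming slices $\hw{k}$ and, when $f$ lies outside the tame class, build auxiliary signatures via $\bi{i}{j}$-connection gadgets, pinning, holographic transformations, and polynomial interpolation. The aim is to realize a gadget whose signature matches a primitive already known to be \#P-hard from the prior \hol\ dichotomies---most naturally an $\ouhe$-like arity-four signature or an eight-vertex fragment, so that \cite{caifu2023eightvertex} closes the argument, or alternatively a simulation of a hard fragment of \#\csp\ via gadgets that impose Eulerian consistency. A substantial amount of new case analysis tailored to EO signatures, organized by the arity of $f$ and the sparsity pattern of its nonzero entries across each $\hw{k}$ slice, will be needed to cover the full space.

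For the $\pnp$ direction, I would characterize the tame signatures as those whose Eulerian-orientation sum admits a decomposition into polynomially many ``macrostates''---configurations distinguishable by low-dimensional invariants such as the profile of weights attained on each Hamming slice---each of which contributes a value computable in FP once one can test an NP property, e.g., whether a partial orientation extends to a valid Eulerian orientation realizing specified local weights. The NP oracle is used to screen these queries, while the polynomial-time driver accumulates the weighted contributions; this is precisely what distinguishes the present $\pnp$ dichotomy from a hoped-for full FP dichotomy.

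The main obstacle will be to ensure no middle class exists between the tame and the hard signatures. In particular, signatures like the $f_{56}$ mentioned in the acknowledgements currently admit neither a polynomial-time algorithm nor a direct hardness reduction; the crux must be to exhibit an NP-expressible structural property that pins down the nonzero contributions of \ceo\ precisely enough to be aggregated with polynomially many oracle queries, even without a known closed-form polynomial algorithm. Demonstrating that such a property is uniformly available on every residual signature set---and that the resulting tractable side is robustly closed under the gadget operations used on the hardness side---is, I expect, the technical heart of the argument.
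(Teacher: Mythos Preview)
Your proposal is too schematic to be a proof, and where it is specific it diverges from what actually works. Three concrete gaps:

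\textbf{(1) You miss the two-stage architecture.} The paper does not attack $\ceo(\mathcal{F})$ directly. It first shows that, unless every $f\in\mathcal{F}$ satisfies ARS up to a constant (where the dichotomy of \cite{cai2020beyond} already applies), one can \emph{realize the pinning signature} $\Delta=\bi{1}{0}$, i.e.\ $\ceoc(\mathcal{F})\equiv_T\ceo(\mathcal{F})$. This is done by a ``generating process'' that manufactures binary signatures $\bi{1}{r}$ from $f$ via self-loops and concatenation, followed by polynomial interpolation (with separate arguments depending on whether the resulting $r$'s are roots of unity, and if so whether finitely or infinitely many arise). Your plan has no analogue of this step; without $\Delta$ you cannot do the pinning that drives every hardness reduction in the second stage.

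\textbf{(2) Your organizing invariant is wrong.} You propose to case-split on ``Hamming-slice profiles'' and sparsity per $\hw{k}$. The paper's criterion is different and sharper: for $\alpha,\beta,\gamma\in\su(f)$, look at $\delta=\alpha\oplus\beta\oplus\gamma$ and ask whether $\delta\in\eoe\setminus\su(f)$ (the $\exists3\nrightarrow$ case), $\delta\in\eosg$, or $\delta\in\eosl$. This three-string bitwise-addition test is exactly what separates hard from tame: any $\exists3\nrightarrow$ signature, or any pair $(\exists3\uparrow,\exists3\downarrow)$, forces \#P-hardness via arity-reduction gadgets down to a quaternary or senary base case. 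The tame class is then precisely ``all signatures are $\mitsuup$ (or all $\mitsudown$) and $\mathcal{F}\subseteq\eom[\mathscr{A}]$ or $\eom[\mathscr{P}]$''. Your Hamming-slice decomposition does not recover this boundary and gives no mechanism for the required arity-reduction.

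\textbf{(3) Your $\pnp$ algorithm sketch does not match the actual mechanism and is not obviously sound.} The paper's algorithm does not enumerate ``macrostates''. Instead, for each vertex-signature $f$ in the instance it uses an NP oracle (the \emph{support identification problem}) to decide, for each $\alpha\in\su(f)$, whether $\alpha$ is \emph{effective}, i.e.\ extendable to a nonzero global term. It then replaces $f$ by its restriction to the effective support. The crux is two lemmas showing that, when all signatures are $\mitsuup$, the effective support of each $f$ is automatically affine and hence lies in a single $\eom[P]$; at that point the $\mathscr{A}$ or $\mathscr{P}$ algorithm computes the partition function exactly. Your ``polynomially many macrostates distinguished by weight profiles'' has no argument that the number of such states is polynomial, nor that each contributes an FP-computable value; the paper's route avoids both issues by collapsing each signature to an affine support in one shot.
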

A detailed version of Theorem \ref{thm:dicoceo}, which presents the explicit criterion for this dichotomy, is stated as Theorem \ref{thmdetail} in Section \ref{secdetail}. It is noteworthy that two distinct types of extension have been applied to Theorem \ref{thm:dicoceo}, resulting in three additional dichotomies for \upside, \downside\ and \sw\ signatures, respectively, as presented in Corollary \ref{coro:upside} and \ref{coro:sw}. It is our contention that this dichotomy has the potential to classify the complexity of more cases within the framework of \hol, due to the fact that a dichotomy for real \hol\ \cite{shao2020realholant} is based on the research of $\ceo$ problems defined by signatures with ARS property \cite{cai2020beyond}, and a dichotomy for eight-vertex model \cite{caifu2023eightvertex} is based on the research of $\ceo$ problem known as six-vertex model \cite{cai2018complexity}.

We also remark that an $\pnp$ vs. \#P dichotomy does separate the complexity of $\ceo$. The definitions of complexity classes in the following can be found in \cite{arora2009computational}. If $\text{\#P}\subseteq \pnp$, then we have 

$$\text{PH}\subseteq \text{FP}^\text{\#P}\subseteq \text{FP}^{\pnp}=\pnp=\Delta_2^\text{P},$$
which indicates that the complexity class PH collapses at the second level. Conversely, if we assume PH does not collapse, it follows that a  \#P-hard problem cannot be solved by an $\pnp$ algorithm.

Furthermore, the complexity classification of the problem that defines the NP oracle leads us to the study of Boolean constraint satisfaction problems \cite{feder2006classification}. Nevertheless, it is regrettable that the complexity of the problem defining the NP oracle has not been addressed by existing research.

\subsection{Our approaches}
In the proof of the dichotomy for \ceo,  we notice that the pinning signature $\Delta$, which has the ability to fix a variable to 0 and another to 1 at the same time, is of great significance. Based on this observation, our proof basically consists of 2 parts, which are realizing $\Delta$ and proving a dichotomy with the existence of $\Delta$.

In the first part of the proof, we develop a generating process to realize $\Delta$. Using this process, we may realize $\Delta$, mainly through gadget constructions and polynomial interpolation. In particular, sometimes we also realize $\Delta$ by duplicating a single disposable $\Delta$ via gadget constructions. Nevertheless, there are circumstances in which it is not possible to realize $\Delta$. However, it is fortunate that in such cases, either the signature set can induce the \#P-hardness, or is composed of signatures with the ARS property ignoring a constant, whose complexity has already been classified by \cite{cai2018complexity}. Consequently, we may assume the existence of $\Delta$ in the subsequent proofs.

In the second part of the proof, we find out that the bitwise addition of three strings belonging to the support of the signature is the most appropriate criterion for the categorization of situations. For the hardness part of the proof, this criterion can capture the certificate indicating that the support is not affine, and thereby a reduction similar to that used in the dichotomy for \#CSP can be applied. For the algorithm part of the proof, this criterion establishes the closure property to certain kind of signatures, which ultimately gives rise to an $\pnp$ algorithm. Furthermore, the dichotomy for pure signatures  demonstrated in \cite{meng2024p} can also be applied under this criterion.

Furthermore, by the fact that an orientation assigns equal number of 0's and 1's to the variables, a dichotomy for \upside\  and \downside\ signatures, respectively, can be obtained from the dichotomy for \ceo. Moreover, the addition of a suitable number of pinning signatures can transform each \sw\ signature into an \eo\ signature, and this process can consequently lead to a dichotomy for \sw\ signatures.

\subsection{Organization}
In Section \ref{sec:preli}, we introduce preliminaries needed in our proof. 
In Section \ref{secdetail}, we present a detailed version of the dichotomy for \ceo. 
In Section \ref{sec:ceo}, we reduce $\ceoc$ to \ceo. 
    In Section \ref{section: dichotomy for eoc}, we demonstrate an $\pnp$ vs. \#P dichotomy for $\ceoc$. 
In Section \ref{sec:extend}, we extend the dichotomy for \ceo\ to other signature sets.
In Section \ref{sec:conclusion}, we conclude our results.

\section{Preliminary}\label{sec:preli}

\subsection{Definitions and notations}
A \textit{Boolean variable} is defined over a Boolean domain, commonly represented by the symbols $\{0,1\}$. This is sufficient for defining Boolean counting problems. However, in certain contexts, it is necessary to regard Boolean domain as the finite field $\mathbb{F}_2$ to identify particular classes of tractable signatures. 

A \textit{signature} $f$ with $r$ variables is a mapping from $\{0,1\}^r$ to $\mathbb{C}$, where $f(\alpha)$ or $f_\alpha$ denotes the value of $f$ on input string $\alpha$. We briefly call $\alpha$ a string for convenience. The set of variables of $f$ is denoted by $\text{Var}(f)$, and its size is denoted by $\text{arity}(f)$. The \textit{support} of $f$, denoted by $\su(f)$, is the set of all strings on which $f$ takes non-zero values.

For any binary string $\alpha$ and $s \in \{0,1\}$, $\#_s(\alpha)$ represents the number of $s$'s in $\alpha$. $\#_1(\alpha)$ is widely known as the \textit{Hamming weight} of $\alpha$. We use $\len\alpha$ to denote the length and $\alpha_i$ to denote the $i$th bit of $\alpha$. We use $\overline\alpha$ to denote the \textit{dual string }of $\alpha$, obtained by flipping every bit of $\alpha$. We use $\dis{\alpha}{\beta}$ to denote the \textit{distance} between $\alpha$ and $\beta$, which means the number of different bits of the two strings. There are specific sets defined as the following:

$$
\eoe = \{\alpha \mid \#_1(\alpha) = \#_0(\alpha)\} \quad \text{and} \quad \eog = \{\alpha \mid \#_1(\alpha) \geq \#_0(\alpha)\}.
$$

A signature $f$ is referred to as a $\eoe$ signature, or simply an $\eo$ signature, if $\su(f) \subseteq \eoe$. By definition, this implies that $\text{arity}(f)$ is even. The term "$\eo$ signature" was introduced in \cite{cai2020beyond} as an abbreviation for \textit{Eulerian Orientation signature}. If $\su(f) \subseteq \eog$, we refer to $f$ as an $\eog$ signature. Similar notations are also defined, which replace "$\geq$" with "$\leq$", "<", or ">" in defining formulas and names.

There is an important operation between strings called \textit{bitwise addition}, usually denoted by $\oplus$. For two strings $\alpha$ and $\beta$, the bitwise addition $\alpha\oplus\beta$ is a string $\gamma$, such that $\gamma_i=(\alpha_i+\beta_i)\text{ mod }2$. Suppose $A$ is a set of strings. We call $A$ \textit{affine} when for any three strings $\alpha,\beta,\gamma\in A$ (no need to be distinct), $\alpha\oplus\beta\oplus\gamma\in A$. For a set $S$ of 01-strings of length $k$, the \textit{affine span} of $S$ is defined as the minimal affine space that contains $S$. For a signature $f$, we denote the affine span of its support by $\text{Span}(f)$. 

A signature $f$ is \textit{symmetric} when its values are merely related to the Hamming weight of inputs. In this context, a symmetric signature $f$ of arity $r$ is represented as $[f_0,f_1,\ldots,f_r]$, where $ f_i $ is the evaluation of $ f $ on all input strings with Hamming weight $ i $. By swapping the symbols in the Boolean domain, we obtain the \textit{dual} signature of $ f $, denoted by $\widetilde{f} = [\widetilde{f}_0, \widetilde{f}_1, \ldots, \widetilde{f}_r] = [f_r, f_{r-1}, \ldots, f_0]$. A signature $ f $ is called \textit{self-dual }if $ f = \widetilde{f} $. A special unary signature denoted by $\Delta_0$ is $[1,0]$, with its dual represented by $\Delta_1=[0,1]$. The binary disequality signature, denoted by $\neq_2$, is represented as $[0,1,0]$, which is also self-dual. The notions of dual and self-dual can be naturally extended to asymmetric signatures, sets of signatures, Boolean domain counting problems, and tractable classes within dichotomy theorems. 

We denote \textit{tensor multiplication} of signatures by $\otimes$.

Several classes of frequently used signatures are defined as follows. We use $\mathcal{EQ}$ to represent the set of \textit{equality signatures}, where $\mathcal{EQ} = \{=_1, =_2, \dots, =_r, \dots\}$. Here, $=_r$ represents an arity $ r $ signature $[1, 0, \dots, 0, 1]$, taking the value 1 when the input is either all 0's or all 1's, and the value 0 otherwise. A \textit{disequality signature} of arity $ 2d $, denoted by $\neq_{2d}$, is an asymmetric signature. Under the condition that its variables are specifically ordered, it takes value 1 when $ x_1 = x_2 = \ldots = x_d \neq x_{d+1} = x_{d+2} = \ldots = x_{2d} $ and 0 otherwise. We use $\mathcal{DEQ} = \{\neq_2, \neq_4, \dots, \neq_{2n}, \dots\}$ to denote the set of all disequality signatures. It is worth noticing that $\mathcal{DEQ}$ is closed under variable permutation. For any permutation $\pi \in S_{2d}$, a signature $ f $ that takes value 1 when $ x_{\pi(1)} = x_{\pi(2)} = \ldots = x_{\pi(d)} \neq x_{\pi(d+1)} = x_{\pi(d+2)} = \ldots = x_{\pi(2d)} $ (and 0 otherwise) is also regarded as a disequality signature.

An alternative definition of disequality signatures applies to $\eo$ signatures: if $ f $ is an $\eo$ signature of arity $ 2d $ with a string $ \alpha \in \eoe $ such that $\su(f) = \{\alpha, \overline{\alpha}\}$ and $ f(\alpha) = f(\overline{\alpha}) = 1 $, then $ f $ is a disequality signature. It is straightforward to verify that this alternative definition is equivalent to the original one.

$f$ is called a \textit{generalized disequality signature}, denoted by $\neq_{2d}^{a,b}$, when it is an $\eo$ signature of arity $ 2d $ such that there exists a string $\alpha \in \eoe$ satisfying $\su(f) = \{\alpha, \overline{\alpha}\}$, where $ f(\alpha) = a $ and $ f(\overline{\alpha}) = b $.

Now we provides a formal definition of counting weighted Eulerian orientations problems ($\#\eo$ problems), $\hol$ problems, and counting constraint satisfaction problems ($\#\csp$ problems) and outlines their relationships. We largely refer to \cite[Section 1.2]{cai2017complexity} and \cite[Section 2.1]{cai2020beyond}.

Let $\mathcal{F}$ denote a fixed finite set of $\eo$ signatures. An $\eo$-signature grid, $\Omega(G, \pi)$, over $\mathcal{F}$ is constructed from an Eulerian graph $ G = (V, E) $ without isolated vertices, which means that every vertex has a positive even degree. Here, $\pi$ assigns each vertex $ v \in V $ a signature $ f_v \in \mathcal{F} $, along with a fixed linear order of its incident edges. The arity of each $ f_v $ matches the degree of $ v $, with each incident edge corresponding to a variable of $ f_v $. For any Eulerian graph $ G $, let $ \eo(G) $ denote the set of all possible Eulerian orientations of $ G $. Given an orientation in $ \eo(G) $, each edge is assigned a direction. For each edge with two ends, we assign $ 0 $ to the head and $ 1 $ to the tail, so each Eulerian orientation corresponds to an assignment of Boolean values to the edges' ends, ensuring an equal number of $ 0 $’s and $ 1 $’s incident to each vertex. Under this assignment, a vertex $ v $ contributes a weight, based on $ f_v $ and the local assignment to its incident edges. Each $ \sigma \in \eo(G) $ provides an evaluation $ \prod_{v \in V} f_v(\sigma|_{E(v)}) $, where $ \sigma|_{E(v)} $ is the restriction of $ \sigma $ to the edges' ends incident to $ v $. More specifically speaking, $\sigma|_{E(v)}$ assigns 0 to an incoming edge and 1 to an outgoing edge.

\begin{definition}[$\#\eo$ problems]\cite{cai2020beyond}\label{def:numbereo}
A $\#\eo$ problem $\#\eo(\mathcal{F})$, parameterized by a set $\mathcal{F}$ of $\eo$ signatures, is defined as follows: Given an instance $I$, or equivalently an $\eo$-signature grid $\Omega(G, \pi)$ over $\mathcal{F}$, the output is the partition function of $\Omega$, 

$$
\text{Z}(I)=\#\eo_\Omega = \sum_{\sigma \in \eo(G)} \prod_{v \in V} f_v(\sigma|_{E(v)}).
$$
\end{definition}
Six-vertex model can be seen as a special case of $\#\eo$ problems defined by a single quaternary \eo\ signature. Moreover, $\#\eo$ problems constitute a subclass of $\hol$ problems. A (general) signature grid over an arbitrary set $ \mathcal{F} $ of signatures (not limited to $\eo$ signatures) follows a similar definition. Here, $ G $ is a general graph, and $ \pi $ assigns a signature to each vertex, along with an order on its incident edges. We consider all possible $ \{0,1\} $-assignments $ \sigma : E(G) \rightarrow \{0,1\}^{|E|} $, and each assignment $ \sigma $ gives an evaluation $ \prod_{v \in V} f_v(\sigma|_{E(v)}) $.

\begin{definition}[$\hol$ problems]
    A $\hol$ problem, $\hol(\mathcal{F})$, parameterized by a set $ \mathcal{F} $ of complex-valued signatures, is defined as follows: Given an instance $I$, or equivalently a signature grid $ \Omega(G, \pi) $ over $ \mathcal{F} $, the output is the partition function of $ \Omega $,
    
    $$
    \text{Z}(I)=\hol_\Omega = \sum_{\sigma:E\rightarrow\{0,1\}} \prod_{v \in V} f_v(\sigma|_{E(v)}).
    $$
The bipartite $\hol$ problem $\hol(\mathcal{F} \mid \mathcal{G})$ is a specific case where $ G $ is bipartite, say $ G(U, V, E) $, with vertices in $ U $ assigned signatures from $ \mathcal{F} $ and vertices in $ V $ assigned signatures from $ \mathcal{G} $. We denote $U$ as the left-hand side, or LHS for short. We denote $V$ as the right-hand side, or RHS for short.
\label{defHol}
\end{definition}

In this article, the \textit{size }of an instance $I=\Omega(G,\pi)$ is defined as $|G|$. We use $ \leq_T $ and $ \equiv_T $ to denote polynomial-time Turing reductions and equivalences, respectively. Notably, it is known that $\#\eo$ problems can be represented as a special case of bipartite $\hol$ problems, which can be obtained by operating a holographic transformation by $Z=\frac{1}{\sqrt{2}}\left(\begin{matrix}
1 &  1\\
\mathfrak{i} & -\mathfrak{i} 
\end{matrix}\right)$ on $\hol$ problems.

\begin{lemma}\cite{cai2020beyond}
    $\#\eo(\mathcal{F}) \equiv_T \hol(\neq_2 \mid \mathcal{F})$.
\end{lemma}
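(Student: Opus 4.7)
The plan is to establish both directions of the equivalence by direct polynomial-time gadget constructions that pass through a natural bijection between orientations and $\neq_2$-satisfying assignments.

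For the forward reduction $\#\eo(\mathcal{F}) \leq_T \hol(\neq_2 \mid \mathcal{F})$, given an $\eo$-signature grid $\Omega(G,\pi)$ I would subdivide every edge $e=\{u,v\}$ of $G$ by inserting a new vertex $x_e$ assigned the signature $\neq_2$ and joining it to $u$ and $v$ by two fresh edges. This yields a bipartite signature grid $\Omega'(G',\pi')$ with the $x_e$ vertices on the LHS and the original $\mathcal{F}$-signatured vertices on the RHS. An orientation $\sigma\in\eo(G)$ assigns opposite labels to the two ends of each $e$, exactly matching what the $\neq_2$ at $x_e$ forces between its two incident edges in $G'$; hence orientations of $G$ are in bijection with Boolean edge assignments of $G'$ satisfying every $\neq_2$. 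Under this bijection the local evaluation $f_v(\sigma|_{E(v)})$ at each RHS vertex is unchanged, so the two partition functions coincide.

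For the reverse reduction $\hol(\neq_2\mid\mathcal{F})\leq_T \#\eo(\mathcal{F})$, I would perform the inverse construction. Every LHS vertex in a bipartite signature grid $\Omega'(G',\pi')$ has arity and hence degree exactly $2$, so I can contract each LHS vertex $x$ with its two RHS neighbors $u,v$ into a single edge $\{u,v\}$, retaining the RHS vertices and their signatures. The resulting graph $G$ is Eulerian because each signature in $\mathcal{F}$ is an $\eo$ signature and thus has even arity, while the degree of each retained vertex equals its arity; any RHS vertex of arity zero is an isolated constant that can be factored out of the partition function up front, so the no-isolated-vertex requirement is satisfied. Boolean edge assignments of $G'$ satisfying all $\neq_2$ constraints correspond bijectively to orientations in $\eo(G)$, and the local contributions again match.

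Both constructions run in linear time in the size of the input, and the partition functions are preserved exactly, yielding the polynomial-time Turing equivalence. The only bookkeeping to take care of is consistent accounting of the 0-head/1-tail convention fixed in the definition of $\#\eo$, which the $\neq_2$ gadget encodes automatically; I do not anticipate any substantive obstacle. Alternatively the same equivalence can be read off from the holographic transformation by the matrix $Z$ referenced just before the lemma, using that $Z^T Z$ equals the matrix of $\neq_2$, but the direct gadget argument above is self-contained and shorter.
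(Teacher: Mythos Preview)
Your argument is correct and is the standard direct gadget proof of this equivalence. Note that the paper does not actually prove this lemma; it simply cites it from \cite{cai2020beyond}, so there is no in-paper proof to compare against. Your forward and reverse constructions are exactly the intended ones, and your remark that the holographic transformation by $Z$ gives an alternative route is also accurate.
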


\begin{lemma}\cite{shao2020realholant}
    $\hol(\neq_2\mid\mathcal{F})\equiv_T \hol(Z^{-1}\mathcal{F}).$
\end{lemma}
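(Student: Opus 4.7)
The plan is to prove this equivalence as a direct application of Valiant's holographic transformation theorem, using the specific matrix $Z$ supplied in the statement. Recall that for an invertible $2\times 2$ matrix $T$, Valiant's theorem asserts that the partition function of any bipartite signature grid is invariant under the simultaneous substitution that replaces each LHS signature $g$ by the tensor $g\,T^{\otimes \text{arity}(g)}$ and each RHS signature $f$ by $(T^{-1})^{\otimes \text{arity}(f)} f$. Applied to the problem $\hol(\neq_2 \mid \mathcal{F})$ with $T = Z$, this immediately gives a Turing equivalence
\[
\hol(\neq_2 \mid \mathcal{F}) \;\equiv_T\; \hol\bigl((\neq_2)\,Z^{\otimes 2} \;\big|\; Z^{-1}\mathcal{F}\bigr),
\]
since applying the transformation to every instance is polynomial-time computable in both directions (the inverse transformation is given by $Z^{-1}$).

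The next step is a direct matrix computation showing that the transformed LHS signature is (up to a benign scalar) the equality $=_2$. Viewing $\neq_2$ as the $2\times 2$ matrix $\bigl(\begin{smallmatrix}0&1\\1&0\end{smallmatrix}\bigr)$ and plugging in the explicit form of $Z$, one verifies $Z^{T}\!\bigl(\begin{smallmatrix}0&1\\1&0\end{smallmatrix}\bigr)Z = I$, the identity matrix, which represents $=_2$. The choice of the factor $\tfrac{1}{\sqrt 2}$ in $Z$ is precisely designed to make this normalization exact, and the presence of $\mathfrak{i}$ is what handles the swap.

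Once the LHS signatures are $=_2$, the bipartite structure trivializes: a vertex labeled by $=_2$ with two incident edges simply forces those two edges to agree, which is the same as contracting them into a single edge whose endpoint lies on the RHS side. Performing this contraction over the whole signature grid converts the bipartite instance into an ordinary (non-bipartite) signature grid over $Z^{-1}\mathcal{F}$, and the partition functions agree term-by-term. This gives the reverse reduction $\hol(\neq_2 \mid \mathcal{F}) \leq_T \hol(Z^{-1}\mathcal{F})$ and, combined with the reverse direction (embed any instance of $\hol(Z^{-1}\mathcal{F})$ by attaching a degree-two $=_2$-vertex to each edge, then invert the holographic transformation), yields the claimed equivalence.

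The only delicate point in writing this out carefully is bookkeeping the conventions: which side receives $Z$ versus $Z^{-1}$, and whether the tensor action is by $T$, $T^{T}$, or $T^{-1}$. But this is a purely symbolic check, and nothing beyond the $2\times 2$ calculation above is required; no genuine mathematical obstacle arises, which is why this is placed as a preliminary lemma rather than an original contribution of the paper.
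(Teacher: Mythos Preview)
Your approach is exactly the standard one, and indeed the paper does not supply its own proof: the lemma is simply quoted from \cite{shao2020realholant} as a preliminary fact, so there is nothing to compare against beyond the folklore holographic-transformation argument you outline.

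One concrete slip: the specific identity you assert, $Z^{T}\bigl(\begin{smallmatrix}0&1\\1&0\end{smallmatrix}\bigr)Z = I$, is false for the given $Z=\tfrac{1}{\sqrt{2}}\bigl(\begin{smallmatrix}1&1\\ \mathfrak{i}&-\mathfrak{i}\end{smallmatrix}\bigr)$; that product equals $\bigl(\begin{smallmatrix}\mathfrak{i}&0\\0&-\mathfrak{i}\end{smallmatrix}\bigr)$. The correct computation under the convention matching the lemma's statement is $Z\bigl(\begin{smallmatrix}0&1\\1&0\end{smallmatrix}\bigr)Z^{T}=I$. You yourself flagged the $T$ vs.\ $T^{T}$ bookkeeping as the only delicate point, and that is precisely where the slip occurred; once the orientation is fixed the rest of your argument goes through unchanged.
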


Finally, we introduce counting constraint satisfaction problems ($\#\csp$). 

\begin{definition}\label{def:csp}
A $\#\csp$ problem is defined by a finite set $ \mathcal{F} $ of signatures, where each signature has complex values and is defined on the Boolean domain. Given an instance of $ \#\csp(\mathcal{F}) $ consisting of a finite set of variables $ \{x_1, x_2, \ldots, x_n\} $ and a set of clauses $ C $. Each clause in $ C $ includes a signature $ f $ from $ \mathcal{F} $ and a selection of variables from $ \{x_1, x_2, \ldots, x_n\} $ (repetitions allowed). Suppose the ith clause $C_i$ includes a signature $ f_i=f $ of arity $ k $, then $ k $ relevant variables are $x_{i_1},x_{i_2},\ldots,x_{i_k}$. The output is given by

$$
\sum_{x_1, x_2, \ldots, x_n \in \{0,1\}} \prod_{(f, x_{i_1}, x_{i_2}, \ldots, x_{i_k}) \in C} f(x_{i_1}, x_{i_2}, \ldots, x_{i_k}).
$$
\end{definition}
It is well-known that $\#\csp$ problems can also be expressed as $\hol$ problems. Specifically, from \cite[Lemma 1.2]{cai2017complexity}, we have:

\begin{lemma}\cite{cai2017complexity}
$\#\csp(\mathcal{F}) \equiv_T \hol(\mathcal{EQ} \cup \mathcal{F}).$
\end{lemma}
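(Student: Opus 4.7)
The plan is to establish the polynomial-time Turing equivalence by exhibiting explicit, polynomial-time constructions in both directions. The guiding intuition is that an equality signature of arity $d$ acts precisely as a shared variable in the $\hol$ framework: it forces all $d$ incident edges to carry the same value, which is exactly how a $\#\csp$ variable is ``copied'' into the clauses in which it appears. So the reduction will convert ``variables'' and ``clauses'' of a $\#\csp$ instance to the two sides of a bipartite signature grid, and conversely.

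For the direction $\#\csp(\mathcal{F}) \leq_T \hol(\mathcal{EQ}\cup\mathcal{F})$, given a $\#\csp(\mathcal{F})$ instance with variables $x_1,\ldots,x_n$ and clauses $C_1,\ldots,C_m$, I would build a signature grid $\Omega(G,\pi)$ as follows. Let $d_i$ be the total number of occurrences of $x_i$ among the clauses. For each $x_i$ I create a vertex $u_i$ assigned $=_{d_i}\in\mathcal{EQ}$; for each clause $C_j=(f_j,x_{j_1},\ldots,x_{j_k})$ I create a vertex $v_j$ assigned $f_j$; and for every occurrence of $x_i$ in $C_j$ I add one edge between $u_i$ and $v_j$. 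By the definition of $=_{d_i}$, the only edge assignments contributing to $\hol_\Omega$ are those that are constant on the edges incident to each $u_i$, and these are in a weight-preserving bijection with the assignments to $x_1,\ldots,x_n$. Hence $\hol_\Omega$ equals the $\#\csp$ output.

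For the converse $\hol(\mathcal{EQ}\cup\mathcal{F})\leq_T \#\csp(\mathcal{F})$, given a signature grid over $\mathcal{EQ}\cup\mathcal{F}$, I would contract every vertex carrying an equality signature into a single $\#\csp$ variable: for each vertex $u$ with signature $=_d$, identify the $d$ edges incident to $u$ as one common variable $y_u$. After all such contractions, every remaining vertex carries a signature from $\mathcal{F}$ and becomes a $\#\csp$ clause whose variables are the (possibly merged) edges adjacent to it. An edge that had both endpoints equality-labeled merges two bundles into a single variable, and an edge between two $\mathcal{F}$-vertices simply yields a fresh variable appearing in exactly two clauses. Since the equality constraints hold iff the identified edges all take one common value, the sum over edge assignments in the original grid equals the sum over variable assignments in the constructed $\#\csp(\mathcal{F})$ instance term by term.

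Both transformations are clearly polynomial-time, so together they yield the Turing equivalence. There is no deep obstacle, only some bookkeeping for degenerate cases: equality vertices of arity $0$ or $1$ (where $=_1=[1,1]$ is trivial and $=_0$ contributes only a constant), $\#\csp$ variables that appear zero times (a factor of $2$ pulled out of the answer), and $\mathcal{F}$-signatures with repeated variables on the same clause (which in the $\#\csp\to\hol$ direction would naively create a self-loop and can be handled by a tiny gadget or by absorbing the repetition into the signature). These can all be dispatched by straightforward pre-processing that multiplies the final answer by an appropriate power of $2$, and so they do not affect polynomial-time reducibility in either direction.
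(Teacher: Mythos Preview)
The paper does not give its own proof of this lemma; it simply cites it as Lemma~1.2 from \cite{cai2017complexity}. Your argument is the standard proof of this well-known equivalence and is correct. One minor remark: in the $\#\csp\to\hol$ direction, a repeated variable in a clause does not produce a self-loop but rather parallel edges between the variable's equality vertex and the clause vertex, which the $\hol$ framework handles natively as a multigraph; so no extra gadget or absorption is needed there.
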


\subsection{Gadget construction and signature matrix}

This subsection introduces two fundamental concepts: \textit{gadget construction} and the signature matrix, both of which play critical roles in complexity classification through reduction techniques. Gadget construction is essential for constructing reductions, while the signature matrix provides an intuitive representation of a signature, linking gadget construction to matrix multiplication.

Let $\mathcal{F}$ be a set of signatures. An $\mathcal{F}$-gate resembles a signature grid $\Omega(G,\pi)$ but differs in that $G = (V, E, D)$ includes two types of edges: internal edges (connecting 2 vertices in $V$ and contained in $E$) and dangling edges (connecting to only 1 vertex in $V$ and contained in $D$). An $\mathcal{F}$-gate represents a signature, with variables corresponding to the edges in $D$. Suppose $|E| = n$ and $|D| = m$, and let the edges in $E$ represent variables $\{x_1, x_2, \ldots, x_n\}$ while edges in $D$ represent variables $\{y_1, y_2, \ldots, y_m\}$. The $\mathcal{F}$-gate then defines a signature $f$ as:

$$
f(y_1, y_2, \ldots, y_m) = \sum_{\sigma: E \rightarrow \{0,1\}} \prod_{v \in V} f_v(\hat{\sigma}|_{E(v)}),
$$
where $(y_1, y_2, \ldots, y_n) \in \{0,1\}^n$ denotes the assignment to the variables represented by the dangling edges, $\hat{\sigma}$ extends $\sigma$ by incorporating this assignment, and $\pi$ assigns $f_v$ to each vertex $v \in V$. The signature $f$ obtained by an $\mathcal{F}$-gate is referred to as being realizable from $\mathcal{F}$ through gadget construction. When no internal edges exist in an $\mathcal{F}$-gate, the resulting signature becomes the tensor product of several signatures. The operation called tensor multiplication is denoted by $\otimes$.

As shown in \cite[Lemma 1.3]{cai2017complexity}, if a signature $f$ is realizable by $\mathcal{F}$, it follows that:

$$
\hol(\mathcal{F}) \equiv_T \hol(\mathcal{F} \cup \{f\}),
$$
which is a powerful tool toward reductions. In addition, the closure property under gadget construction is widely used. As shown in \cite[Chapter 3]{cai2017complexity}, the famous classes $\mathscr{A}$ and $\mathscr{P}$ are closed under gadget construction, implying that if $\mathcal{F} \subseteq \mathscr{A}$ or $\mathscr{P}$, then all $\mathcal{F}$-gates belong to $\mathscr{A}$ or $\mathscr{P}$, respectively.

In the setting of $\#\eo$ problems, the concept of $\mathcal{F}$-gate is slightly different since $\#\eo(\mathcal{F})$ can be viewed as a bipartite $\hol$ problems, $\hol(\neq_2 \mid \mathcal{F})$. For each internal edge, an additional vertex assigned $\neq_2$ is added, and we view this modified signature grid as a general $\hol$ gate. In this paper, $\mathcal{F}$-gate is defined this way by default in the analysis of $\#\eo$ problems.
As an application of gadget construction, in this paper, when we construct $\neq_4^{a,b}$ with $ab\neq0$, we can further construct $\ouhe$ by connecting two edges between two copies of $\neq_4^{a,b}$ properly. We formalize this fact as a lemma for future reference.

\begin{lemma}
    Suppose $\neq_4^{a,b}\in \mathcal{F}, ab\neq 0$. Then $\ceo(\mathcal{F}\cup\{\ouhe\})\le_T\ceo(\mathcal{F})$.
    \label{lem:neq4ab obtain ouhe}
\end{lemma}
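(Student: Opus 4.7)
The plan is to realize $\ouhe$ from $\neq_4^{a,b}$ (up to a nonzero scalar) by a standard gadget: take two copies of $\neq_4^{a,b}$ and join them by two internal edges, which in the $\ceo$ framework are mediated by $\neq_2$. Let $\alpha\in\eoe$ be such that $\su(\neq_4^{a,b})=\{\alpha,\overline{\alpha}\}$ with $\neq_4^{a,b}(\alpha)=a$ and $\neq_4^{a,b}(\overline{\alpha})=b$. Label the variables of the two copies $(x_1,x_2,x_3,x_4)$ and $(y_1,y_2,y_3,y_4)$, and impose $\neq_2$ between $x_3,y_3$ and between $x_4,y_4$, leaving $x_1,x_2,y_1,y_2$ as the dangling edges.

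A direct case analysis computes the signature of the gadget. The $\neq_2$ wiring forces $y_3=\overline{x_3}$ and $y_4=\overline{x_4}$. If copy~$1$ is assigned $\alpha$, then $(y_3,y_4)=(\overline{\alpha_3},\overline{\alpha_4})$, so the only possibility for copy~$2$ to contribute nonzero is to take $\overline{\alpha}$, yielding weight $ab$ with dangling values $(x_1,x_2,y_1,y_2)=(\alpha_1,\alpha_2,\overline{\alpha_1},\overline{\alpha_2})$; symmetrically, copy~$1$ on $\overline{\alpha}$ forces copy~$2$ on $\alpha$, again of weight $ab$ on the dual dangling string. Hence the gadget realizes an arity-$4$ signature with support $\{(\alpha_1,\alpha_2,\overline{\alpha_1},\overline{\alpha_2}),(\overline{\alpha_1},\overline{\alpha_2},\alpha_1,\alpha_2)\}$ and both values equal to $ab$. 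By the alternative definition of $\eo$ disequality signatures given in the preliminaries and the closure of $\mathcal{DEQ}$ under variable permutations, after a suitable relabeling of the dangling edges this is precisely $ab\cdot\ouhe$.

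Since $ab\neq 0$, the reduction is now routine: given an instance $I$ of $\ceo(\mathcal{F}\cup\{\ouhe\})$ with $m$ occurrences of $\ouhe$, replace each occurrence by the gadget to obtain an instance $I'$ of $\ceo(\mathcal{F})$ with $Z(I')=(ab)^m Z(I)$, so a single oracle call recovers $Z(I)$. The only mild subtlety, and arguably the main obstacle, is to confirm that the argument is independent of which of the six strings $\alpha\in\eoe$ of arity $4$ actually arises; this holds because we paired $x_i$ with $y_i$ for two indices $i$, which makes the internal consistency condition between the two copies trivially satisfied for every $\alpha$, and consequently the case analysis proceeds uniformly in $\alpha$ with the permuted $\ouhe$ always recoverable by relabeling the dangling edges.
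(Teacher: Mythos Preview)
Your proof is correct and follows essentially the same approach as the paper: both construct the gadget by taking two copies of $\neq_4^{a,b}$ and connecting the third and fourth variables of the two copies via $\neq_2$, obtaining $ab\cdot\ouhe$ (up to a variable permutation). The paper simply fixes $\alpha=0101$ without loss of generality and omits the final scalar-division step, whereas you carry out the argument for a general $\alpha\in\eoe$ and spell out the reduction explicitly; the content is the same.
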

\begin{proof}
    Without loss of generality let $\su(\neq_4^{a,b})=\{0101,1010\}$. By connecting the third variables of 2 copies of $\neq_4^{a,b}$ and then the fourth variables of these 2 copies via $\neq_2$, we realize the $\ouhe$ signature.
\end{proof}
Next we introduce the matrix form of signatures. The signature matrix serves as an organized expression of all possible values for a given signature, providing a convenient way to calculate gates. Suppose $f$ is a signature mapping $\{0,1\}^r$ to $\mathbb{C}$. First, assign an order to all variables of $f$, say $x_1, x_2, \ldots, x_r$. The signature matrix with parameter $l$ is a $2^l \times 2^{r - l}$ matrix for some integer $0 \leq l \leq r$, where the values of $x_1, x_2, \ldots, x_l$ form the row indices and those of $x_{l+1}, x_{l+2}, \ldots, x_r$ form the column indices. The entries are corresponding values of $f$ given the input string that combines the row index and the column index. For even arity signatures, we often take $l = \frac{r}{2}$. The matrix is denoted as $M(f)_{x_1x_2\cdots x_l,x_{l+1}x_{l+2}\cdots x_r}$, sometimes for convenience we call it $M_f$ without causing ambiguity.

We present two examples of signature matrix:

1. Binary signature: For a binary signature $f = (f_{00}, f_{01}, f_{10}, f_{11})$, the matrix form is:

$$
M_f = M_{x_1, x_2} = \begin{pmatrix} f_{00} & f_{01} \\ f_{10} & f_{11} \end{pmatrix}.
$$

2. Quaternary signature: For an arity-4 signature $g$, we have:

$$
M_g = M_{x_1x_2, x_3x_4} = \begin{pmatrix} g_{0000} & g_{0001} & g_{0010} & g_{0011} \\ g_{0100} & g_{0101} & g_{0110} & g_{0111} \\ g_{1000} & g_{1001} & g_{1010} & g_{1011} \\ g_{1100} & g_{1101} & g_{1110} & g_{1111} \end{pmatrix}.
$$

We then introduce the relation Between matrix multiplication and gadget construction.

Consider two $\mathcal{F}$-gates $f$ and $g$ given some set $\mathcal{F}$, of arity $n$ and $m$, respectively. Let $\text{Var}(f) = \{x_1, x_2, \ldots, x_n\}$ and $\text{Var}(g) = \{y_1, y_2, \ldots, y_m\}$. We construct a new gadget by connecting part of their dangling edges corresponding to subsets $\{x_{n - l + 1}, \ldots, x_n\}$ and $\{y_1, y_2, \ldots, y_l\}$ of $\text{Var}(f)$ and $\text{Var}(g)$, respectively, where $l \in \mathbb{Z}_+$. The resulting signature $h$ is also an $\mathcal{F}$-gate with $\text{Var}(h) = \{x_1, \ldots, x_{n - l}, y_{l + 1}, \ldots, y_m\}$, and by the definition of gadget construction and matrix multiplication, it is easily verified that:

$$
M_h = M_{x_1 \ldots x_{n - l}, y_{l + 1} \ldots y_m} = M_{x_1 \ldots x_{n - l}, x_{n - l + 1} \ldots x_n} \cdot M_{y_1 \ldots y_l, y_{l + 1} \ldots y_m} = M_f \cdot M_g.
$$
In the setting of $\#\eo$ problems, each edge is viewed as $\neq_2$ rather than $=_2$, leading to the slightly different form:

$$
M_h = M_f \cdot \neq_2^{\otimes l} \cdot M_g.
$$
This notation is frequently used in subsequent discussions without further explanation.

Two particular constructions are especially notable:

1. Adding self-loop: In $\#\eo$ problems, adding a self-loop to a signature $f$ means selecting two variables, say $x_1$ and $x_2$, and adding an internal edge between them to form a new signature $f'$. If $\text{Var}(f) = \{x_1, x_2, \ldots, x_n\}$, the new signature $f'$ is:

$$
   f'(x_3, \ldots, x_n) = f(0, 1, x_3, \ldots, x_n) + f(1, 0, x_3, \ldots, x_n),
$$
often denoted by $f^{x_1 \neq x_2}$.

One can directly verify that gadget construction can be seen as a combination of tensor multiplication and adding self-loops. In addition, the result of adding a self-loop to a binary signature is a complex number, which is also the partition function of this simple signature grid.

We then generalize this operation by a binary signature denoted by $\bi{a}{b}$, which has the matrix form 
$
\begin{pmatrix}
0 & a \\
b & 0
\end{pmatrix}
,$ and is referred to as a generalized binary disequality. The process, called adding a self-loop by $\bi{a}{b}$ on $f$, resembles previous operations but differs in that the internal edge is not directly added between $x_1$ and $x_2$. Instead, an additional vertex assigned $\bi{a}{b}$ is added, and we connect two dangling edges incident to $x_1$ and $x_2$ respectively to the dangling edges of $\bi{a}{b}$. This construction can be viewed as a weighted self-loop. Notably, there are two possible situations, that is either

$$
f'(x_3, \ldots, x_n) = af(0,1,x_3, \ldots, x_n) + bf(1,0,x_3, \ldots, x_n),
$$ 
or 

$$
f'(x_3, \ldots, x_n) = bf(0,1,x_3, \ldots, x_n) + af(1,0,x_3, \ldots, x_n),
$$
depending on the direction of $\bi{a}{b}$.

2. Pinning: The pinning operation fixes a variable to a specific value. In the setting of $\hol$ or $\#\csp$ problems, pinning a variable to 0 or 1 is represented by connect $\Delta_0 = (1, 0)$ or $\Delta_1 = (0, 1)$ to its corresponding edge. In $\#\eo$ problems, pinning means adding a self-loop by $\bi{1}{0}$. For variables $x$ and $y$ in $\text{Var}(f)$, by the pinning operation we can fix $x$ to 1 and $y$ to 0 at the same time, maintaining the $\eo$ property. The resulting signature is denoted by $f^{xy=10}$. We can also fix more than $2$ variables of a signature $f$, say $(x_1x_2\ldots x_t)$ is fixed to $\sigma$, where $\sigma$ is a 01-string with length $t$, then the resulting signature is denoted by $f^{x_1x_2\ldots x_t=\sigma}$, or briefly denoted by $f^{\sigma}$ without causing ambiguity. In this paper, when referring to the pinning signature $\pin$, we mean $\bi{1}{0}$.

\subsection{Previous results}

In this section we present several previous results, which are foundations of our work. 

First we present the dichotomy theorem for $\#\csp$ problems. To describe the tractability conditions of signatures in $\#\csp$ problems, we introduce two significant tractable classes, $\mathscr{A}$ and $\mathscr{P}$.

Let $ X $ denote a $(d+1)$-dimensional column vector over $ \mathbb{F}_2 $, such that $ X = (x_1, x_2, \ldots, x_d, 1)^T $, and suppose $ A $ is a matrix over $ \mathbb{F}_2 $. The function $ \chi_{AX} $ takes value 1 when $ AX = \mathbf{0} $ and value 0 otherwise. This function describes an affine relation on the variables $ x_1, x_2, \ldots, x_d $.

\begin{definition}
We denote by $\mathscr{A}$ the set of signatures which have the form $\lambda\chi_{AX}\cdot\mathfrak{i}^{L_1(X)+L_2(X)+\cdots+L_n(X)}$, where $\mathfrak{i}=\sqrt{-1}$, $\lambda\in\mathbb{C}$, $n\in\mathbb{Z}_+$, each $L_j$ is a 0-1 indicator function of the form $\langle \alpha_j,X \rangle$, where $\alpha_j$ is a $(d+1)$-dimensional vector over $\mathbb{F}_2$, and the dot product $\langle \cdot,\cdot \rangle$ is computed over $\mathbb{F}_2$.
\label{defa}
\end{definition}

\begin{definition}
 We denote by $\mathscr{P}$ the set of all signatures which can be expressed a product of unary signatures, binary equality signatures ($=_2$) and binary disequality signatures ($\neq_2$) (on not necessarily disjoint subsets of variables). 
 \label{defp}
\end{definition}

Now we can state a pivotal theorem that classifies Boolean $\#\csp$ problems as follows:

\begin{theorem}[{\cite[Theorem 3.1]{cai2014complexity}}]\label{thm:CSPdichotomy}
Suppose $\mathcal{F}$ is a finite set of signatures mapping Boolean inputs to complex numbers. If $\mathcal{F}\subseteq\mathscr{A}$ or $\mathcal{F}\subseteq\mathscr{P}$, then $\#\csp(\mathcal{F})$ is computable in polynomial time. Otherwise, $\#\csp(\mathcal{F})$ is \#P-hard.
\end{theorem}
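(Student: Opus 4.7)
The plan is to prove the dichotomy in two complementary parts: exhibiting polynomial-time algorithms for $\mathcal{F} \subseteq \mathscr{A}$ or $\mathcal{F} \subseteq \mathscr{P}$, and establishing \#P-hardness for every $\mathcal{F}$ falling outside both tractable classes. The overall strategy is standard for Boolean counting dichotomies: first verify closure of the tractable classes under gadget construction, then use gadgets plus polynomial interpolation to force \#P-hard behavior whenever a witness to non-tractability is present.

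For the tractability side, I would treat the two classes separately. For $\mathcal{F} \subseteq \mathscr{P}$, each signature is a product of unary, $=_2$, and $\neq_2$ factors on (possibly overlapping) variables, so a $\#\csp(\mathcal{F})$ instance can be rewritten as a signature grid whose constraints only identify or negate variables. The partition function then factors along the connected components of the associated equality/disequality graph, each of which contributes either $0$ (inconsistent) or an efficiently computable product of unary weights. For $\mathcal{F} \subseteq \mathscr{A}$, the support of each signature is an affine subspace of $\mathbb{F}_2^n$ on which the value is $\lambda \cdot \mathfrak{i}^{\sum_j L_j(X)}$ for $\mathbb{F}_2$-linear forms $L_j$. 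Collecting all constraints, the partition function becomes a sum over the intersection of affine subspaces (computable by Gaussian elimination) of a product of fourth-power characters. The resulting sum is a finite-field exponential sum whose exponent is a quadratic form over $\mathbb{F}_2$ lifted to $\mathbb{Z}_4$, which can be evaluated in polynomial time by diagonalizing the quadratic form.

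For the hardness side, I would first prove a pinning lemma: if $\mathcal{F}$ contains any signature that is non-degenerate in an appropriate sense, gadget constructions plus polynomial interpolation realize both $\Delta_0=[1,0]$ and $\Delta_1=[0,1]$ in $\#\csp(\mathcal{F})$. After pinning, every sub-signature obtained by restricting variables is available. Next I would symmetrize: by connecting copies of a signature through binary equalities (available since $\mathcal{EQ} \subseteq \mathcal{F}$ in the $\#\csp$ setting) and averaging over variable orderings via gadgets, one produces symmetric signatures that still witness the failure of $\mathcal{F} \subseteq \mathscr{A}$ or $\mathcal{F} \subseteq \mathscr{P}$. For symmetric signatures, the hardness can be imported from the known classifications of symmetric $\#\csp$ (e.g.\ the Dyer--Goldberg--Jerrum line of work) and from \#P-hardness of counting independent sets or \#$2$SAT, via interpolation reductions that exploit the presence of $\Delta_0, \Delta_1$ and equalities.

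The hard part will be the pinning lemma together with the symmetrization. Degenerate signatures that vanish under every pinning (for instance, those supported on strings of a fixed Hamming weight) require separate combinatorial arguments because the generic interpolation machinery produces a singular Vandermonde-type matrix. Moreover, when symmetrizing a non-symmetric signature $f \notin \mathscr{A} \cup \mathscr{P}$, one must ensure that the symmetric gadget signature does not accidentally land in $\mathscr{A} \cup \mathscr{P}$; this forces a careful case analysis keyed to the structure of $\text{Span}(f)$ and to the affine invariants defining $\mathscr{A}$. Once these two obstacles are resolved, propagating hardness from the symmetric reduction back to the original $\mathcal{F}$ is a routine application of polynomial interpolation, closing the dichotomy.
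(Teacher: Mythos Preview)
This theorem is not proved in the present paper: it is quoted verbatim from \cite{cai2014complexity} as a previously known result in the ``Previous results'' subsection, and the paper relies on it as a black box. There is therefore no proof in the paper to compare your proposal against.

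That said, your sketch is a reasonable outline of the strategy actually used by Cai, Lu, and Xia in \cite{cai2014complexity}. The tractability arguments for $\mathscr{P}$ (component-wise evaluation over an equality/disequality graph) and for $\mathscr{A}$ (Gaussian elimination followed by evaluating a Gauss sum associated to a quadratic form over $\mathbb{Z}_4$) are essentially correct. On the hardness side, the original proof does proceed by first obtaining pinning (the constants $\Delta_0,\Delta_1$) and then reducing to a symmetric witness, but your description understates the difficulty: the step ``symmetrize while preserving non-membership in $\mathscr{A}\cup\mathscr{P}$'' is the technical core of \cite{cai2014complexity} and requires a lengthy structured case analysis rather than a single averaging gadget. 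Your identification of that step as the hard part is accurate, but what you wrote is a plan, not a proof, and the actual execution is substantially more intricate than the proposal suggests.
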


Both classes $ \mathscr{A} $ and $ \mathscr{P} $ are self-dual and defined by signatures with affine supports. The requirement for affine supports is a necessary tractability condition here, with further restrictions on signature weights (for $ \mathscr{A} $) or more specialized form of support (for $ \mathscr{P} $).

This pattern, which implies that the affine support may somehow contribute to tractability, will serve as a critical idea in the analysis of other dichotomies. Specifically, for the six-vertex model, this necessary structural conditions extend the idea of affine support in novel ways, as is observed in \cite{cai2018complexity}. In the $\#\eo$ framework, we can generalize this observation to a more universal form as follows, while present the relationship between $\#\eo$ and $\#\csp$ problems.

Let $ f $ be an $\eo$ signature of arity $ 2d $ and over a variable set $ \text{Var}(f) = \{x_1, x_2, \ldots, x_{2d}\} $. For an arbitrary perfect pairing $ P $ of $ \text{Var}(f) $, say $ P = \{\{x_{i_1}, x_{i_2}\}, \{x_{i_3}, x_{i_4}\}, \ldots, \{x_{i_{2d-1}}, x_{i_{2d}}\}\} $, we define $ \eom[P] $ as the subset of $ \{0,1\}^{2d} $ satisfying the following condition:

$$
\eom[P] = \{\alpha \in \{0,1\}^{2d} \mid \alpha_{i_1} \neq \alpha_{i_2}, \ldots, \alpha_{i_{2d-1}} \neq \alpha_{i_{2d}}\}.
$$

When $ \su(f) \subseteq \eom[P] $, $ f $ is referred to as an $ \eom[P] $ signature, corresponding to a fixed $P$. If there exists some perfect pairing $ P $ for which $ f $ meets this requirement, we refer to $ f $ as a pairwise opposite (defined specifically in \cite{cai2020beyond}) or $ \eom $ signature.

For a given $ \eom[P] $ signature $ f $, we define $\tau_f$ (or $\tau(f)$) as the set consisting of arity-$ d $ signatures under the disequality constraints: 

$$
\tau_f = \{f(x_{j_1}, 1 - x_{j_1}, x_{j_2}, 1 - x_{j_2}, \ldots, x_{j_{d}}, 1 - x_{j_{d}}) \mid x_{j_k} \in \{x_{i_{2k-1}}, x_{i_{2k}}\},k\in\mathbb{N}_+\}.
$$
Notably, variables in one pair can be exchanged freely without changing $P$, therefore $|\tau_f|=2^d$.

An inverse mapping $ \pi $ can also be defined. Suppose $g$ is an arbitrary signature of arity $ d $, $\pi_g$ (or $\pi(g)$) is defined as:

$$
\pi_g(x_1, y_1, x_2, y_2, \ldots, x_d, y_d) = g(x_1, x_2, \ldots, x_d) \cdot \neq_2(x_1,y_1) \cdots \neq_2(x_d,y_d).
$$
It can be easily verified that $ \pi_g $ is an $ \eom[P] $ signature corresponding to the perfect pairing $ P = \{\{x_1, y_1\}, \{x_2, y_2\}, \ldots, \{x_d, y_d\}\} $, in particular, $\tau(\pi(g))$ contains $g$. $\pi$ and $\tau$ can both maintain the property of having an affine support and the additional condition corresponding to $\mathscr{A}$ and $\mathscr{P}$. The following lemma can be directly verified by definition. 

\begin{lemma}\label{lem:tau  pi maintain A P}
For an $\eom$ signature (even an $\eo$ signature) $f$, $f\in\mathscr{A}$ (or $\mathscr{P}$) is equivalent to that $\tau(f)\subseteq  \mathscr{A}$ (or $\mathscr{P}$).
For a signature $g$, $g\in\mathscr{A}$ (or $\mathscr{P}$) is equivalent to that $\pi_g\in\mathscr{A}$ (or $\mathscr{P}$).
\end{lemma}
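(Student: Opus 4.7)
The plan is to exploit the fact that both $\mathscr{A}$ and $\mathscr{P}$ are closed under gadget construction, so that the only real content is to recognize that $\pi$ and $\tau$ can be implemented by gadgets built from $\neq_2$'s (which lie in $\mathscr{A}\cap\mathscr{P}$) together with restriction to affine subspaces of the variables. I would prove the $\pi$-statement first and then use it to derive the $\tau$-statement.

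For the $\pi$-direction, the forward implication is immediate: $\pi_g$ is the gadget obtained from $g$ by attaching a $\neq_2$ vertex to each variable of $g$ with one dangling edge left free, so closure under gadget construction (as cited from Chapter~3 of \cite{cai2017complexity}) gives $\pi_g\in\mathscr{A}$ (resp.\ $\mathscr{P}$) whenever $g$ lies in that class. For the converse, I would recover $g$ from $\pi_g$ by enforcing $y_k=1-x_k$ for each $k$. For $\mathscr{P}$ this is transparent, since the $\neq_2$ factors in $\pi_g$ literally encode these constraints and can be absorbed by variable renaming. For $\mathscr{A}$, the constraints $y_k=1-x_k$ are appended as additional rows of the matrix $A$ in the representation $\lambda\chi_{AX}\cdot\mathfrak{i}^{L_1(X)+\cdots+L_n(X)}$; the indicator functions $L_j$ remain affine after the substitution $y_k\mapsto 1-x_k$, with the constant shifts absorbed into $\lambda$ and a possible extra global phase.

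For the $\tau$-direction, each $g\in\tau_f$ is obtained from $f$ by fixing the pairwise-opposite constraints $x_{i_{2k}}=1-x_{i_{2k-1}}$, together with a choice of which variable in each pair is kept; this is the same kind of affine restriction used in the $\pi$-converse, so membership in $\mathscr{A}$ or $\mathscr{P}$ propagates from $f$ to every element of $\tau_f$. Conversely, if every $g\in\tau_f$ lies in the target class, I pick any fixed $g\in\tau_f$: because $\su(f)\subseteq\eom[P]$, the signature $f$ coincides, up to the permutation of its variables induced by $P$, with $\pi_g$, and the $\pi$-direction already established then gives $f\in\mathscr{A}$ (resp.\ $\mathscr{P}$).

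The main obstacle I expect is the bookkeeping for $\mathscr{A}$ when absorbing the substitutions $y_k=1-x_k$: one must verify that each of the linear indicator functions $L_j(X)$ remains of the form $\langle\alpha,X'\rangle$ over the smaller variable set $X'$, which reduces to a direct substitution but has to be done carefully so that the $\mathfrak{i}^{\sum L_j}$ factor stays of the required form, with the constant contributions from the substitutions combining into a single scalar that is merged into $\lambda$. The $\mathscr{P}$ case, being by definition a product of unaries and binary (dis)equalities, presents no analogous difficulty and falls out of direct inspection.
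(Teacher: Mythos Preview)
Your overall plan is sound and matches the paper, which offers no proof beyond ``can be directly verified by definition.'' One point needs correction, however: in the forward direction for $\pi$, the signature $\pi_g$ is \emph{not} a Holant gadget built from $g$ and $\neq_2$. In $\pi_g(x_1,y_1,\ldots,x_d,y_d)=g(x_1,\ldots,x_d)\cdot\prod_k\neq_2(x_k,y_k)$ each $x_k$ appears in two factors simultaneously with no summation; attaching a $\neq_2$ vertex to $x_k$ in the gadget sense would make $x_k$ internal and sum it out, yielding $g$ with that variable negated---an arity-$d$ signature, not the arity-$2d$ signature $\pi_g$. So closure under gadget construction is not the right citation here.

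The conclusion $\pi_g\in\mathscr{A}$ (resp.\ $\mathscr{P}$) is still correct, and the fix is exactly the ``direct inspection'' you already allude to: for $\mathscr{P}$ a product of a $\mathscr{P}$ signature with further $\neq_2$'s on overlapping variables is in $\mathscr{P}$ by definition, and for $\mathscr{A}$ one appends the rows $x_k+y_k+1=0$ to the matrix $A$ in the representation $\lambda\chi_{AX}\cdot\mathfrak{i}^{\sum L_j}$, leaving the $L_j$ unchanged. Your converse arguments via the substitution $y_k\mapsto 1-x_k$, and your reduction of the $\tau$-statement to the $\pi$-statement using $f=\pi_g$ up to variable permutation when $\su(f)\subseteq\eom[P]$, are correct as written; the bookkeeping you flag for the $\mathfrak{i}^{\sum L_j}$ factor is indeed the only nontrivial step.
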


Furthermore, $\tau(\pi(g))$ contains all signatures attainable from flipping part of $g$'s variables using $\neq_2$. This mapping ensures computational equivalence between $\#\eo$ and $\#\csp$ with a free binary disequality.

\begin{theorem}[{\cite[Theorem 6.1]{cai2020beyond}}]\label{thm:csp=eom}
For any signature set $ \mathcal{F} $,

$$
\#\csp(\mathcal{F}) \equiv_T \#\eo(\pi(\mathcal{F})).
$$
\end{theorem}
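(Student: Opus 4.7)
The plan is to establish both directions of the Turing equivalence through explicit gadget constructions, exploiting the structural fact that $\pi_g$ is precisely $g$ with a free $\neq_2$ attached to each of its variables. This matches exactly the role of internal edges (each carrying an implicit $\neq_2$) in the $\#\eo$ framework, and it is this correspondence that lets $\#\csp$-style variable sharing be encoded as $\#\eo$-style edge topology.

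For the forward direction $\#\csp(\mathcal{F}) \leq_T \#\eo(\pi(\mathcal{F}))$, I would replace each clause of a given $\#\csp(\mathcal{F})$-instance with a single $\pi_g$-vertex and realize variable-sharing via cyclic daisy-chains of $\neq_2$-edges. Explicitly, if a $\#\csp$ variable $z$ appears at positions $p_1,\ldots,p_k$ of clauses $j_1,\ldots,j_k$, I add an internal $\neq_2$-edge from the $x_{p_\ell}$-port of $\pi_{g_{j_\ell}}$ to the $y_{p_{\ell+1}}$-port of $\pi_{g_{j_{\ell+1}}}$ with indices mod $k$. Since $\pi_g$ internally enforces $y_a=\bar{x}_a$ at every port, composing this cycle with the daisy-chain $\neq_2$'s forces every $x_{p_\ell}$-port to share a common value (playing the role of $z$) and every $y_{p_\ell}$-port to be its complement. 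Each Eulerian orientation then bijects with a $\#\csp$-assignment, and the local $\pi_g$-weights multiply to the original product of $g_{j_\ell}$'s, so the partition functions agree.

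For the reverse direction $\#\eo(\pi(\mathcal{F})) \leq_T \#\csp(\mathcal{F})$, I would analyze each $\#\eo$-instance by observing that every $\pi_g$-vertex is fully determined by its $d$ $x$-variables since its $y$-ports are forced complements. Each internal $\neq_2$-edge of the $\#\eo$-instance thus induces an $=$ or $\neq$ relation between two $x$-variables, depending on whether it joins an $x$-port to a $y$-port or instead joins two ports of the same type. I would compute the connected components of the resulting $=/\neq$-graph on $x$-variables: any component containing an odd-parity cycle contributes $0$ to the partition function; otherwise, each $x$-variable in a component equals either a chosen representative $w$ or $\bar w$. The corresponding $\#\csp(\mathcal{F})$-instance then places one $g$-clause per $\pi_g$-vertex, with representatives substituted for the $x$-ports.

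The principal obstacle lies in the reverse direction: when a clause's input sits on the $\bar w$-side of its component, we need to express $g$ applied to a complemented input as a $\#\csp(\mathcal{F})$-clause, yet $\#\csp(\mathcal{F})$ does not a priori provide $\neq_2$. To resolve this I would choose the component representative adaptively, selecting it to minimize the number of flipped inputs, and then absorb any residual flips by auxiliary gadgets realized inside $\#\csp(\mathcal{F})$ via the structural fact that $\tau(\pi(g))$ already contains every flip of $g$; when no direct gadget is available one can alternatively route the flipped inputs through a separate Turing oracle call that itself invokes the forward reduction on a small $\#\csp$-fragment. Verifying that this absorption is correct for every $\mathcal{F}$ and introduces only polynomial overhead, together with the parity-consistency check on components, is the delicate part.
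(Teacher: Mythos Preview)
The paper does not give its own proof of this statement; it is quoted verbatim as \cite[Theorem~6.1]{cai2020beyond}. So there is nothing in the present paper to compare against, and what follows is an assessment of your argument on its own.

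Your forward direction is fine. The daisy-chain construction correctly simulates $=_k$ in the $\#\eo$ world: an internal $\neq_2$ of $\pi_g$ composed with an external $\neq_2$-edge composed with another internal $\neq_2$ of $\pi_{g'}$ is a chain of three disequalities, hence an equality between the two $x$-ports at its ends. Cycling through all occurrences of a $\#\csp$ variable therefore forces a common value, and the weights match.

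Your reverse direction has a real gap, and you have in fact put your finger on it yourself. After resolving the $\neq_2$-cycle structure you are left with a sum of products $\prod_v g_v(w_{c_1}\oplus\epsilon_1,\ldots,w_{c_d}\oplus\epsilon_d)$, where the $\epsilon_i$ are \emph{fixed} parities determined by the instance. This is an instance of $\#\csp(\mathcal{F}')$ where $\mathcal{F}'$ consists of arbitrary bit-flips of signatures in $\mathcal{F}$, and in general $\mathcal{F}'\not\subseteq\mathcal{F}$. Your two proposed fixes do not work: the fact that $\tau(\pi(g))$ contains every flip of $g$ is a statement about the $\#\eo$ side and gives you no gadget inside $\#\csp(\mathcal{F})$; and ``routing through a separate Turing oracle call that itself invokes the forward reduction'' is circular, since the forward reduction consumes a $\#\eo(\pi(\mathcal{F}))$ oracle, not a $\#\csp(\mathcal{F})$ oracle. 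Choosing component representatives adaptively cannot eliminate the flips either, because within a single cycle different $x$-ports can land on opposite parities.

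The honest way through is to prove the intermediate reduction $\#\csp(\mathcal{F}\cup\{\neq_2\})\le_T\#\csp(\mathcal{F})$ and then observe that your cycle analysis gives $\#\eo(\pi(\mathcal{F}))\le_T\#\csp(\mathcal{F}\cup\{\neq_2\})$ immediately. (The paper's own prose just before the theorem, ``computational equivalence between $\#\eo$ and $\#\csp$ with a free binary disequality'', is a strong hint that this is how the cited proof proceeds.) That intermediate reduction is nontrivial and is \emph{not} a gadget construction; it requires a separate argument. You should either cite it or supply it, rather than trying to absorb the flips ad hoc.
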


This equivalence along with Theorem \ref{thm:CSPdichotomy} implies a dichotomy for $\#\eo$ problems defined by $ \eom $ signatures. The dichotomy with two tractable classes based on $ \mathscr{A} $ and $ \mathscr{P} $ types can be stated as follows: if $ f $ is an $ \eom $ signature, and it satisfies the condition $ \tau(f) \subseteq \mathscr{A} $ or $ \tau(f) \subseteq \mathscr{P} $, then $\#\eo(f)$ is polynomial-time computable, otherwise it is \#P-hard. Noticing that two tractable classes share the property that $f$ is $\eom$ and each signature in $\tau(f)$ is affine. In fact, the former one is equivalent to that $f$ is $\eo$ and $\su(f)$ is affine.

\begin{lemma}[{\cite[Lemma 5.7]{cai2020beyond} }] \label{lem: eo+affine= eom}
If an $\eo$ signature $f$ has affine support, then $f$ is a pairwise opposite signature ($\eom$ signature). 
\end{lemma}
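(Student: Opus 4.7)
The plan is to characterize the joint constraint imposed on $\su(f)$ by affineness and the $\eo$ property, and then construct the required pairing directly.

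I would first fix a reference string $\alpha_0\in\su(f)$ (if $\su(f)=\emptyset$ the statement is vacuous) and set $W=\{\alpha\oplus\alpha_0:\alpha\in\su(f)\}$, a linear subspace of $\mathbb{F}_2^{2d}$ by affineness. On the positions $\{1,\ldots,2d\}$ I would introduce the equivalence relation $i\sim j$ iff $w_i=w_j$ for every $w\in W$; each class $C$ then corresponds to a unique linear functional $\phi_C\in W^*$ defined by $\phi_C(w)=w_i$ for any $i\in C$, and splits as $C=C_0\sqcup C_1$ according to the value of $(\alpha_0)_i$. The whole lemma then reduces to the \emph{key claim} that $|C_0|=|C_1|$ for every class $C$: under this claim, pairing each element of $C_0$ with one of $C_1$ inside every class produces a perfect pairing $P$ of $\{1,\ldots,2d\}$, and for $\{i,j\}\in P$ with $i\sim j$ and $(\alpha_0)_i\neq (\alpha_0)_j$, every $\alpha=\alpha_0\oplus w$ satisfies $\alpha_i\oplus\alpha_j=(\alpha_0)_i\oplus(\alpha_0)_j=1$, so $\su(f)\subseteq\eom[P]$.

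Next I would derive the key claim from the $\eo$ condition. For each $\alpha=\alpha_0\oplus w\in\su(f)$, combining $\#_1(\alpha)=d$ with $\#_1(\alpha_0)=d$ is equivalent to $|\su(w)\cap A|=|\su(w)\cap B|$, where $A$ and $B$ are the zero- and one-positions of $\alpha_0$. Since all positions in a given class $C$ share the value $\phi_C(w)$, the set $\su(w)$ is a union of equivalence classes, and setting $\delta_\phi:=|C_0|-|C_1|$ (with $\delta_\phi=0$ for functionals unrealized by any class) converts these identities into $\sum_{\phi:\phi(w)=1}\delta_\phi=0$ for every $w\in W\setminus\{0\}$, plus $\sum_\phi\delta_\phi=0$ from $|A|=|B|=d$.

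The main obstacle is deducing $\delta\equiv 0$ from this (potentially redundant) system of hyperplane-sum constraints, which I plan to handle via Fourier analysis on the finite abelian group $W^*$. For each $w\in W$, the character $\chi_w(\phi)=(-1)^{\phi(w)}$ gives $\widehat{\delta}(w)=\sum_\phi\delta_\phi-2\sum_{\phi:\phi(w)=1}\delta_\phi$, and both terms on the right vanish by the previous paragraph. Since $\{\chi_w\}_{w\in W}$ is a complete character system on $W^*$, Fourier inversion forces $\delta\equiv 0$, which gives $|C_0|=|C_1|$ for every class and closes the proof.
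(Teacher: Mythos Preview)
The paper does not give its own proof of this lemma; it is quoted without proof from \cite[Lemma 5.7]{cai2020beyond}. There is therefore nothing in the present paper to compare your argument against.

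That said, your proof is correct and self-contained. The reduction to the key claim $|C_0|=|C_1|$ for every equivalence class is the right structural move: once it holds, any bijection $C_0\leftrightarrow C_1$ inside each class gives a perfect pairing $P$ with $\su(f)\subseteq\eom[P]$, exactly as you wrote. The derivation of the key claim is also sound. The Hamming-weight identity $\#_1(\alpha_0\oplus w)=\#_1(\alpha_0)$ does rewrite as $|\su(w)\cap A|=|\su(w)\cap B|$, and since $\su(w)$ is a union of equivalence classes this gives $\sum_{\phi:\phi(w)=1}\delta_\phi=0$ for each $w\in W$; together with $\sum_\phi\delta_\phi=|A|-|B|=0$ you obtain $\widehat{\delta}(w)=\sum_\phi\delta_\phi-2\sum_{\phi:\phi(w)=1}\delta_\phi=0$ for every $w$. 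Since the characters $\chi_w(\phi)=(-1)^{\phi(w)}$ indexed by $w\in W$ form a complete set of characters of $W^*$ (via the canonical identification $W\cong W^{**}$), Fourier inversion forces $\delta\equiv 0$, and the lemma follows.
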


Now we introduce some definitions for the currently known dichotomies for specific $\#\eo$ problems. $\mathscr{M}$ is the set of all ternary signatures whose supports only contain strings of Hamming weight exactly 1. In other words, a quaternary \eo\ signature $f\in \mathscr{M} \otimes \Delta_1 $, if and only if $\su(f) \subseteq \{1\} \times \{100,010,001\}=\{1100,1010,1001\}$. $\widetilde{\mathscr{M}}$ is defined similarly, except that  the Hamming weights of the supports are exactly 2. We also define

$$\mathscr{M}_{\mathscr{A}}=\{f\in\mathscr{M}\mid\text{the quotient of any two non-zero values of } f \text{ belongs to } \{\pm1,\pm\mathfrak{i}\}\}.$$ $\mathscr{M}_{\mathscr{A}} \otimes \Delta_1$ can be seen as $\mathscr{M} \otimes \Delta_1$ with the $\mathscr{A}$ type requirement. 
Similarly, we define

$$\widetilde{\mathscr{M}_{\mathscr{A}}}=\{f\in\widetilde{\mathscr{M}}\mid\text{the quotient of } \text{any two } \text{non-zero values } \text{of f } \text{belongs to } \{\pm1,\pm\mathfrak{i}\}\}.$$ 

With these definitions, we present the dichotomy for the six-vertex model \cite{cai2018complexity}, along with its generalized form that is $\#\eo$ problems defined by a set of binary or quaternary signatures \cite{meng2024p}.

\begin{theorem}[\cite{cai2018complexity}] \label{thm:sixvertexdichotomyMform}
    Let f be a quaternary $\eo$ signature, then $\hol(\neq_2\mid f)$ is \#P-hard except for the following cases:

a. $f\in\mathscr{P}$ (equivalently, $f \in \eom$ and $\tau(f) \subseteq \mathscr{P}$);

b. $f\in\mathscr{A}$ (equivalently, $f \in \eom$ and $\tau(f) \subseteq \mathscr{A}$);

c. $f\in \mathscr{M}\otimes  \Delta_1 $;

d. $f\in  \widetilde{ \mathscr{M}} \otimes  \Delta_0$;

in which cases $\hol(\neq_2\mid f)$ is computable in polynomial time.
\end{theorem}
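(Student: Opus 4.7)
The plan is to classify the quaternary $\eo$ signature $f$ by the structure of its signature matrix and its support, and to handle the tractable and hard cases separately. Because $f$ is an $\eo$ signature of arity $4$, its support lies in the six Hamming-weight-$2$ strings, so $M_f$ (with row indices $x_1x_2$ and column indices $x_3x_4$) has at most six non-zero entries, split between the ``anti-diagonal'' corners (positions $0011,0110,1001,1100$) and the central $2\times 2$ block indexed by $\{01,10\}\times\{01,10\}$. The tractable classes (a)--(d) each correspond to a structural restriction on these entries, so the case analysis is naturally organized by which of these entries vanish and by the relations among the non-zero weights.

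For the tractable side, cases (a) and (b) can be handled uniformly. When $f\in\mathscr{A}\cup\mathscr{P}$ the support is in particular affine, so Lemma \ref{lem: eo+affine= eom} forces $f\in\eom[P]$ for some perfect pairing $P$; combining Theorem \ref{thm:csp=eom} with Lemma \ref{lem:tau  pi maintain A P} then yields $\#\eo(f)\equiv_T \#\csp(\tau(f))$ with $\tau(f)\subseteq\mathscr{A}\cup\mathscr{P}$, and Theorem \ref{thm:CSPdichotomy} supplies the polynomial-time algorithm. For (c) and (d) one variable is pinned to a constant orientation ($\Delta_1$ or $\Delta_0$), so the problem factors through a ternary signature in $\mathscr{M}$ or $\widetilde{\mathscr{M}}$ whose support strings are all one-hot (respectively one-cold); combined with the $\neq_2$ on the LHS of the bipartite $\hol$ representation, the counting reduces to a product of local matching-type subproblems that is efficiently evaluable.

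For \#P-hardness, I plan an exhaustive case analysis on the support pattern of $f$, driven by two pivotal tools. First, from $f$ I try to realize the pinning signature $\pin=\bi{1}{0}$ (and, if helpful, $\ouhe$ via Lemma \ref{lem:neq4ab obtain ouhe}) through self-loops, tensor products, and polynomial interpolation; once $\pin$ is available, pinning reduces $\#\eo(f)$ to $\#\eo$ instances on smaller-arity subsignatures of $f$ that can be handled inductively. Second, I use the $\pi/\tau$ correspondence: if $f\in\eom[P]$ for some $P$ but $\tau(f)\not\subseteq\mathscr{A}\cup\mathscr{P}$, then Theorems \ref{thm:csp=eom} and \ref{thm:CSPdichotomy} immediately give hardness. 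If $f\notin\eom$, its support witnesses two distinct pairings simultaneously, and I build gadgets (using $\neq_2$-connections and weighted self-loops via $\bi{a}{b}$) that realize a ternary or quaternary signature whose $\tau$-image fails affineness, again invoking Theorem \ref{thm:CSPdichotomy}. For the residual three-support signatures that miss both $\mathscr{M}\otimes\Delta_1$ and $\widetilde{\mathscr{M}}\otimes\Delta_0$ (e.g.\ supports like $\{0011,0101,1010\}$), a tailor-made gadget outputs a signature outside $\mathscr{A}\cup\mathscr{P}$.

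The main obstacle is realizing $\pin$ in the subcases where $M_f$ exhibits symmetries or rank deficiencies that cause the natural self-loop and tensor gadgets to collapse; there one must interpolate using several copies of $f$ with carefully chosen wirings and then diagonalize the resulting matrix pencil, which requires explicit computation for each support pattern. A second technical difficulty is the ``boundary'' three-element supports that just barely fail (c) and (d): the tractable siblings $\mathscr{M}\otimes\Delta_1$ and $\widetilde{\mathscr{M}}\otimes\Delta_0$ have very rigid support geometry, and one has to show that every other three-support quaternary $\eo$ signature admits a gadget composition whose $\tau$-image is neither affine nor of product form, so that the \#CSP dichotomy delivers the hardness conclusion.
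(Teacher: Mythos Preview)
This theorem is not proved in the present paper; it is quoted verbatim from \cite{cai2018complexity} in the ``Previous results'' subsection (Section~2.3) and used as a black box thereafter. Consequently there is no proof in the paper to compare your proposal against.

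As to the proposal itself: your plan for the tractable cases (a) and (b) is fine and essentially standard. Your treatment of (c) and (d), however, is too vague to count as a proof---``the counting reduces to a product of local matching-type subproblems that is efficiently evaluable'' is an assertion, not an argument, and in fact tractability for $\mathscr{M}\otimes\Delta_1$ and $\widetilde{\mathscr{M}}\otimes\Delta_0$ in \cite{cai2018complexity} goes through a holographic transformation to a known tractable family rather than a direct decomposition. The hardness side is only an outline with self-acknowledged gaps: you correctly identify the two key obstacles (realizing $\pin$ when $M_f$ has degenerate symmetries, and handling the three-element supports that just miss (c)/(d)), but you do not resolve either. The actual proof in \cite{cai2018complexity} carries out exactly this kind of exhaustive case analysis with explicit gadgets and interpolation arguments for each support pattern, and it is lengthy; your sketch is compatible with that approach but does not yet contain the substance.
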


\begin{theorem}[\cite{meng2024p}]\label{thm:arity4setdichotomy}
Suppose $\mathcal{F}$ is a set of Eulerian signatures of arity less than or equal to 4, mapping Boolean inputs to complex numbers. Then $\hol(\neq_2\mid\mathcal{F})$ is \#P-hard unless one of the following conditions holds:

a. $\mathcal{F}\subseteq \mathscr{P}\cup\mathscr{M}\otimes\Delta_1$;

b. $\mathcal{F}\subseteq \mathscr{A}\cup\mathscr{M}_{A}\otimes\Delta_1$;

c. $\mathcal{F}\subseteq \mathscr{P}\cup\widetilde{\mathscr{M}}\otimes\Delta_0$;

d. $\mathcal{F}\subseteq \mathscr{A}\cup\widetilde{\mathscr{M}_{\mathscr{A}}}\otimes\Delta_0$,

in which cases the problem can be computed in polynomial time.
\end{theorem}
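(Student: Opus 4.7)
The tractability direction of (a)--(d) is routine: for each of the four classes, one part is $\mathscr{A}$ or $\mathscr{P}$ and tractable via the $\#\csp$ dichotomy Theorem \ref{thm:CSPdichotomy} together with the $\tau$/$\pi$ correspondence of Theorem \ref{thm:csp=eom}, while the $\mathscr{M}\otimes\Delta_1$-type or $\widetilde{\mathscr{M}}\otimes\Delta_0$-type part is tractable for a single signature by Theorem \ref{thm:sixvertexdichotomyMform}. Because $\mathscr{A}$ and $\mathscr{P}$ are closed under gadget construction, the individual algorithms combine without difficulty to handle arbitrary tensor products and $\mathcal{F}$-gates within a single class. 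The four classes are also interchanged by the duality that swaps $0$ and $1$ in the Boolean domain, so one can focus the analysis on (a) and (b) and obtain (c) and (d) by symmetry.

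For hardness, the first step is to apply Theorem \ref{thm:sixvertexdichotomyMform} to each quaternary $f\in\mathcal{F}$ in isolation: if any single signature is already \#P-hard, we are done. Otherwise every quaternary signature in $\mathcal{F}$ falls into $\mathscr{P}$, $\mathscr{A}$, $\mathscr{M}\otimes\Delta_1$, or $\widetilde{\mathscr{M}}\otimes\Delta_0$. Every binary EO signature is a generalized disequality $\neq_2^{a,b}$, which for $ab\neq 0$ lies in $\mathscr{P}$, and for $ab=0$ realizes the pinning signature $\pin=\bi{1}{0}$ directly. The goal then becomes: for every $\mathcal{F}$ that avoids all four sets (a)--(d), construct an $\mathcal{F}$-gate whose resulting quaternary signature is outside the four tractable classes of Theorem \ref{thm:sixvertexdichotomyMform}, transferring \#P-hardness back to $\mathcal{F}$.

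The core case analysis is organized by which single-signature classes actually occur in $\mathcal{F}$. The illustrative sub-cases are: (i) a mixture containing both an $\mathscr{M}\otimes\Delta_1$ signature and a $\widetilde{\mathscr{M}}\otimes\Delta_0$ signature, where connecting their $\Delta_1$- and $\Delta_0$-pinned legs through $\neq_2$-chains (possibly using Lemma \ref{lem:neq4ab obtain ouhe} to produce $\ouhe$ along the way) yields a quaternary gadget whose support is neither pairwise opposite nor of $\mathscr{M}$- or $\widetilde{\mathscr{M}}$-type; (ii) a mixture inside $\mathscr{A}\cup\mathscr{P}$ that lies in neither $\mathscr{A}$ alone nor $\mathscr{P}$ alone, which via Theorem \ref{thm:csp=eom} becomes a $\#\csp$ instance violating the $\mathscr{A}$/$\mathscr{P}$ dichotomy; and (iii) a mixture of $\mathscr{M}\otimes\Delta_1$ with an $\mathscr{A}$ or $\mathscr{P}$ signature not of the required sub-form, where pinning a variable via $\pin$ and tensoring with the $\mathscr{M}\otimes\Delta_1$ signature injects a forbidden weight, for example a unit that is not a power of $\mathfrak{i}$, onto an $\mathscr{M}\otimes\Delta_1$-type support and thus escapes all four tractable classes.

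The main obstacle is the finite but delicate bookkeeping in sub-case (iii) together with its dual: different signatures in $\mathscr{M}\otimes\Delta_1$ may pin different variables, so before combining them with another signature one must use $\neq_2$-edges or available pinning signatures to align the pinned legs correctly; only then does the gadget's signature matrix take the controllable form needed to inject the forbidden weight. Once alignment is achieved, the resulting gadget can be checked against the four tractable classes of Theorem \ref{thm:sixvertexdichotomyMform} in a mechanical way, and hardness follows by composition with that single-signature dichotomy.
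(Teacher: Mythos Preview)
The paper does not contain a proof of this statement: Theorem~\ref{thm:arity4setdichotomy} appears in Section~2.3 (``Previous results'') and is simply quoted from \cite{meng2024p} with no argument given here. There is therefore nothing in the present paper to compare your proposal against; the authors use this dichotomy as a black box (notably in Lemmas~\ref{arity40110} and \ref{arity401-10}, and via Lemma~\ref{lem:4nomid}).

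As a side remark on your sketch itself: the overall shape (single-signature classification via Theorem~\ref{thm:sixvertexdichotomyMform}, then mixing arguments) is plausible, but your tractability paragraph is too quick. In classes (a)--(d) the two parts are \emph{not} both in $\mathscr{A}$ or both in $\mathscr{P}$, so the claim that ``the individual algorithms combine without difficulty'' via closure under gadget construction does not go through as stated; one genuinely needs the pure-signature/rebalancing machinery (cf.\ Theorems~\ref{thm:puredichotomy} and \ref{thm:rebaeasy}) or an equivalent argument to handle the $\mathscr{M}\otimes\Delta_1$ part interacting with the $\mathscr{A}$/$\mathscr{P}$ part. Likewise, in your hardness sub-case (iii) the pinning signature $\Delta$ is not automatically available from an arbitrary $\mathcal{F}$ of this form, so ``pinning a variable via $\pin$'' already presupposes a nontrivial construction. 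If you want an actual proof you should consult \cite{meng2024p} directly.
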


In addition, some progress has been made towards $\eo$ signatures with larger arity. A special property is called ARS. A signature $f$ satisfying ARS \cite{cai2020beyond} if and only if $f(\overline\alpha)=\overline{f(\alpha)}$, where $\overline{f(\alpha)}$ denotes the complex conjugation of $f(\alpha)$. The dichotomy for signatures with ARS is presented as follows.

\begin{theorem}[{\cite[Theorem 3.1]{cai2020beyond}} ]\label{thm: dichotomy for ARS}
Let $\mathcal{F}$ be a set of $\eo$ signatures satisfying ARS. Then $\#\eo(\mathcal{F})$ is \#P-hard unless $\mathcal{F}\subseteq\mathscr{A}$ or $\mathcal{F}\subseteq\mathscr{P}$, in which cases it can be computed in polynomial time.
\end{theorem}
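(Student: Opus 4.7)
The plan is to separate tractability from hardness, and for the latter to split further by whether every signature in $\mathcal{F}$ has affine support.

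For tractability, if $\mathcal{F} \subseteq \mathscr{A}$ or $\mathcal{F} \subseteq \mathscr{P}$ then every $f \in \mathcal{F}$ has affine support by Definitions~\ref{defa} and~\ref{defp}, so Lemma~\ref{lem: eo+affine= eom} shows each $f$ is $\eom$. By Theorem~\ref{thm:csp=eom} we obtain $\#\eo(\mathcal{F}) \equiv_T \#\csp(\tau(\mathcal{F}))$, and Lemma~\ref{lem:tau  pi maintain A P} keeps $\tau(\mathcal{F})$ inside the same tractable class, so the polynomial-time algorithm follows from Theorem~\ref{thm:CSPdichotomy}.

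For hardness, I first handle the case in which every $f \in \mathcal{F}$ has affine support. Then each $f$ is $\eom$ by Lemma~\ref{lem: eo+affine= eom}, so Theorem~\ref{thm:csp=eom} reduces the problem to $\#\csp(\tau(\mathcal{F}))$. Lemma~\ref{lem:tau  pi maintain A P} transfers $\mathcal{F} \not\subseteq \mathscr{A}$ and $\mathcal{F} \not\subseteq \mathscr{P}$ to $\tau(\mathcal{F})$, and Theorem~\ref{thm:CSPdichotomy} yields \#P-hardness.

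The core of the proof is the remaining case: some $f \in \mathcal{F}$ has non-affine support. Here I would exploit ARS as follows. A non-affine witness consists of $\alpha, \beta, \gamma \in \su(f)$ with $\delta := \alpha \oplus \beta \oplus \gamma \notin \su(f)$. The ARS property $f(\overline{x}) = \overline{f(x)}$ simultaneously places the dual triple $\overline\alpha, \overline\beta, \overline\gamma$ in $\su(f)$ with conjugate values, producing a dual-symmetric obstruction. Using $\neq_2$-mediated self-loops (pinning pairs of variables through $\bi{1}{0}$ gadgets, which preserves both the EO property and ARS) on the coordinates where $\alpha, \beta, \gamma$ all agree, one can project $f$ onto the coordinates where they differ. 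If this projection can be driven down to arity $4$, then Theorem~\ref{thm:sixvertexdichotomyMform} or Theorem~\ref{thm:arity4setdichotomy} applies: ARS forces the projected support to be closed under complementation, which is incompatible with $\mathscr{M} \otimes \Delta_1$ or $\widetilde{\mathscr{M}} \otimes \Delta_0$ (whose supports are one-sided in Hamming weight), and non-affineness rules out $\mathscr{A}$ and $\mathscr{P}$, so \#P-hardness follows.

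The main obstacle is ensuring the pinning projection actually yields a low-arity, non-degenerate signature satisfying the above exclusions. In general one may need to combine pinning with weighted self-loops $\bi{a}{b}$ and polynomial interpolation in the parameters $(a,b)$ in order to separate contributions from distinct strings of $\su(f)$; this interpolation step is the most delicate, since one must verify that the relevant Vandermonde-type matrices are nonsingular. Once a genuinely non-affine, ARS-preserving, arity-$4$ signature is realized from $\mathcal{F}$, the known arity-$4$ dichotomy closes the argument.
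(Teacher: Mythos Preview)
The paper does not prove this theorem; it is quoted verbatim as \cite[Theorem 3.1]{cai2020beyond} and used as a black box in the proof of Theorem~\ref{thmdetail}. So there is no ``paper's own proof'' to compare against here.

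That said, your proposal has a genuine gap in the non-affine case. You plan to pin pairs of variables using $\bi{1}{0}=\Delta$, but $\Delta$ is \emph{not} assumed to be available in $\#\eo(\mathcal{F})$; realizing it is exactly the content of Section~\ref{sec:ceo}. And the outcome of that section is precisely that when every signature in $\mathcal{F}$ satisfies ARS up to a constant, one may \emph{fail} to realize $\Delta$: this is the first alternative in Theorem~\ref{thm:eoc to eo}, and Lemmas~\ref{complementary equal k}, \ref{generating k}, \ref{lem:generate 2 signature classification}, \ref{lem:generate1} show that the generating process for ARS signatures can terminate with only $\neq_2$ (or $\neq_2$ and $\bi{1}{-1}$) and never produce $\Delta$. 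So the very structural hypothesis you are working under is the one that blocks your main tool. Your tractability and affine-support hardness paragraphs are fine, but the core case cannot be handled by the pinning-to-arity-4 strategy you outline.

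The proof in \cite{cai2020beyond} proceeds differently: ARS is used to build ``norm-squared'' gadgets (mating $f$ with its conjugate copy) that produce nonnegative real-valued signatures, from which one can extract hardness witnesses without needing $\Delta$. If you want to reconstruct the argument, that is the mechanism to look for, not free pinning.
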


Another typical kind of $\eo$ signatures is introduced in \cite{meng2024p}, called the pure signatures.

\begin{definition}\label{def:pure}
    An $\eo$ signature $f$ is pure-up, if $\text{Span}(f)\subseteq\eog$. A signature set $\mathcal{F}$ is a pure-up \eo\ signature, if each signature in it is pure-up. Similarly, An $\eo$ signature $f$ is pure-down, if $\text{Span}(f)\subseteq\eol$. A signature set $\mathcal{F}$ is pure-down, if each signature in it is a pure-down \eo\ signature.

    Both pure-up and pure-down signatures are collectively referred to as pure signatures.
\end{definition}

We denote the property of pure-up by the symbol $\allup$, which means for each odd integer $k$, the bitwise addition of any arbitrary $k$ strings from $\su(f)$ is in $\eog$. Similarly, the property pure-down is denoted by $\alldown$.

We now introduce a significant definition, which looses the restriction on supports but require the affine part to be more specific.

\begin{definition}
  Suppose $f$ is an arity $2d$ $\eo$ signature and $S\subseteq \eoe$. $f|_{S}$ is the restriction of $f$ to $S$, which means when $\alpha\in S$, $f|_{S}(\alpha)=f(\alpha)$, otherwise $f|_{S}(\alpha)=0$.
  
  If for any perfect pairing $P$ of Var$(f)$, $f|_{\eom[P]}\in\mathscr{A}$, then we say that $f$ is $\eom[\mathscr{A}]$.
  
  Similarly, if for any perfect pairing $P$ of Var$(f)$, $f|_{\eom[P]} \in \mathscr{P}$, then $f$ is $\eom[\mathscr{P}]$.
  \label{def:eoaeop}
\end{definition}

The following lemma can be easily verified by Definition \ref{def:eoaeop}, \ref{defa} and \ref{defp}.
\begin{lemma}
    Suppose $f$ is an \eo\ signature. If $f\in\mathscr{A}$, then $f$ is $\eom[\mathscr{A}]$. If $f\in\mathscr{P}$, then $f$ is $\eom[\mathscr{P}]$.
    \label{lem:ap to eoaeop}
\end{lemma}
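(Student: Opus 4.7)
The plan is to observe that the indicator function $\mathbf{1}_{\eom[P]}$ itself lies in $\mathscr{A}\cap\mathscr{P}$ and that both classes are closed under pointwise product, so that restricting $f$ to $\eom[P]$ (which is exactly pointwise multiplication by this indicator) preserves membership in each class.

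First, for a perfect pairing $P = \{\{x_{i_1}, x_{i_2}\},\ldots,\{x_{i_{2d-1}}, x_{i_{2d}}\}\}$ of $\text{Var}(f)$, note that $\mathbf{1}_{\eom[P]}$ equals $\prod_{k=1}^{d}\neq_2(x_{i_{2k-1}}, x_{i_{2k}})$. By Definition \ref{defp} this is a product of binary disequality signatures, hence $\mathbf{1}_{\eom[P]}\in\mathscr{P}$. For the $\mathscr{A}$ side, a single $\neq_2(u,v)$ coincides with $\chi_{AX}$ using the one-row matrix $A=(1,1,1)$ over $\mathbb{F}_2$ (and trivial $\mathfrak{i}$-factors, e.g.\ $L_1 = \langle 0, X\rangle$), which fits Definition \ref{defa}; closure of $\mathscr{A}$ under pointwise product (verified in the next step) then yields $\mathbf{1}_{\eom[P]}\in\mathscr{A}$.

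Next I would verify closure under pointwise product. For $\mathscr{P}$ this is immediate from Definition \ref{defp}: the product of two expressions, each written as a product of unary signatures, $=_2$, and $\neq_2$, is again such a product. For $\mathscr{A}$, writing $f = \lambda\chi_{AX}\cdot\mathfrak{i}^{\sum_i L_i(X)}$ and $g = \mu\chi_{BX}\cdot\mathfrak{i}^{\sum_j M_j(X)}$, the pointwise product is $\lambda\mu\,\chi_{CX}\cdot\mathfrak{i}^{\sum_i L_i(X)+\sum_j M_j(X)}$, where $C$ is obtained by vertically stacking $A$ over $B$; this is again of the form required by Definition \ref{defa}.

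Combining the two steps: if $f\in\mathscr{P}$ (resp.\ $\mathscr{A}$), then for every perfect pairing $P$ we have $f|_{\eom[P]} = f\cdot\mathbf{1}_{\eom[P]}\in\mathscr{P}$ (resp.\ $\mathscr{A}$), so $f$ is $\eom[\mathscr{P}]$ (resp.\ $\eom[\mathscr{A}]$) by Definition \ref{def:eoaeop}. There is no real obstacle here; the lemma is a direct verification from the definitions, recording that adding disequality constraints on paired variables preserves both the Gaussian-affine structure of $\mathscr{A}$ and the unary/$(\neq_2)$/$(=_2)$-product structure of $\mathscr{P}$.
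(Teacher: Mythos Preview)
Your proposal is correct and is precisely the direct verification from Definitions~\ref{def:eoaeop}, \ref{defa} and~\ref{defp} that the paper alludes to without spelling out (the paper states the lemma ``can be easily verified'' from these definitions and gives no further argument). Your explicit factorization $f|_{\eom[P]} = f\cdot\mathbf{1}_{\eom[P]}$, together with the observation that $\mathbf{1}_{\eom[P]}=\prod_k\neq_2(x_{i_{2k-1}},x_{i_{2k}})\in\mathscr{A}\cap\mathscr{P}$ and closure of both classes under pointwise product, is exactly the intended check.
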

Given these definitions, we present the following dichotomy for pure signatures.

\begin{theorem}[\cite{meng2024p}]
    Suppose $\mathcal{F}$ is a set of pure-up (or pure-down) $\eo$ signatures. Then $\#\eo(\mathcal{F})$ is \#P-hard unless all signatures in $\mathcal{F}$ are $\eom[\mathscr{A}]$ or all of them are $\eom[\mathscr{P}]$, in which cases it can be computed in polynomial time.
    \label{thm:puredichotomy}
\end{theorem}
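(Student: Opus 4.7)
Up to the symbol swap (which exchanges pure-up with pure-down and preserves both $\mathscr{A}$ and $\mathscr{P}$), it suffices to treat pure-up $\mathcal{F}$. The plan is to split into tractability and hardness, and within each direction to use gadget construction (self-loops via $\neq_2$, tensor products, and $\mathcal{F}$-gates) to reduce to lower-arity instances, with Theorem \ref{thm:arity4setdichotomy} supplying the base case at arity at most $4$, and to preserve both pure-upness and the $\eom[\mathscr{A}]$/$\eom[\mathscr{P}]$ status along the reduction.

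For the tractability direction, assume every $f\in\mathcal{F}$ is $\eom[\mathscr{A}]$ (the $\eom[\mathscr{P}]$ case is dual). The approach I would take is to rewrite the $\#\eo$ partition function as a sum indexed by a choice of perfect pairing $P_v$ at each vertex $v$; for any fixed assignment of pairings the inner evaluation becomes a $\#\csp$-type computation on the collection $\{f_v|_{\eom[P_v]}\}\subseteq\mathscr{A}$, which is polynomial-time computable by Theorem \ref{thm:CSPdichotomy} via the equivalence of Theorem \ref{thm:csp=eom}. The crucial lever is the pure-up hypothesis $\text{Span}(f)\subseteq\eog$: because every three-term bitwise XOR of support strings stays on the up side, the naively exponential enumeration of global pairings collapses. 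Either one shows that each support string canonically selects a single ``most-up'' pairing and that pure-upness rules out contributions from any other pairing, or one interpolates over polynomially many gadget copies of each $f$ that isolate one stratum $\eom[P]$ at a time.

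For the hardness direction, suppose the tractable side fails, so there exist $f\in\mathcal{F}$ and a pairing $P$ witnessing $f|_{\eom[P]}\notin\mathscr{A}$, together with $g\in\mathcal{F}$ and a pairing $P'$ witnessing $g|_{\eom[P']}\notin\mathscr{P}$. The plan is to realize, from $\mathcal{F}$ via gadgets, a set of $\eom$ signatures whose $\tau$-images collectively fail both $\mathscr{A}$ and $\mathscr{P}$, and then invoke Theorem \ref{thm:csp=eom} together with Theorem \ref{thm:CSPdichotomy} to conclude \#P-hardness. Concretely, attaching $\neq_2$ self-loops between selected variables of $f$ (and similarly $g$) in a pattern compatible with $P$ (and $P'$) should, under pure-upness, annihilate the support strings outside $\eom[P]$ and realize $f|_{\eom[P]}$ up to a nonzero scalar; the three-term XOR condition ensures that the offending cross-pairing contributions do not reemerge through interference. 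The main obstacle I anticipate is the algorithmic direction, specifically proving that pure-upness genuinely compresses the pairing enumeration into a polynomial-size computation and that the surviving contributions assemble cleanly into a single $\#\csp$-tractable evaluation; the hardness reduction should be comparatively routine once the correct projection gadget onto $\eom[P]$ is identified.
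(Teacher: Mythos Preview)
This theorem is quoted from \cite{meng2024p} in the paper's ``Previous results'' section and is not proved here; the present paper uses it as a black box (Case~3a in Section~\ref{section: dichotomy for eoc}). So there is no in-paper proof to compare against, and your proposal must be judged on its own.

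On the hardness side, your plan has a genuine gap. Adding $\neq_2$ self-loops ``compatible with $P$'' does not project $f$ onto $f|_{\eom[P]}$; a self-loop on $x_i,x_j$ \emph{sums} the values $f(\ldots 01\ldots)+f(\ldots 10\ldots)$, it does not keep the two copies separate, and it certainly does not zero out support strings that lie in some other $\eom[Q]$. The projection you want is exactly what $\neq_4$ accomplishes (see Lemma~\ref{lem:ouhehard}), but getting $\neq_4$ is the whole difficulty. In this paper $\neq_4$ is obtained via Lemma~\ref{lem:obtainouhe}, whose hypothesis is that $f$ is $\mitsuup$ \emph{and not} $\allup$; for genuinely pure-up $f$ that lemma is unavailable, and indeed Lemma~\ref{lem:noouhe} says that under mild conditions pure-up signatures locally look like full $\eom[P]$ blocks, which is evidence that no simple self-loop gadget will isolate a single pairing. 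Your sentence ``pure-upness \ldots\ annihilate the support strings outside $\eom[P]$'' has no mechanism behind it.

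On the tractability side, the sum-over-pairings decomposition is not well-defined as stated: the sets $\eom[P]$ for different $P$ overlap (any string in $\eoe$ lies in many $\eom[P]$), so $\sum_P f|_{\eom[P]}$ over-counts and you would need an inclusion--exclusion over a lattice of pairings, which is exponential in the arity and does not obviously collapse. The assertion that ``each support string canonically selects a single most-up pairing'' is false in general (e.g.\ $0101$ sits in both pairings of a $4$-ary signature), and nothing in the pure-up condition singles one out. The actual algorithm in \cite{meng2024p} is closer in spirit to Theorem~\ref{thm:rebaeasy}: one exploits the $\allup$ structure to pin variables consistently and reduce to a single $\eom$ evaluation, rather than enumerating pairings.
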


Another concept named rebalancing is also defined in \cite{meng2024p}. In addition, a polynomial-time algorithm is also presented with some conditions. 

\begin{definition}
An $\eo$ signature $f$ of arity $2d$, is called \ba[0](\ba[1] respectively), when the following  recursive conditions are met.
\begin{itemize}
    \item $d=0$: No restriction.
    \item $d\ge 1$: For any variable $x$ in $X=\text{Var}(f)$, there exists a variable $y=\psi(x)$ different from $x$, such that for any $\alpha\in\{0,1\}^X$, if $\alpha_x=\alpha_y=0$($\alpha_x=\alpha_y=1$ respectively) then $f(\alpha)=0$. Besides, the arity $2d-2$ signature $f^{x=0,y=1}$ is 0-rebalancing($f^{x=1,y=0}$ is \ba[1] respectively).
\end{itemize}
For completeness we view all nontrivial signatures of arity 0, which is a non-zero constant, as \ba[0](\ba[1]) signatures. Moreover, an \eo\ signature set $\mathcal{F}$ is said to be \ba[0](\ba[1] respectively) if each signature in $\mathcal{F}$ is \ba[0](\ba[1] respectively). 
\label{def:reba}
\end{definition}

\begin{theorem}[\cite{meng2024p}]
    The $\#\eo$ problem, defined over a finite set $\mathcal{F}$ of Boolean signatures, is solvable in polynomial time if each $f \in \mathcal{F}$ is 0-rebalancing (or 1-rebalancing), and all signatures in $\mathcal{F}$ belong to $\eom[\mathscr{A}]$ or all belong to $\eom[\mathscr{P}]$.
    \label{thm:rebaeasy}
\end{theorem}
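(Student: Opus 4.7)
The plan is to exhibit a polynomial-time algorithm for $\#\eo(\mathcal{F})$ by reducing to a $\#\csp$ instance over $\mathscr{A}$ (or $\mathscr{P}$), combining the inductive structure of the rebalancing property with the tractability of $\eom[\mathscr{A}]$ (or $\eom[\mathscr{P}]$). I describe the 0-rebalancing, $\eom[\mathscr{A}]$ case; the other case is dual.

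The first ingredient is the partner structure. By unfolding the recursive definition of 0-rebalancing, each $f\in\mathcal{F}$ of arity $2d$ carries a partner function $\psi:\text{Var}(f)\to\text{Var}(f)$ with $(\alpha_x,\alpha_{\psi(x)})\neq(0,0)$ for every $\alpha\in\text{supp}(f)$, together with a nested family of restrictions $f^{x_1=0,y_1=1},\, f^{x_1=0,y_1=1,x_2=0,y_2=1},\ldots$ that remain 0-rebalancing. I would use this to iteratively decompose each $f$ along pairs $(x,y)=(x,\psi(x))$: write $f$ as a sum over three restrictions, corresponding to $(x,y)\in\{(0,1),(1,0),(1,1)\}$, with the $(0,0)$ case excluded by rebalancing. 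The distinguished $(0,1)$-restriction inherits 0-rebalancing, so iterating produces a hierarchy whose leaves have tightly controlled support.

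Given an input signature grid $\Omega$, I would iterate these vertex-level decompositions. The outcome is a structured sum of subproblems, each a signature grid in which every restricted signature has support inside $\eom[P]$ for some perfect pairing $P$ assembled from the partner structure. By the $\eom[\mathscr{A}]$ assumption, each such restricted signature then lies in $\mathscr{A}$. Through Theorem~\ref{thm:csp=eom} combined with Lemma~\ref{lem:tau  pi maintain A P}, each subproblem reduces to a $\#\csp$ instance over signatures in $\mathscr{A}$, which is polynomial-time by Theorem~\ref{thm:CSPdichotomy}.

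The main obstacle is preventing a combinatorial blow-up: naively, each vertex would contribute up to $3^d$ pattern choices, yielding an exponential enumeration. The plan to avoid this is to encode the per-vertex pattern selection by auxiliary Boolean variables and $\mathscr{A}$-gadgets inside a single $\#\csp(\mathcal{H})$ instance with $\mathcal{H}\subseteq\mathscr{A}$, leveraging the closure of $\mathscr{A}$ under gadget construction. The delicate step will be verifying that the cross-vertex interactions of this encoding stay within $\mathscr{A}$ globally, and that the $(1,1)$-contributions, which are not directly governed by the rebalancing recursion, are nonetheless absorbed by the $\eom[\mathscr{A}]$ hypothesis. As a fallback, one could seek a holographic transformation that sends 0-rebalancing $\eom[\mathscr{A}]$ signatures to genuine $\eom$ signatures in $\mathscr{A}$, reducing the theorem directly to Theorem~\ref{thm:csp=eom}.
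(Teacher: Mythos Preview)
This theorem is quoted from \cite{meng2024p}; the present paper does not give its own proof but uses the result as a black box (it appears in the ``Previous results'' subsection and is invoked in the proof of Theorem~\ref{thmdetail}). There is therefore no in-paper argument to compare against.

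Independently of that, your proposal has a structural gap. The three-way split over $(x,\psi(x))\in\{(0,1),(1,0),(1,1)\}$ preserves 0-rebalancing only on the $(0,1)$ branch: Definition~\ref{def:reba} says nothing about $f^{x=1,\psi(x)=0}$ or $f^{x=1,\psi(x)=1}$. Hence you cannot recurse through those two branches, and the claim that every leaf of your hierarchy has support inside some $\eom[P]$ is unsupported---only the single all-$(0,1)$ path is controlled by the recursion, and that leaf is an arity-$0$ constant. You acknowledge the difficulty, but the fix you sketch (encoding the branch choices by auxiliary $\mathscr{A}$-gadgets inside one $\#\csp$ instance) is not a construction: for it to work you would need each restricted piece reached via a $(1,0)$ or $(1,1)$ step to already lie in $\mathscr{A}$, and the $\eom[\mathscr{A}]$ hypothesis grants that only after the support of that piece has been shown to sit inside a single $\eom[P]$---precisely what the broken recursion was supposed to establish. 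The fallback holographic transformation is likewise speculative: such a uniform $2\times 2$ map would have to send every 0-rebalancing support to an affine one, and nothing in the hypotheses suggests this is possible.
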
 

In this paper, we aim to make a complete complexity classification for complex weighted $\#\eo$ problems, based on the hypothesis that $\text{FP}^{\text{NP}}\neq \#$P. 
We remark that the algorithm for rebalancing signatures are covered by this dichotomy,
however the algorithm to be present in Section \ref{section: dichotomy for eoc} needs a specific NP oracle.

\section{Detailed version of Theorem \ref{thm:dicoceo}}\label{secdetail}
We first introduce some definitions of great significance, which is also needed in the detailed version of our main result.
\begin{definition}
    For an $\eo$ signature $f$, if there exists $\alpha,\beta,\gamma\in \su(f)$ and $\delta=\alpha\oplus\beta\oplus\gamma$, such that $\delta\in\eoe$ and $\delta\notin \su(f)$, then we say $f$ is a $\exists3\nrightarrow$ signature. If $\delta\in \eosg$ (or $\delta\in \eosl$) instead, we say $f$ is a $\exists 3\uparrow$ signature (or a $\exists 3\downarrow$ signature). If $f$ is neither a $\exists3\nrightarrow$ signature nor a $\exists 3\uparrow$ signature (or a $\exists 3\downarrow$ signature), we say it is a $\forall 3 \thdown$ signature (or a $\forall 3 \thup$ signature).
    \label{def:thupthdown}
\end{definition}

We remark that these definitions also motivate us to denote the pure-up signature as a $\allup$ signature. Besides, a signature is both $\forall 3 \thdown$ and $\forall 3 \thup$ if and only if it is an $\eom$ signature by definition. Our main result is presented as follows.
 \begin{theorem}\label{thmdetail}
      Let $\mathcal{F}$ be a set of \eo\ signatures. Then $\ceo(\mathcal{F})$ is \#P-hard, unless all signatures in $\mathcal{F}$ are $\mitsuup$ signatures or all signatures in $\mathcal{F}$ are $\mitsudown$ signatures, and $\mathcal{F}\subseteq \eom[\mathscr{A}]$ or $\mathcal{F}\subseteq \eom[\mathscr{P}]$, in which cases it is in $\text{FP}^{\text{NP}}$.
      
 In particular, if the aforementioned condition holds, then $\ceo(\mathcal{F})$ is polynomial-time computable, unless $\mathcal{F}$ is neither \ba[0] nor \ba[1].
 \end{theorem}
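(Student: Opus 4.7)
The plan is to follow the two-phase structure sketched in Section \ref{subsec:intro}. In Phase~I we reduce to the auxiliary problem $\ceoc(\mathcal{F})$ in which the pinning signature $\pin$ is freely available; this is the content of Section \ref{sec:ceo}. Either $\pin$ can be simulated from $\mathcal{F}$ via gadget construction and polynomial interpolation, so that $\ceoc(\mathcal{F})\le_T\ceo(\mathcal{F})$, or the obstructions to such a simulation force $\mathcal{F}$ (up to a global scalar) into the ARS regime, in which case Theorem \ref{thm: dichotomy for ARS} combined with Lemma \ref{lem:ap to eoaeop} already gives the stated dichotomy. Phase~II is then the dichotomy for $\ceoc$, carried out in Section \ref{section: dichotomy for eoc}; once it is settled, the theorem follows.

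\textbf{Hardness in Phase~II.} The dichotomizing criterion is Definition \ref{def:thupthdown}, based on $\alpha\oplus\beta\oplus\gamma$ for $\alpha,\beta,\gamma\in\su(f)$. First I dispose of the $\exists 3\nrightarrow$ case: if some $f\in\mathcal{F}$ is $\exists 3\nrightarrow$, three witnessing strings certify that $\su(f)$ is not affine on an $\eoe$-slice, and with $\pin$ available one pins the agreed coordinates of the witnesses to realize a signature whose support is not affine; Theorem \ref{thm:CSPdichotomy} applied to the resulting $\#\csp$-style instance then yields \#P-hardness. Next, if $\mathcal{F}$ contains some $\exists 3\uparrow$ signature and some $\exists 3\downarrow$ signature while no signature is $\exists 3\nrightarrow$, I combine witnesses of opposite polarities via a gadget (using $\pin$ together with $\ouhe$ to splice them) to produce an effective $\exists 3\nrightarrow$ gadget and reduce to the previous case. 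This exhausts the complement of ``all $\mitsuup$ or all $\mitsudown$,'' so WLOG every $f\in\mathcal{F}$ is $\mitsuup$. Using Lemma \ref{lem: eo+affine= eom} and the closure properties enjoyed by $\allup$ families, I argue that tractability of $\ceo(\mathcal{F})$ reduces to the pure-up case, which by Theorem \ref{thm:puredichotomy} forces $\mathcal{F}\subseteq\eom[\mathscr{A}]$ or $\mathcal{F}\subseteq\eom[\mathscr{P}]$ as a necessary tractability condition.

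\textbf{Algorithm in Phase~II.} Once the tractability condition holds, the $\mitsuup$ property controls which partial assignments extend to an Eulerian orientation. The algorithm of Section \ref{section: dichotomy for eoc} uses the NP oracle to iteratively identify, level by level in Hamming weight, a maximal ``top configuration'' of coordinates whose pinning still admits a non-vanishing completion; after this oracle-guided pinning the residual signature grid remains in $\eom[\mathscr{A}]$ or $\eom[\mathscr{P}]$ (by Lemma \ref{lem:tau  pi maintain A P}) and can be evaluated in polynomial time via the equivalence with $\#\csp$ from Theorem \ref{thm:csp=eom} together with the tractable classes of Theorem \ref{thm:CSPdichotomy}, giving an overall $\pnp$ procedure. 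When in addition $\mathcal{F}$ is $\ba[0]$ or $\ba[1]$, the recursion in Definition \ref{def:reba} makes the layer search deterministic, and Theorem \ref{thm:rebaeasy} delivers a polynomial-time algorithm without the oracle.

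\textbf{Expected obstacles.} The principal obstacle is Phase~I, namely realizing $\pin$ from $\mathcal{F}$ outside the ARS regime. The realization requires a careful case analysis driven by $\su(f)$ for individual signatures, sometimes exploiting polynomial interpolation through the eigenvalues of constructed transfer matrices, and sometimes duplicating a single ``disposable'' copy of $\pin$ extracted from a one-shot gadget into arbitrarily many usable copies via further gadgetry. A secondary delicate point is the hardness combination that merges a $\exists 3\uparrow$ signature with a $\exists 3\downarrow$ signature, where one must certify that the merged gadget genuinely produces a new support element violating affinity rather than cancelling by a coincidence of weights.
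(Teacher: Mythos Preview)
Your overall two-phase plan matches the paper's architecture, and your handling of Phase~I and the ARS alternative is in line with Theorem~\ref{thm:eoc to eo} and the paper's proof of Theorem~\ref{thmdetail}. However, there is a real gap in your Phase~II hardness analysis.

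You write that once every signature is $\mitsuup$, ``tractability of $\ceo(\mathcal{F})$ reduces to the pure-up case, which by Theorem~\ref{thm:puredichotomy} forces $\mathcal{F}\subseteq\eom[\mathscr{A}]$ or $\mathcal{F}\subseteq\eom[\mathscr{P}]$.'' This conflates $\mitsuup$ with $\allup$. The property $\mitsuup$ constrains only $3$-wise XORs of support strings, whereas $\allup$ (pure-up) constrains all odd-wise XORs; the former is strictly weaker, and Theorem~\ref{thm:puredichotomy} applies only to the latter. There is no ``closure property'' that automatically promotes a $\mitsuup$ family to a pure-up family. The paper treats this as its Case~3d: given a $\mitsuup$ signature $f$ that is \emph{not} $\allup$ (so some odd $k\ge 5$ strings in $\su(f)$ have XOR in $\eosl$), one must first construct $\ouhe$ from $f$ (Lemma~\ref{lem:obtainouhe}, via the ``$0$-moving step'' argument and Lemma~\ref{lem:noouhe}), and then use $\ouhe$ to project onto arbitrary $\eom[P]$ slices, reducing to $\#\csp$ and invoking Theorem~\ref{thm:CSPdichotomy} (Lemma~\ref{lem:ouhehard}). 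Your proposal contains no analogue of this step, and without it the hardness side of the dichotomy is incomplete precisely on the families where the $\pnp$ versus FP distinction is most delicate.

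Two smaller points. First, your combination of a $\exists 3\uparrow$ witness with a $\exists 3\downarrow$ witness does not need $\ouhe$ or any splicing gadget: the paper simply takes $h=f^{\otimes b}\otimes g^{\otimes a}$ with $a,b$ chosen to balance the Hamming-weight surplus and deficit, producing a $\exists 3\nrightarrow$ signature directly. Second, your sketch of the $\pnp$ algorithm (``top configuration level by level'') is not how the paper proceeds: the algorithm uses the NP oracle to test, for each string in each signature's support, whether it is \emph{effective} in the instance (Definition~\ref{def:effective}), restricts every signature to its effective support, and then shows via Lemmas~\ref{lem:3upeff} and~\ref{lem:3mideff} that the effective supports are affine, hence each restricted signature lies in $\mathscr{A}$ (resp.~$\mathscr{P}$) by Definition~\ref{def:eoaeop} and Lemma~\ref{lem: eo+affine= eom}. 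Your description would need to be made to match this mechanism or shown to be equivalent.
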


We remark that Theorem \ref{thmdetail} also presents the largest class of signature sets that induce a polynomial-time algorithm in \ceo\ and have been identified to date, which is originally established in \cite{meng2024p}. Using Theorem \ref{thmdetail}, two more dichotomies for specific signature sets can be obtained as corollaries, stated as Corollary \ref{coro:upside} and \ref{coro:sw}. The proof of \ref{thmdetail} involves two theorems proved in Section \ref{sec:ceo} and \ref{section: dichotomy for eoc} respectively, stated as follows.
We use $\ceoc(\mathcal{F})$ to denote $\ceo(\mathcal{F}\cup\{\Delta\})$.

\begin{theorem}
    Suppose $\mathcal{F}$ is a set of \eo\ signatures. Then one of the following holds:
    \begin{enumerate}
        \item All signatures in $\mathcal{F}$ are  \eo\ signatures satisfying ARS, ignoring a constant;
        \item $\ceoc(\mathcal{F})\equiv_T \ceo(\mathcal{F})$.
    \end{enumerate}
    \label{thm:eoc to eo}
\end{theorem}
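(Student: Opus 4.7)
The plan is to prove the contrapositive: if some $f\in\mathcal{F}$ fails to satisfy ARS even up to a scalar, then $\Delta$ can be realized from $\mathcal{F}$ and hence $\ceoc(\mathcal{F})\leq_T \ceo(\mathcal{F})$ (the reverse direction being trivial since one can always introduce a $\Delta$-vertex into an instance). The guiding intuition comes from the $Z$-transformation $\ceo(\mathcal{F})\equiv_T \hol(Z^{-1}\mathcal{F})$: under $Z$, the ARS property becomes ``$Z^{-1}f$ is real-valued up to a global phase'', and the obstruction to realizing $\Delta$ as an $\mathcal{F}$-gate should manifest precisely as this rigid conjugation symmetry on the values of each $f$.

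First I would reduce the task to producing a single \emph{disposable} copy of $\Delta$ inside some larger gadget. If any $\mathcal{F}$-gate realizes a signature of the form $\Delta\otimes h$ with $h$ nontrivial, or more generally a signature whose matrix decomposes with $\Delta$ as a summand, then by stacking copies of such gadgets and interpolating in a free parameter (for example the number of self-loops in a repeated block, or the length of a path of $\neq_2$ edges) one extracts $\Delta$ on its own. Concretely, iterating a construction whose ``internal matrix'' is some $M$ produces signatures whose matrices are $M^k$, and polynomial interpolation across $k$ yields every signature in the eigenprojector algebra of $M$; $\Delta$ lies in that algebra whenever $M$ has two distinct nonzero eigenvalues whose ratio is not a root of unity.

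Next I would fix a signature $f\in\mathcal{F}$ that violates ARS up to constants and execute this scheme with explicit gadgets: compose $f$ with copies of itself (or with its dual, realizable by attaching $\neq_2$ edges to a subset of the variables) through chains of $\neq_2$ edges to build binary or low-arity gadgets with tractable matrix form, and perturb the construction by self-loops through the generalized binary disequalities $\bi{a}{b}$ whenever the naive matrix $M$ degenerates to a scalar matrix or has a root-of-unity spectrum. Whenever any such perturbation breaks the degeneracy, the interpolation from the previous paragraph produces $\Delta$.

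The main obstacle is closing out the residual case in which \emph{every} such construction degenerates simultaneously. Here I would argue that the algebraic relations among the pairs $\{f(\alpha),f(\overline{\alpha})\}$ forced by the simultaneous failure of all interpolation routes can only be resolved by a uniform identity $f(\overline{\alpha})=c_f\cdot\overline{f(\alpha)}$ on all $\alpha\in\su(f)$, for a single constant $c_f$ depending only on $f$; this is exactly ARS up to a constant. The delicate part is pinning down a \emph{single} constant: a mismatch between the ``would-be ARS constants'' associated with two distinct strings in $\su(f)$ would itself open a new interpolation avenue (by composing $f$ with a copy of itself along carefully chosen variables so that the resulting eigenvalue ratio detects this mismatch), and systematically ruling out every such mismatch is where the bulk of the case analysis concentrates.
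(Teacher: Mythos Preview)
Your high-level architecture---interpolate when the gadget spectrum is generic, and argue that total degeneracy forces ARS---matches the paper's spirit, but the plan has a genuine gap in the ``residual case'' and omits an entire subcase.

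First, the residual analysis is too optimistic. You claim that if every interpolation route degenerates, the algebraic constraints force $f(\overline{\alpha})=c_f\cdot\overline{f(\alpha)}$ uniformly. The paper shows this is \emph{not} how the logic closes. When the generating process (systematically building binary $\mathcal{F}$-gates $\bi{1}{r}$ by iterated self-loops) produces only $\{\neq_2\}$ or only $\{\neq_2,\bi{1}{-1}\}$, one does deduce that $f$ is D-sym or DA-sym (Lemmas~\ref{complementary equal}, \ref{-1symmetric}), but this alone does \emph{not} yield ARS up to a constant: there remain signatures, for instance quaternary $f$ with $M_f$ having off-diagonal $\pm\mathfrak{i}$, that are D-sym yet not ARS ignoring a constant. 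For these one must explicitly realize $\ouhe$ by a concrete arity-4 gadget (Lemmas~\ref{arity40110}, \ref{arity401-10}) and then invoke the single-disposable-$\Delta$ machinery (Lemma~\ref{lem:single pin} plus Lemma~\ref{ouhetoeoc}) to conclude $\ceoc(\mathcal{F})\leq_T\ceo(\mathcal{F})$. So the trichotomy is: non-root ratio $\Rightarrow$ interpolation; finite root set with $k\geq 3$ $\Rightarrow$ ARS (Lemma~\ref{generating k}); $k\leq 2$ $\Rightarrow$ either ARS \emph{or} realize $\ouhe$. Your ``mismatch between would-be ARS constants opens a new interpolation avenue'' is the wrong mechanism for the third branch---it is direct gadget construction, not interpolation, that closes it.

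Second, you entirely miss the case where the generated binary signatures are \emph{infinitely many} roots of unity. Here the interpolation matrix is Vandermonde and invertible, but one must bound the \emph{gadget sizes} of the $n{+}1$ distinct $\bi{1}{\omega}$ signatures used in the interpolation by a polynomial in the instance size $n$. The paper (Section~\ref{subsection: infinitely many roots}) controls this via a number-theoretic argument on the lcm of the orders, showing the required gadgets have size $O(n^2 d^{O(\log n)})$. Without this, your interpolation step is not polynomial-time.
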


\begin{theorem}[Dichotomy for $\#\eo^c$]
      Let $\mathcal{F}$ be a set of \eo\ signatures. Then $\ceoc(\mathcal{F})$ is \#P-hard, unless all signatures in $\mathcal{F}$ are $\mitsuup$ signatures or all signatures in $\mathcal{F}$ are $\mitsudown$ signatures, and $\mathcal{F}\subseteq \eom[\mathscr{A}]$ or $\mathcal{F}\subseteq \eom[\mathscr{P}]$, in which cases it is in $\text{FP}^{\text{NP}}$.
  \label{thm:dicoceoc}
 \end{theorem}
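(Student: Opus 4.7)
The plan is to split the argument into a hardness direction and an algorithmic direction, both organised around the support-closure criterion from Definition~\ref{def:thupthdown}: for each $f\in\mathcal{F}$, whether the bitwise sum $\alpha\oplus\beta\oplus\gamma$ of three support strings can escape $\su(f)$ into $\eoe\setminus\su(f)$ (making $f$ of type $\exists 3\nrightarrow$), into $\eosg$ ($\exists 3\uparrow$), or into $\eosl$ ($\exists 3\downarrow$). Throughout, the pinning gadget $\pin$ is available by definition of $\ceoc$, and I use it both to cut signatures down to small arity and to fix coordinates so that individual support strings can be isolated.

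\textbf{Hardness.} Suppose $\mathcal{F}$ fails the stated condition; three subcases arise. First, if some $f\in\mathcal{F}$ is $\exists 3\nrightarrow$ with witnesses $\alpha,\beta,\gamma\in\su(f)$ and $\delta=\alpha\oplus\beta\oplus\gamma\in\eoe\setminus\su(f)$, I would use $\pin$ to fix all coordinates outside a carefully chosen four-variable window so that the restrictions of $\alpha,\beta,\gamma,\delta$ remain four distinct EO strings on the window; the resulting arity-$4$ signature has non-affine EO support, and Theorem~\ref{thm:arity4setdichotomy} then forces $\ceoc(\mathcal{F})$ to be \#P-hard. Second, if no signature is $\exists 3\nrightarrow$ yet $\mathcal{F}$ contains both a $\exists 3\uparrow$ signature $f$, whose upward defect has some positive Hamming-weight gap $k$, and a $\exists 3\downarrow$ signature $g$, whose downward defect has gap $k'$, I would form the tensor product $f^{\otimes k'}\otimes g^{\otimes k}$: the three chosen support triples in each tensor factor produce a combined defect with $kk'$ extra $1$'s on the $f$-side and $kk'$ extra $0$'s on the $g$-side, so the net defect lies in $\eoe$ while remaining outside the combined support, yielding a $\exists 3\nrightarrow$ signature and reducing to the first subcase. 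Third, if $\mathcal{F}$ is uniformly $\mitsuup$ (the $\mitsudown$ case is dual) but $\mathcal{F}\not\subseteq\eom[\mathscr{A}]$ and $\mathcal{F}\not\subseteq\eom[\mathscr{P}]$, I would use $\pin$-pinning together with polynomial interpolation to realise a pure-up signature set inheriting the $\eom[\mathscr{A}]/\eom[\mathscr{P}]$ violation, and invoke Theorem~\ref{thm:puredichotomy}.

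\textbf{Algorithm.} In the tractable case, I would design an $\pnp$ algorithm whose NP queries ask, for partial assignments extending one coordinate at a time, whether the constrained grid still admits an orientation in every local support. The $\mitsuup$ closure (dually for $\mitsudown$) guarantees that extracting, vertex by vertex, a support string of maximal Hamming-weight profile leaves a residual grid whose signatures are still $\eom[\mathscr{A}]$ or $\eom[\mathscr{P}]$; by Theorem~\ref{thm:csp=eom} and Theorem~\ref{thm:CSPdichotomy} the contribution of such a stratum reduces to a tractable $\#\csp$ evaluation. Peeling off strata, each step decreasing a discrete potential and each invoking a polynomial-time $\#\csp$ routine after polynomially many oracle queries, yields the full partition function in $\pnp$ time.

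\textbf{Main obstacle.} The most delicate step is ensuring in the third hardness subcase that the realised pure-up gadgets actually retain the $\eom[\mathscr{A}]/\eom[\mathscr{P}]$ violation present in $\mathcal{F}$; the closure afforded by $\mitsuup$ is strictly weaker than pure-up, so the reduction must combine repeated $\pin$-pinning with interpolation to produce pure-up signatures that are not drowned out by cancellations. A comparable difficulty lies in confirming that the stratification used by the algorithm is genuinely decidable by NP queries rather than by counting queries; this hinges on showing that the $\mitsuup$ closure forces each stratum of interest to be describable by a polynomial-size affine system over $\mathbb{F}_2$ so that the NP oracle can search it in polynomial time.
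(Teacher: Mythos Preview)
Your case split and the tensor-product trick in the second hardness subcase match the paper exactly. However, there are two genuine gaps.

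\textbf{Third hardness subcase.} When $\mathcal{F}$ is uniformly $\mitsuup$ but not contained in $\eom[\mathscr{A}]$ or $\eom[\mathscr{P}]$, the paper does \emph{not} attempt to realise pure-up signatures from $\mitsuup$ ones. Instead it splits further: if every signature is already $\allup$ (pure-up), Theorem~\ref{thm:puredichotomy} applies directly; if some $f$ is $\mitsuup$ but not $\allup$, the paper shows (Lemma~\ref{lem:obtainouhe}) that one can realise $\ouhe$. With $\ouhe$ in hand one can project any $g\notin\eom[\mathscr{A}]$ to $g|_{\eom[P]}$ for a bad pairing $P$ and reduce to $\#\csp$ hardness via Theorem~\ref{thm:csp=eom} (Lemma~\ref{lem:ouhehard}). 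Your plan to ``produce pure-up signatures that are not drowned out by cancellations'' is not a known reduction and you correctly flag it as the main obstacle; the $\ouhe$ route sidesteps it entirely. The proof of Lemma~\ref{lem:obtainouhe} is itself nontrivial: it shows that if $\ouhe$ is \emph{not} realisable then every pinned sub-signature $f^\sigma$ between two support strings has support exactly $\eom[P]$ for some pairing $P$ (Lemma~\ref{lem:noouhe}), and then a ``$0$-moving'' argument derives a contradiction with $f$ failing $\allup$.

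\textbf{Algorithm.} Your stratification-and-peeling sketch misses the paper's key structural lemma. The NP oracle is used only to decide, for each string $\alpha_f\in\su(f)$ at each vertex, whether $\alpha_f$ is \emph{effective} (extends to a global nonzero term). The crucial fact (Lemmas~\ref{lem:3upeff} and~\ref{lem:3mideff}) is that under the $\mitsuup$ hypothesis the effective support of every signature in a given instance is automatically \emph{affine}: three effective strings cannot XOR into $\eosg$ (parity obstruction from the $\neq_2$ edges), and if they XOR into $\eoe$ the result is again effective. Hence, after zeroing out the ineffective strings, every signature lies in $\mathscr{A}$ (resp.\ $\mathscr{P}$) by Lemma~\ref{lem: eo+affine= eom} and Definition~\ref{def:eoaeop}, and a \emph{single} $\mathscr{A}$/$\mathscr{P}$ evaluation finishes---no iterative stratification is needed. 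Your proposal does not establish why any residual signature is affine, which is exactly what makes the $\#\csp$ algorithm applicable.

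A smaller point: in the first hardness subcase, pure pinning to a four-variable window does not always preserve all three support strings, since $\pin$ forces one coordinate to $0$ and another to $1$ simultaneously. The paper's Lemma~\ref{lem:3keep} handles this by a case analysis that sometimes uses a self-loop (possibly by $\bi{1}{-1}$, realised on the fly) rather than pinning, and the induction may terminate at arity~$6$ rather than~$4$ (Lemma~\ref{lem:6nomid}).
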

\begin{proof}[Proof of Theorem \ref{thmdetail}]
     The polynomial-time computable result is obtained from Theorem \ref{thm:rebaeasy}. Now we focus on the $\pnp$ vs. \#P dichotomy. The $\pnp$ algorithm is given by Lemma \ref{lem:NPalg}. By Theorem \ref{thm:eoc to eo},  one of the following cases holds.
    
   \textbf{ Case 1: }All signatures in $\mathcal{F}$ are  \eo\ signatures satisfying ARS, ignoring a constant. If further $\mathcal{F}\subseteq\mathscr{A}$ or $\mathcal{F}\subseteq\mathscr{P}$, then all signatures in $\mathcal{F}$ have affine supports. By Lemma \ref{lem: eo+affine= eom}, all signatures in $\mathcal{F}$ are $\eom$, and consequently are both $\forall 3 \thdown$ and $\forall 3 \thup$. Furthermore,  all signatures in $\mathcal{F}$ are $\eom[\mathscr{A}]$ or $\eom[\mathscr{P}]$ by Lemma \ref{lem:ap to eoaeop}. Consequently, if $\mathcal{F}$ does not satisfy the criteria for the $\pnp$ algorithm, then  $\mathcal{F}\nsubseteq\mathscr{A}$ and $\mathcal{F}\nsubseteq\mathscr{P}$ by the inverse negative proposition. In this case we are done by Theorem \ref{thm: dichotomy for ARS}.    
   
     \textbf{ Case 2: }$\ceoc(\mathcal{F})\equiv_T \ceo(\mathcal{F})$. Then we are done by Theorem \ref{thm:dicoceoc}.
\end{proof}

\section{Reduce $\ceoc$ to $\#\eo$}\label{sec:ceo}
In this section, we prove Theorem \ref{thm:eoc to eo}. 
It should be noted that in the subsequent analysis, the proof sometimes demonstrates that $\ceo(\mathcal{F})$ is \#P-hard. In this case, $\ceoc(\mathcal{F})\equiv_T \ceo(\mathcal{F})$ holds by the trivial reduction.
%In this section, we transform the task of finding complexity classification for $\#\eo$ problems, to the task aiming at $\ceoc$ problems. That is, for each $\#\eo$ problem, we either realize the auxiliary signature $\pin$, or classify its computational complexity directly, including tractability or \#P-hardness. 
%The proof is structured as follows. 

Firstly, we define a specific generating process to construct binary signatures with a single signature $f$ in Section \ref{sec:generateprocess}. These generated binary signatures can be regarded as materials to realize $\Delta$. Based on what kind of binary signatures can be generated, our proof is then separated into three parts.

\subsection{A specific generating process}\label{sec:generateprocess}

%If A binary signature $\bi{1}{a}$, where $a$ is not a root of unity, can somehow be 'generated' by the original set of signatures, we can easily get the pinning signature by connecting operations and polynomial interpolation as is shown in subsection \ref{subsec:EOC part: nonroot-interpolation}. Inspired by this observation, we are now going to specify the definition of 'generate' to conduct strict classification discussions.

We begin with defining the normalization for a non-zero binary signature. We can suppose without loss of generality that a nonzero binary signature is of the form $\bi{a}{b}$, where $a\neq0$ and $|a|>|b|$. Except for changing the partition function by a constant, it works equivalently as $\bi{1}{r}$, where $r=\frac{b}{a}$ and $|r|\leq1$. As a result, $\bi{1}{r}$ is regarded as the \textit{normalized form} of $\bi{a}{b}$. In this subsection, a root of unity is briefly called a root. Notably, the whole setting is under \ceo problems, which implies that all connecting operations are using $\neq_2$. We use $n$ to denote the size of the instance $I$.

\begin{definition}[Generating process] \label{def: generating process}
Given a signature $f$ of arity $2d$, we recursively define a sequence of sets:

    I. $B_0(f)=\{\neq_2\}$.
    
    II. Given $B_{i-1}(f)$, $A_i(f)$ is composed of all binary signatures, realized by adding $d-1$ self-loops on arbitrary $2d-2$ variables of $f$ by signatures arbitrarily chosen from $B_{i-1}(f)$.
   
    III. Given $A_i(f)$, $B_i(f)$ is defined as $\{ \bi{1}{\prod_{1\leq j\leq k}r_j}  \mid \bi{1}{r_j}\in A_i(f)\textsf{ and }  k \in \mathbb{Z}_+ \}$. Signatures in  $B_i(f)$ can be realized by connecting $k$ signatures from  $A_i(f)$ as a path.

Let $B(f)=\mathop{\bigcup}\limits_{i=1}^{\infty}B_i$. The whole process is called the generating process of $f$.
\end{definition}

%Here we only have to consider $\#\eo$ problems defined by a single signature $f$, the reason will be shown by the whole proof.
Based on the generating process, the proof is structured as follows. 
\begin{enumerate}
    \item
    If a binary signature $\bi{1}{r}$ can be generated by some $f\in\mathcal{F}$, where $r$ is not a root of unity, then we realize $\Delta$ by the standard polynomial interpolation reduction in Section \ref{subsec:EOC part: nonroot-interpolation}.
    
    \item
    If we can generate infinite binary signatures with some $f\in\mathcal{F}$, and each of them can be normalized to $\bi{1}{r}$ with $r$ as a root of unity, we will analyze the efficiency of the generating process and evaluating the size of gadget constructions, then operate the polynomial-time interpolation reduction in Section \ref{subsection: infinitely many roots}. 

     \item
     If we can only generate a finite number of different binary signatures for each $f\in \mathcal{F}$, and each generated signature can be normalized to $\bi{1}{r}$ with $r$ as a root of unity, then we make classifications and deal with each case in Section \ref{subsec: finitely many roots}.

 \end{enumerate}
We remark that for the second case, the generating process may not stop, but we will show that we only need polynomial time to generate all the signatures we need for the interpolation for $I$.
Furthermore, if $B(f)$ is finite, then there exists a constant $i\ge 0$ such that $B_i(f)=B_{i+1}(f)$, and we stop the generating process at the $(i+1)$th step in $O(1)$ time. In this case, we then choose the next signature from $\mathcal{F}$ and perform the generating process. Consequently, given a signature set $\mathcal{F}$, we can always stop the generating process in polynomial time, and decide which statement of Theorem \ref{thm:eoc to eo} we apply.

\subsection{Not a root of unity}\label{subsec:EOC part: nonroot-interpolation}

%The second situation is that a non root of unity is generated. We can notice that by the definition of the generating process, before the first non root of unity, say $x$, appears, all binary signatures generated are roots of unity, which implies that $B_i$ are all finite sets before 

Suppose $x$ is not a root and $\bi{1}{x}$ is generated in the $i$th step by some $f\in\mathcal{F}$. Therefore, given $f$, we can realize $\bi{1}{x}$ of constant size, hence $\#\eo(\mathcal{F}\cup\{\bi{1}{x}\})\le_T\#\eo(\mathcal{F})$. In the following, we prove that $\ceoc(\mathcal{F})\le_T\#\eo(\mathcal{F}\cup\{\bi{1}{x}\})$. If $x=0$, the reduction trivially holds. Now we assume $x\neq0$.

%Suppose $f\in\mathcal{F}$, we use standard polynomial interpolation to complete the reduction. 
Suppose $I$ is an instance of $\ceoc(\mathcal{F})$ with $n$ vertices, and $\pin$ appears $m$ times in $I$, where $m\leq n$. When we replace each $\pin$ in $I$ with $\bi{1}{a}$, where $a$ is a complex variable, Z$(I,a)$ can be viewed as an $m$-arity polynomial of $a$. In other words, Z$(I,a)=\sum_{i=0}^{m}c_ia^i$, in which $c_i$ is the sum of the evaluations in which there are $i$ pairs of variables assigned the value to make $\bi{1}{a}$ contribute an $a$ to Z$(I)$. In particular, Z$(I)=\text{Z}(I,0)$.

By connecting $i$ copies of $\bi{1}{x}$ we can get $\bi{1}{x^i}$, $i\in\{1,2,...,m+1\}$. After replacing each $\pin$ with $\bi{1}{x^i}$, the number of vertices in the modified instance are $O(n^2)$, which is a polynomial of $n$. Then we use these $m+1$ binary signatures to replace $\pin$ in $I$ respectively, which actually forms a linear system
$$\left( \begin{array}{cccc}
x^0 & x^1 & ... & x^m\\
x^{1\times0} & x^{1\times1} & ... & x^{1\times m}\\
\vdots & \vdots & \ddots & \vdots\\
x^{m\times0} & x^{m\times1} & ... & x^{m\times m}
\end{array} 
\right)\left( \begin{array}{c}
c_0\\
c_1\\
\vdots\\
c_m
\end{array} 
\right)=\left( \begin{array}{c}
\text{Z}(I,x)\\
\text{Z}(I,x^2)\\
\vdots\\
\text{Z}(I,x^m)
\end{array} 
\right).$$ 
We denote this linear system as $\textbf{AX}=\textbf{Z}$ for short. Using the oracle of ${\#\eo}(f\cup\{\bi{1}{x}\})$, we know the value of $\textbf{Z}$. Furthermore, the coefficient matrix $\textbf{A}$ is Vandermonde since $x$ is not a root and all $x^i$'s are distinct. As $\textbf{A}$ is invertible, we can solve $\textbf{X}$ in polynomial time. In particular, Z$(I)=c_0$. As a result,
$\textsf{\ceo}^c(\mathcal{F})\leq_T\ceo(f\cup\{\bi{1}{x}\})\le_T\ceo(\mathcal{F}).$ It's obvious that $\ceo(\mathcal{F})\leq_T\ceoc(\mathcal{F})$, hence we have $\ceoc(\mathcal{F})\equiv_T\ceo(\mathcal{F}).$

\subsection{Infinitely many roots of unity}\label{subsection: infinitely many roots}

The content of this subsection happens to be highly related to the computation models. We study complex weighted version of this problem, to capture nontrivial, general forms of certain tractable classes. However, we encounter with a gap between the standard discrete Turing Machine model and the representation and computation of complex numbers. Rather than addressing this gap in detail, we outline two approaches to overcome this difficulty.

One approach is to consider an finite extension field of $\mathbb{Q}$, which includes all weights appearing in the problem, following the discussion on models of computation in \cite[Section 1.4]{cai2017complexity}. In the commonly used computational model, we can restrict all possible weights to algebraic numbers instead of the entire field of complex numbers. In this case, as the function $f$ involves only finitely many algebraic numbers, the relevant extension field of the problem $\#\eo(f)$ would be a finite algebraic extension containing all possible weights of all gadgets. This restriction prevents the occurrence of infinitely many roots. Consequently, when focusing on algebraic numbers, the problem naturally fits in the standard discrete Turing Machine model, and further discussion within this subsection becomes unnecessary.

Alternatively, we may consider computational models capable of handling real numbers directly, representing complex numbers as two-dimensional vectors. For instance, the Real RAM model as discussed in \cite{preparata2012computational} could be used for computations involving real numbers. For our purposes in this subsection, a “Complex RAM” model would be required, where basic operations on complex numbers, such as addition, multiplication, and the logarithm function, are treated as single computational steps. Assume such model is available, in the following we will finish the reduction from the perspective of mathematics, avoiding any further discussions on the computational model.

The general reduction framework also relies on polynomial interpolation, similar to the process described in Section \ref{subsec:EOC part: nonroot-interpolation}.  We also need $N=n+1$ different binary signatures, where $n$ is the size of instance given. We only have to show how to get them efficiently in the current situation.

Now there are infinitely many binary signatures that can be generated. All of them have parameters as roots of unity after normalization, that is, they are all of the form $\bi{1}{x}$, where $x$ is a root of unity.

We first characterize the structure of all sets generated. By Definition \ref{def: generating process} we have that when there is an $i$ satisfying $B_i=B_{i+1}$, then all $B_j$ with $j>i$ are the same. At this situation, the number of signatures generated is finite, which can be covered by Section \ref{subsec: finitely many roots}. Therefore, for all $i\in\mathbb{N}_+$, $B_i\subsetneqq B_{i+1}$. In addition, since there are finitely many ways to use $B_i$ to construct $A_{i+1}$ and all members of $A_{i+1}$ are parameterized by roots of unity, by spiral induction we have all $A_i$ and $B_i$ are finite sets.

For preparation, we introduce a lemma that characterize how roots of unity interact.

\begin{lemma}\label{lem: roots combination}
Assume $a$ and $b$ are relatively prime positive integers, for any given $t$, there exist $0\leq r< a$ and $0\leq s< b$, such that $\{\frac{r}{a}+\frac{s}{b}\}=\{\frac{t}{ab}\}$. In the statement $\{\cdot\}$ denotes the decimal part function. Additionally, $r$ and $s$ can be found in $O(ab)$ time.
\end{lemma}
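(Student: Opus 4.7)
The plan is to recognize this as a direct consequence of B\'ezout's identity (equivalently, the Chinese Remainder Theorem), since the statement is asking us to decompose the cyclic group $\mathbb{Z}/ab\mathbb{Z}$ as $\mathbb{Z}/a\mathbb{Z} \times \mathbb{Z}/b\mathbb{Z}$ when $\gcd(a,b)=1$, read off at the level of fractional parts of rational numbers with denominators $a$, $b$, and $ab$.

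Concretely, first I would invoke the extended Euclidean algorithm to find integers $u,v$ with $ua+vb=1$, which exist precisely because $\gcd(a,b)=1$. Multiplying by $t$ gives $t = (tu)a + (tv)b$, and dividing by $ab$ yields
\[
\frac{t}{ab} \;=\; \frac{tu}{b} + \frac{tv}{a}.
\]
Setting $r := (tv) \bmod a$ and $s := (tu) \bmod b$ produces integers in the required ranges $0 \le r < a$ and $0 \le s < b$, and $\frac{r}{a}+\frac{s}{b}$ differs from $\frac{tv}{a}+\frac{tu}{b}$ by an integer, so their fractional parts coincide. Hence $\left\{\frac{r}{a}+\frac{s}{b}\right\} = \left\{\frac{t}{ab}\right\}$, as required.

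For the running time, the extended Euclidean algorithm terminates in $O(\log(\min(a,b)))$ arithmetic steps, which is well within the claimed $O(ab)$ budget; even the trivial brute-force search that enumerates all pairs $(r,s)\in\{0,\dots,a-1\}\times\{0,\dots,b-1\}$ and tests the fractional-part identity meets this bound. The generous $O(ab)$ bound in the statement is presumably chosen because later invocations of the lemma in Section \ref{subsection: infinitely many roots} feed it directly into gadget-size estimates, where $ab$ is already the relevant size parameter.

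There is essentially no main obstacle here: the only subtle point worth handling carefully is the reduction from the identity in $\mathbb{Q}/\mathbb{Z}$ to the explicit representatives $r \in [0,a)$ and $s \in [0,b)$, to confirm that the normalization does not disturb the fractional-part equality. Everything else is a routine consequence of B\'ezout's identity, and the proof can be written in a few lines.
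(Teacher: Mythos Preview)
Your proposal is correct and follows essentially the same approach as the paper: both invoke B\'ezout's identity (via the extended Euclidean algorithm) to write $t$ as an integer combination of $a$ and $b$, then reduce the resulting coefficients modulo $a$ and $b$ respectively to land in the required ranges, noting that this adjustment only shifts the sum $\frac{r}{a}+\frac{s}{b}$ by an integer. Your presentation is slightly more streamlined, but the mathematical content and the $O(ab)$ time bound justification are the same.
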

\begin{proof}
Without loss of generality, we can assume that $0\leq t<ab$, then the equation becomes $$\{\frac{r}{a}+\frac{s}{b}\}=\frac{t}{ab}$$ After reduction of fractions and with the basic arithmetic property of $\{\cdot\}$ we have 
\begin{equation}\label{equation: tongyushi}
    as+br \equiv t \text{ (mod }ab)
\end{equation}
With Bézout's identity and the method of successive division, we can get the answer $r'$ and $s'$ for the equation $as+br=t$, in which $r'$, $s'$ are integers, in $O(ab)$ time. Here the time complexity can be further optimized, but $O(ab)$ is sufficient for our analysis. Then we adjust these two answers to what we want by letting $r''=r'+k_1a$ and $s''=s'+k_2b$, in which $k_1$, $k_2$ are integers. Noticing that for any $k_1$ and $k_2$, $r''$ and $s''$ are answers to \ref{equation: tongyushi}, therefore, we can assign proper values to them to make $0\leq r< a$ and $0\leq s< b$. The adjusting also can be done in $O(ab)$ time. 
\end{proof}

\begin{remark}\label{remark: rc/a+sd/b=t/ab}
We use gcd to denote the greatest common divisor and lcm to denote the least common multiple. Noticing that when $\text{gcd}(p,q)=1$, $\{\frac{0}{p},\frac{1}{p},\ldots,\frac{p-1}{p}\}=\{\frac{0\cdot q}{p},\frac{q}{p},\ldots,\frac{(p-1)q}{p}\}$. Therefore, as a corollary of lemma \ref{lem: roots combination}, for any given $t$, there exist $0\leq r< a$ and $0\leq s< b$, such that $\{\frac{rc}{a}+\frac{sd}{b}\}=\{\frac{t}{ab}\}$, where $\text{gcd}(a,c)=\text{gcd}(b,d)=1$. $r$ and $s$ can also be found in $O(ab)$ time. 
\end{remark}

\begin{remark}\label{remark: one to one respond between s/b+r/a and t/lcmab}
Another thing we have to notice is that there are exactly $ab$ choices of $(r,s)$, which is the same number as the number of integers in the range of $[0,ab)$, so they are one-to-one correspondence. 

When $a$ and $b$ are not relatively prime, assume $b=b'd$, where $d=\text{gcd}(a,b)$. We can see that $\{\{\frac{s}{b}+\frac{r}{a}\}|0\leq r< a\}=\{\{\frac{s+b'}{b}+\frac{r}{a}\}|0\leq r< a\}$ is correct for all $0\leq s\leq b'-1$. And $\{\{\frac{s}{b}+\frac{r}{a}\}|0\leq r< a, 0\leq s\leq b'-1\}$ has exactly $ab'=\text{lcm}(a,b)$ members, therefore they are also one-to-one corresponding to $\{\frac{t}{\text{lcm}(a,b)}|0\leq t< \text{lcm}(a,b)\}$, which is important when we need to list all members in the latter set. 

Furthermore, $\{\{\frac{sd}{b}+\frac{r}{a}\}|0\leq r< a, 0\leq s\leq b'-1\}$ also has exactly $ab'=\text{lcm}(a,b)$ different terms, where $\text{gcd}(b,d)=1$. Suppose otherwise, there exist $\{\frac{sd}{b}+\frac{r}{a}\}=\{\frac{s'd}{b}+\frac{r'}{a}\}$, where $0\leq r,r'< a, 0\leq s,s'\leq b'-1$ and $(r,s)\neq (r',s')$. Then $(s-s')da'=(r'-r)b'+k\text{lcm}(a,b)$ is correct for some integer $k$. We have $b'\mid (s-s')$ so $s-s'$ must be 0. Then $a\mid (r'-r)$ so $r'-r=0$, which causes a contradiction.
\end{remark}

 We now make a basic assumption that there is a $f\in\mathcal{F}$ with the property that by the generating process it can generate infinitely many binary signatures, parameterized by roots of unity. The reason why we can make this assumption without loss of generality is our main concern is to restrict the increasing size of gadgets, so we won't worry about using signatures other than $f$ to generate binary signatures. Because every time we use a new signature in $\mathcal{F}$ to generate signatures, the size of gadgets begin to increase from 1 all over again, as is implied by the proof in the rest of this subsection.

Assume the arity of $f$ is $2d$. The generating process gives a series of finite sets $B_0$, $A_1$, $B_1$, $A_2$, $B_2$, $\dots$. Parameters of all signatures in those sets are roots of unity. For convenience, in the rest of this subsection, we briefly denote $\bi{1}{e^{\frac{2\pi q\mathfrak{i}}{p}}}$ as $\frac{q}{p}$, without causing ambiguity. 

By Definition \ref{def: generating process} and Lemma \ref{lem: roots combination} we have that for all $i\in\mathbb{N}$, $B_i$ contains all binary signatures parameterized by roots of unity with the same order. By the generator of one set we mean signatures that can be combined to generate the whole set. We use the generator of $B_i$ to denote it, such as $B_i=(1,\frac{q_1}{p_1},\ldots,\frac{q_{b_i-1}}{p_{b_i-1}})$, where $b_i$ is the number of generators. Naturally, $B_i$ contains all signatures parameterized by roots of unity of the order $\text{lcm}(p_1,\ldots,p_{b_i-1})$.

We modify the generating process for clearer analysis. We change the II step into:
Given $B_{i-1}=(1,\frac{q_1}{p_1},\ldots,\frac{q_{b_{i-1}-1}}{p_{b_{i-1}-1}})$, where $i\in\mathbb{N}$ and $i>0$, we enumerate all possible ways to add $(d-1)$ self-loops on $f$, until a binary signatures $\bi{1}{\frac{q}{p}}$ satisfying $p\nmid \text{lcm}(p_1,\ldots,p_{b_{i-1}-1})$ is found. We add $\bi{1}{\frac{q}{p}}$ to $B_{i-1}$, and call it $A_i$. The other part of the process remains unchanged. Intuitively speaking, instead of finding all new binary signatures using $B_{i-1}$, we find one and stop. It can be verified that following the modified generating process, the number of generators increases exactly by 1 from $B_{i-1}$ to $B_i$, that is $b_i=i+1$.

Notably, in the new generating process we delay the time of the appearance of some binary signatures, which will only cause the size of gadgets to be larger, therefore making this modification is reasonable. This statement will also be illustrated by the following computation of the size of gadgets.

Now we begin to analyze the size of gadgets. Through the generating process, Suppose $B_i=(1,\frac{q_1}{p_1},\ldots,\frac{q_i}{p_i})$, which implies that $B_{i}$ contains all roots of unity with the order $\text{lcm}(p_1,\ldots,p_i)$ and the new signature added to $A_i$ is $\frac{q_i}{p_i}$. Suppose any signatures in $B_i$ can be represented by a gadget with at most $n_i$ vertices, and $A_i$ with at most $m_i$ vertices. There exists an $i_0$ such that $B_{i_0}$ contains more than $n$ roots of unity. Since $p_i\nmid \text{lcm}(p_1,\ldots,p_{i-1})$, we have $i_0\leq log_2^{n}+2$. 

For all $1\leq i\leq i_0$, we let $N_i=\text{lcm}(p_1,\ldots,p_i)$ that is also the size of $B_i$. Naturally we have $N_{i_0}\geq n$ and $N_i=N_{i-1}p_i/\text{gcd}(N_{i-1},p_i)$. We regard $p_i/\text{gcd}(N_{i-1},p_i)$ as a whole named $p_i'$ and particularly $p_1'=p_1$.  For any arbitrary signature $\frac{k}{N_i}$ in $B_i$, where $0\leq k<N_i$, by repeatedly using the third paragraph of \ref{remark: one to one respond between s/b+r/a and t/lcmab} from the backward, we know that for each signature we need at most $p_j'$ copies of the $p_j$ signature for all $1\leq j\leq i$ combined to gadget construct it.

By the analysis above, we have $m_i\leq (d-1)n_{i-1}+1$ and $n_i\leq \sum_{j=0}^{i}m_jp_{j}'$. Let $n_0=1$ and $p_0'=p_0=1$. We now use induction to prove that $n_i\leq \sum_{j=0}^{i} d^j\prod_{k=0}^{j}p_k'$ and $m_i\leq d^{i}\prod_{k=1}^{i-1}p_k'$. It can be verified that when $i=1$ the statement is correct. Assume $n_{i-1}$ satisfies the disequality, then

$$m_{i}\leq (d-1)n_{i-1}+1=(d-1)\sum_{j=0}^{i-1} d^{j}\prod_{k=0}^{j}p_k'+1\leq d^i\prod_{k=0}^{i-1}p_k',$$ and 

$$n_i\leq \sum_{j=0}^{i}m_jp_{j}'\leq \sum_{j=0}^{i} d^j\prod_{k=0}^{j}p_k'.$$ Then the spiral induction is done. Furthermore, if $i<i_0$, we have $m_i\leq nd^{i}.$

Since $i_0\leq log_2^{n}+2$ and $i_0$ is the smallest integer such that $|B_{i_0}|>n$, all members in $A_{i_0}$ can be constructed by less than $d^{i_0}\prod_{k=0}^{i_0-1}p_k'\leq nd^{i_0}$ vertices, which are all poly($n$)-size. For the newly constructed signature $p_{i_0}\in A_{i_0}$, we first let there are $0$ copies of $p_{i_0}$, then we can get $N_{i_0-1}$ independent signatures using poly(n)-size gadget constructions. Then we let there are $1$ copy, and so on, until we get $n+1$ signatures. According to Lemma \ref{lem: roots combination} and Remark \ref{remark: one to one respond between s/b+r/a and t/lcmab}, we get independent $n+1$ signature with this method. The number of vertices in their gadget constructions are smaller than $nm_{i_0}+n_{i_0-1}\leq n^2d^{i_0}+\sum_{j=0}^{i_0-1} d^j\prod_{k=0}^{j}p_k'< n^2d^{i_0}+d^{2i_0}$. It's still poly(n)-size.

Noticing that the enumeration of $A_i$ by $B_{i-1}$ are essentially listing poly($n$) members, connecting them and comparing them, under the computation model with the logarithm function treated as single computation steps. Combining all analysis above, the whole process can be done in poly($n$) time. Therefore, combined with polynomial interpolation method, at this situation, we get 
$\ceoc(\mathcal{F})\equiv_T\ceo(\mathcal{F})$.

\subsection{Finitely many roots of unity} \label{subsec: finitely many roots}

% We firstly consider the situation that $\mathcal{F}$ generates a finite number of binary signatures, parameterized by roots of unity, which also implies that the generating process terminates in finite steps.

Before the main part of this subsection, we give the definition of a specific $\#\eo$ problems, where some of its signatures can only appear a limited number of times. We then prove some useful reduction results based on it. It's noticeable that this definition is an extension of \cite[Definition 8.8]{shao2020realholant}.

\begin{definition}
    Let $\mathcal{F}$ be a set of $\eo$ signatures and $f$ be a $\eo$ signature. The problem $\#\eo(\mathcal{F},f^{\leq k})$ contains all instances of $\#\eo(\mathcal{F}\cup\{f\})$ where the signature $f$ appears at most $k$ times. 
\end{definition}

%There is an interesting fact that when dealing with $\#\eo$ problems, we always have a 'disposable' $\pin$. This fact can be formally stated as the next lemma.

\begin{lemma}
    Let $\mathcal{F}$ be a finite set of $\eo$ signatures. We have

$$\#\eo(\mathcal{F},\pin^{\leq1})\leq_T\#\eo(\mathcal{F}).$$
\label{lem:single pin}
\end{lemma}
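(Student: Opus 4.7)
The plan is a polynomial interpolation in the binary signature placed at the $\pin$-vertex. If $\pin$ does not appear in the given instance $I$, then $I$ is already an instance of $\ceo(\mathcal{F})$, so I assume $\pin$ appears exactly once, at a degree-$2$ vertex $v_0$ whose incident edges $e_1, e_2$ reach vertices $w_1, w_2$. The key identity is that for any $\bi{a}{b}$ substituted at $v_0$, the partition function of the modified instance is linear in $(a,b)$:
$$\text{Z}(I_{\bi{a}{b}}) \;=\; a\cdot\text{Z}(I_{\pin}) \;+\; b\cdot\text{Z}(I_{\bi{0}{1}}),$$
where $I_{\bi{0}{1}}$ denotes $I$ with $\pin$ replaced by $\bi{0}{1}$.

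First I would obtain one linear equation by substituting $\neq_2 = \bi{1}{1}$ at $v_0$. Since $\neq_2$ is the default edge signature in the $\ceo$ setting, this substitution effectively contracts $v_0$: the two flanking $\neq_2$'s and the new $\neq_2$ collapse into a single $\neq_2$, merging $e_1, e_2$ into one edge between $w_1$ and $w_2$. So $\text{Z}(I_{\neq_2}) = \text{Z}(I_{\pin}) + \text{Z}(I_{\bi{0}{1}})$ is directly computable by the $\ceo(\mathcal{F})$ oracle.

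For a second, linearly independent equation I would realize a binary $\mathcal{F}$-gate whose normalized form differs from $\neq_2$. Given any $f\in\mathcal{F}$ of arity $2d\ge 2$, pairing up $2d-2$ of its variables by $\neq_2$ self-loops produces a binary $\mathcal{F}$-gate whose support lies in $\{01,10\}$ by the EO constraint (every self-loop consumes one $0$ and one $1$ in each contributing configuration, and $f$ itself is balanced), so the gate is some $\bi{a}{b}$. If any such construction yields $a/b\neq 1$, substituting that gate for $\pin$ at $v_0$ gives an instance of $\ceo(\mathcal{F})$ whose partition function provides the second equation, and I solve the resulting $2\times 2$ linear system for $\text{Z}(I_{\pin})$ using at most two queries to the $\ceo(\mathcal{F})$ oracle.

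The step I expect to be the main obstacle is the degenerate case in which every binary $\mathcal{F}$-gate obtainable this way is a scalar multiple of $\neq_2$. In that regime interpolation alone cannot separate the two unknowns $\text{Z}(I_{\pin})$ and $\text{Z}(I_{\bi{0}{1}})$. Here I would argue that the rigidity of having only symmetric binary gadgets forces an exchange symmetry between the two orientations of the merged edge at $v_0$ (in particular, invoking the global orientation-flipping identity $\text{Z}_{\mathcal{F}}(I) = \text{Z}_{\widetilde{\mathcal{F}}}(I)$ for Eulerian graphs, together with the structural consequences of every such binary gadget being proportional to $\neq_2$), so that $\text{Z}(I_{\pin}) = \text{Z}(I_{\bi{0}{1}})$, and therefore $\text{Z}(I_{\pin}) = \text{Z}(I_{\neq_2})/2$ can be read off from a single oracle call. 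Formalizing this symmetry argument and ruling out remaining pathologies is where the delicacy of the proof lies.
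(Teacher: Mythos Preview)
Your approach mirrors the paper's almost exactly: substitute $\neq_2$ for $\pin$ to obtain one linear equation $Z(I_{\pin})+Z(I_{\bi{0}{1}})$, seek a constant-size binary $\mathcal{F}$-gate $\bi{a}{b}$ with $a\neq b$ for a second independent equation, and treat separately the degenerate case in which no asymmetric binary gate is available.

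The gap is in your handling of the degenerate case. The identity $Z_{\mathcal{F}}(I)=Z_{\widetilde{\mathcal{F}}}(I)$ relates $\mathcal{F}$ to its dual, so it only yields $Z(I_{\pin})=Z(I_{\bi{0}{1}})$ once you have established $\mathcal{F}=\widetilde{\mathcal{F}}$, i.e.\ that every $f\in\mathcal{F}$ satisfies $f(\alpha)=f(\overline{\alpha})$. The implication ``every binary gate obtained from a single $f$ by $\neq_2$ self-loops is proportional to $\neq_2$'' $\Rightarrow$ ``$f$ is self-dual'' is true, but it is a separate inductive lemma (this is exactly the paper's later Lemma on domain-symmetric signatures), not something that falls out of a global flip.

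The paper sidesteps this with a one-line observation you are missing: \emph{the instance $I$ with the $\pin$-vertex deleted is itself a binary $\mathcal{F}$-gate}, and its signature is precisely $\neq_2^{Z_1,Z_2}$ where $Z_1=Z(I_{\pin})$ and $Z_2=Z(I_{\bi{0}{1}})$. Accordingly the paper phrases the case split over \emph{all} binary $\mathcal{F}$-gates, not just your restricted single-signature self-loop class. In the degenerate branch (every binary $\mathcal{F}$-gate is some $\lambda\cdot\neq_2$), applying the hypothesis to this particular gate gives $Z_1=Z_2$ immediately, with no symmetry or structural argument needed. In the non-degenerate branch the witnessing asymmetric gate depends only on $\mathcal{F}$, hence has constant size, and substituting it at $v_0$ furnishes the second equation. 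Deciding which branch applies is handled in a later remark, where it is shown to reduce to checking $f(\alpha)=f(\overline{\alpha})$ for each $f\in\mathcal{F}$.
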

\begin{proof}
    %If $\bi{1}{a}$ is an $\mathcal{F}$-gate, where $a$ is not a root of unity. Then by polynomial interpolation we can easily get $\eo(\mathcal{F},\pin^{\leq1})\leq_T\eo(\mathcal{F})$, which is almost same as in \ref{interpolation}.

   Suppose $I_1$ is an instance of $\#\eo(\mathcal{F},\pin^{\leq1})$. If $\pin$ does not appear in $I_1$, the reduction trivially holds. Now suppose $\pin$ appears exactly once, and it is incident to $x_1$ and $x_2$ via $\neq_2$. That is to say the input of $\pin$ is $\overline{x_1},\overline{x_2}$. Also let $\pin(\overline{x_1},\overline{x_2})=1$ when $x_1=0,x_2=1$. In other words, $(x_1,x_2)$ is pinned to $(0,1)$ in $I_1$. We denote the partition function $Z(I_1)$ by $Z_1$.
   
   Now we construct another instance $I_2$ by reversing the direction of $\pin$ in $I_1$. In other words, $I_2$ is same as $I_1$ except that $\pin(\overline{x_1},\overline{x_2})=1$ when $x_1=1,x_2=0$.  We denote the partition function $Z(I_2)$ by $Z_2$. We then construct an instance $I_3$ by removing $\pin$ from $I_1$ and connect $x_1,x_2$ via $\neq_2$.  We denote the partition function $Z(I_3)$ by $Z_3$. By Definition \ref{def:numbereo} we have $Z_3=Z_1+Z_2$. In addition, by removing $\pin$ from $I_1$, we construct a binary $\mathcal{F}$-gate whose signature is $\neq_2^{Z_1,Z_2}$.
   
   If the signatures of all binary $\mathcal{F}$-gates are of the form $\lambda\neq_2,\lambda\in\mathbb{C}$, then we have $Z_1=Z_2$ and consequently $Z_1=Z_3/2$. As $I_3$ is an instance of $\ceo(\mathcal{F})$, the reduction holds. Consequently we may assume there exists a binary $\mathcal{F}$-gate whose signature is $\neq_2^{a,b},a\neq b$ .  We also remark that the size of this $\mathcal{F}$-gate is bounded by a constant due to the existence of the corresponding gadget construction. Now we construct another instance $I_4$ by replacing $\pin$ in $I_1$ with $\neq_2^{a,b}$. We denote the partition function $Z(I_4)$ by $Z_4$ and we have $Z_4=aZ_1+bZ_2$. Since $a\neq b$, together with $Z_3=Z_1+Z_2$ we can solve the linear equations to compute $Z_1$.  As $I_4$ is also an instance of $\ceo(\mathcal{F})$, the reduction holds. 
\end{proof}

\begin{remark}\label{remark:decidableexistence}
     In fact, deciding whether all binary $\mathcal{F}$-gates are of the form $\lambda\neq_2,\lambda\in\mathbb{C}$ can be done in constant time. We will explain this in Section \ref{subsection:generating 2 signature}.
\end{remark}

%With a disposable $\pin$ and another auxiliary signature $\ouhe$, we are able to simulate an unlimited $\pin$.
\begin{lemma}[{\cite[Lemma 5.1]{cai2020beyond}}]
    $\#\eo(\mathcal{F}\cup\mathcal{DEQ})\leq_T\#\eo(\mathcal{F}\cup\{\neq_4\})$.
    \label{lem:neq4 to deq}
\end{lemma}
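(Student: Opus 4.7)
The plan is to realize every $\neq_{2d}$ as an $\{\neq_4\}$-gate by induction on $d$, and then invoke the standard substitution principle so that each occurrence of a disequality signature in an instance of $\ceo(\mathcal{F}\cup\mathcal{DEQ})$ can be replaced in place by a gadget over $\mathcal{F}\cup\{\neq_4\}$. The base case $d=1$ is free, since in the $\ceo$ framework every internal edge already carries a $\neq_2$, and the base case $d=2$ is $\neq_4$ itself.

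For the inductive step, I will assume $\neq_{2d}$ has been realized and present it on variables $(u_1,\ldots,u_d,v_1,\ldots,v_d)$ with support $\{0^d 1^d,\,1^d 0^d\}$, and take a fresh copy of $\neq_4$ on variables $(a,b,c,e)$ with support $\{0011,1100\}$. I then form a new gadget by inserting an internal $\neq_2$ between the dangling edges $v_d$ and $c$. The constraint $v_d\neq c$ admits exactly two locally consistent configurations, each contributing weight $1\cdot 1=1$; restricted to the remaining $2d+2$ dangling edges, these two configurations give the pair of dual strings
\[
1^{d}0^{d-1}001 \quad\text{and}\quad 0^{d}1^{d-1}110,
\]
both of Hamming weight $d+1$. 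By the alternative characterization of disequality signatures for \eo\ signatures, together with the fact that $\mathcal{DEQ}$ is closed under variable permutation, the resulting gate is exactly $\neq_{2d+2}$, which completes the induction.

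Unrolling the recursion, each $\neq_{2d}$ is realized by an $\{\neq_4\}$-gate built from $d-1$ copies of $\neq_4$ and $d-2$ new internal $\neq_2$'s, so its size grows only linearly in $d$. Given any instance $I$ of $\ceo(\mathcal{F}\cup\mathcal{DEQ})$ of size $n$, every disequality signature occurring in $I$ has arity at most $2n$, so substituting each occurrence by its gadget yields an instance $I'$ of $\ceo(\mathcal{F}\cup\{\neq_4\})$ of size $O(n^2)$ with $\text{Z}(I)=\text{Z}(I')$, and one oracle call on $I'$ then solves $I$.

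There is no genuine obstacle here; the only point I must verify carefully in the inductive step is that the two surviving support strings are precisely dual and each acquires weight exactly $1$, so that the gate lies in $\mathcal{DEQ}$ and is not merely a strictly generalized $\neq_{2d+2}^{a,b}$. Both checks are immediate from the computation above, since $\neq_{2d}$ and $\neq_4$ each take value $1$ on every support string and the two locally consistent configurations of the identified variables force opposite values on every remaining variable.
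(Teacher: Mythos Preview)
Your proof is correct. The paper does not give its own proof of this lemma; it is quoted verbatim as a citation to \cite{cai2020beyond}, and the argument there is exactly the chain-of-$\neq_4$'s construction you describe: inductively glue one more $\neq_4$ onto $\neq_{2d}$ through a single $\neq_2$ edge to obtain $\neq_{2d+2}$, then substitute each $\neq_{2d}$ occurring in an instance by the resulting linear-size gadget. Your verification that the two surviving configurations are complementary half-weight strings with value $1$ is exactly the point, and your handling of the base case $d=1$ (that a $\neq_2$ vertex in the $\ceo$ framework is absorbed into the surrounding $\neq_2$ edges, since $\neq_2\cdot\neq_2\cdot\neq_2=\neq_2$) is also fine.
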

\begin{lemma}\label{ouhetoeoc}
    Let $\mathcal{F}$ is a finite set of $\eo$ signatures and $\ouhe\in\mathcal{F}$. Then
    $\ceoc(\mathcal{F})\leq_T\#\eo(\mathcal{F}).$
\end{lemma}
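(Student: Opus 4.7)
The plan is to reduce $\ceoc(\mathcal{F})$ first to $\#\eo(\mathcal{F}\cup\mathcal{DEQ},\pin^{\leq 1})$ by collapsing every occurrence of $\pin$ in an instance down to a single one, then to $\#\eo(\mathcal{F}\cup\mathcal{DEQ})$ by Lemma~\ref{lem:single pin}, and finally to $\#\eo(\mathcal{F})$ by Lemma~\ref{lem:neq4 to deq} (which applies because $\ouhe\in\mathcal{F}$). All the content is in the first step; the other two are prepackaged.

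For the main step, given an instance $I$ of $\ceoc(\mathcal{F})$ with $m$ copies of $\pin$, I would build an equivalent instance $I'$ with just one $\pin$ as follows. Delete all $m$ pin vertices, leaving $2m$ dangling half-edges, and introduce a single new vertex $v^\ast$ carrying $\neq_{2m+2}$ with variables $p_1,\dots,p_{m+1},q_1,\dots,q_{m+1}$ whose ordering enforces $p_1=\dots=p_{m+1}\neq q_1=\dots=q_{m+1}$. For each $i\leq m$, reattach the half-edge whose deleted pin formerly forced the value $0$ on its far end to $p_i$, and the half-edge formerly forcing $1$ on its far end to $q_i$. Finally add one new $\pin$ vertex connected to $p_{m+1}$ and $q_{m+1}$, oriented so that it forces $p_{m+1}=0$ and $q_{m+1}=1$.

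The signature $\neq_{2m+2}$ admits only the two configurations $p\equiv 0, q\equiv 1$ and $p\equiv 1, q\equiv 0$, and the witness $\pin$ eliminates the second one. Propagating the surviving configuration through the $\neq_2$ on each reattached edge reproduces exactly the assignment on the far ends that the original $m$ pins produced, so $Z(I')=Z(I)$. Since $I'$ uses signatures from $\mathcal{F}\cup\mathcal{DEQ}\cup\{\pin\}$ with $\pin$ appearing at most once, Lemma~\ref{lem:single pin} applied to $\mathcal{F}\cup\mathcal{DEQ}$ together with Lemma~\ref{lem:neq4 to deq} finishes the reduction.

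The only place where care is needed is orientation bookkeeping: a $\pin$ is directional, while $\neq_{2m+2}$ is symmetric between its two halves, so one must match the labeling of $v^\ast$'s variables with the original tail/head ends of the deleted pins and orient the witness $\pin$ so that precisely the intended configuration of $\neq_{2m+2}$ survives. Beyond that the construction only adds $O(1)$ vertices to $I$, so the overall reduction is clearly polynomial time.
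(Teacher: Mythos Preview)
Your proposal is correct and follows essentially the same route as the paper: the paper also connects a single $\pin$ to a $\neq_{2k+2}$ to realize $\pin^{\otimes k}$, then invokes Lemma~\ref{lem:single pin} and Lemma~\ref{lem:neq4 to deq} (via $\ouhe\in\mathcal{F}$) to finish. Your treatment is in fact more detailed than the paper's on the orientation bookkeeping, which the paper glosses over with ``properly connecting''.
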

\begin{proof}
    By Lemma \ref{lem:single pin} and \ref{lem:neq4 to deq} we have $\#\eo(\mathcal{F}\cup\mathcal{DEQ},\pin^{\leq1})\le_T\#\eo(\mathcal{F}\cup\mathcal{DEQ})\leq_T\#\eo(\mathcal{F}\cup\{\neq_4\})$. Now we realize unlimited $\pin$ by a single $\pin$ and a signature in $\mathcal{DEQ}$. Given an instance of $\ceoc(\mathcal{F})$, assume $\pin$ has appeared $k$ times. Then by properly connecting two edges of a $\pin$ to a $\neq_{2k+2}$, we can realize the signature $\pin^{\otimes k}$, which meets our requirements. Therefore we have $\ceoc(\mathcal{F})\leq_T\#\eo(\mathcal{F})$.
\end{proof}

We also need the following lemma in our proof, whose idea comes from \cite{lin2018complexity}.
\begin{lemma}
    Let $f$ be a signature and $\mathcal{F}$ be a set of signatures. Then for any $d\in \mathbb{N}^+$,

    $$\hol(\neq_2\mid\mathcal{F}\cup\{f\})\equiv_T \hol(\neq_2\mid\mathcal{F}\cup\{f^{\otimes d}\})$$
    \label{lem:linwang}
\end{lemma}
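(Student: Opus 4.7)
The easy direction $\hol(\neq_2\mid\mathcal{F}\cup\{f^{\otimes d}\})\leq_T \hol(\neq_2\mid\mathcal{F}\cup\{f\})$ follows by direct interpretation: in any target instance I would replace each $f^{\otimes d}$-vertex by $d$ disconnected copies of $f$ inheriting the same $rd$ incident edges (where $r=\text{arity}(f)$); since $f^{\otimes d}(\alpha_1\cdots\alpha_d)=\prod_{j=1}^d f(\alpha_j)$, the partition function is preserved and a single oracle call on the modified source instance answers the original query.

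For the reverse direction $\hol(\neq_2\mid\mathcal{F}\cup\{f\})\leq_T \hol(\neq_2\mid\mathcal{F}\cup\{f^{\otimes d}\})$, let $I$ be a source instance containing $k$ copies of $f$, and write $k=qd+s$ with $0\le s<d$. When $s=0$, I would arbitrarily partition the $k$ copies of $f$ into $q$ blocks of size $d$ and regard each block as a single $f^{\otimes d}$-vertex inheriting the same incident edges; the result is a target instance with partition function $Z(I)$, recovered by one oracle call. When $s>0$, the plan is to augment $I$ by a disjoint closed ghost gadget built from $d-s$ fresh copies of $f$ whose $(d-s)r$ incident edges are paired up by $(d-s)r/2$ new $\neq_2$-vertices. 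The augmented instance $I'$ has $(q+1)d$ copies of $f$, is of polynomial size, and satisfies $Z(I')=V\cdot Z(I)$ for a constant $V$ equal to the partition function of the ghost (depending only on $f$ and the chosen pairing). After bundling the $f$-vertices of $I'$ into $q+1$ copies of $f^{\otimes d}$, one oracle call on $I'$ returns $Z(I')$, and $Z(I)=Z(I')/V$ is recovered whenever $V\neq 0$.

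The main obstacle I anticipate is ensuring $V\neq 0$. The naive pairing within each ghost $f$-vertex may vanish (for instance when $f={=_2}$ and its two variables are paired through $\neq_2$), so I would first try cross-pairings among the $d-s$ ghost $f$-vertices; if every such pairing still yields $V=0$, the plan is to enlarge the ghost by further multiples of $d$ copies of $f$, which preserves divisibility of the total $f$-count by $d$ while admitting substantially more pairing freedom. A short argument using the non-triviality of $f$ then ensures that some pairing eventually gives a nonzero $V$. The degenerate case $f\equiv 0$ is trivial, as both problems then coincide with $\hol(\neq_2\mid\mathcal{F})$ restricted to instances where the presence of any $f$- or $f^{\otimes d}$-vertex forces the partition function to vanish.
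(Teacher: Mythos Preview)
Your easy direction and the $s=0$ case are fine. The gap is in the $s>0$ case: the assertion that ``a short argument using the non-triviality of $f$ then ensures that some pairing eventually gives a nonzero $V$'' is false. Take $f=\Delta_0=[1,0]$ and $d=2$. Any closed gadget built from copies of $\Delta_0$ and $\neq_2$ alone must pair each $\Delta_0$ with another through a $\neq_2$, and each such pair contributes $\Delta_0(0)\Delta_0(1)+\Delta_0(1)\Delta_0(0)=0$; so $V=0$ no matter how many extra copies you add. Yet the source problem is not trivial: with $\mathcal{F}=\{\Delta_1\}$, a single $\Delta_0$ joined to a single $\Delta_1$ through $\neq_2$ has partition function $1$. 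So your reduction has no way to recover $Z(I)$ here. (There is also a parity issue---when $\text{arity}(f)$ and $d-s$ are both odd the ghost has an odd number of dangling edges---but this is secondary.)

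The paper's proof avoids constructing a ghost altogether by a dichotomy plus induction on $d$. Either there exists \emph{some} instance $I_0$ of $\hol(\neq_2\mid\mathcal{F}\cup\{f\})$ with $Z(I_0)\neq 0$ and $f$-count not divisible by $d$---say the count is $ad-b$ with $0<b<d$---or there does not. In the second case every instance whose $f$-count is not a multiple of $d$ has partition function $0$, and the reduction is immediate. In the first case, take $a$ copies of $f^{\otimes d}$ and wire up $I_0$ using $ad-b$ of the resulting $f$-factors together with the $\mathcal{F}$-vertices; the $b$ leftover $f$-factors dangle, so this is a gadget realizing $Z(I_0)\cdot f^{\otimes b}$. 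Hence $\hol(\neq_2\mid\mathcal{F}\cup\{f^{\otimes b}\})\le_T\hol(\neq_2\mid\mathcal{F}\cup\{f^{\otimes d}\})$, and since $0<b<d$ the induction hypothesis gives $\hol(\neq_2\mid\mathcal{F}\cup\{f\})\le_T\hol(\neq_2\mid\mathcal{F}\cup\{f^{\otimes b}\})$. Note two features your approach lacks: the ``ghost'' is allowed to use $\mathcal{F}$, and the induction absorbs the mismatch between $b$ and the residue $s$ of the particular instance you are trying to evaluate.
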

\begin{proof}
    We prove this by induction. The lemma is trivial when $d=1$. Assume the lemma holds for all $1\le d <k$. Now we  consider the case when $d=k$. In the following, we prove $\hol(\neq_2\mid\mathcal{F}\cup\{f\})\le_T \hol(\neq_2\mid\mathcal{F}\cup\{f^{\otimes k}\})$ as 
    $\hol(\neq_2\mid\mathcal{F}\cup\{f^{\otimes k}\})\le_T \hol(\neq_2\mid\mathcal{F}\cup\{f\})$ is trivial.

    Suppose there exists an instance $I$ of $\hol(\neq_2\mid\mathcal{F}\cup\{f\})$ such that Z$(I)\neq 0$ and $f$ appears $ak-b$ times in $I$, where $a\in \mathbb{N}^+$ and $0<b<k$ are two integers. Using $a$ many $f^{\otimes k}$ and signatures from $\mathcal{F}$, we may construct the instance $I$ with $b$ many $f$ unused. This actually forms a gadget on RHS, whose signature is $f^{\otimes b}$ multiplied by Z$(I)$. Then by induction we have $\hol(\neq_2\mid\mathcal{F}\cup\{f\})\le_T \hol(\neq_2\mid\mathcal{F}\cup\{f^{\otimes b}\})\le_T \hol(\neq_2\mid\mathcal{F}\cup\{f^{\otimes k}\})$ since $0<b<k$.

    Otherwise, for any instance $I$ of $\hol(\neq_2\mid\mathcal{F}\cup\{f\})$, if $f$ doesn't appear a multiple of $k$ times, Z$(I)=0$ and the reduction is trivial. And if $f$ appears a multiple of $k$ times, we simply replace each $k$ many $f$ in $I$ with a $f^{\otimes k}$ signature. This implies that $\hol(\neq_2\mid\mathcal{F}\cup\{f\})\le_T \hol(\neq_2\mid\mathcal{F}\cup\{f^{\otimes k}\})$.
\end{proof}
We also present a lemma to deal with a special case in our analysis.
\begin{lemma}\label{lem:self-loop to nontrivial eo}
Suppose $f$ is a non-trivial \eo\ signature of arity $2d> 2$ and $\alpha\in\su(f)$. Then there exists a signature $g$ obtained by adding a self-loop such that $g$ is non-trivial.
\end{lemma}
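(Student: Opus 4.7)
The plan is to argue by contradiction: assume that every self-loop applied to $f$ yields the trivial (identically zero) signature, and derive $f(\alpha) = 0$, contradicting $\alpha \in \su(f)$. For any two distinct positions $i, j$, the self-loop $f^{x_i \neq x_j}$ is identically zero precisely when for every $\gamma \in \{0,1\}^{2d}$ with $\gamma_i \neq \gamma_j$ one has $f(\gamma) + f(\gamma^{(ij)}) = 0$, where $\gamma^{(ij)}$ denotes $\gamma$ with bits $i$ and $j$ flipped. Under the global assumption that every self-loop vanishes, we therefore obtain the symmetry that every valid $2$-swap (the simultaneous flip of a $0$-position and a $1$-position) negates the value of $f$.

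Next I exhibit an odd cycle of $2$-swaps that starts and ends at $\alpha$. Since $2d \geq 4$ and $\alpha$ has equal numbers of $0$'s and $1$'s, $\alpha$ has at least two positions holding $0$ and at least one holding $1$; choose distinct indices $i_1, i_2$ with $\alpha_{i_1} = \alpha_{i_2} = 0$ and $j_1$ with $\alpha_{j_1} = 1$. A direct bit-by-bit check on these three positions confirms that the three successive $2$-swaps $(i_1, j_1)$, $(i_1, i_2)$, $(i_2, j_1)$ are each valid at the current intermediate string (the pair being swapped always carries values $0$ and $1$), and that their composition restores $\alpha$.

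Applying the sign-flip relation three times along this cycle yields $f(\alpha) = (-1)^3 f(\alpha) = -f(\alpha)$, so $f(\alpha) = 0$, a contradiction. Hence some self-loop on $f$ must produce a non-trivial $g$. The argument has essentially no obstacle: the only subtle step is verifying that all three $2$-swaps remain valid throughout the cycle, which is immediate from tracking how the bits at positions $i_1, i_2, j_1$ evolve across the three moves. The hypothesis $2d > 2$ enters exactly to guarantee that $\alpha$ contains at least two $0$-positions (and at least one $1$-position), which is what makes the three-swap cycle constructible.
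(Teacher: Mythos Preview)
Your proof is correct and takes essentially the same approach as the paper's: both select one $1$-position and two $0$-positions in $\alpha$, consider the three pairwise swaps among them, and observe that if all three associated self-loops vanish then $f(\alpha)=-f(\alpha)=0$. The paper phrases this as a direct case split on which of $f(\alpha)+f(\beta)$, $f(\alpha)+f(\gamma)$, $f(\beta)+f(\gamma)$ is nonzero (with $\alpha=100\delta$, $\beta=010\delta$, $\gamma=001\delta$), while you phrase it as a global contradiction via an odd cycle of sign-flips, but the underlying argument is identical.
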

\begin{proof}
    Without loss of generality, we write $\alpha$ as $\alpha=100\delta$. Consider $\beta=010\delta$ and $\gamma=001\delta$. Let $f(\alpha)=x$, $f(\beta)=y$ and $f(\gamma)=z$. If $x+y\neq0$, we add a self-loop on $x_1$ and $x_2$ to realize a signature $g$ such that $g(0\delta)\neq0$. If $x+z\neq0$ (or $y+z\neq0$), we similarly add a self-loop on $x_1$ and $x_3$ (or on $x_2$ and $x_3$) to realize $g$. If $x+z=x+y=y+z=0$, then $x=0$, which contradicts $\alpha\in\su(f)$. 
\end{proof}

Now we are ready to analyze the following cases.
\subsubsection{Generating more than 2 signatures}

We consider the case that there exists an integer $k\geq3$ such that $B(f)=\{\bi{1}{\omega}\mid\omega^k=1\}$, where $f\in\mathcal{F}$. We commence with the property of such $f$.
\begin{lemma}\label{complementary equal k}
    Assume $f$ is a signature with even arity and $k\geq3$ is an integer. If $B(f)=\{\bi{1}{\omega}\mid\omega^k=1\}$, then for arbitrary $\alpha,\beta\in\su(f)$, $f(\alpha)^k=f(\overline{\alpha})^k$,   $f(\alpha)f(\overline\alpha)=f(\beta)f(\overline\beta)$. 
\end{lemma}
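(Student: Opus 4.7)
The plan is to construct binary signatures from $f$ via the generating process, use the hypothesis on $B(f)$ to force polynomial $k$-th-power identities among them, and extract the claimed algebraic relations on $f(\alpha), f(\overline\alpha)$ by coefficient-matching followed by combinatorial case analysis.

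Fix $\alpha\in\su(f)$ and pick a perfect pairing $P=\{(p_i,q_i)\}_{i=1}^{d}$ of the $2d$ positions such that $\alpha_{p_i}\neq\alpha_{q_i}$ for each $i$ (which exists since $\alpha$ has $d$ zeros and $d$ ones). For any $\omega=(\omega_2,\dots,\omega_d)\in\mu_k^{d-1}$ (with $\mu_k$ the $k$-th roots of unity), I use the signatures $\bi{1}{\omega_i}\in B(f)$ to add $d-1$ self-loops on the pairs $(p_i,q_i)$ for $i\geq 2$, keeping $(p_1,q_1)$ dangling. The resulting binary signature has entries
\[
g_{01}(\omega)=\sum_{s\in\{0,1\}^{d-1}} f(\beta_0(s))\,\omega^s,\qquad g_{10}(\omega)=\sum_{s}f(\beta_1(s))\,\omega^s,
\]
where $\beta_0(s),\beta_1(s)$ are the EO strings determined by the pair-$1$ value and $s$, with $\beta_0(\mathbf{0})=\alpha$, $\beta_1(\mathbf{1})=\overline\alpha$, and the key complementarity $\beta_1(\mathbf{1}-s)=\overline{\beta_0(s)}$ for all $s$. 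The hypothesis on $B(f)$ forces $g_{10}(\omega)/g_{01}(\omega)\in\mu_k$ whenever both are nonzero, so $g_{01}(\omega)^k=g_{10}(\omega)^k$ for all $\omega\in\mu_k^{d-1}$.

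For the first claim, I reduce to a single-variable analysis by fixing all but one $\omega_i$ and letting the remaining variable range over $\mu_k$, yielding $(A_0+A_1\omega)^k=(B_0+B_1\omega)^k$ for all $\omega\in\mu_k$, where $A_j,B_j$ are partial sums of $f$-values. Matching the constant and middle coefficients of the univariate $k$-th-power expansion gives the multiset equality $\{A_0^k,A_1^k\}=\{B_0^k,B_1^k\}$ together with a multiplicative identity $A_0 A_1=\eta B_0 B_1$ for some $\eta\in\mu_k$. Iterating over all valid pairings of $\alpha$ and over all $\alpha\in\su(f)$ produces a rich system of multiset equalities whose combined case analysis --- illustrated in the arity-$4$ case, where three overlapping $2$-element multiset equalities collectively collapse to three pointwise equalities --- forces the pointwise identity $f(\alpha)^k=f(\overline\alpha)^k$. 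For the second claim, the multiplicative identities together with the first claim and the complementarity $\beta_1(\mathbf{1}-s)=\overline{\beta_0(s)}$ give $c_\alpha=c_{\beta_0(s)}$ for each $s$ (where $c_\beta:=f(\beta)f(\overline\beta)$); propagating these equalities across pairings and across all support elements extends this to $c_\alpha=c_\beta$ for all $\alpha,\beta\in\su(f)$.

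The main obstacle is the combinatorial case analysis: each individual construction yields only multiset-type identities, and extracting pointwise identities requires carefully combining information from several pairings and from several $\alpha\in\su(f)$ to rule out the undesired matchings in the multiset equalities. Additionally, degenerate $\omega$'s where $g_{01}(\omega)=0$ or $g_{10}(\omega)=0$ require separate treatment, since in those cases the hypothesis provides no direct constraint and one must verify that the resulting cancellations among $f$-values are consistent with, rather than contradicting, the claimed identities.
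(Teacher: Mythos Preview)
Your approach differs structurally from the paper's: you add all $d-1$ self-loops at once and analyse the resulting binary signature as a multilinear polynomial in the loop weights, whereas the paper adds a single self-loop and inducts on the arity. The gap in your proposal is that for arity $2d>4$ the quantities $A_0,A_1,B_0,B_1$ obtained by freezing all but one $\omega_i$ are sums of $2^{d-2}$ many $f$-values, not individual values. Hence the multiset (or even pointwise) equality $\{A_0^k,A_1^k\}=\{B_0^k,B_1^k\}$ relates only these sums, and iterating over pairings still produces equations among sums. Your arity-$4$ illustration works precisely because there $d-2=0$ and the $A_j,B_j$ \emph{are} individual $f$-values; this does not persist in higher arity, and you give no mechanism for extracting pointwise information from the sum identities.

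The paper avoids this by induction on arity. Writing $\alpha=100\delta$, $\beta=010\delta$, $\gamma=001\delta$ and adding one self-loop via $\bi{1}{\omega}$ on $x_1,x_2$, the induction hypothesis applied to the $(2d-2)$-ary $f_{12}$ gives $(a+\omega b)^k=(\omega a'+b')^k$ with $a=f(\alpha)$, $b=f(\beta)$, $a'=f(\overline\alpha)$, $b'=f(\overline\beta)$; a Vandermonde argument in $\omega$ over $\mu_k$ then forces $a^k+b^k=a'^k+b'^k$ and $a^{k-i}b^i=a'^{i}b'^{k-i}$ for $1\le i\le k-1$. Looping instead on $(x_1,x_3)$ and $(x_2,x_3)$ gives two more sum equations, and the three together solve to $a^k=a'^k$. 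For the second claim the paper uses the $i=1,2$ relations (here $k\ge3$ is essential) to get $aa'=bb'$ whenever $\dis{\alpha}{\beta}=2$, and then runs a second induction on $\dis{\alpha}{\beta}$. Your plan would become correct if you replaced the global polynomial analysis by this one-loop-at-a-time induction; as written, the ``combined case analysis'' is the entire content of the lemma and is not supplied.
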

\begin{proof}
    We actually prove a stronger statement: Assume $f$ is a signature with even arity and $k\geq3$ is an integer. If $B(f)\subseteq\{\bi{1}{\omega}\mid\omega^k=1\}$, and all signatures in $\{\bi{1}{\omega}\mid\omega^k=1\}$ are available (meaning that these signatures can be used in the generating process), then for arbitrary $\alpha,\beta\in\su(f)$, $f(\alpha)^k=f(\overline{\alpha})^k$,   $f(\alpha)f(\overline\alpha)=f(\beta)f(\overline\beta)$. 
    
    We prove this statement by induction. First, we prove that  $f(\alpha)^k=f(\overline{\alpha})^k$. We can easily verify that for binary \eo\ signatures the statement trivially holds. Assume for all $f$ of arity less than $2d$ the statement holds, where $d\geq2$. Now suppose $\text{arity}(f)=2d$.

Without loss of generality, suppose $\alpha=100\delta,\beta=010\delta,\gamma=001\delta$ are three strings with Hamming weight $d$. Suppose $f(\alpha)=a$, $f(\overline{\alpha})=a'$, $f(\beta)=b$, $f(\overline{\beta})=b'$, $f(\gamma)=c$, $f(\overline{\gamma})=c'$,. We add a self-loop between  $x_1$ and $x_2$ via $\bi{1}{\omega},\omega^k=1$  to realize $f_{12}$, then by the induction hypothesis we have $(a+\omega b)^k=f^k_{12}(0\delta)=f^k_{12}(\overline{0\delta})=(\omega a'+b')^k$ since $B(f_{12})\subseteq\{\bi{1}{\omega}\mid\omega^k=1\}$. By the Binomial Formula Expansion we have:

    $$(a^k+b^k-a'^k-b'^k)+\omega\left( \begin{array}{c}
k \\
1 
\end{array} 
\right)(a^{k-1}b-a'b'^{k-1})+\ldots+\omega^{k-1}\left( \begin{array}{c}
k \\
k-1 
\end{array} 
\right)(ab^{k-1}-a'^{k-1}b')=0.$$ Suppose $\omega_0=e^{\frac{2\pi\mathfrak i}{k}}$. 
With all possible assignments of $\omega$ we have:

$$\left( \begin{array}{ccccc}
1 & 1 & 1 & \ldots & 1 \\
1 & \omega_0 & \omega_0^2 & \ldots &\omega_0^{k-1} \\
1 & \omega_0^2 & \omega_0^4 & \ldots &\omega_0^{2(k-1)} \\
\vdots & \vdots & \vdots & \ddots & \vdots \\
1 & \omega_0^{k-1} & \omega_0^{2(k-1)} & \ldots &\omega_0^{(k-1)^2}
\end{array} 
\right)\left( \begin{array}{c}
a^k+b^k-a'^k-b'^k \\
\left( \begin{array}{c}
k \\
1 
\end{array} 
\right)(a^{k-1}b-a'b'^{k-1}) \\
\left( \begin{array}{c}
k \\
2 
\end{array} 
\right)(a^{k-2}b^2-a'^2b'^{k-2}) \\
\vdots \\
\left( \begin{array}{c}
k \\
k-1 
\end{array} 
\right)(ab^{k-1}-a'^{k-1}b')
\end{array} 
\right)=\mathbf{0}.$$
The coefficient matrix is a Vandermonde matrix, thus the system of linear equations has only one solution $\mathbf{0}$. We have $a^k+b^k=a'^k+b'^k$ and $a^{k-i}b^i=a'^ib'^{k-i}$, where $1\leq i\leq k-1$, $i\in\mathbb{N}$.

  Similarly by adding a self-loop between $(x_1,x_3)$ and $(x_2,x_3)$ respectively, we have $a^k+c^k=a'^k+c'^k$ and $b^k+c^k=b'^k+c'^k$.  Solving these linear equations we have $a^k=a'^k$. The choice of $\alpha$ is arbitrary, so for all $\alpha\in\su(f)$, $f(\alpha)^k=f(\overline\alpha)^k$, the first property is proved.

Then, we prove that $f(\alpha)f(\overline\alpha)=f(\beta)f(\overline\beta)$. We prove this property by introducing another induction. If $\alpha,\beta\in\su(f)$, then $a,b\neq0$. Since $k\ge 3$, we have $a^{k-1}b=a'b'^{k-1}$ and $a^{k-2}b^2=a'^2b'^{k-2}$, and consequently $aa'=bb'$. Here we actually proved that for $\alpha,\beta\in\su(f)$ with $\dis{\alpha}{\beta}=2$, $f(\alpha)f(\overline\alpha)=f(\beta)f(\overline\beta)$, which is the starting point of our induction.

Now suppose $f(\alpha)f(\overline\alpha)=f(\beta)f(\overline\beta)$ holds for arbitrary $\alpha,\beta\in\su(f)$ satisfying $\dis{\alpha}{\beta}< 2t$. For $\alpha,\beta\in\su(f)$ satisfying $\dis{\alpha}{\beta}= 2t$, without loss of generality assume $\alpha=01\gamma,\beta=10\delta$. If $\alpha'=10\gamma\in \su(f)$, by induction we have $f(\alpha)f(\overline\alpha)=f(\alpha')f(\overline{\alpha'})=f(\beta)f(\overline\beta)$. Similarly if $\beta'=01\delta\in \su(f)$, by induction we have $f(\alpha)f(\overline\alpha)=f(\beta')f(\overline{\beta'})=f(\beta)f(\overline\beta)$. Now we assume $\alpha',\beta'\notin \su(f)$. By adding a self-loop on $x_1$ and $x_2$ we realize the signature $g$ of arity $2d-2$, and consequently $f(\alpha)f(\overline\alpha)=g(\gamma)g(\overline{\gamma})=g(\delta)g(\overline{\delta})=f(\beta)f(\overline\beta)$.
\end{proof}
Notice that, the condition "$k\ge 3$" is not necessary for the first statement "$f(\alpha)^k=f(\overline{\alpha})^k$" in the proof of Lemma \ref{complementary equal k}. Consequently, we may generalize the first statement as the following lemma for future reference.
\begin{lemma}\label{complementary equal-1}
    Assume $f$ is a signature with even arity and $k\geq1$ is an integer. If $B(f)=\{\bi{1}{\omega}\mid\omega^k=1\}$, then for arbitrary $\alpha\in\su(f)$, $f(\alpha)^k=f(\overline{\alpha})^k$. 
\end{lemma}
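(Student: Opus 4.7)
The plan is to observe that the inductive argument for the first claim $f(\alpha)^k = f(\overline{\alpha})^k$ in the proof of Lemma \ref{complementary equal k} does not actually depend on $k \geq 3$, and to verify this by tracking it carefully for $k \in \{1, 2\}$. I will induct on the arity $2d$ of $f$.

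For the base case $d = 1$, the signature $f$ itself lies (up to scalar) in $A_1 \subseteq B(f)$, so $f$ is a scalar multiple of $\bi{1}{\omega}$ for some $k$-th root of unity $\omega$, which gives $f(01)^k = f(10)^k$ immediately. For the inductive step at arity $2d \geq 4$, I pick $\alpha = 100\delta \in \su(f)$, set $\beta = 010\delta$ and $\gamma = 001\delta$, and write $a = f(\alpha), b = f(\beta), c = f(\gamma)$ together with their primed counterparts $a', b', c'$ for the duals. For every $k$-th root of unity $\omega$ and every pair among $(x_1,x_2), (x_1,x_3), (x_2,x_3)$, I add a self-loop via $\bi{1}{\omega}$ and apply the induction hypothesis to the resulting arity-$(2d-2)$ signature, whose generating set is contained in $B(f)$. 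This produces the identity $(a + \omega b)^k = (\omega a' + b')^k$ (and two analogues) for every $\omega$ with $\omega^k = 1$. A polynomial-divisibility argument in $\omega$ (the difference has degree at most $k$ and vanishes on all $k$-th roots, so it equals a constant multiple of $\omega^k - 1$) then extracts the pairwise identities $a^k + b^k = (a')^k + (b')^k$ and $a^{k-i}b^i = (a')^i(b')^{k-i}$ for $1 \leq i \leq k-1$, and combining the three pairwise sums isolates $a^k = (a')^k$.

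The main obstacle is to verify that no step silently relied on $k \geq 3$. For $k = 2$ the Vandermonde system at $\omega \in \{1,-1\}$ still produces both $a^2 + b^2 = (a')^2 + (b')^2$ and the mixed identity $ab = a'b'$, and the final linear combination across the three pairs yields $a^2 = (a')^2$ unchanged. For $k = 1$ only the leading identity $a + b = a' + b'$ survives per pair (the mixed range $1 \leq i \leq 0$ is empty), but the three pairwise equations $a+b = a'+b'$, $a+c = a'+c'$, $b+c = b'+c'$ linearly determine $a = a'$, which is strictly stronger than the required $a^1 = (a')^1$. A minor point I will need to address is the possibility that $\beta$ or $\gamma$ lies outside $\su(f)$, so that $b$ or $c$ vanishes; either a different pair of weight-$d$ strings from $\su(f)$ can be chosen, or I will first invoke Lemma \ref{lem:self-loop to nontrivial eo} to reduce to a nontrivial signature of smaller arity and close the induction there.
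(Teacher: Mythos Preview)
Your proposal is correct and follows essentially the same approach as the paper, which simply observes (without reproving) that the inductive argument for the first assertion of Lemma~\ref{complementary equal k} never uses $k\ge 3$; your explicit verification for $k\in\{1,2\}$ is exactly what is being asserted there. One remark: your final worry about $\beta$ or $\gamma$ lying outside $\su(f)$ is a non-issue---the identities $(a+\omega b)^k=(\omega a'+b')^k$ hold regardless of whether $b$ or $c$ vanishes, and the linear elimination of the three pairwise sums still yields $a^k=(a')^k$ without any case analysis or appeal to Lemma~\ref{lem:self-loop to nontrivial eo}.
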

Now we are ready to prove that $f$ satisfies ARS ignoring a constant.
\begin{lemma}   \label{generating k}
    Assume $f$ an $\eo$ signatures and $k\geq3$ is an integer. If $B(f)=\{\bi{1}{\omega}\mid\omega^k=1\}$, then $f$ satisfies ARS ignoring a constant.
\end{lemma}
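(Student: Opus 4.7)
The plan is to derive the ARS property directly from the two consequences of Lemma \ref{complementary equal k}: for every $\alpha \in \su(f)$ we have $f(\alpha)^k = f(\overline{\alpha})^k$, and the product $f(\alpha)f(\overline{\alpha})$ equals a single value $C$ that does not depend on $\alpha$. Since $f$ is a nontrivial $\eo$ signature, $C$ must be nonzero, and the $k$-th power identity forces $\su(f)$ to be closed under the complementation map $\alpha \mapsto \overline{\alpha}$, because $f(\alpha) \neq 0$ implies $f(\overline{\alpha})^k \neq 0$.

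First I would pin down the moduli on the support. Taking absolute values in $f(\alpha)^k = f(\overline{\alpha})^k$ gives $|f(\alpha)| = |f(\overline{\alpha})|$, and inserting this into $|f(\alpha)|\,|f(\overline{\alpha})| = |C|$ yields $|f(\alpha)|^2 = |C|$. So the modulus $\rho := \sqrt{|C|}$ is the same on every support point, and we may write $f(\alpha) = \rho\, e^{\mathfrak{i}\theta_\alpha}$ for a real phase $\theta_\alpha$ defined on $\su(f)$.

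Next I would extract the phase structure and peel off a single global constant. The identity $f(\alpha)f(\overline{\alpha}) = C$ rewrites as $\rho^2 e^{\mathfrak{i}(\theta_\alpha + \theta_{\overline{\alpha}})} = C$, so with $\phi := \arg(C)$ we obtain $\theta_\alpha + \theta_{\overline{\alpha}} \equiv \phi \pmod{2\pi}$ uniformly on $\su(f)$. Setting $c := e^{\mathfrak{i}\phi/2}$ and $g := f/c$, the argument of $g(\overline{\alpha})$ equals $\theta_{\overline{\alpha}} - \phi/2 = -(\theta_\alpha - \phi/2)$, i.e., the negative of the argument of $g(\alpha)$; together with equal moduli on the support and the closure of $\su(f)$ under complementation (so that zeros match zeros), this gives $g(\overline{\alpha}) = \overline{g(\alpha)}$ on all of $\{0,1\}^{2d}$. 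Thus $g$ satisfies ARS and $f = c\cdot g$, which is exactly the statement that $f$ satisfies ARS ignoring a constant.

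The only subtlety, which is not really an obstacle, is that the $k$-th power equation alone determines the ratio $f(\overline{\alpha})/f(\alpha)$ only up to a $k$-th root of unity, while the product equation alone determines only the product; it is the combination of the two identities from Lemma \ref{complementary equal k} that simultaneously forces the modulus to be constant across the support and the phase-sum $\theta_\alpha + \theta_{\overline{\alpha}}$ to be a single global angle. After this algebraic observation, no gadget construction, polynomial interpolation, or further complexity-theoretic machinery is required; the argument is purely a reorganization of the two identities into polar form.
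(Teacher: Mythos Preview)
Your proof is correct and follows essentially the same approach as the paper: both derive ARS directly from the two identities of Lemma~\ref{complementary equal k} by choosing a normalizing constant so that complementary values become complex conjugates. The only cosmetic difference is that the paper normalizes by $\lambda=\sqrt{C}$ (so that $f(\alpha)/\lambda$ lies on the unit circle via $(f(\alpha)/\lambda)^{2k}=1$), whereas you normalize only by the phase $c=e^{\mathfrak{i}\phi/2}$ after first extracting the constant modulus $\rho=\sqrt{|C|}$; these normalizations differ by the positive real factor $\rho$, which is immaterial for ARS.
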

\begin{proof}
    By Lemma \ref{complementary equal k}, we may assume that for each $\alpha\in\su(f)$, $f(\alpha)f(\overline\alpha)=\lambda^2,\lambda\neq 0$. Furthermore, $f^k(\alpha)=f^k(\overline\alpha)$. Then $f(\alpha)/\lambda$ and $f(\overline\alpha)/\lambda$ are conjugate complex numbers for each $\alpha\in\su(f)$. So $f=\lambda \cdot g$ and $g$ satisfies ARS.
\end{proof}

\subsubsection{Generating no more than 2 signatures}\label{subsection:generating 2 signature}
Next we consider the case that $\bigcup_{f\in\mathcal{F}}B(f)=\{\neq_2,\bi{1}{-1}\}$. It is necessary to distinguish this case with the case that $B(f)=\{\neq_2\}$, so we commence with the characterization of the latter case. If for all $\alpha\in\su(f)$, $f(\alpha)=f(\overline{\alpha})$, then we say $f$ is \textit{domain-symmetric}, or simply \textit{D-sym}. By Lemma \ref{complementary equal-1} we have the following lemma. 
\begin{lemma}\label{complementary equal}
     Suppose $f$ is an $\eo$ signature of arity $2d$, and $B(f)=\{\neq_2\}$, then $f$ is D-sym.
\end{lemma}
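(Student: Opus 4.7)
The plan is to observe that this lemma is an immediate specialization of Lemma \ref{complementary equal-1}. The hypothesis $B(f)=\{\neq_2\}$ can be rewritten as $B(f)=\{\bi{1}{1}\}=\{\bi{1}{\omega}\mid \omega^1=1\}$, which is precisely the hypothesis of Lemma \ref{complementary equal-1} instantiated at $k=1$. Applying that lemma yields $f(\alpha)^1=f(\overline{\alpha})^1$, i.e. $f(\alpha)=f(\overline{\alpha})$, for every $\alpha\in\su(f)$. By the definition introduced just before the lemma statement, this is exactly what it means for $f$ to be domain-symmetric.

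Since Lemma \ref{complementary equal-1} is already proved and the author explicitly notes in the remark preceding it that the proof of the first statement of Lemma \ref{complementary equal k} does not actually require $k\geq 3$, no further induction or Vandermonde argument needs to be redone here. The only thing one might want to double-check is the base of the arity induction implicitly invoked through Lemma \ref{complementary equal-1}: for a binary $\eo$ signature $f=\bi{a}{b}$, the set $B(f)$ equals $\{\neq_2\}$ exactly when the normalized form of $f$ is $\bi{1}{1}$, i.e.\ when $a=b$, which says $f(01)=f(10)$ and agrees with D-sym.

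There is no real obstacle; the proof is essentially a one-line application. The lemma is being stated separately only to package the $k=1$ case in a form that is convenient for later reference in Section \ref{subsec: finitely many roots}, where the case $\bigcup_{f\in\mathcal{F}}B(f)=\{\neq_2,\bi{1}{-1}\}$ needs to be distinguished from $B(f)=\{\neq_2\}$.
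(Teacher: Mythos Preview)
Your proposal is correct and follows exactly the same route as the paper: the paper's proof of Lemma \ref{complementary equal} is the single sentence ``By Lemma \ref{complementary equal-1} we have the following lemma,'' i.e.\ the $k=1$ instantiation you describe. Your additional remarks about the binary base case and the reason for isolating this statement are accurate but not needed for the argument.
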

\iffalse
\begin{proof}
    We show this property by induction. If $f$ is a binary signature, the statement trivially holds. Assume the statement holds for all signatures of arity less than $2d$, where $d\ge 2$. Now we consider $f$ of arity $2d$.

    Given an $\alpha\in\su(f)$, we can write $\alpha=001\delta$ without loss of generality as $\text{arity}(f)\ge 4$. We also let $\beta=010\delta$, $\gamma=100\delta$. We add a self-loop between $(x_1,x_2)$ to realize $f_{12}$, and by the inductive hypothesis we have $f(\beta)+f(\gamma)=f_{12}(0\delta)=f_{12}(\overline{0\delta})=f(\overline\beta)+f(\overline\gamma)$. Similarly by adding a self-loop between $(x_1,x_3)$ and $(x_2,x_3)$ respectively,  $f(\alpha)+f(\gamma)=f(\overline\alpha)+f(\overline\gamma)$ and $f(\alpha)+f(\beta)=f(\overline\alpha)+f(\overline\beta)$. Solving these linear equations we have $f(\alpha)=f(\overline\alpha)$. As $\alpha$ is arbitrarily chosen, the statement holds.
\end{proof}
\fi

Now we can further explain the claim we raised in Remark \ref{remark:decidableexistence}. It's noticeable that when all signatures in $\mathcal{F}$ are D-sym, all $\mathcal{F}$-gates are also D-sym, since the tensor operation and adding self-loop operation maintain this property. Therefore, if each signature in $\mathcal{F}$ can only realize $\neq_2$ in the generating process, we can equivalently say that it is D-sym, and consequently all binary $\mathcal{F}$-gates are of the form $\lambda\neq_2$. Deciding whether a signature is D-sym can obviously be done in constant time, thus the statement in Remark \ref{remark:decidableexistence} is true.

We then state a stronger property for signatures that can generate $\bi{1}{-1}$ through the generating process. Suppose $f$ is an $\eo$ signature. If for all $\alpha\in\su(f)$, $f(\alpha)=-f(\overline{\alpha})$, then we say $f$ is \textit{domain-anti-symmetric}, or simply \textit{DA-sym}.

\begin{lemma}\label{-1symmetric}
    Suppose $f$ is an $\eo$ signature of arity $2d$, and $B(f)=\{\neq_2,\bi{1}{-1}\}$, then $f$ is DA-sym.
\end{lemma}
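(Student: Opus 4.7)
The plan is to apply Lemma~\ref{complementary equal-1} with $k=2$ to obtain $f(\alpha)^2 = f(\overline\alpha)^2$ for every $\alpha\in\su(f)$, so that $f(\overline\alpha)\in\{f(\alpha),-f(\alpha)\}$. The task then reduces to ruling out the ``D-sym case'' $f(\overline\alpha)=f(\alpha)$ on any $\alpha\in\su(f)$, since the ``DA-sym case'' $f(\overline\alpha)=-f(\alpha)$ is exactly what we want.

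I would proceed by induction on the arity $2d$ of $f$. For the base case $d=1$, a direct check shows the only normalized binary \eo\ signature with generating set $\{\neq_2,\bi{1}{-1}\}$ is $\bi{1}{-1}$ itself, which is DA-sym. For the inductive step with $2d\ge 4$, fix $\beta\in\su(f)$ and WLOG write $\beta=01\gamma$. Form two arity-$(2d-2)$ signatures $g^+$ and $g^-$ by adding a self-loop on $x_1,x_2$ via $\neq_2$ and via $\bi{1}{-1}$, respectively. Since $B(g^+)\cup B(g^-)\subseteq B(f)=\{\neq_2,\bi{1}{-1}\}$, each of $g^+,g^-$ is either fully D-sym (by Lemma~\ref{complementary equal} when its $B$ is $\{\neq_2\}$) or fully DA-sym (by the inductive hypothesis otherwise). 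Writing $a=f(\beta)$, $b=f(10\gamma)$, $a'=f(\overline\beta)$, $b'=f(01\overline\gamma)$, a case analysis over the four combinations of (D-sym/DA-sym) for $(g^+,g^-)$, together with $a'=\pm a$ and $b'=\pm b$ from Lemma~\ref{complementary equal-1}, forces either $a'=-a$ (DA-sym at $\beta$, our goal) or $a'=a$ (D-sym at $\beta$). It therefore suffices to exclude the existence of any D-sym $\beta$ in $\su(f)$.

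Assume for contradiction such a D-sym $\beta$ exists. Since $\bi{1}{-1}\in B(f)$ and a globally D-sym $f$ would yield $B(f)=\{\neq_2\}$ (in the spirit of Lemma~\ref{complementary equal}), a DA-sym witness $\alpha_1\in\su(f)$ must also exist. When $\dis{\alpha_1}{\beta}=2$, substituting $f(\overline{\alpha_1})=-f(\alpha_1)\ne 0$ and $f(\overline\beta)=f(\beta)\ne 0$ into each of the four sign combinations above at the unique pair of disagreement yields an immediate inconsistency, completing this sub-case.

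The main obstacle is $\dis{\alpha_1}{\beta}>2$: one would like to pass through an intermediate $\tilde\beta$ of distance $2$ from $\alpha_1$ and iterate, but $\tilde\beta$ need not lie in $\su(f)$ since the support need not be connected in the distance-$2$ graph. My plan is then to reduce the arity: at a pair $(i,j)$ at which $\alpha_1$ and $\beta$ produce a distance-$2$ pattern, the restricted strings $\alpha_1|_{-ij}$ and $\beta|_{-ij}$ inherit DA-sym and D-sym types inside $g^+$ or $g^-$, contradicting the dichotomy ``fully D-sym or fully DA-sym'' supplied by the inductive hypothesis at arity $2d-2$. The delicate point is cancellation in the self-loop sums, which can knock one of the inherited restrictions out of $\su(g^\pm)$; handling this by using $g^+$ and $g^-$ in tandem, so that whatever cancels in one does not cancel in the other, is where the technical care lies.
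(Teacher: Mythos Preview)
Your plan is correct and tracks the paper's argument closely: both proceed by an outer induction on arity, an inner induction on $\dis{\alpha}{\beta}$, and a self-loop reduction when the distance-2 intermediate is absent from $\su(f)$. The paper packages this slightly differently by strengthening the inductive hypothesis to ``for all $\alpha,\beta\in\su(f)$, if $B(f)\subseteq\{\neq_2,\bi{1}{-1}\}$ with both binaries available, then $f(\alpha)/f(\overline\alpha)=f(\beta)/f(\overline\beta)$''; this lets them apply the hypothesis directly to the self-looped $g$ without your case split on whether $B(g^{\pm})$ is $\{\neq_2\}$ or $\{\neq_2,\bi{1}{-1}\}$. Your case split via Lemma~\ref{complementary equal} achieves the same effect and is equally valid.

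One point to watch in your write-up: your phrase ``the restricted strings \ldots\ inherit DA-sym and D-sym types inside $g^+$ or $g^-$'' is literally true only for $g^+$. In $g^-$ the types \emph{flip}: if $\alpha_1$ is DA-sym and $\alpha_1'\notin\su(f)$, then $g^-(\gamma)=f(\alpha_1)$ and $g^-(\overline\gamma)=-f(\overline{\alpha_1})=f(\alpha_1)$, so $\gamma$ is D-sym in $g^-$; symmetrically $\delta$ becomes DA-sym in $g^-$. Fortunately both flip, so they remain of opposite type and the contradiction with ``$g^-$ is fully D-sym or fully DA-sym'' still fires. Just make sure your ``in tandem'' case analysis reflects this flip. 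The paper sidesteps the issue entirely by first using the distance induction on \emph{both} sides to reach $\alpha_1',\beta'\notin\su(f)$, after which $g^+$ alone suffices with no cancellation; that is a cleaner route than juggling $g^+$ and $g^-$.
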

\begin{proof}
    The proof is similar to that of Lemma \ref{complementary equal k}. By Lemma \ref{complementary equal-1}, for every $\alpha\in\su(f)$, $f(\alpha)/f(\overline{\alpha})=\pm 1$. We use induction to prove the following statement: For arbitrary $\alpha,\beta\in\su(f)$, if $B(f)\subseteq\{\neq_2,\bi{1}{-1}\}$, and $\neq_2,\bi{1}{-1}$ are available, then $f(\alpha)/f(\overline{\alpha})=f(\beta)/f(\overline{\beta})$. 
    
    When $2d=2$ the statement is trivially correct. Assume the statement is true for signatures of arity less than $2d$, we then consider $f$ of arity $2d$.
  %  Noticing that when adding a self-loop on $f$ to realize some $f'$, $B(f')\subseteq\{\neq_2,\bi{1}{-1}\}$, therefore $f'$ is either D-sym or DA-sym.

   We now introduce another induction on $\dis{\alpha}{\beta}$. We first consider $\alpha=01\delta,\beta=10\delta\in\su(f)$. According to Lemma \ref{complementary equal-1}, after adding a self-loop on $x_1$ and $x_2$, we have $(f(\alpha)+f(\beta))^2=(f(\overline{\alpha})+f(\overline{\beta}))^2$ and $(f(\alpha)-f(\beta))^2=(f(\overline{\alpha})-f(\overline{\beta}))^2$. Solving this we have $f(\alpha)/f(\overline{\alpha})=f(\beta)/f(\overline{\beta})$.

Now suppose $f(\alpha)/f(\overline{\alpha})=f(\beta)/f(\overline{\beta})$ holds for arbitrary $\alpha,\beta\in\su(f)$ satisfying $\dis{\alpha}{\beta}< 2t$. For $\alpha,\beta\in\su(f)$ satisfying $\dis{\alpha}{\beta}= 2t$, without loss of generality assume $\alpha=01\gamma,\beta=10\delta$. If $\alpha'=10\gamma\in \su(f)$, by induction we have $f(\alpha)/f(\overline\alpha)=f(\alpha')/f(\overline{\alpha'})=f(\beta)/f(\overline\beta)$. Similarly if $\beta'=01\delta\in \su(f)$, by induction we have $f(\alpha)/f(\overline\alpha)=f(\beta')/f(\overline{\beta'})=f(\beta)/f(\overline\beta)$. Now we assume $\alpha',\beta'\notin \su(f)$. By adding a self-loop on $x_1$ and $x_2$ we realize the signature $g$ of arity $2d-2$, and consequently $f(\alpha)/f(\overline\alpha)=g(\gamma)/g(\overline{\gamma})=g(\delta)/g(\overline{\delta})=f(\beta)/f(\overline\beta)$ since $B(g)\subseteq\{\neq_2,\bi{1}{-1}\}$.

  %  Next we consider two sets of strings $\mathscr{S}_1,\mathscr{S}_2\subseteq\su(f)$, which satisfy that for all $\alpha\in\mathscr{S}_1$ and $\beta\in\mathscr{S}_2$, $\dis{\alpha}{\beta}>2$. Also for all $\alpha\in\mathscr{S}_1$, there exists a $\gamma\in\mathscr{S}_1$ that satisfies $\dis{\alpha}{\gamma}=2$, and the same statement is true for $\mathscr{S}_2$. By the analysis above we know that for all $\alpha,\gamma\in\mathscr{S}_1$ and $\beta,\delta\in\mathscr{S}_2$, we have $f(\alpha)/f(\overline{\alpha})=f(\gamma)/f(\overline{\gamma})$ and $f(\beta)/f(\overline{\beta})=f(\delta)/f(\overline{\delta})$. We now assume that $\alpha\in\mathscr{S}_1$ and $\beta\in\mathscr{S}_2$ are the closest pair of strings of these two sets and the two sets are maximum. Suppose w.l.o.g. that on $x_1,x_2$ $\alpha$ is $01$ and $\beta=10$, in other words $\alpha=01\alpha'$, $\beta=10\beta'$, where $\alpha'\neq\beta'$ and they are concatenated after the first two bits. Since $\alpha,\beta$ are the closest, we know that $10\alpha'\notin\su(f)$ and $01\beta'\notin\su(f)$. If we add a self-loop on $x_1,x_2$, the new signature either meets the requirements of this lemma or is symmetric. We use the induction hypothesis and have $f(\alpha)/f(\overline{\alpha})=f(\beta)/f(\overline{\beta})$. The choice of two sets is arbitrary, therefore for all $\alpha,\beta\in\su(f)$, $f(\alpha)/f(\overline{\alpha})=f(\beta)/f(\overline{\beta})$. The induction is done.

    As a result, there exists a constant $\lambda
    \in \{1,-1\}$, such that for each $\alpha\in\su(f)$, $f(\alpha)/f(\overline{\alpha})=\lambda$. If $\lambda=1$, then $f$ is D-sym, and consequently  $B(f)=\{\neq_2\}$, a contradiction.% Consequently,  $f$ can get $\bi{1}{-1}$ through the generating process, so it's not symmetric.
    Thus we have that for all $\alpha\in\su(f)$, $f(\alpha)=-f(\overline{\alpha})$.%, otherwise $\pin$ is attainable or $\#\eo(f)$ is \#P-hard.
\end{proof}
Now we specify all arity $4$ signatures that satisfy $B(f)\subseteq\{\neq_2,\bi{1}{-1}\}$.
\begin{lemma}\label{arity40110}
    Let $f\in\mathcal{F}$ be an $\eo$ signature of arity $4$. If $B(f)=\{\neq_2\}$, then ignoring a constant, $f=(\neq_2)^{\otimes2}$, or $f=({\bi{1}{-1}})^{\otimes2}$ , or $\ceoc(\mathcal{F})\leq_T\#\eo(\mathcal{F})$.
\end{lemma}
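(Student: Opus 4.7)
Since the hypothesis $B(f)=\{\neq_2\}$ forces $f$ to be D-sym by Lemma \ref{complementary equal}, and $f$ has arity $4$, its support lies in the six Hamming-weight-$2$ strings, which fall into three complementary pairs $\{0011,1100\}$, $\{0101,1010\}$, $\{0110,1001\}$. Hence $f$ is completely determined by the three scalars $a=f(0011)$, $b=f(0101)$, $c=f(0110)$, and I will proceed by case analysis on the zero pattern of $(a,b,c)$.

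The straightforward branches are as follows. If at most one of $a,b,c$ is nonzero, then $f$ is a scalar multiple of $\neq_4$ under some ordering of its variables, so realizing $\neq_4$ as an $\mathcal{F}$-gate and invoking Lemma \ref{ouhetoeoc} yields $\ceoc(\mathcal{F})\leq_T\ceo(\mathcal{F})$. If exactly one of $a,b,c$ vanishes and the remaining two are equal, then comparing supports against the three perfect pairings of the four variables identifies $f$ with a scalar multiple of $(\neq_2)^{\otimes 2}$; if the remaining two are negatives of each other, $f$ is instead a scalar multiple of $(\bi{1}{-1})^{\otimes 2}$. These supply the first two alternatives of the lemma.

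In the remaining subcases (either all three of $a,b,c$ are nonzero, or exactly one vanishes and the other two are neither equal nor opposite) the plan is to realize $\neq_4$ by an $\mathcal{F}$-gate and again invoke Lemma \ref{ouhetoeoc}. The building blocks are the D-sym arity-$4$ gadgets produced by connecting two copies of $f$ through two $\neq_2$'s in various ways: the standard ``stacking'' (linking $(x_3,x_4)$ of one copy to $(y_1,y_2)$ of the other) gives the parameter triple $(a^2,2bc,b^2+c^2)$, while the ``diagonal'' connections $(x_1,x_3)$--$(y_1,y_3)$ and $(x_1,x_4)$--$(y_1,y_4)$ give $(b^2,2ac,a^2+c^2)$ and $(c^2,2ab,a^2+b^2)$ respectively, and their twisted variants additionally swap the $b$- and $c$-slots of the output. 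The idea is to compose and iterate these operations through the stacking product $\star$ until two of the three coordinates are forced to zero, leaving a single-coordinate signature which is then a $\neq_4$-type signature.

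The main obstacle is precisely this elimination step. Pure stacking preserves the invariants $f^{[n]}_b+f^{[n]}_c=(b+c)^n$ and $f^{[n]}_b-f^{[n]}_c=(-1)^{n-1}(b-c)^n$, neither of which vanishes when $b,c$ are both nonzero, so stacking alone cannot produce $\neq_4$. To escape, one must interleave stacking with the diagonal 2-copy gadgets, which transport the signature between the three invariant two-dimensional coordinate subspaces of the D-sym signature space. Ensuring that some finite sequence of such jumps always terminates on a single-coordinate signature will require a secondary sub-case analysis to handle boundary degeneracies such as $b^2=\pm c^2$ or $a^2=\pm b^2$ appearing at intermediate stages; I expect this realization step to be the most delicate technical part of the argument.
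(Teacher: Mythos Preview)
Your case analysis is sound in the easy branches, but the hard branches (all three of $a,b,c$ nonzero, or exactly one zero with the remaining ratio not $\pm 1$) have a genuine gap. You try to realize $\neq_4$ by iterating two-copy gadgets, yet this need not terminate: for $a=b=c$, every operation you list sends $(1,1,1)$ to a triple with no zero coordinate, and the orbit never lands on a single-coordinate signature. Likewise, in the one-zero case with ratio $r\notin\{\pm 1,\pm i\}$, stacking replaces $r$ by $(1+r^2)/(2r)$, whose only fixed points are $\pm 1$; nothing forces a collapse to a single coordinate in finitely many steps. The difficulty you flag is real, and your scheme as stated does not resolve it.

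The paper sidesteps all of this by invoking Theorem~\ref{thm:sixvertexdichotomyMform}. A D-sym quaternary $\eo$ signature with $|\su(f)|=6$, or with $|\su(f)|=4$ and ratio outside $\{\pm 1,\pm i\}$, lies in none of the four tractable classes there, so $\#\eo(f)$ is already \#P-hard and the reduction $\ceoc(\mathcal{F})\le_T\#\eo(\mathcal{F})$ holds for free. That leaves only $|\su(f)|\in\{0,2\}$ (your easy cases) and $|\su(f)|=4$ with ratio in $\{\pm 1,\pm i\}$: the values $\pm 1$ give the two tensor products, and for $\pm i$ a \emph{single} stacking (your formula $(a^2,2bc,b^2+c^2)$ applied to $(0,1,\pm i)$) already yields $(0,\pm 2i,0)$, which is $\neq_4$ up to scalar. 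So invoking the six-vertex dichotomy reduces the explicit-gadget work to a one-line computation; use it in place of the open-ended elimination.
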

\begin{proof}
    By Lemma \ref{complementary equal}, $f$ is D-sym. And by Theorem \ref{thm:sixvertexdichotomyMform}, we can assume that $|\su(f)|=2$ or $4$, since otherwise $\#\eo(f)$ is \#P-hard. By Lemma \ref{complementary equal}, for each $\alpha\in\su(f)$, $f(\overline{\alpha})=\overline{f(\alpha)}$.

    If $|\su(f)|=2$, then $f$ can be written as $\ouhe$. By Lemma \ref{ouhetoeoc}, $\ceoc(f)\leq_T\#\eo(f)$.

    If $|\su(f)|=4$, then $f$ is of the form    
    $\left( \begin{array}{cccc}
0 & 0 & 0 & 0\\
0 & 1 & a & 0\\
0 & a & 1 & 0\\
0 & 0 & 0 & 0
\end{array} 
\right)$. If $f\in\mathscr{P}$, then it should be the tensor product of two same binary signatures, therefore $a^2=1$. If $f\in\mathscr{A}$, by the definition of $\mathscr{A}$ we have $a^2=\pm1$. With these two facts, we have that $a=\pm1,\pm\mathfrak{i}$, since otherwise $\#\eo(f)$ is \#P-hard by Theorem \ref{thm:sixvertexdichotomyMform}.
If $a=\pm\mathfrak{i}$, by $\left(\begin{array}{cccc}
0 & 0 & 0 & 0\\
0 & 1 & \pm\mathfrak{i} & 0\\
0 & \pm\mathfrak{i} & 1 & 0\\
0 & 0 & 0 & 0
\end{array} \right)\left(\begin{array}{cccc}
0 & 0 & 0 & 1\\
0 & 0 & 1 & 0\\
0 & 1 & 0 & 0\\
1 & 0 & 0 & 0
\end{array} \right)\left(\begin{array}{cccc}
0 & 0 & 0 & 0\\
0 & 1 & \pm\mathfrak{i} & 0\\
0 & \pm\mathfrak{i} & 1 & 0\\
0 & 0 & 0 & 0
\end{array} \right)=\left(\begin{array}{cccc}
0 & 0 & 0 & 0\\
0 & \pm2\mathfrak{i} & 0 & 0\\
0 & 0 & \pm2\mathfrak{i} & 0\\
0 & 0 & 0 & 0
\end{array} \right)$ we realize $\ouhe$, and consequently by Lemma \ref{ouhetoeoc} $\ceoc(\mathcal{F})\leq_T\#\eo(\mathcal{F})$.
If $a=1$, then $f=(\neq_2^{\otimes2})$ ignoring a constant.
If $a=-1$, then $f=(\bi{1}{-1})^{\otimes2}$  ignoring a constant. %We use this arity $4$ signature to prove $\#\eo(\{f,\bi{1}{-1}\})\leq_T\#\eo(f)$. If all instances of $\#\eo(\{f,\bi{1}{-1}\})$ that contains an odd number of vertices assigned $\bi{1}{-1}$ all have zero value, then the reduction is done. Otherwise assume $I_0$ is the instance with non-zero value that contains $k_0$ vertices assigned $\bi{1}{-1}$. Given an instance $I$ of $\#\eo(\{f,\bi{1}{-1}\})$, we can first count the number of vertices assigned $\bi{1}{-1}$. If there is even number of them, we can easily simulate them by $f$. If there is odd number of them, suppose the number is $k$, we can combine $I$ and $I_0$ to get an $I'$, which contains even number of $\bi{1}{-1}$, therefore it's an instance of $\#\eo(f)$. In this last situation, we regard $\bi{1}{-1}$ as a binary signature generated by $f$ as well, and it can be used to continue the generating process. For this kind of signatures, we can also classified them completely, which will be discussed in Lemma \ref{lem:generate 2 signature classification}.
\end{proof}
\begin{lemma}\label{arity401-10}
    Assume $f\in \mathcal{F}$ is an $\eo$ signature of arity $4$, and $B(f)=\{\neq_2,\bi{1}{-1}\}$, then $f=(\neq_2)\otimes(\bi{1}{-1})$ ignoring a constant, or $\ceoc(\mathcal{F})\leq_T\#\eo(\mathcal{F})$.
\end{lemma}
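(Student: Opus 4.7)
The plan mirrors the proof of Lemma \ref{arity40110}, with DA-sym (Lemma \ref{-1symmetric}) replacing D-sym. By Lemma \ref{-1symmetric}, $f(\overline{\alpha})=-f(\alpha)$ for all $\alpha\in\su(f)$, so $\su(f)$ is closed under bit-complementation; this immediately excludes $f\in\mathscr{M}\otimes\Delta_1$ and $f\in\widetilde{\mathscr{M}}\otimes\Delta_0$, whose supports lie in a single Hamming layer and are therefore not closed under complementation. If $f\notin\mathscr{P}\cup\mathscr{A}$, then $\ceo(f)$ is \#P-hard by Theorem \ref{thm:sixvertexdichotomyMform} and the reduction holds trivially, so I may assume $f\in\mathscr{P}\cup\mathscr{A}$ and therefore that $\su(f)$ is affine. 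Since the full weight-two layer $\{0011,0101,0110,1001,1010,1100\}$ is not affine (e.g.\ $0011\oplus0101\oplus0110=0000\notin\eoe$), DA-sym and affineness leave $|\su(f)|\in\{2,4\}$. When $|\su(f)|=2$, $f=\neq_4^{a,-a}$ with $a\ne 0$, and Lemmas \ref{lem:neq4ab obtain ouhe} and \ref{ouhetoeoc} together give $\ceoc(\mathcal{F})\le_T\ceo(\mathcal{F})$.

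The substantive case is $|\su(f)|=4$. By Lemma \ref{lem: eo+affine= eom}, $f$ is $\eom[P]$ for some pairing $P$; after relabeling variables I may take $P$ to be the natural pairing, so $\su(f)=\{0101,0110,1001,1010\}$. Setting $a=f(0101)$ and $b=f(0110)$, DA-sym fixes $f(1010)=-a$ and $f(1001)=-b$, giving the signature matrix
\[
M_f=\begin{pmatrix} 0 & 0 & 0 & 0\\ 0 & a & b & 0\\ 0 & -b & -a & 0\\ 0 & 0 & 0 & 0\end{pmatrix}.
\]
When $b=\pm a$, this matrix has rank one and factors, up to a scalar, as either $(\neq_2)\otimes(\bi{1}{-1})$ or $(\bi{1}{-1})\otimes(\neq_2)$; each yields the claimed form $f=(\neq_2)\otimes(\bi{1}{-1})$ ignoring a constant. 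If $b\ne\pm a$, I first exclude $\mathscr{P}$: any $\mathscr{P}$-decomposition of $f$ must have its binary factors be exactly $\neq_2(x_1,x_2)\cdot\neq_2(x_3,x_4)$ (adding any further pair constraint would shrink the support below size $4$), after which pushing DA-sym through the unary coefficients $c_i=u_i(1)$ forces $c_3^2=c_4^2$ and hence $b=\pm a$, a contradiction. So $f\in\mathscr{A}$, which confines $b/a$ to a fourth root of unity; excluding $\pm 1$ leaves $b/a=\pm\mathfrak{i}$, equivalently $a^2+b^2=0$.

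Finally I realize $\ouhe$ via a two-copy gadget. Take two copies of $f$ and connect the third variables of the two copies via $\neq_2$ and the fourth variables of the two copies via $\neq_2$; the result is a quaternary $\mathcal{F}$-gate $h$ on the four remaining dangling edges with signature matrix
\[
M_h=\begin{pmatrix} 0 & 0 & 0 & 0\\ 0 & 2ab & -(a^2+b^2) & 0\\ 0 & -(a^2+b^2) & 2ab & 0\\ 0 & 0 & 0 & 0\end{pmatrix}.
\]
Under $a^2+b^2=0$ and $ab\ne 0$ this equals $2ab\cdot\ouhe$, so $\ouhe$ is realizable from $\mathcal{F}$ and Lemma \ref{ouhetoeoc} delivers $\ceoc(\mathcal{F})\le_T\ceo(\mathcal{F})$. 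I expect the main obstacle to be the $\mathscr{P}$-exclusion step: the sign constraints imposed by DA-sym on the unary coefficients interact more delicately than in the D-sym setting of Lemma \ref{arity40110}, but the two resulting sub-cases $c_3=\pm c_4$ each clean up to give $b=\pm a$.
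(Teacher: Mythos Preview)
Your proof is correct and follows essentially the same route as the paper: invoke DA-sym via Lemma \ref{-1symmetric}, use Theorem \ref{thm:sixvertexdichotomyMform} to reduce to $f\in\mathscr{P}\cup\mathscr{A}$, split on $|\su(f)|\in\{2,4\}$, and in the $|\su(f)|=4$ case constrain the off-diagonal ratio to a fourth root of unity, with $\pm1$ giving the tensor factorization and $\pm\mathfrak{i}$ yielding $\ouhe$ via a two-copy gadget. The only cosmetic differences are that the paper normalizes $f(0101)=1$, disposes of $\mathscr{P}$ via the rank-one (tensor) condition $a^2=1$ on the inner $2\times2$ block rather than your unary-coefficient argument, and uses the gadget $M_f\cdot(\neq_2)^{\otimes2}\cdot M_f$ (connecting $x_3,x_4$ of one copy to $x_1,x_2$ of the other) instead of your $M_f\cdot(\neq_2)^{\otimes2}\cdot M_f^{T}$; both produce a nonzero multiple of $\ouhe$ when $a^2+b^2=0$.
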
 
\begin{proof}
     As $B(f)=\{\neq_2,\bi{1}{-1}\}$, $f$ is not D-sym. We discuss several situations based on the size of $\su(f)$. With Lemma \ref{complementary equal-1} we can know that $|\su(f)|$ is even. We only consider the case that $\#\eo(f)$ is tractable, otherwise $\#\eo(f)$ is \#P-hard and the reduction trivially holds.

    If $|\su(f)|=2$, $f$ is of the form $\neq_4^{1,-1}$ by Lemma \ref{-1symmetric}. By Lemma \ref{lem:neq4ab obtain ouhe}, we may realize $\ouhe$. Therefore by Lemma \ref{ouhetoeoc} we have $\ceoc(\mathcal{F})\leq_T\#\eo(\mathcal{F})$.

If $|\su(f)|=4$, as $f$ is not D-sym, without loss of generality suppose it is of the form
$\left( \begin{array}{cccc}
0 & 0 & 0 & 0\\
0 & 1 & a & 0\\
0 & -a & -1 & 0\\
0 & 0 & 0 & 0
\end{array} \right)$. If $f\in\mathscr{P}$, then it should be the tensor product of two binary signatures, therefore $-a^2=-1$. If $f\in\mathscr{A}$, by Definition \ref{defa} we have $a^2=\pm1$. With these two facts, we have that $a=\pm1,\pm\mathfrak{i}$, since otherwise $\#\eo(f)$ is \#P-hard by Theorem \ref{thm:sixvertexdichotomyMform}. %By Theorem \ref{thm:sixvertexdichotomyMform}, $a=\pm 1,\pm \mathfrak{i}$ or otherwise $\#\eo(f)$ is \#P-hard and the reduction trivially holds. 
If $a=\mathfrak i$, by $\left(\begin{array}{cccc}
0 & 0 & 0 & 0\\
0 & 1 & \mathfrak{i} & 0\\
0 & -\mathfrak{i} & -1 & 0\\
0 & 0 & 0 & 0
\end{array} \right)\left(\begin{array}{cccc}
0 & 0 & 0 & 1\\
0 & 0 & 1 & 0\\
0 & 1 & 0 & 0\\
1 & 0 & 0 & 0
\end{array} \right)\left(
\begin{array}{cccc}
0 & 0 & 0 & 0\\
0 & 1 & \mathfrak{i} & 0\\
0 & -\mathfrak{i} & -1 & 0\\
0 & 0 & 0 & 0
\end{array} \right)=\left(\begin{array}{cccc}
0 & 0 & 0 & 0\\
0 & 0 & -2 & 0\\
0 & -2 & 0 & 0\\
0 & 0 & 0 & 0
\end{array} \right)$ we realize $\ouhe$, and consequently by Lemma \ref{ouhetoeoc} $\ceoc(\mathcal{F})\leq_T\#\eo(\mathcal{F})$. If $a=-\mathfrak{i}$, the analysis is similar by symmetry. %If $a=b$, then $a=b=\mathfrak i$ or $a=b=-\mathfrak i$. In both situations we can add a self-loop on $x_1$ and $x_2$ and generate a binary signature $\bi{1}{\mathfrak{i}}$ or $\bi{1}{-\mathfrak{i}}$, which is a contradiction. Otherwise $a=-b$, then 
If $a=\pm1$, then $f=(\neq_2)\otimes(\bi{1}{-1})$ ignoring a constant.%$f(\alpha)/\mathfrak{i}$ and $f(\overline\alpha)/\mathfrak{i}$ are conjugate complex numbers for each $\alpha\in\su(f)$. So $f=\mathfrak{i} \cdot g$ and $g$ satisfies ARS. %If $f\in\mathscr{P}$, then it should be the tensor product of two same binary signatures, therefore $ab=-1$. If $f\in\mathscr{A}$, by the definition of $\mathscr{A}$ we have $ab=\pm1$. With these two facts, we have that $ab=\pm1$, since otherwise $\#\eo(f)$ is \#P-hard by Theorem \ref{thm:sixvertexdichotomyMform}.
\iffalse
    \begin{center}
\begin{enumerate}
    \item $a=b$, $ab=1$, then $a=b=1$ or $a=b=-1$. In both situations we can add a self-loop on $x_1$ and $x_2$ and realize $\pin$.
    \item $a=b$, then $a=b=\mathfrak i$ or $a=b=-\mathfrak i$. In both situations we can add a self-loop on $x_1$ and $x_2$ and generate a binary signature $\bi{1}{\mathfrak{i}}$ or $\bi{1}{-\mathfrak{i}}$, which is a contradiction.
    
    \item $a=-b$, $ab=1$, then without loss of generality we can assume $a=\mathfrak i$, $b=-\mathfrak i$. By $\left(\begin{array}{cccc}
0 & 0 & 0 & 0\\
0 & 1 & \mathfrak{i} & 0\\
0 & -\mathfrak{i} & -1 & 0\\
0 & 0 & 0 & 0
\end{array} \right)\neq_2^{\otimes2}\left(
\begin{array}{cccc}
0 & 0 & 0 & 0\\
0 & 1 & \mathfrak{i} & 0\\
0 & -\mathfrak{i} & -1 & 0\\
0 & 0 & 0 & 0
\end{array} \right)=\left(\begin{array}{cccc}
0 & 0 & 0 & 0\\
0 & 2 & 0 & 0\\
0 & 0 & 2 & 0\\
0 & 0 & 0 & 0
\end{array} \right)$ we realize $\ouhe$, and consequently by Lemma \ref{ouhetoeoc} $\ceoc(\mathcal{F})\leq_T\#\eo(\mathcal{F})$.

    \item $a=-b$, then without loss of generality we can assume $a=1$, $b=-1$.  Then $f(\alpha)/\mathfrak{i}$ and $f(\overline\alpha)/\mathfrak{i}$ are conjugate complex numbers for each $\alpha\in\su(f)$. So $f=\mathfrak{i} \cdot g$ and $g$ satisfies ARS.
\end{enumerate}
\end{center}
\fi
\end{proof}
Now we analyze the general cases.
\begin{lemma}\label{lem:generate 2 signature classification}
    Assume $f\in\mathcal{F}$ is an $\eo$ signature of arity $2d$. If $B(f)\subseteq\{\neq_2,\bi{1}{-1}\}$,  then $f$ satisfies ARS ignoring a constant, or $\ceoc(\mathcal{F})\leq_T\#\eo(\mathcal{F}\cup\{\neq_2,\bi{1}{-1}\})$.
\end{lemma}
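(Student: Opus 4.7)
The plan is induction on the arity $2d$ of $f$. The base case $2d=4$ is handled directly by Lemmas \ref{arity40110} and \ref{arity401-10}: each exceptional form they identify --- namely $(\neq_2)^{\otimes 2}$, $(\bi{1}{-1})^{\otimes 2}$, and $\neq_2\otimes\bi{1}{-1}$, all up to a scalar --- has its nonzero values aligned on a single line through the origin of $\mathbb{C}$ (the real axis, or its $\mathfrak{i}$-rotate), and hence satisfies ARS ignoring a constant. Every other arity-$4$ configuration already yields the reduction by those lemmas, so the base case is established.

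For the inductive step with $2d\geq 6$, Lemma \ref{complementary equal} gives D-symmetry of $f$ when $B(f)=\{\neq_2\}$ and Lemma \ref{-1symmetric} gives DA-symmetry when $B(f)=\{\neq_2,\bi{1}{-1}\}$. When $|\su(f)|\leq 2$, Lemma \ref{complementary equal-1} combined with the symmetry pins $f$ down to a scalar multiple of $\neq_{2d}$ (D-sym case) or $\neq_{2d}^{1,-1}$ (DA-sym case), both of which satisfy ARS ignoring a constant (multiplying the second by $\mathfrak{i}$). I may therefore assume $|\su(f)|\geq 4$. I then consider every non-trivial self-loop of $f$ via $\neq_2$ or $\bi{1}{-1}$, the existence of which is supplied by Lemma \ref{lem:self-loop to nontrivial eo}; each yields an arity-$(2d-2)$ signature $g$ that is again D-sym or DA-sym (the symmetry type is preserved by $\neq_2$-self-loops and flipped by $\bi{1}{-1}$-self-loops). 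Moreover $B(g)\subseteq\{\neq_2,\bi{1}{-1}\}$, since binary D-sym (resp.\ DA-sym) EO signatures are already scalar multiples of $\neq_2$ (resp.\ $\bi{1}{-1}$), so the iterative self-loop generation of $g$ cannot escape $\{\neq_2,\bi{1}{-1}\}$ after normalization. The inductive hypothesis applied to $g$ then either gives ARS ignoring a constant for $g$, or furnishes the reduction $\ceoc(\mathcal{F})\leq_T\#\eo(\mathcal{F}\cup\{\neq_2,\bi{1}{-1}\})$ for $g$, which composes with the self-loop to yield the same reduction for $f$.

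The crux, and the step I expect to require the most care, is to handle the case where \emph{every} non-trivial self-loop descendant $g$ satisfies ARS ignoring a constant. My plan is to collapse $f$ via $d-2$ carefully chosen self-loops (using $\neq_2$ and, if needed, $\bi{1}{-1}$) to an arity-$4$ signature $f'$ that still satisfies $B(f')\subseteq\{\neq_2,\bi{1}{-1}\}$ but fails to match any exceptional arity-$4$ form; the base-case lemmas \ref{arity40110} and \ref{arity401-10} then apply via their explicit matrix identities to realize $\ouhe$, from which Lemma \ref{ouhetoeoc} gives $\pin$ and hence the reduction. The main technical work is ensuring that some choice of collapsing self-loops preserves a witness pair $\alpha,\beta\in\su(f)$ with $f(\alpha)/f(\beta)\notin\mathbb{R}$ down to arity $4$ --- which is precisely the obstruction to $f'$ landing in an exceptional form --- and in showing such a choice must exist whenever $f$ itself fails ARS ignoring a constant. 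If no such collapse succeeds directly, I will fall back to a detailed algebraic comparison of the ARS-constants of the two descendants obtained by $\neq_2$- and $\bi{1}{-1}$-self-loops on a common pair of coordinates where $\alpha$ and $\beta$ carry opposite $01/10$ patterns: the constraints on the linear combinations $f(\alpha)\pm f(\beta)$ imposed by both descendants satisfying ARS, combined with how the ARS-constants of each descendant are rigidified by other support values, force $f(\alpha)/f(\beta)\in\mathbb{R}$, contradicting the choice of witnesses and completing the induction.
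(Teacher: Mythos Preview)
Your inductive scaffold and base case match the paper's, but the ``crux'' step has a real gap. Plan (a) is self-defeating as stated: if \emph{every} non-trivial self-loop descendant of $f$ satisfies ARS ignoring a constant, then in particular every arity-$4$ descendant lands in one of the exceptional forms of Lemmas~\ref{arity40110}/\ref{arity401-10}, so there is no way to collapse to a non-exceptional arity-$4$ signature while preserving a witness pair $\alpha,\beta$ with $f(\alpha)/f(\beta)\notin\mathbb{R}$. The witness is exactly what makes the descendant non-exceptional, and you have assumed away its survival.

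Plan (b) has the right spirit but is missing two structural ingredients that the paper needs and you have not supplied. First, an \emph{inner induction on} $\dis{\alpha}{\beta}$: for distance $>2$ one writes $\alpha=01\gamma,\beta=10\delta$, checks whether the swapped strings $10\gamma,01\delta$ lie in $\su(f)$, and either invokes the smaller-distance case or passes to a single self-loop descendant where the witness survives verbatim. Second, for the distance-$2$ base of that inner induction, looking at just one coordinate pair with both $\neq_2$- and $\bi{1}{-1}$-self-loops is not enough: knowing $a+b\in L_+$ and $a-b\in L_-$ for two lines $L_\pm$ does not force $a/b\in\mathbb{R}$ when $L_+\neq L_-$. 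The paper instead places the witness inside a $2\times 3$ block of six values of $f$ (rows indexed by $x_1x_2x_3\in\{001,010\}$, columns by three choices of $x_4x_5x_6$), and applies self-loops on \emph{three} different coordinate pairs; the resulting collinearity constraints from the outer induction hypothesis then force the off-diagonal values to be $\pm$-swaps of $a,b$, and a final self-loop yields $2a$ collinear with $2b$, the desired contradiction. Your one-pair comparison does not generate enough equations to close this.
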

\begin{proof}
    For convenience, when $\alpha$ and $\beta$ satisfy that $f(\alpha)f(\beta)\neq0$ and $f(\alpha)/f(\beta)\notin\mathbb{R}$, we denote by $\alpha\sim\beta$. On the other hands we use $\alpha\nsim\beta$ to denote the case that either $f(\alpha)f(\beta)=0$ or $f(\alpha)/f(\beta)\in\mathbb{R}$. We then use induction to prove that if $B(f)\subseteq\{\neq_2,\bi{1}{-1}\}$, then for arbitrary $\alpha,\beta\in\su(f)$, $\alpha\nsim\beta$, or otherwise $\ceoc(\mathcal{F})\leq_T\#\eo(\mathcal{F}\cup\{\neq_2,\bi{1}{-1}\})$.
   % It is worth noting that $f$ is DA-sym due to Lemma \ref{-1symmetric}.

    First, we prove that if there exists $\alpha,\beta\in\su(f)$ such that $\alpha\sim\beta$, then $\ceoc(\mathcal{F})\leq_T\#\eo(\mathcal{F}\cup\{\neq_2,\bi{1}{-1}\})$.  By Lemma \ref{arity40110} and \ref{arity401-10} we know that for all signatures of arity $2$ and $4$ the statement is true. Assume the statement holds for all signatures of arity less than $2d$, where $d\ge 3$. Now we consider $f$ of arity $2d$. %we consider $f$ with larger arity by reduction to absurdity. 
    
     Suppose $\ceoc(\mathcal{F})\leq_T\#\eo(\mathcal{F}\cup\{\neq_2,\bi{1}{-1}\})$ does not holds. We prove the statement by introducing another induction on $\dis{\alpha}{\beta}$. We begin with the case that $\alpha,\beta\in\su(f)$, $\alpha\sim\beta$ and $\dis{\alpha}{\beta}=2$.
    Without loss of generality, we can assume that $\alpha=001011\gamma$, $\beta=010011\gamma$, $f(\alpha)=a$, $f(\beta)=b$, and $\alpha\sim\beta$. Also we can assume that $f(001101\gamma)=c$, $f(010101\gamma)=d$, $f(001110\gamma)=e$, $f(010110\gamma)=g$. These values actually forms a submatrix of $M(f)_{x_1x_2x_3,x_4x_5x_6\dots x_{2d}}$, presented as follows.  
        \begin{table}[h]
         \begin{center}
\begin{tabular}{l|lll}
      & $011\gamma$ & $101\gamma$ & $110\gamma$ \\ \hline
$001$ & $a$& $c$& $e$\\
$010$ & $b$& $d$& $g$\end{tabular}
\end{center}
\end{table}

     Because $a/b\notin\mathbb{R}$, we know that $a$, $b$ form a basis of the complex plane, which is a vector space defined on $\mathbb{R}$. As a result we can let $c=p_1a+p_2b$, $d=q_1a+q_2b$, where $p_1,p_2,q_1,q_2\in\mathbb{R}$.     
    By adding a self-loop between $x_2$ and $x_3$ respectively using $\neq_2$ and $\bi{1}{-1}$, according to the first induction hypothesis, we have $a+b\nsim c+d, a-b\nsim c-d$. As $a\sim b$, $(a+b)(a-b)\neq 0$, and consequently $p_1+q_1=p_2+q_2$, $p_1-q_1=q_2-p_2$. So $p_1=q_2$ and $q_1=p_2$ by solving the equations. Then we can suppose that $c=ra+tb$, $d=ta+rb$, where $r,t\in\mathbb{R}$. By adding a self-loop between $x_4$ and $x_5$ respectively using $\neq_2$ and $\bi{1}{-1}$, again with the first induction hypothesis we have that $(r+1)a+tb$ is collinear with $ta+(r+1)b$ and $(r-1)a+tb$ is collinear with $ta+(r-1)b$. Therefore $(r+1)^2=t^2=(r-1)^2$, furthermore $r=0$ and $t=\pm1$. So $c= \lambda b$, $d=\lambda  a$ for some $\lambda=\pm1$. Similarly $e=\eta b$ and $g=\eta a$ for some $\eta=\pm1$. We add a self-loop using  $\bi{1}{\lambda\eta}$ between $x_5$ and $x_6$, then we have $2a$ is collinear with $2b$, which is a contradiction. 

    Now assume that if $\ceoc(\mathcal{F})\leq_T\#\eo(\mathcal{F}\cup\{\neq_2,\bi{1}{-1}\})$  does not hold, then for arbitrary $\alpha,\beta\in\su(f)$ satisfying $\dis{\alpha}{\beta}<2t, t\ge 2$, we have $\alpha\nsim\beta$ . Now consider the case when $\alpha,\beta\in\su(f)$, $\alpha\sim\beta$ and $\dis{\alpha}{\beta}=2t$, while $\ceoc(\mathcal{F})\leq_T\#\eo(\mathcal{F}\cup\{\neq_2,\bi{1}{-1}\})$ does not hold. %By Lemma \ref{complementary equal-1} we have $\alpha$ and $\beta$ are not complementary
    Without loss of generality, we can assume that $\alpha=01\gamma$, $\beta=10\delta$, where $\gamma\neq\delta$. Let $\alpha'=10\gamma$, $\beta'=01\delta$. If $f(\alpha')\neq 0$, then either $f(\alpha')\sim f(\alpha)$ or $f(\alpha')\sim f(\beta)$ holds, a contradiction by the second induction hypothesis. The analysis for $f(\beta')$ is similar. Consequently, we may assume $f(\alpha')=f(\beta')=0$. By adding a self-loop on $x_1$ and $x_2$, we realize the signature $g$ of arity $2d-2$. Then we have $f(\alpha)=g(\gamma)$ ,$g(\delta)=f(\beta)$ and $\gamma\sim\delta$ in $g$. Then by the first induction hypothesis it is a contradiction.

    %If one of $f(\alpha)\pm f(\alpha')$, $f(\beta)\pm f(\beta')$ is $0$, then we can choose $(\alpha,\alpha')$ or $(\beta,\beta')$ to get a closer pair of strings that doesn't satisfy the $\sim$ property, which is a contradiction. Therefore we have $f(\alpha)\pm f(\alpha')$ and $f(\beta)\pm f(\beta')$ are not $0$. We can add a self-loop on the different bits of $\alpha$ and $\alpha'$. Here we get a signature that is either symmetric or anti-symmetric, since it cannot generate binary signatures other than $\neq_2$ or $\bi{1}{-1}$. By the induction hypothesis we have $f(\alpha)+ f(\alpha')=x(f(\beta)+f(\beta'))$ and $f(\alpha)- f(\alpha')=y(f(\beta)-f(\alpha'))$ where $x,y\in\mathbb{R}$ and are not $0$. Actually, $f(\alpha)-f(\alpha')$ should be corresponding to $f(\alpha')-f(\beta))$, but we can ignore the order here. Then we have $2f(\alpha)=(x+y)(f(\beta))+(x-y)(f(\beta'))$ and $x-y\neq0$ since $\alpha\sim\beta$. If $x+y=0$ then $\beta\sim\beta'$, which is a contradiction since $\dis{\alpha}{\beta}<\dis{\beta}{\beta'}$. Then we have 
    
    %$$f(\beta')=-\frac{1}{x-y}((x+y)f(\beta)-2f(\alpha)),$$ which is not $0$ and $\beta\sim\beta'$. It's also a contradiction. Therefore the claim is proved.

    Consequently, we have proved that for a signature $f\in \mathcal{F}$ satisfying $B(f)\subseteq\{\neq_2,\bi{1}{-1}\}$, either $\ceoc(\mathcal{F})\leq_T\#\eo(\mathcal{F}\cup\{\neq_2,\bi{1}{-1}\})$, or for arbitrary $\alpha,\beta\in\su(f)$, $\alpha\nsim\beta$. If the latter statement holds, after normalization by a constant, $f$ is a real signature. If $B(f)=\{\neq_2,\bi{1}{-1}\}$, $f$ is DA-sym due to Lemma \ref{-1symmetric}, then $f=\mathfrak{i}\cdot g$ and $g$ satisfies ARS. If $B(f)=\{\neq_2\}$, $f$ is D-sym due to Lemma \ref{complementary equal}, then $f$ satisfies ARS. In summary, the lemma is proved.
    %If $f$ is symmetric, then it already satisfy ARS. I
\end{proof}

%We then discuss $f$ of arity larger than $4$. Again we need to explain here that the last situation in the proof above is considered as $\{\neq_2,\bi{1}{-1}\}\subseteq B(f)$.

\begin{lemma}
    Assume $f\in\mathcal{F}$ is an $\eo$ signature of arity $2d$. If $B(f)=\{\neq_2\}$, then $f$ satisfies ARS ignoring a constant, or $\ceoc(\mathcal{F})\leq_T\#\eo(\mathcal{F})$.
    \label{lem:generate1}
\end{lemma}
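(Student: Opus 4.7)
The plan is to follow the structure of Lemma~\ref{lem:generate 2 signature classification} by strong induction on the arity $2d$, exploiting the stronger hypothesis $B(f)=\{\neq_2\}$ (which via Lemma~\ref{complementary equal} guarantees that $f$ is D-sym, i.e.\ $f(\alpha)=f(\overline{\alpha})$ for all $\alpha$) to supply the extra equations that in Lemma~\ref{lem:generate 2 signature classification} came from the $\bi{1}{-1}$ self-loop. The base case $2d=2$ is immediate: every binary $\eo$ signature is a scalar multiple of $\neq_2$ hence real, so ARS holds up to a constant. The base case $2d=4$ is delivered by Lemma~\ref{arity40110}, which either identifies $f$ with $\neq_2^{\otimes 2}$ or $(\bi{1}{-1})^{\otimes 2}$ up to a constant (both real and D-sym, hence ARS up to a constant) or gives the reduction directly.

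For the inductive step ($2d\ge 6$), I define $\alpha\sim\beta$ iff $f(\alpha),f(\beta)\ne 0$ and $f(\alpha)/f(\beta)\notin\mathbb{R}$, as in Lemma~\ref{lem:generate 2 signature classification}. Assuming $\ceoc(\mathcal{F})\not\leq_T\ceo(\mathcal{F})$, it suffices to show that no $\sim$-pair in $\su(f)$ can exist. I select any hypothetical such pair minimizing $\dis{\alpha}{\beta}$ and proceed by secondary induction on this distance. The secondary inductive step ($\dis{\alpha}{\beta}=2t\ge 4$) carries over verbatim from Lemma~\ref{lem:generate 2 signature classification}: the swap strings $\alpha'=10\gamma,\beta'=01\delta$ either lie in $\su(f)$, yielding a closer $\sim$-pair and contradicting minimality, or do not, in which case a single $\neq_2$ self-loop on $(x_1,x_2)$ produces a nontrivial arity-$(2d-2)$ signature $g$ with $B(g)\subseteq\{\neq_2\}$ carrying a $\sim$-pair $\gamma\sim\delta$, to which the primary inductive hypothesis applies.

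The heart of the argument is the distance-$2$ base of the secondary induction. Taking $\alpha=0101\delta',\beta=1001\delta'$ (valid since $d\ge 3$ permits finding a $01$ substring inside $\gamma$), I would tabulate the six values $a,b,c,d_0,p,q$ that $f$ assigns to the Hamming-weight-two prefixes of $x_1x_2x_3x_4\delta'$, and systematically apply $\neq_2$ self-loops on every pair drawn from $\{x_1,x_2,x_3,x_4\}$. Each nontrivial resulting arity-$(2d-2)$ gadget $g$ satisfies $B(g)\subseteq\{\neq_2\}$ and contains a distance-$2$ pair in its support whose values are explicit sums drawn from $\{a,b,c,d_0,p,q\}$; the primary induction then forces each of the corresponding ratios to lie in $\mathbb{R}$. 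Coupling this system of real-ratio constraints with the D-sym identities on $f$, which in Lemma~\ref{lem:generate 2 signature classification} would have been supplied by the additional $\bi{1}{-1}$ self-loops, is expected to force $a/b\in\mathbb{R}$, yielding the desired contradiction.

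The main obstacle will be showing that the combined system of six $\neq_2$-self-loop constraints plus the D-sym identities is genuinely as strong as the three $\{\neq_2,\bi{1}{-1}\}$-self-loop pairs exploited in Lemma~\ref{lem:generate 2 signature classification}. Particular care is needed for the degenerate sub-cases where some of the six sums vanish: these are handled either by Lemma~\ref{lem:self-loop to nontrivial eo}, which allows one to switch to an alternate self-loop pair still yielding a nontrivial gadget, or, when a persistent small-support intermediate gadget of arity $4$ appears, by invoking Lemma~\ref{arity40110} or Lemma~\ref{arity401-10} directly. The final step of Lemma~\ref{lem:generate 2 signature classification}, which invokes a $\bi{1}{\lambda\eta}$ self-loop to collapse the collinearity of $a$ and $b$, must here be replaced by a $\neq_2$ self-loop (available whenever $\lambda\eta=1$) combined with D-sym; the $\lambda\eta=-1$ case is expected to be absorbed by the earlier constraints, which already restrict $c,d,e,g$ to be proportional to $a,b$ under the forced real-ratio conditions.
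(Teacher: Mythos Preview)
Your plan diverges substantially from the paper's proof and leaves a genuine gap at the distance-$2$ base of the secondary induction, which you yourself flag only as ``expected'' rather than verified. With $B(f)=\{\neq_2\}$ you have lost the $\bi{1}{-1}$ self-loop, and hence at each site you extract only one collinearity equation (of the type $a+b\nsim c+d$) instead of the pair $a\pm b\nsim c\pm d$ used in Lemma~\ref{lem:generate 2 signature classification}. You propose to compensate by D-sym and by taking six loops over a four-variable prefix, but the D-sym identity $f(\sigma\delta')=f(\overline{\sigma}\,\overline{\delta'})$ relates your six prefix values to six \emph{new} values at the complementary suffix rather than constraining them among themselves, and your handling of the $\lambda\eta=-1$ endgame is pure optimism: there the $\neq_2$ loop on the final pair returns identically zero and yields nothing. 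You have not exhibited any mechanism that recovers the missing ``difference'' constraints, and I do not see one within your setup.

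The paper sidesteps this entirely with a two-pronged, non-inductive argument. It considers the set $C(f)$ of all \emph{quaternary} gadgets obtained from $f$ by $d-2$ many $\neq_2$ self-loops; by Lemma~\ref{arity40110} each such gadget is (up to a scalar) $\neq_2^{\otimes 2}$, $(\bi{1}{-1})^{\otimes 2}$, or already yields the reduction. If $(\bi{1}{-1})^{\otimes 2}\in C(f)$, then Lemma~\ref{lem:linwang} realizes $\bi{1}{-1}$ itself, so $\ceo(\mathcal{F}\cup\{\bi{1}{-1}\})\le_T\ceo(\mathcal{F})$ and Lemma~\ref{lem:generate 2 signature classification} finishes verbatim. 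If $C(f)=\{\neq_2^{\otimes 2}\}$, the paper writes $f=f_{\mathrm{Re}}+\mathfrak{i}f_{\mathrm{Im}}$ and uses a ``reorganizing'' argument: every closed $\neq_2$-contraction of $f$ differs from the corresponding contraction of $f_{\mathrm{Re}}$ by a fixed complex factor, and a rank argument on the linear system of all single-vertex contractions then forces $f=\lambda f_{\mathrm{Re}}$ for some $\lambda\in\mathbb{C}$. The key idea you are missing is precisely the extraction of $\bi{1}{-1}$ from $(\bi{1}{-1})^{\otimes 2}$ via Lemma~\ref{lem:linwang}; once you have that, there is no need to redo the distance-$2$ calculation without it.
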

\begin{proof}
Let $C(f)$ be the set of all non-trivial quaternary signatures, realized by adding $d-2$ self-loops on arbitrary $2d-4$ variables of $f$ by $\neq_2$. By Lemma \ref{arity40110}, $C(f)\subseteq\{(\neq_2)^{\otimes2},({\bi{1}{-1}})^{\otimes2}\}$, or $\ceoc(\mathcal{F})\leq_T\#\eo(\mathcal{F})$.

\textbf{Case 1:} Suppose $({\bi{1}{-1}})^{\otimes2}\in C(f)$. By Lemma \ref{lem:linwang}, we have $\ceo(\mathcal{F}\cup\{\bi{1}{-1}\})\leq_T\#\eo(\mathcal{F})$. Then we are done by Lemma  \ref{lem:generate 2 signature classification}. %Besides, for $\ceo(\mathcal{F}\cup\{\bi{1}{-1}\})$, the statement "all $\alpha,\beta\in\su(f)$, $\alpha\nsim\beta$, otherwise $\ceoc(\mathcal{F})\leq_T\#\eo(\mathcal{F})$" still holds for $f$. This is because the proof is exactly the same as that in Theorem \ref{lem:generate 2 signature classification}, except that the base case is proved by the analysis in Lemma \ref{arity40110} instead of Lemma \ref{arity401-10}. If the latter statement holds, after normalization by a constant, $f$ is a real signature. Since $f$ is D-sym due to Lemma \ref{complementary equal}, $f$ satisfies ARS ignoring a constant, or $\ceoc(\mathcal{F})\leq_T\#\eo(\mathcal{F})$.
    %If $f$ is symmetric, then it already satisfy ARS. I
    
\textbf{Case 2:} $C(f)=\emptyset$. If $f\equiv 0$, it satisfies ARS. Otherwise by Lemma \ref{lem:self-loop to nontrivial eo} we may realize a non-trivial quaternary signature by adding $d-2$ self-loops, a contradiction.%We begin with introducing the reorganizing method. First we can notice that, if we add  two self-loops on the signature $\neq_2^{\otimes2}$, the value of this tensor network is either $2$ or $4$. In other words, changing ways of adding self-loops on $\lambda\neq_2^{\otimes2},\lambda\in\mathbb{C}$ will cause the partition function to double, to be divided into half or to remain unchanged. Suppose $C(f)\subseteq\{(\neq_2)^{\otimes2}\}$ and $I_f$ is an instance obtained by adding specific $d$ self-loops on $f$. By cutting two connected edges in $I_f$, we obtain a gadget that realize$\lambda\neq_2^{\otimes2}$ for some $\lambda\in\mathbb{C}$. By reorganizing the connection of the four dangling edges, the obtained instance $I_f'$ satisfies $\text{Z}(I_f)=2^t\text{Z}(I_f')$ for some $t\in\{\frac{1}{2},1,2\}$. By doing this operation repeatedly, we have that there exists a $t\in\mathbb{Z}$ such that $\text{Z}(I_f')=2^t\text{Z}(I_f)$ for each instance $I_f'$ obtained by adding $d$ self-loops on $f$.
    
% Now consider the case when $C(f)=\emptyset$. % Suppose we have already added $d-2$ self-loops on it, and construct a signature remains constant at 0. By the reorganizing method used above we can know that no matter how to choose the ways of adding self-loops, the partition function of this single-vertex tensor network will always be $0$. 

\textbf{Case 3:} $C(f)=\{(\neq_2)^{\otimes2}\}$.
    Let $f_{\text{Re}}$ and $f_{\text{Im}}$ be the signatures satisfying that, for each $\alpha$ of length $2d$, $f_{\text{Re}}(\alpha)=\text{Re}(f(\alpha))$, $f_{\text{Im}}(\alpha)=\text{Im}(f(\alpha))$. Let $I$ be a specific instance formed by adding $d$ self-loops on $f$, and $g$ be a quaternary signature formed by adding $d-2$ self-loops on $f$. 
   Let $I_{\text{Re}}$ be the instance obtained by substituting $f$ with $f_{\text{Re}}$ in $I$, $I_{\text{Im}}$ be the instance obtained by substituting $f$ with $f_{\text{Im}}$ in $I$. We define $g_{\text{Re}},g_{\text{Im}}$ similarly.
   % We first introduce $f_{\text{Re}}$ and $f_{\text{Im}}$, whose supports are both included in $\su(f)$. On each $\alpha\in\su(f)$, $f_{\text{Re}}(\alpha)=\text{Re}(f(\alpha))$, $f_{\text{Im}}(\alpha)=\text{Im}(f(\alpha))$. Let $I$ be an instance formed by adding $d$ self-loops. 
    %In the following analysis, we suppose that the original instance is $I$, substituting $f$ with $f_{\text{Re}}$ we will get $I_{\text{Re}}$, and substituting $f$ with $f_{\text{Im}}$ we will get $I_{\text{Im}}$.   
    %If $\lambda\in\mathbb{R}$, we claim that if we change $f$ into $f_{\text{Re}}$, the value won't change. 

    It is noticeable that given a way of adding $d$ self-loops, the value of this single-vertex tensor network is a linear combination of all possible values of $f$, where the coefficients are in $\{0,1\}$.  Therefore $\text{Z}(I)=\text{Z}(I_{\text{\text{Re}}})+\mathfrak{i}\text{Z}(I_{\text{Im}})$. Similarly we have $g=g_{\text{\text{Re}}}+\mathfrak{i}g_{\text{Im}}$. %Since $\lambda\in\mathbb{R}$, we have $\text{Z}(I)\in\mathbb{R}$, thus $\text{Z}(I)=\text{Z}(I_{\text{\text{Re}}})$, which proved our claim.
    %If $\lambda=\lambda'\mathfrak{i}$, where $\lambda'\in\mathbb{R}$, then similarly we have $\text{Z}(I)=\mathfrak{i}\text{Z}(I_{\text{Im}})$. By this claim we can substitute $f$ with $f_{\text{Im}}$ without changing the value of the original instance, ignoring a constant $\mathfrak{i}$. The proof is similar to the first case.
%We begin with introducing the reorganizing method. First we can notice that, if we add  two self-loops on the signature $\neq_2^{\otimes2}$, the value of this tensor network is either $2$ or $4$. In other words, changing ways of adding self-loops on $\lambda\neq_2^{\otimes2},\lambda\in\mathbb{C}$ will cause the partition function to double, to be divided into half or to remain unchanged. Suppose $C(f)\subseteq\{(\neq_2)^{\otimes2}\}$ and $I_f$ is an instance obtained by adding specific $d$ self-loops on $f$. By cutting two connected edges in $I_f$, we obtain a gadget that realize$\lambda\neq_2^{\otimes2}$ for some $\lambda\in\mathbb{C}$. By reorganizing the connection of the four dangling edges, the obtained instance $I_f'$ satisfies $\text{Z}(I_f)=2^t\text{Z}(I_f')$ for some $t\in\{\frac{1}{2},1,2\}$. By doing this operation repeatedly, we have that there exists a $t\in\mathbb{Z}$ such that $\text{Z}(I_f')=2^t\text{Z}(I_f)$ for each instance $I_f'$ obtaine

We now introduce the reorganizing method. First we can notice that, if we add  two self-loops on the signature $\neq_2^{\otimes2}$, the value of this tensor network is either $2$ or $4$. In other words, changing ways of adding self-loops on $\lambda\neq_2^{\otimes2},\lambda\in\mathbb{C}$ will cause the partition function to double, to be divided into half or to remain unchanged. 

 Let $\lambda=\text{Z}(I)$, and suppose $\lambda=a+b\mathfrak{i}$ where $a,b$ are real numbers. By cutting two connected edges in $I$, we obtain a gadget that realize $g=\eta\neq_2^{\otimes2}$ for some $\eta=c+d\mathfrak{i}\in\mathbb{C}$. By reorganizing the connection of the four dangling edges, the obtained instance $I'$ satisfies $\text{Z}(I')=2^t\text{Z}(I)$ for some $t\in\{0,\pm1\}$. The similar analysis also holds for $I'_{\text{Re}}$ and  $I'_{\text{Im}}$ defined similarly, and in particular, $\text{Z}(I'_{\text{Re}})=2^t\text{Z}(I_{\text{Re}}),\text{Z}(I'_{\text{Im}})=2^t\text{Z}(I_{\text{Im}})$ holds for the same $t$ as that in $\text{Z}(I')=2^t\text{Z}(I)$. This is because  $g=g_{\text{\text{Re}}}+\mathfrak{i}g_{\text{Im}}$ and consequently adding two self-loops on $g,g_{\text{\text{Re}}}$ and $g_{\text{Im}}$ in the same way would introduce the same constant factor.
 %By doing this operation repeatedly, we have that there exists a $t\in\mathbb{Z}$ such that $\text{Z}(I_f')=2^t\text{Z}(I_f)$ for each instance $I_f'$ obtained by adding $d$ self-loops on $f$.
    
     If $a\neq0$, by applying the reorganizing method above successively, we know that for all $I'$ obtained by adding $d$ self-loops on $f$, by substituting $f$ with $f_{\text{Re}}$ the obtained instance $I'_{\text{Re}}$ always satisfies $\text{Z}(I')=\frac{\lambda}{a}\text{Z}(I'_{\text{Re}})$. Noticing that given a way of adding $d$ self-loops, when we write all values of $f$ as a column vector $\mathbf{f}$, the way of connecting edge is equivalent to a row vector $\mathbf{v}$ such that $\mathbf{vf}$ is the partition function. Therefore we have a system of linear equations, whose coefficient matrix is an arrangement of row vectors corresponding to all ways of adding $d$ self-loops. Suppose the coefficient matrix is $M$, then we have $M\mathbf{f}=\frac{\lambda}{a}M\mathbf{f_{\text{Re}}}$. 

    Here we prove an interesting fact that given any D-sym $f$ of arity $2d$, if $M\mathbf{h}=\mathbf{0}$, then $\mathbf{h}=\mathbf0$. Suppose $\mathbf{h}\neq\mathbf0$, which means $f$ is non-trivial. By Lemma \ref{lem:self-loop to nontrivial eo}, by adding $d-1$ self-loops properly we may realize a non-trivial binary signature $g$. Since $f$ is D-sym, $g$ is a multiple of $\neq_2$ and by adding one more self-loop on $g$ the resulting partition function is not equal to $0$. Consequently $M\mathbf{h}=\mathbf{0}$, and we are done.

    By the analysis above, we have that $\mathbf{f}=\frac{\lambda}{a}\mathbf{f_{\text{Re}}}$ and $f=\frac{\lambda}{a}f_{\text{Re}}$. This means that $f$ is a multiple of a real signature $f_{\text{Re}}$. If $b\neq0$, the analysis is similar and we have $f$ is a multiple of a real signature $f_{\text{Im}}$. As $f$ is D-sym, both $f_{\text{Re}}$ and $f_{\text{Im}}$ are D-sym as well, and consequently they satisfy ARS. Hence $f$ also satisfies ARS ignoring a constant.

    The only case left is when $a=b=0$, indicates that $\lambda=0$. By the reorganizing method we know that for all $I'$ obtained by adding $d$ self-loops on $f$, there exists $t'\in \mathbb{Z}$ such that  $\text{Z}(I')=2^{t'}\lambda=0$. This contradicts to  $C(f)=\{(\neq_2)^{\otimes2}\}$.
\end{proof}

\subsection{Put things together}
In this section, we prove Theorem \ref{thm:eoc to eo}.
\begin{proof}[Proof of Theorem \ref{thm:eoc to eo}]
    If there exists $f\in \mathcal{F}$, such that $\neq_2^{1,x}\in B(f)$ where $x$ is not a root, we are done by analysis in Section \ref{subsec:EOC part: nonroot-interpolation}.  Otherwise if there exists $f\in \mathcal{F}$, such that $B(f)$ is infinite, we are done by the analysis in Section \ref{subsection: infinitely many roots}.

    Consequently we may assume for each $f\in \mathcal{F}$, there exists an integer $k\geq1$ such that $B(f)=\{\bi{1}{\omega}\mid\omega^k=1\}$. By Lemma \ref{generating k}, \ref{lem:generate 2 signature classification} and \ref{lem:generate1}, either $f$ satisfies ARS  ignoring a constant or  $\ceoc(\mathcal{F})\leq_T\#\eo(\mathcal{F})$. If $\ceoc(\mathcal{F})\leq_T\#\eo(\mathcal{F})$ we are done by the second statement. Otherwise all signatures in $\mathcal{F}$ satisfy ARS ignoring a constant and we are done by the first statement.
\end{proof}

\section{Dichotomy for $\#\eo^c$} \label{section: dichotomy for eoc}

In this section, we present the $\pnp$ versus \#P dichotomy for $\#\eo^c$ problem and prove Theorem \ref{thm:dicoceoc}.
We begin with the analysis of the support of each $f\in \mathcal{F}$.
By definition, $\mathcal{F}$ always falls into one of the following cases:
\begin{enumerate}
    \item There exists a $\exists3\nrightarrow$ signature $f\in \mathcal{F}$;
    \item There exist a $\exists3\uparrow$ signature $f\in \mathcal{F}$ and a $\exists3\downarrow$ signature $g\in \mathcal{F}$;
    \item All signatures in $\mathcal{F}$ are $\forall 3 \thup$ signatures:
    \begin{enumerate}
    \item For each $f\in \mathcal{F}$, $f$ is a $\allup$ signature;
    \item For each $f\in \mathcal{F}$, $f\in \eom[\mathscr{A}]$;
    \item For each $f\in \mathcal{F}$, $f\in \eom[\mathscr{P}]$;
    \item Otherwise;
\end{enumerate}
    \item All signatures in $\mathcal{F}$ are $\forall 3 \thdown$ signatures.
\end{enumerate}
 
We deal with these cases in the following sections. We omit the analysis of Case 4 as it is similar to that of Case 3. To summarize, $\ceoc(\mathcal{F})$ is \#P-hard if $\mathcal{F}$ falls into Case 1, 2, 3d, 4d, while it has a $\pnp$ algorithm if $\mathcal{F}$ falls into Case 3b, 3c, 4b, 4c. If $\mathcal{F}$ falls into Case 3a or 4a, $\ceoc(\mathcal{F})$ is \#P-hard unless $\mathcal{F}\subseteq \eom[\mathscr{A}]$ or $\mathcal{F}\subseteq \eom[\mathscr{P}]$, in which cases it is P-time computable. It is noteworthy that with the appearance of $\Delta$, for each \eo\ signature $f$, now we are able to pin a variable of $f$ to 0 and another variable of $f$ to 1 by gadget construction.

\subsection{Case 1}
In this section, we prove the following lemma:
\begin{lemma}
    If $f$ is a $\exists3\nrightarrow$ signature, then $\#\eo^c(f)$ is \#P-hard.
    \label{lem:3midhard}
\end{lemma}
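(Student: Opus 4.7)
The plan is to show that, with the $\Delta$ gadget available in $\#\eo^c$, any $\exists3\nrightarrow$ signature $f$ can be reduced to a quaternary \eo\ signature that is still $\exists3\nrightarrow$, and then to invoke the quaternary dichotomy of Theorem~\ref{thm:sixvertexdichotomyMform} to obtain \#P-hardness.

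First I would fix a certificate $\alpha,\beta,\gamma\in\su(f)$ with $\delta=\alpha\oplus\beta\oplus\gamma\in\eoe\setminus\su(f)$, and partition the $2d$ coordinates into the agreement set $S=\{i:\alpha_i=\beta_i=\gamma_i\}$ (on which $\delta$ automatically agrees, since $\delta_i=\alpha_i\oplus\alpha_i\oplus\alpha_i=\alpha_i$) and its complement $T$, on which exactly one of $\alpha,\beta,\gamma$ dissents from the other two. Using $\Delta$ I then pin matched $(0,1)$-pairs of coordinates from within $S$. Because all four certificate strings coincide on $S$, such pinning is simultaneously compatible with $\alpha,\beta,\gamma,\delta$, so the restricted signature $f'$ on the remaining variables satisfies $\alpha|_{\text{rest}},\beta|_{\text{rest}},\gamma|_{\text{rest}}\in\su(f')$ while $\delta|_{\text{rest}}\notin\su(f')$. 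The XOR of the three restrictions equals $\delta|_{\text{rest}}$, which is still a balanced string over the remaining variables; so $f'$ inherits the $\exists3\nrightarrow$ property.

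Iterating this reduction---possibly reselecting the certificate triple and interleaving $\Delta$-pinning with self-loops on $f$ realized through $\neq_2$ to contract coordinates in $T$ in pairs---I would produce a quaternary \eo\ signature $f^{\star}$ that is still $\exists3\nrightarrow$. A direct check then verifies that such an $f^{\star}$ lies in none of the four tractable classes of Theorem~\ref{thm:sixvertexdichotomyMform}: classes $\mathscr{A}$ and $\mathscr{P}$ have affine supports, so they would force $\delta'\in\su(f^{\star})$; and the classes $\mathscr{M}\otimes\Delta_1$ and $\widetilde{\mathscr{M}}\otimes\Delta_0$ have three-element supports whose XOR equals $1111$ or $0000$ respectively, neither of which is balanced and hence neither of which belongs to $\eoe$. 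Consequently $f^{\star}$ is \#P-hard by Theorem~\ref{thm:sixvertexdichotomyMform}, which gives $\#\eo(f^{\star})\le_T\#\eo^c(f)$ and completes the proof.

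The main obstacle is the arity-reduction step when $S$ is imbalanced (more $0$-coordinates than $1$-coordinates, or vice versa): surplus $S$-coordinates cannot be paired by $\Delta$ alone, and pairing them with a $T$-coordinate typically forces the pinning to eliminate at least one of $\alpha|_{\text{rest}},\beta|_{\text{rest}},\gamma|_{\text{rest}}$ from $\su(f')$, destroying the certificate. Resolving this requires either reselecting the certificate triple $(\alpha,\beta,\gamma)$ to equalize the agreement-set $0/1$ counts, or combining $\Delta$-pinning with $\mathcal{F}$-gate constructions---for instance, self-loops on $f$ by $\neq_2$, or gadgets like those realized via Lemma~\ref{lem:neq4ab obtain ouhe} when applicable---that absorb the imbalance while keeping all three certificate strings alive in the restricted support.
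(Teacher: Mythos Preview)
Your high-level plan---reduce arity while preserving the $\exists3\nrightarrow$ certificate, then invoke the quaternary dichotomy---is exactly the paper's strategy. But your identification of the main obstacle is off, and the part you gloss over is where the real work lies.

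First, the ``imbalance of $S$'' is a non-issue. Partitioning the $2d$ coordinates into the eight types $S_{abc}=\{i:\alpha_i=a,\beta_i=b,\gamma_i=c\}$, the fact that $\alpha,\beta,\gamma,\delta\in\eoe$ gives four linear relations among the $s_{abc}=|S_{abc}|$ which force $s_{abc}=s_{1-a,1-b,1-c}$ for every $a,b,c$ (this is the paper's Lemma~\ref{lem:4 string all eo column property}). In particular $s_{000}=s_{111}$, so your agreement set $S=S_{000}\cup S_{111}$ is always balanced and can be pinned away in matched $(0,1)$-pairs with no obstruction.

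The genuine difficulty is the one you wave away: reducing inside the disagreement set $T$. A plain $\neq_2$ self-loop on two $T$-coordinates will in general kill one of $\alpha',\beta',\gamma'$ in the restricted support (their values on the two loop endpoints need not be opposite), or, worse, accidentally reinstate $\delta'$. The paper handles this with a dedicated reduction lemma (Lemma~\ref{lem:3keep}): a five-case analysis that either pins, uses a direct self-loop, or first pins down to manufacture a $\neq_2^{1,-1}$ and then adds a \emph{weighted} self-loop, in each case arranged so that all three certificate strings survive and $\delta'$ does not. This is then packaged into Operations~2--4 (one for each of $s_{001},s_{010},s_{100}\ge 2$), each of which peels two coordinates off $T$ while keeping the $\exists3\nrightarrow$ property.

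Finally, you cannot always reach arity~4. After exhausting the operations you may be stuck at $s_{000}=0$ and $s_{001}=s_{010}=s_{100}=1$, i.e.\ an arity-6 signature with a very specific $3\times 2$ pattern. The paper treats this as a separate base case (Lemma~\ref{lem:6nomid}), again using the Lemma~\ref{lem:3keep} machinery to force \#P-hardness. Your proposal needs both this base case and the weighted-self-loop reduction to close.
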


The proof of Lemma \ref{lem:3midhard} consists of 2 parts. The first part uses an inductive method to reduce the arity of $f$. The second part proves the \#P-hardness when $f$ is of small arity. We first presents an important method for reducing the arity in the following lemma.

\begin{lemma}
    Suppose $f$ is an $\eo$ signature of arity $2k$ and $\alpha=10\alpha',\beta=10\beta',\gamma=01\gamma'\in \su(f)$, where $k\geq3$. Then there exists an $\eo$ signature $g$ of arity $2k-2$ satisfying $\alpha',\beta',\gamma'\in \su(g)$, such that $\#\eo^c(g)\leq_T\#\eo^c(f)$. Furthermore, if $\delta'=\alpha'\oplus\beta'\oplus\gamma'\in \eoe$ and $10\delta',01\delta'\notin\su(f)$, then $\delta'\notin \su(g)$.
    \label{lem:3keep}
\end{lemma}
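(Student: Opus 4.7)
The plan is to construct $g$ directly as an $\{\mathcal{F}\cup\{\Delta\}\}$-gate, which immediately yields $\#\eo^c(g)\le_T \#\eo^c(f)$. I consider three candidate constructions, all of which reduce the arity by two and automatically enforce the $\delta'$ condition: $g_1(y)=f^{x_1x_2=01}(y)=f(01y)$ realized by a single $\Delta$-pinning, $g_2(y)=f^{x_1x_2=10}(y)=f(10y)$ realized by the opposite pinning, and $g_0(y)=f^{x_1\neq x_2}(y)=f(01y)+f(10y)$ realized by a $\neq_2$ self-loop. Each candidate is a linear combination of $f(01\,\cdot\,)$ and $f(10\,\cdot\,)$, so whenever $01\delta',10\delta'\notin\su(f)$ the value at $\delta'$ vanishes for free.

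For the required supports, I branch on cases. If $10\gamma'\in\su(f)$, then $g_2$ contains $\alpha',\beta',\gamma'$ in its support (using $\alpha,\beta\in\su(f)$), so take $g=g_2$. If instead $01\alpha',01\beta'\in\su(f)$, take $g=g_1$. Otherwise, exploiting the symmetric roles of $\alpha,\beta$, I may assume $10\gamma'\notin\su(f)$ and $01\alpha'\notin\su(f)$. Then $g_0(\alpha')=f(\alpha)$ and $g_0(\gamma')=f(\gamma)$ are nonzero, so if $g_0(\beta')=f(01\beta')+f(\beta)\neq 0$ as well, take $g=g_0$.

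The main obstacle is the remaining subcase in which the $\neq_2$ self-loop cancels exactly at $\beta'$, i.e.\ $f(01\beta')=-f(\beta)$ (which in particular forces $01\beta'\in\su(f)$). To overcome this I use $f$ itself to build a nontrivial weighted binary signature. Since $\beta=10\beta'\in\eoe$, the suffix $\beta'$ has exactly $k-1$ ones and $k-1$ zeros, so one can pin $x_3,\ldots,x_{2k}$ of a fresh copy of $f$ to the bits of $\beta'$ using $k-1$ copies of $\Delta$ (each $\Delta$ pairing a 1-position with a 0-position). The result is the binary signature $B=\bi{-f(\beta)}{f(\beta)}$ on $(x_1,x_2)$. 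Attaching $B$ as a self-loop on $(x_1,x_2)$ of another copy of $f$ yields
\[
g(y)=f(\beta)\bigl[f(01y)-f(10y)\bigr].
\]
A direct calculation gives $g(\alpha')=-f(\alpha)f(\beta)$, $g(\beta')=-2f(\beta)^2$, $g(\gamma')=f(\beta)f(\gamma)$, all nonzero, while $g(\delta')=f(\beta)\bigl[f(01\delta')-f(10\delta')\bigr]=0$ under the $\delta'$ hypothesis. In every case $g$ is realized as an $\{\mathcal{F}\cup\{\Delta\}\}$-gate, so the reduction $\#\eo^c(g)\le_T\#\eo^c(f)$ follows from substituting the gadget for each occurrence of $g$ in any instance.
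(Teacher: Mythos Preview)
Your proof is correct and follows essentially the same approach as the paper's own proof: the same case split (pin to $10$, pin to $01$, plain $\neq_2$ self-loop, and in the cancellation subcase realize a weighted binary by pinning $x_3,\ldots,x_{2k}$ to $\beta'$ and use it as a self-loop), with the same observation that every candidate $g$ is a linear combination of $f(01\,\cdot\,)$ and $f(10\,\cdot\,)$, which handles the $\delta'$ claim uniformly. The only cosmetic difference is that you keep the scalar $f(\beta)$ explicit, whereas the paper normalizes the auxiliary binary to $\bi{1}{-1}$.
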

\begin{proof}
    If $10\gamma'\in \su(f)$, by pinning $x_1$ to 1 and $x_2$ to 0, the obtained signature $g$ satisfies that $\alpha',\beta',\gamma'\in \su(g)$. Else if both $01\alpha',01\beta'\in \su(f)$, by pinning $x_1$ to 0 and $x_2$ to 1, the obtained signature $g$ satisfies that $\alpha',\beta',\gamma'\in \su(g)$. In addition, if $10\gamma',01\alpha',01\beta'\notin \su(f)$, by adding a self-loop on $x_1$ and $x_2$, the obtained signature $g$ satisfies that $\alpha',\beta',\gamma'\in \su(g)$. In the cases above, we have $\#\eo^c(g)\leq_T\#\eo^c(f)$.

    Now we may assume $01\alpha',10\gamma'\notin \su(f)$, while $01\beta'\in\su(f)$. If $f(10\beta')+ f(01\beta')\neq 0$, by adding a self-loop to $x_1$ and $x_2$, the obtained signature $g$ satisfies $\alpha',\beta',\gamma'\in \su(g)$. Otherwise, $f(10\beta')+ f(01\beta')= 0$, and by pinning $x_3,...,x_{2k}$ to $\beta'$ we obtain the signature $\bi{1}{-1}$. By adding a self-loop by $\bi{1}{-1}$ on $x_1$ and $x_2$, the obtained signature $g$ again satisfies that $\alpha',\beta',\gamma'\in \su(g)$. The case that $01\beta',10\gamma'\notin\su(f)$ and $01\alpha'\in\su(f)$ is similar.

    In each case, since $g$ is obtained from $f$ by adding a self-loop on $x_1$ and $x_2$ by some generalized binary disequality, $\delta'\notin \su(g)$ if $10\delta',01\delta'\in \eoe-\su(f)$.
\end{proof}
We then summarize the second part as the following lemmas.
\begin{lemma}
    Suppose $f$ is an $\eo$ signature of arity 4. If $\alpha,\beta,\gamma\in \su(f)$ and $\delta=\alpha\oplus\beta\oplus\gamma\in \eoe-\su(f)$, then $\#\eo^c(f)$ is \#P-hard. 
    \label{lem:4nomid}
\end{lemma}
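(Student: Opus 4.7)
The plan is to apply the quaternary dichotomy Theorem \ref{thm:sixvertexdichotomyMform} to $f$ and verify that the hypothesis of the lemma rules out every one of its four tractable classes. Since the trivial reduction gives $\ceo(f)\leq_T \ceoc(f)$, it would then suffice to conclude that $\ceo(f)$ is \#P-hard. Before entering the case analysis, I would first observe that $\alpha,\beta,\gamma$ must be pairwise distinct: if any two coincide, then $\delta$ collapses to the third string, which lies in $\su(f)$, contradicting $\delta\notin\su(f)$.

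I would then dispose of the affine classes $\mathscr{P}$ and $\mathscr{A}$ by invoking the standard fact that every signature in either class has an affine support. Since $\alpha\oplus\beta\oplus\gamma\notin\su(f)$, the support of $f$ is not affine, hence $f\notin\mathscr{A}\cup\mathscr{P}$.

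The remaining two classes, $\mathscr{M}\otimes\Delta_1$ and $\widetilde{\mathscr{M}}\otimes\Delta_0$, I would rule out by a direct XOR computation. For $f\in\mathscr{M}\otimes\Delta_1$, some coordinate is $1$ on every support string while exactly one of the other three coordinates is $1$; consequently $\su(f)$ contains at most three strings, and the XOR of any three distinct such strings assigns $1\oplus 1\oplus 1=1$ to the pinned bit and $1\oplus 0\oplus 0=1$ to each of the other three bits, producing the all-ones string of Hamming weight $4$, which lies outside $\eoe$. This contradicts $\delta\in\eoe$. The case $f\in\widetilde{\mathscr{M}}\otimes\Delta_0$ is entirely symmetric and yields the all-zeros string, again outside $\eoe$.

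With all four tractable classes excluded, Theorem \ref{thm:sixvertexdichotomyMform} forces $\ceo(f)$ to be \#P-hard, and hence so is $\ceoc(f)$. I expect the main subtlety, and what could be the most error-prone step, to be bookkeeping: confirming the exact shape of $\su(f)$ for $f\in\mathscr{M}\otimes\Delta_1$ regardless of which of the four coordinates is the pinned one, though the XOR computation is invariant under coordinate permutation and the all-ones or all-zeros conclusion holds uniformly.
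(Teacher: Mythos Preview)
Your proposal is correct and follows essentially the same approach as the paper. The paper's proof is a one-liner invoking Theorem~\ref{thm:arity4setdichotomy} (the set version of the quaternary dichotomy) rather than Theorem~\ref{thm:sixvertexdichotomyMform}, but for a single arity-$4$ signature the two theorems yield the same tractable classes, and your write-up simply spells out the verification that each class is excluded.
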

\begin{proof}
    The proof follows from Theorem \ref{thm:arity4setdichotomy}.
\end{proof}
\begin{lemma}
    Suppose $f$ is an $\eo$ signature of arity 6. If $\alpha=101001,\beta=100110,\gamma=011010\in \su(f)$ and $\delta=010101\notin \su(f)$, then $\#\eo^c(f)$ is \#P-hard. 
    \label{lem:6nomid}
\end{lemma}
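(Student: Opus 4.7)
The plan is to reduce $f$ to an arity-$4$ signature satisfying the hypothesis of Lemma~\ref{lem:4nomid}, so that its \#P-hardness transfers back to $f$. Each of the three variable pairs $(x_1,x_2)$, $(x_3,x_4)$, $(x_5,x_6)$ is a candidate pair to eliminate via Lemma~\ref{lem:3keep}. The three weight-$3$ strings $\epsilon_1 = 100101$, $\epsilon_2 = 011001$, $\epsilon_3 = 010110$, obtained from $\delta = 010101$ by replacing the $01$ on the corresponding pair by $10$, are exactly the strings in $\eoe$ whose presence in $\su(f)$ would obstruct such a reduction, since they are the preimages of $\delta' = 0101$ on the $10$-side of the chosen pair.

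First I would split on whether some $\epsilon_i$ lies outside $\su(f)$. If, say, $\epsilon_1 \notin \su(f)$, write $\alpha = 10 \cdot 1001$, $\beta = 10 \cdot 0110$, $\gamma = 01 \cdot 1010$; the tail-XOR is $\delta' = 0101$, and both $10\delta' = \epsilon_1$ and $01\delta' = \delta$ lie outside $\su(f)$. Lemma~\ref{lem:3keep} then yields an arity-$4$ signature $g$ with $1001, 0110, 1010 \in \su(g)$ and $0101 \notin \su(g)$, so Lemma~\ref{lem:4nomid} applied to $g$ closes this branch. The cases $\epsilon_2 \notin \su(f)$ and $\epsilon_3 \notin \su(f)$ are symmetric after permuting variables so that the chosen pair sits in the first two positions.

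The remaining case, which I view as the genuinely new step, is when all three $\epsilon_i$ lie in $\su(f)$. Here I would exploit the pinning signature $\Delta$ available in $\#\eo^c$ to fix $(x_1,x_2) = (0,1)$ directly, obtaining an arity-$4$ signature $g$ with $g(\sigma) = f(01\sigma)$. The strings of $\su(f)$ that begin with $01$ are exactly $\gamma = 01 \cdot 1010$, $\epsilon_2 = 01 \cdot 1001$, $\epsilon_3 = 01 \cdot 0110$, whereas $\delta = 01 \cdot 0101$ is absent. Hence $1010, 1001, 0110 \in \su(g)$ but $0101 \notin \su(g)$, and since the XOR of these three tails is $0101$, Lemma~\ref{lem:4nomid} again applies.

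The main subtlety I expect is the duality between the two pinning directions on a single pair: the Lemma~\ref{lem:3keep} route demands that some $\epsilon_i$ be \emph{absent}, while the direct-pinning route instead requires all three $\epsilon_i$ to be \emph{present}, and crucially that $\gamma,\epsilon_2,\epsilon_3$ share the $01$-prefix on $(x_1,x_2)$ while $\delta$ does not. Checking that these two complementary scenarios together exhaust all possibilities is the combinatorial heart of the argument; once this is in place, every branch closes with a single invocation of Lemma~\ref{lem:4nomid}.
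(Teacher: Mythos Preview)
Your proposal is correct and follows essentially the same strategy as the paper: first argue that the ``obstructing'' strings $\epsilon_i$ must lie in $\su(f)$ (else Lemma~\ref{lem:3keep} already reduces to arity $4$), then use a direct pin to produce an arity-$4$ signature to which Lemma~\ref{lem:4nomid} applies. The only cosmetic difference is that the paper checks just $\epsilon_1=100101$ and $\epsilon_2=011001$ and then pins $(x_5,x_6)=(0,1)$, using $\alpha,\epsilon_1,\epsilon_2$ as the three support strings with $01$ in those positions; you instead treat all three pairs symmetrically, assume every $\epsilon_i$ is present, and pin $(x_1,x_2)=(0,1)$, using $\gamma,\epsilon_2,\epsilon_3$.
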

\begin{proof}
    Suppose otherwise. If $\epsilon=100101\notin \su(f)$, by adding a self-loop on $x_1$ and $x_2$ by some generalized binary disequality as in Lemma \ref{lem:3keep}, we can obtain a signature $g$ satisfying $1001,0110,1010\in \su(g)$ while $0101\notin \su(g)$. By Lemma \ref{lem:4nomid}, $g$ is \#P-hard, which is a contradiction. Thus $\epsilon=100101\in \su(f)$. Similarly, by adding a self-loop to $x_3$ and $x_4$, $\zeta=011001\in \su(f)$. By pinning $x_5$ to 0 and $x_6$ to 1, we obtain a signature $h$ satisfying $1010,1001,0110\in \su(h)$ and $0101\notin \su(h)$. Again by Lemma \ref{lem:4nomid}, $h$ is \#P-hard, which is a contradiction.
\end{proof}

Before the formal proof of Lemma \ref{lem:3midhard}, we introduce another lemma for preparation. 

\begin{lemma} \label{lem:4 string all eo column property}
    Suppose $\alpha$, $\beta$, $\gamma$, $\delta$ are 01-strings with the same even length. $\alpha,\beta,\gamma,\delta\in\eoe$ and $\delta=\alpha\oplus\beta\oplus\gamma$. For $a,b,c\in\{0,1\}$, let $S_{abc}=\{i|\alpha_i=a,\beta_i=b,\gamma_i=c\}$ and $s_{abc}=|S_{abc}|$. Then $s_{abc}=s_{1-a,1-b,1-c}$ for arbitrary $a,b,c\in\{0,1\}$.
\end{lemma}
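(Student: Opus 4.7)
Proof proposal: The plan is to derive the four desired equalities $s_{abc}=s_{1-a,1-b,1-c}$ as explicit linear combinations of the four \eoe\ conditions on $\alpha,\beta,\gamma,\delta$.

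First, I would translate each assumption into a linear equation in the eight unknowns $\{s_{abc}\}_{a,b,c\in\{0,1\}}$. The condition $\alpha\in\eoe$ says
\[
s_{100}+s_{101}+s_{110}+s_{111}=s_{000}+s_{001}+s_{010}+s_{011}, \tag{I}
\]
and similarly $\beta\in\eoe$ and $\gamma\in\eoe$ give equations (II) and (III) obtained by grouping the $s_{abc}$ according to the value of the second and third coordinate respectively. For $\delta$, since $\delta_i=\alpha_i\oplus\beta_i\oplus\gamma_i$, the positions with $\delta_i=1$ are precisely those in $S_{100}\cup S_{010}\cup S_{001}\cup S_{111}$, so $\delta\in\eoe$ becomes
\[
s_{100}+s_{010}+s_{001}+s_{111}=s_{000}+s_{110}+s_{101}+s_{011}. \tag{IV}
\]

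Next I would simply add and subtract these four equations with appropriately chosen $\pm1$ coefficients. A direct computation shows that (I)$+$(II)$+$(III)$+$(IV) reduces, after cancellation, to $4(s_{111}-s_{000})=0$; (I)$-$(II)$-$(III)$+$(IV) reduces to $4(s_{100}-s_{011})=0$; $-$(I)$+$(II)$-$(III)$+$(IV) reduces to $4(s_{010}-s_{101})=0$; and $-$(I)$-$(II)$+$(III)$+$(IV) reduces to $4(s_{001}-s_{110})=0$. These four identities are exactly the desired equalities $s_{abc}=s_{1-a,1-b,1-c}$ for the four complementary pairs, completing the proof.

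The argument is essentially bookkeeping in $\mathbb{Z}^8$, so no step should pose a real obstacle; the only point that requires care is verifying the sign patterns in the four chosen combinations, which can be presented compactly by noting that the coefficient of $s_{abc}$ in the signed sum $\epsilon_1\cdot(\mathrm{I})+\epsilon_2\cdot(\mathrm{II})+\epsilon_3\cdot(\mathrm{III})+\epsilon_4\cdot(\mathrm{IV})$ is a character value of $(\mathbb{Z}/2)^3$ evaluated at $(a,b,c)$, so the combinations above isolate each complementary pair by orthogonality of characters.
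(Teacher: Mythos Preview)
Your proposal is correct and takes essentially the same approach as the paper: both set up the four linear equations coming from $\alpha,\beta,\gamma,\delta\in\eoe$ and solve the resulting system. You are simply more explicit than the paper (which just says ``solving this linear system''), and your character-theoretic remark is a nice way to organize the sign bookkeeping, but there is no substantive difference in method.
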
 
\begin{proof}
    There are 8 possible values of $(a,b,c)$, that is $000,001,010,011,100,101,110,111$. Since $\alpha,\beta,\gamma,\delta\in\eoe$, we have: 
    $$\begin{cases}
s_{000}+s_{001}+s_{010}+s_{011}=s_{100}+s_{101}+s_{110}+s_{111}\\
s_{000}+s_{001}+s_{100}+s_{101}=s_{010}+s_{011}+s_{110}+s_{111}\\
s_{000}+s_{010}+s_{100}+s_{110}=s_{001}+s_{011}+s_{101}+s_{111}\\
s_{000}+s_{011}+s_{101}+s_{110}=s_{001}+s_{010}+s_{100}+s_{111}
    \end{cases}$$
    Solving this linear system, we have $s_{abc}=s_{1-a,1-b,1-c}$ for arbitrary $a,b,c\in\{0,1\}$.
\end{proof}

Now we are ready to prove Lemma \ref{lem:3midhard}.
\begin{proof}
    By Lemma \ref{lem:4nomid} we have that the conclusion is correct when $f$ is of arity 4. Now suppose $f$ is of arity $2k$, where $k\geq3$, $\alpha,\beta,\gamma\in \su(f)$ and $\delta=\alpha\oplus\beta\oplus\gamma\in \eoe-\su(f)$. We first characterize the form of these strings.

For $a,b,c\in\{0,1\}$, let $S_{abc}=\{i|\alpha_i=a,\beta_i=b,\gamma_i=c\}$ and $s_{abc}=|S_{abc}|$. Using the condition that $\alpha,\beta,\gamma,\delta\in \eoe$ and Lemma \ref{lem:4 string all eo column property}, we have $s_{abc}=s_{1-a,1-b,1-c}$ for arbitrary $a,b,c\in\{0,1\}$.

In the following analysis, we present operations that reduce the arity of $f$ by 2 and keep $f$ as a $\exists3\nrightarrow$ signature.
If $s_{000}\ge1$, then $s_{111}\ge1$. By renaming the variables properly, we can write $\alpha=01\alpha',\beta=01\beta',\gamma=01\gamma',\delta=01\delta'$. By pinning $x_1$ to 0 and $x_2$ to 1, we obtain a signature $g$ satisfying $\alpha',\beta',\gamma'\in \su(g)$ and $\delta'\in \eoe-\su(g)$, which means $g$ is also a $\exists3\nrightarrow$ signature. We denote this operation as Operation 1.

If $s_{001}\ge2$, then $s_{110}\ge2$. By renaming the variables properly, we can write $\alpha=0101\alpha',\beta=0101\beta',\gamma=1010\gamma',\delta=1010\delta'$. Furthermore, let $\epsilon=0110\delta',\zeta=0110\gamma'$. 
\begin{itemize}
    \item If $\epsilon\notin \su(f)$, by adding a self-loop on $x_1$ and $x_2$ by some generalized binary disequality as in Lemma \ref{lem:3keep}, we can obtain a signature $g$ satisfying $01\alpha',01\beta',10\gamma'\in \su(g)$ while $10\delta'\notin \su(g)$.
    \item If $\epsilon=0110\delta'\in \su(f)$ and $\zeta=0110\gamma'\notin \su(f)$, by pinning $x_1$ to 0 and $x_2$ to 1, we obtain a signature $g$ satisfying $01\alpha',01\beta',10\delta'\in \su(g)$ and $10\gamma'\notin \su(g)$. 
    \item If $\epsilon,\zeta\in \su(f)$, by pinning $x_3$ to 1 and $x_4$ to 0, we obtain a signature $g$ satisfying $10\gamma',01\gamma',01\delta'\in \su(g)$ and $10\delta'\notin \su(g)$. 
\end{itemize}

In each situation, we can always obtain a $\exists3\nrightarrow$ signature $g$. We denote this operation as Operation 2. Similarly, we define Operation 3 (or 4) in the situation that $s_{010}\ge2$ (or $s_{100}\ge2$).

By applying Operation 1-4 successively, one can reduce the arity of $f$ until $s_{000}=s_{111}=0$ and $s_{001}=s_{110}\le 1,s_{010}=s_{101}\le 1,s_{100}=s_{011}\le 1$, maintaining $f$ as a $\exists3\nrightarrow$ signature. As $f$ is a $\exists3\nrightarrow$ signature, the arity of $f$ is not less than 4 and thus $s_{001}+s_{010}+s_{100}\ge 2$. If $s_{001}+s_{010}+s_{100}=2$, the \#P-hardness of $\#\eo^c(f)$ is given by Lemma \ref{lem:4nomid}. If $s_{001}+s_{010}+s_{100}=3$, the \#P-hardness of $\#\eo^c(f)$ is given by Lemma \ref{lem:6nomid}.
\end{proof}

\subsection{Case 2}
In this section, we show that Lemma \ref{lem:3midhard} also induces the \#P-hardness of Case 2.
\begin{lemma}
    If $f$ is a $\exists3\uparrow$ signature and $g$ is a $\exists3\downarrow$ signature, then $\#\eo^c(\{f,g\})$ is \#P-hard.
\end{lemma}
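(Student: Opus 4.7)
The plan is to reduce Case 2 to Case 1 by tensoring $f$ with $g$ at appropriate multiplicities to fabricate a single $\exists 3\nrightarrow$ signature, and then invoking Lemma \ref{lem:3midhard}. The intuition is that the two "defects" witnessing $\exists 3\uparrow$ in $f$ and $\exists 3\downarrow$ in $g$ already point in opposite directions along the EO axis, so rescaling them until they balance produces a fresh triple whose bitwise sum lands in $\eoe$ but outside the support of the combined signature.

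First I would fix witnesses: choose $\alpha_1,\beta_1,\gamma_1\in\su(f)$ with $\delta_1=\alpha_1\oplus\beta_1\oplus\gamma_1\in\eosg\setminus\su(f)$, and $\alpha_2,\beta_2,\gamma_2\in\su(g)$ with $\delta_2=\alpha_2\oplus\beta_2\oplus\gamma_2\in\eosl\setminus\su(g)$. Set $2a:=\#_1(\delta_1)-\#_0(\delta_1)>0$ and $2b:=\#_0(\delta_2)-\#_1(\delta_2)>0$. Then I would consider the $\eo$ signature $F=f^{\otimes b}\otimes g^{\otimes a}$, realized by the trivial gadget with no internal edges, which gives $\ceoc(F)\le_T\ceoc(\{f,g\})$ at once.

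Next I would produce the $\exists 3\nrightarrow$ witness for $F$. Let $A$ be the concatenation of $b$ copies of $\alpha_1$ followed by $a$ copies of $\alpha_2$, and define $B,C$ analogously from the $\beta$'s and $\gamma$'s. Each block lies in the support of the appropriate factor, so $A,B,C\in\su(F)$. Their bitwise sum $D:=A\oplus B\oplus C$ is $b$ copies of $\delta_1$ followed by $a$ copies of $\delta_2$, and a direct count gives
$$\#_1(D)-\#_0(D)=b(2a)+a(-2b)=0,$$
so $D\in\eoe$. However, the first $\mathrm{arity}(f)$ bits of $D$ form $\delta_1\notin\su(f)$, forcing $F(D)=0$ and hence $D\notin\su(F)$. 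Therefore $F$ is a $\exists 3\nrightarrow$ $\eo$ signature, Lemma \ref{lem:3midhard} yields \#P-hardness of $\ceoc(F)$, and the reduction above transports this to $\ceoc(\{f,g\})$.

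The argument is short and essentially combinatorial; there is no real obstacle beyond choosing the exponents $b$ and $a$ (rather than any other pair) so that the upward imbalance contributed by $\delta_1$ exactly cancels the downward imbalance contributed by $\delta_2$. No machinery beyond Lemma \ref{lem:3midhard} and the trivial realizability of tensor products is invoked.
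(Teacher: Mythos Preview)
Your argument is correct and is essentially identical to the paper's proof: both take witnesses for $\exists 3\uparrow$ in $f$ and $\exists 3\downarrow$ in $g$, tensor $f^{\otimes b}\otimes g^{\otimes a}$ with exponents chosen so the two imbalances cancel, and apply Lemma~\ref{lem:3midhard} to the resulting $\exists 3\nrightarrow$ signature. The only cosmetic difference is that you write the imbalances as $2a,2b$ (which is legitimate since the arities are even) while the paper uses $a,b$ directly; also note that $\delta_1\notin\su(f)$ is automatic since $f$ is an $\eo$ signature and $\delta_1\in\eosg$, so your ``$\setminus\su(f)$'' is redundant but harmless.
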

\begin{proof}
    Suppose $\alpha,\beta,\gamma\in \su(f)$, $\delta=\alpha\oplus\beta\oplus\gamma\in\eosg$ and $\epsilon,\zeta,\eta\in \su(g)$, $\theta=\epsilon\oplus\zeta\oplus\eta\in\eosl$. Let $a=\#_1(\delta)-\#_0(\delta)$ and $b=\#_0(\theta)-\#_1(\theta)$. Now we consider the signature $h=f^{\otimes b}\otimes g^{\otimes a}$. It can be verified by the definition of the tensor product operation that $\alpha^b \epsilon^a,\beta^b \zeta^a,\gamma^b \eta^a\in \su(h)$, while $\alpha^b \epsilon^a\oplus\beta^b \zeta^a\oplus\gamma^b \eta^a= \delta^b \theta^a\in \eoe-\su(h)$. By Lemma \ref{lem:3midhard}, $\#\eo^c(h)$ is \#P-hard, and consequently $\#\eo^c(\{f,g\})$ is \#P-hard as well.
\end{proof}
We remark that this lemma also works when $f=g$.

\subsection{Case 3a}\label{allupp}
By Theorem \ref{thm:puredichotomy}, and the fact that the definition of pure-up signatures is the same as that of $\allup$, its complexity is already classified. 

\subsection{Case 3b and 3c}\label{3upeasy}
In this section, we prove the following lemma.
\begin{lemma}
    Suppose all signatures in $\mathcal{F}$ are $\mitsuup$ (or $\mitsudown$) signatures. If $\mathcal{F}\subseteq \eom[\mathscr{A}]$ or $\mathcal{F}\subseteq\eom[\mathscr{P}]$, then $\ceo(\mathcal{F})$ can be computed in polynomial time, given a specific NP oracle related to $\mathcal{F}$.
    \label{lem:NPalg}
\end{lemma}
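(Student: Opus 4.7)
The plan is to exhibit a polynomial-time algorithm for $\ceo(\mathcal{F})$ that makes use of a fixed NP oracle tied to $\mathcal{F}$. A natural candidate for the oracle is the decision problem: given a $\ceo(\mathcal{F})$-instance, possibly with a partial pinning of some edge orientations, does there exist an Eulerian orientation extending the pinning on which every local signature evaluates to a nonzero value? This problem is in NP, since a witnessing orientation can be verified in polynomial time, and it is naturally ``tied to $\mathcal{F}$'' as required by the statement.

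By the symmetry between $\mitsuup$ and $\mitsudown$, I focus on the $\mitsuup$ case. The overarching idea is to combine the oracle with the $\mitsuup$ property to single out a canonical ``maximal'' supporting orientation $\sigma^{\ast}$ of the instance, and then exploit the affine structure guaranteed by $\mathcal{F} \subseteq \eom[\mathscr{A}]$ or $\mathcal{F} \subseteq \eom[\mathscr{P}]$ to evaluate the weighted sum over the remaining supporting orientations in polynomial time.

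The concrete steps I envisage are as follows. First, apply standard self-reducibility with the NP oracle to either certify $Z(I) = 0$ or produce one specific supporting Eulerian orientation $\sigma^{(0)}$. Second, iteratively query the oracle for potential ``upward flips'' at each vertex that, under the $\thup$ ordering on local supports, strictly raise the local 1-count of some $f_v$ while remaining globally Eulerian and in-support; the $\mitsuup$ closure property (closure of $\su(f)$ under the 3-XOR operation modulo the strictly-up stratum $\eosg$) should guarantee that such upward moves can be chained without leaving the set of supporting orientations. After polynomially many queries this process must terminate in a canonical $\sigma^{\ast}$ admitting no further upward flip. Third, argue that pinning the instance to $\sigma^{\ast}$ yields a residual instance which is \ba[1] in the sense of Definition \ref{def:reba}: the combination of $\sigma^{\ast}$ being maximal in the $\thup$ sense with the pairwise-opposite structure imposed by $\eom[\mathscr{A}]/\eom[\mathscr{P}]$ should force, for each remaining variable $x$, a partner $y = \psi(x)$ whose joint assignment to $1$ would otherwise permit a strictly-downward 3-XOR witness forbidden by $\mitsuup$. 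Finally, invoke Theorem \ref{thm:rebaeasy} on the residual instance, which still lies entirely in $\eom[\mathscr{A}]$ or $\eom[\mathscr{P}]$, to compute its partition function in polynomial time, and reassemble the contribution from $\sigma^{\ast}$ to recover $Z(I)$.

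The main obstacle will be the third step: rigorously establishing that maximization in the $\thup$ direction induces the \ba[1] pairing structure. This requires a careful structural analysis linking the global $\mitsuup$ closure on $\su(f_v)$ to the existence of local rebalancing pairings $\psi$, while handling the interaction between the affine structure of $\eom[\mathscr{A}]/\eom[\mathscr{P}]$, the Eulerian constraints coming from the underlying graph, and the partial pinning produced by the oracle calls. A secondary, more technical difficulty is ensuring that every oracle query throughout the reduction refers to the same uniformly NP-verifiable decision problem on $\mathcal{F}$, so that the overall algorithm is genuinely in $\pnp$ rather than only in $\mathrm{FP}$ relative to a family of oracles.
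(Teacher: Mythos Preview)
Your proposal has genuine gaps, and the overall route does not match what the paper does.

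First, the ``upward flip'' maximization in your step two is ill-defined in this setting. Every $f_v\in\mathcal{F}$ is an $\eo$ signature, so \emph{every} string in $\su(f_v)$ has Hamming weight exactly $\tfrac12\operatorname{arity}(f_v)$; there is no way to ``strictly raise the local 1-count'' while staying inside the support. The symbol $\thup$ in $\mitsuup$ does not denote any partial order on $\su(f_v)$; it only records that whenever $\alpha\oplus\beta\oplus\gamma$ falls \emph{outside} $\su(f_v)$ for $\alpha,\beta,\gamma\in\su(f_v)$, the result lies in $\eog$. So there is no meaningful $\sigma^\ast$ produced by your iterative procedure.

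Second, even granting some canonical $\sigma^\ast$, the reduction to \ba[1] in your step three cannot succeed in general. The paper exhibits $f_{56}$, a $\mitsuup$ signature in $\eom[\mathscr{P}]$ that is neither \ba[0] nor \ba[1]; no amount of pinning by a single global orientation turns such instances into rebalancing instances to which Theorem~\ref{thm:rebaeasy} applies. The whole point of the $\pnp$ (rather than FP) statement is precisely that the rebalancing algorithm is \emph{not} known to cover this case.

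The paper's argument is much more direct and avoids rebalancing altogether. The NP oracle is the \emph{support identification problem}: given $I$, a vertex $v$ with signature $f$, and $\alpha_f\in\su(f)$, decide whether $\alpha_f$ is \emph{effective}, i.e.\ extendable to a global solution. Using this oracle one restricts every local $f$ to its effective support. Two short lemmas then do all the work: (i) if $\alpha_f,\beta_f,\gamma_f\in\su(f)$ with $\alpha_f\oplus\beta_f\oplus\gamma_f\in\eosg$, then not all three can be effective (a global $\eoe$ vs.\ local $\eog$ counting argument over the bipartite $\hol(\neq_2\mid\mathcal{F})$ form); and (ii) if all three are effective and their XOR lies in $\su(f)$, then the XOR is effective too. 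Together these say the effective support of each $f$ is affine and contained in $\eoe$, hence by Lemma~\ref{lem: eo+affine= eom} it sits inside some $\eom[P]$. Since $f\in\eom[\mathscr{A}]$ (resp.\ $\eom[\mathscr{P}]$), its restriction to $\eom[P]$ is in $\mathscr{A}$ (resp.\ $\mathscr{P}$), and the modified instance is computed directly by the $\mathscr{A}$/$\mathscr{P}$ algorithm. No maximization and no rebalancing are needed.
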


We focus on the case when all signatures in $\mathcal{F}$ are $\mitsuup$ signatures. The analysis for the $\mitsudown$ case is similar by exchanging the symbol of 0 and 1.
We commence with the decision version of counting weighted Eulerian orientations problems, which is denoted as the $\eo$ problem. 

\begin{definition}[$\eo$ problems]\cite{cai2020beyond}\label{def:decideeo}
A $\eo$ problem $\eo(\mathcal{F})$, parameterized by a set $\mathcal{F}$ of $\eo$ signatures, is defined as follows: Given an $\eo$-signature grid $\Omega(G, \pi)$ over $\mathcal{F}$, the output is whether there exists $\sigma \in \eo(G)$ such that 

$$
\prod_{v \in V} f_v(\sigma|_{E(v)})\neq0.
$$
\end{definition}

By Definition \ref{def:decideeo}, given a set of signatures $\mathcal{F}$, we can change all non-zero values of signatures in $\mathcal{F}$ to 1 without changing the output of $\eo(\mathcal{F})$. We also denote such an assignment $\sigma$ in Definition \ref{def:decideeo} as a solution.

We then introduce the concept of the effective string in $\eo$ problems, and show that deciding which string is effective in the instance can be computed by an NP oracle. These would provide us a useful method toward our algorithm for the counting version.
\begin{definition}
    Suppose $I$ is an instance of $\eo(\mathcal{F})$. For a vertex $v$ in $I$ assigned $f\in \mathcal{F}$, we say a string $\alpha_f\in\su(f)$ is effective if there exists a solution $\alpha$ to $I$ satisfying $\alpha|_{E(v)}=\alpha_f$.    
    The set of all effective strings is denoted as the effective support of $f$.
    \label{def:effective}
\end{definition}
We remark that in Definition \ref{def:effective}, the concept of effective strings and effective support is defined on the specific $f$ that assigned to $v$, instead of the element $f\in\mathcal{F}$. For example, for a signature $f\in\mathcal{F}$, suppose it is assigned to $u$ and $v$ in an instance $I$, which is then denoted by $f_u$ and $f_v$ respectively. Then the effective support of $f_u$ and $f_v$ can be totally different and we always analyze them separately in the following proof. Besides, whether a string is effective in an instance define the following problem.
\begin{definition}[Support identification problem]
    A support identification problem is parameterized by a set $\mathcal{F}$ of \eo\ signatures.
    
    \textbf{Input:} An instance $I$ of $\eo(\mathcal{F})$, a vertex assigned $f$ in $I$ and a string $\alpha_f\in\su(f)$.

    \textbf{Output:} Whether $\alpha_f$ is effective.
\end{definition}

The length of each string $\alpha=\alpha_f\alpha'$ is $n$, where $n$ is the number of variables in the instance. Consequently, the support identification problem is in NP, which means that  given $\mathcal{F}$, there exists a consistent NP oracle solving it.
We also remark that the computational complexity of the support identification problem is still open, even when $\mathcal{F}$ falls into Case 3 (or Case 4). That is, given $\mathcal{F}$, whether this problem can be computed in polynomial time or it is NP-complete remains unsettled. Dozens of attempts have been made, but the authors are still facing non-trivial difficulties.

It is notable that the concept of \eo\ problem shares common symbols with $\ceo$ in our definition, which enables us to extend the concept of effective strings to the counting version $\ceo$. For an instance $I$ in $\ceo$, a vertex assigned $f\in \mathcal{F}$ in $I$, we say a string $\alpha_f\in \su(f)$ is effective if there exists a $\alpha=\alpha_f\alpha'$ such that $\prod_{v\in V}f_v(\alpha|_{E(v)})\neq 0$. We also call $\alpha$ as a solution in this setting. Besides, it is worth noticing that whether a string in $\su(f)$ is effective is consistent in the setting of \eo\ and \ceo.

Using the concept of effective strings, we can modify signatures in a particular instance. The following lemma can be easily verified from Definition \ref{def:effective}.

\begin{lemma}
    Suppose $I$ is an instance of $\ceo(\mathcal{F})$, $v$ is a vertex in $I$ assigned $f\in\mathcal{F}$ and $\alpha_f\in\su(f)$ is not effective. Let $f'$ be the signature that only differs from $f$ at $\alpha_f$ and $f'(\alpha_f)=0$. Let $I'$ be the instance obtained by replacing $f$ by $f'$ at $v$. Then $\text{Z}(I')=\text{Z}(I)$. 
    
    Furthermore, by restricting $f$ to its effective support $\mathscr{S}$, we can modify $f$ to $f|_{\mathscr{S}}$ while keeping the partition function unchanged.
    \label{lem:suppmodi}
\end{lemma}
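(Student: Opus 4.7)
The plan is to prove the first statement by expanding the partition function as a sum over Eulerian orientations and partitioning according to the local assignment at $v$. Specifically, I would write
\[
\text{Z}(I) = \sum_{\sigma \in \eo(G)} \prod_{u \in V} f_u(\sigma|_{E(u)}) = S_{\neq} + S_{=},
\]
where $S_{=}$ is the partial sum over those $\sigma$ with $\sigma|_{E(v)} = \alpha_f$, and $S_{\neq}$ is the partial sum over those with $\sigma|_{E(v)} \neq \alpha_f$. The same decomposition applies to $\text{Z}(I')$, yielding $S_{\neq}'$ and $S_{=}'$.

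Next I would handle the two parts separately. For $S_{\neq}$ versus $S_{\neq}'$, the local factor at $v$ is evaluated on an input different from $\alpha_f$, and since $f$ and $f'$ agree off $\alpha_f$, the terms match pointwise, so $S_{\neq} = S_{\neq}'$. For $S_{=}'$, every term contains the factor $f'(\alpha_f) = 0$, so $S_{=}' = 0$. The crucial step is then to show $S_{=} = 0$ as well: by the hypothesis that $\alpha_f$ is not effective (Definition \ref{def:effective}), there is no $\sigma \in \eo(G)$ with $\sigma|_{E(v)} = \alpha_f$ and $\prod_{u \in V} f_u(\sigma|_{E(u)}) \neq 0$. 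Hence every term in $S_{=}$ vanishes, and $S_{=} = 0 = S_{=}'$. Adding the two pieces gives $\text{Z}(I) = \text{Z}(I')$.

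For the ``furthermore'' part, the plan is a direct iteration. Enumerate the non-effective strings in $\su(f)$ as $\alpha_f^{(1)}, \alpha_f^{(2)}, \ldots, \alpha_f^{(t)}$, and define a sequence of signatures $f = f^{(0)}, f^{(1)}, \ldots, f^{(t)} = f|_{\mathscr{S}}$, where $f^{(i)}$ is obtained from $f^{(i-1)}$ by zeroing out the value at $\alpha_f^{(i)}$. At each step I must check that $\alpha_f^{(i)}$ remains non-effective in the intermediate instance; this is immediate because zeroing out values of $f$ can only shrink the set of $\sigma$ that give a nonzero product, so any string that was non-effective in $I$ stays non-effective. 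Applying the first statement at each step gives equality of partition functions throughout, and composing yields the claim.

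The proof is essentially a bookkeeping argument with no real obstacle; the only point that deserves care is ensuring that non-effectiveness is preserved under the successive modifications, which follows from the monotonicity observation above.
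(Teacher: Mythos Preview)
Your proposal is correct and is precisely the straightforward verification the paper has in mind; in fact the paper omits the proof entirely, stating only that the lemma ``can be easily verified from Definition \ref{def:effective}.'' Your decomposition $\text{Z}(I)=S_{=}+S_{\neq}$ and the monotonicity observation for the iteration are exactly the natural details behind that one-line remark.
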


Now we focus on Case 3b and 3c.
In the following, we analyze the effective supports in a given instance $I$. 

\begin{lemma}
    Suppose an instance $I$ of $\#\eo(\mathcal{F})$ is given, where all signatures in $\mathcal{F}$ are $\mitsuup$. For a vertex assigned $f\in \mathcal{F}$ in $I$, if $\alpha_f,\beta_f,\gamma_f\in\su(f)$ and $\delta_f=\alpha_f\oplus\beta_f\oplus\gamma_f\in\eosg$, then at most two of $\alpha_f,\beta_f,\gamma_f$ can be effective.
    \label{lem:3upeff}
\end{lemma}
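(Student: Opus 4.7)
The plan is to argue by contradiction: assume $\alpha_f, \beta_f, \gamma_f$ are all effective, then produce three global solutions $\alpha,\beta,\gamma$ of $I$ whose restrictions at $v$ are the given strings, and derive a contradiction from a counting identity applied to $\delta := \alpha \oplus \beta \oplus \gamma$.

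The key observation is that $\#\eo(\mathcal{F})$ is modeled as $\hol(\neq_2 \mid \mathcal{F})$, so each edge of $I$ carries a $\neq_2$ constraint that is preserved under XOR (since $\neq_2$ is an affine constraint). Hence $\delta$ is still a valid $\neq_2$-consistent assignment to the edge-ends; each edge contributes exactly one $0$ and one $1$, and therefore
\[
\sum_{w \in V}\bigl(\#_1(\delta|_{E(w)}) - \#_0(\delta|_{E(w)})\bigr) \;=\; |E| - |E| \;=\; 0.
\]
This global balance will serve as the contradiction engine.

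Next I would evaluate each local term. At the vertex $v$, by construction $\delta|_{E(v)} = \delta_f \in \eosg$, so $\#_1 - \#_0 > 0$ there. At every other vertex $w$ with assigned signature $f_w \in \mathcal{F}$, the three restrictions $\alpha|_{E(w)}, \beta|_{E(w)}, \gamma|_{E(w)}$ all lie in $\su(f_w)$ (because $\alpha,\beta,\gamma$ are solutions). Since $f_w$ is a $\mitsuup$ signature, the definition forbids $\exists 3\thdown$, so $\delta|_{E(w)} = \alpha|_{E(w)} \oplus \beta|_{E(w)} \oplus \gamma|_{E(w)}$ lies either in $\su(f_w) \subseteq \eoe$ or in $\eosg$; in either case $\#_1(\delta|_{E(w)}) - \#_0(\delta|_{E(w)}) \geq 0$. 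Summing over all $w$ gives a strictly positive value, contradicting the balance identity.

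The only step requiring any care is verifying that XOR preserves the $\neq_2$ edge constraints in the $\#\eo$ setup (so that $\delta$ really does split each edge into one $0$ and one $1$), and that the $\mitsuup$ hypothesis rules out the $\eosl$ case at every other vertex. Both are immediate from the definitions, so no step here looks like a genuine obstacle; the main thing is simply to set up notation cleanly for the restrictions $\alpha|_{E(w)}$ and to invoke the negation of $\exists 3\thdown$ at the correct moment.
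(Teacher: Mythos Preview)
Your proposal is correct and follows essentially the same approach as the paper's proof: assume all three are effective, lift to global solutions $\alpha,\beta,\gamma$, observe that $\delta=\alpha\oplus\beta\oplus\gamma$ satisfies all $\neq_2$ edge constraints (so globally $\delta\in\eoe$), while the $\mitsuup$ hypothesis on every RHS vertex forces each local $\delta|_{E(w)}\in\eog$ with strict inequality at $v$, a contradiction. The paper phrases this as ``$\delta'\in\eog$ and $\delta_f\in\eosg$ imply $\delta\in\eosg$, but the LHS forces $\delta\in\eoe$''; your sum identity $\sum_w(\#_1-\#_0)(\delta|_{E(w)})=0$ is the same counting, just written out vertex by vertex.
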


\begin{proof}
    Suppose otherwise. 
     We regard $I$ as an instance of the equivalent problem $\hol(\neq_2\mid \mathcal{F})$.  As all of $\alpha_f,\beta_f,\gamma_f$ are effective, there exists $\alpha=\alpha_f\alpha',\beta=\beta_f\beta',\gamma=\gamma_f\gamma'$ such that 
    each of them is a solution. Let $\delta=\delta_f\delta'$, where $\delta=\alpha\oplus\beta\oplus\gamma$. 

   Each variable in $I$ is incident to a signature in $\mathcal{F}$ on RHS and a $\neq_2$ on LHS, thus both sides cover each variable exactly once. For RHS, all signatures in $\mathcal{F}$ are $\mitsuup$ signatures, so $\delta'\in\eog$. Since $\delta_f\in\eosg$, we have $\delta\in\eosg$. However, by the restriction from LHS, we have $\delta\in \eoe$ since $\neq_2$ is an $\eom$ signature, which is a contradiction.
\end{proof}

\begin{lemma}
    Suppose an instance $I$ of $\#\eo(\mathcal{F})$ is given, where $\mathcal{F}$ contains only $\mitsuup$ signatures. For a signature $f\in \mathcal{F}$ in $I$, if $\alpha_f,\beta_f,\gamma_f\in\su(f)$ are effective, and $\delta_f=\alpha_f\oplus\beta_f\oplus\gamma_f\in\su(f)$, then $\delta_f$ is also effective.
    \label{lem:3mideff}
\end{lemma}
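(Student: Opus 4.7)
The plan is to exhibit a concrete solution witnessing that $\delta_f$ is effective, constructed directly from the three given solutions. Let $\alpha=\alpha_f\alpha'$, $\beta=\beta_f\beta'$, $\gamma=\gamma_f\gamma'$ be solutions of $I$ witnessing effectiveness of $\alpha_f,\beta_f,\gamma_f$ at the vertex $v$. The natural candidate is $\delta=\alpha\oplus\beta\oplus\gamma$, which by construction satisfies $\delta|_{E(v)}=\delta_f$. I would show $\delta$ is itself a solution, which immediately implies $\delta_f$ is effective.

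First I would verify that $\delta$ respects every LHS disequality. Since $\neq_2$ enforces $\sigma_{x}\oplus\sigma_{y}=1$ for each paired variables $(x,y)$ and for each of the three solutions, we have $\delta_x\oplus\delta_y=(\alpha_x\oplus\alpha_y)\oplus(\beta_x\oplus\beta_y)\oplus(\gamma_x\oplus\gamma_y)=1$. Summing across all LHS $\neq_2$ vertices, this forces $\#_0(\delta)=\#_1(\delta)$, so $\delta\in\eoe$ globally.

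Next, the key step is to use the $\mitsuup$ hypothesis locally on each RHS vertex. For every vertex $u$ assigned some $g\in\mathcal{F}$, the restrictions $\alpha|_{E(u)},\beta|_{E(u)},\gamma|_{E(u)}$ all lie in $\su(g)$ (because $\alpha,\beta,\gamma$ are solutions), and their bitwise sum is exactly $\delta|_{E(u)}$. Since $g$ is $\mitsuup$ (i.e., neither $\exists 3\nrightarrow$ nor $\exists 3\downarrow$), the sum $\delta|_{E(u)}$ lies either in $\su(g)$ or in $\eosg$; in particular it never lies in $\eosl$.

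The main obstacle---really the only nontrivial point---is ruling out the strict case $\delta|_{E(u)}\in\eosg$. Here I would invoke the global balance established above: each RHS vertex contributes its local $\#_1-\#_0$ to the global total, every contribution is nonnegative by the dichotomy in the previous paragraph, and the global total is $0$ because $\delta\in\eoe$. Hence every local contribution must be exactly $0$, forcing $\delta|_{E(u)}\in\eoe$ and therefore $\delta|_{E(u)}\in\su(g)$ for every $u$. This shows $\delta$ is a valid solution, and in particular $\delta|_{E(v)}=\delta_f$ is effective, completing the proof.
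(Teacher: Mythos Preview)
Your proof is correct and follows essentially the same approach as the paper's: form $\delta=\alpha\oplus\beta\oplus\gamma$ from the three witnessing solutions, use the $\mitsuup$ property locally to get each $\delta|_{E(u)}\in\eog$, and then combine with the global balance $\delta\in\eoe$ to force every local restriction into the support. The only organizational difference is that you establish $\delta\in\eoe$ directly from the LHS $\neq_2$ constraints (slightly more self-contained), whereas the paper derives it by contradiction via the preceding Lemma~\ref{lem:3upeff}.
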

\begin{proof}
     We regard $I$ as an instance of the equivalent problem $\hol(\neq_2\mid \mathcal{F})$. Suppose $I$ has $n$ vertices $v_1,...,v_n$ assigned $f_1,...,f_n$ chosen from $\mathcal{F}$. Let $f=f_1$, and we use $\alpha_i$ to denote $\alpha|_{E(v_i)}$. As $\alpha_1,\beta_1,\gamma_1\in\su(f_1)$ are effective, there exist $\alpha=\alpha_1...\alpha_n,\beta=\beta_1...\beta_n,\gamma=\gamma_1...\gamma_n$ such that each of them is a solution.

    Let $\delta=\alpha\oplus\beta\oplus\gamma$. If $\delta\in\eosg$, there would exists a signature $f_i$ in $I$, such that $\alpha_i\oplus\beta_i\oplus\gamma_i\in\eosg$. Since $\alpha_i,\beta_i,\gamma_i$ are effective, by Lemma \ref{lem:3upeff} this would cause a contradiction. Therefore, $\delta\in\eoe$, and as all signatures in $\mathcal{F}$ are $\mitsuup$ signatures, $\delta_i\in\eoe$ and $\delta_i\in\su(f)$ for each $i$. This indicates that $\delta$ is also a solution, which implies $\delta_1$ is effective as well.
\end{proof}

We are now ready to present the algorithm for Lemma \ref{lem:NPalg}.
\begin{proof}
    The algorithm for Lemma \ref{lem:NPalg} is as follows.
    \begin{enumerate}
        \item For each string $\alpha_f\in\su(f)$ at some vertex assigned $f$ in the instance $I$: Decide whether $\alpha_f$ is effective. If not, replace $f$ with $f'$, where $f'$ only differs from $f$ at $\alpha_f$ and $f'(\alpha_f)=0$.
        \item Using the algorithm for $\mathscr{A}$ or that for $\mathscr{P}$ to compute the value of the modified instance $I'$. Output the value.
    \end{enumerate}
    By Lemma \ref{lem:suppmodi}, the operation in Step (1) do not change the partition function. After all modifications in Step (1), for each signature $f'$ in $I'$, if $\alpha,\beta,\gamma\in\su(f')$, by Lemma \ref{lem:3upeff} we have $\delta=\alpha\oplus\beta\oplus\gamma\in \eoe$. Furthermore, by Lemma \ref{lem:3mideff}, $\delta\in\su(f')$, which indicates that $\su(f')\subseteq\eoe$ is affine. By Lemma \ref{lem: eo+affine= eom}, 
    
    $\su(f')\subseteq\eo^P$ for some pairing $P$. Consequently, all signatures in $I'$ are in $\mathscr{A}$ (or all signatures in $I'$ are in $\mathscr{P}$) by Definition \ref{def:eoaeop}. Then the polynomial-time algorithm for $\mathscr{A}$ or that for $\mathscr{P}$ can be applied to $I'$.

    Now we analyze the time complexity of the algorithm. For Step (1), checking whether a string is effective need an NP oracle, which is determined given $\mathcal{F}$.  There are $O(n)$ strings need to be checked, where $n$ is the number of vertices in $I$. Step (2) can be computed in polynomial time. In summary, the algorithm is in $\pnp$.
\end{proof}

The algorithm is similar for the $\mitsudown$ cases. We remark that this algorithm also works for the 0-rebalancing cases by the following lema.
\begin{lemma}
    If $f$ is a 0-rebalancing signature, then $f$ is either $\exists3\nrightarrow$ or $\mitsuup$.
\end{lemma}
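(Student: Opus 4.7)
The plan is to induct on the arity $2d$ of $f$, working with the contrapositive of the non-trivial direction: if $f$ is 0-rebalancing and \emph{not} a $\exists 3\nrightarrow$ signature, then $f$ cannot be a $\exists 3\downarrow$ signature either, and hence is $\mitsuup$ by Definition~\ref{def:thupthdown}. The base cases are immediate: the arity-$0$ case is vacuous, and for arity $2$ every \eo\ signature has $\su(f)\subseteq\{01,10\}$, so any three support strings contain a repetition and their bitwise XOR lies back in $\su(f)$.

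For the inductive step, I would suppose for contradiction that there are $\alpha,\beta,\gamma\in\su(f)$ with $\delta=\alpha\oplus\beta\oplus\gamma\in\eosl$ and run a column-counting argument in the spirit of Lemma~\ref{lem:4 string all eo column property}. Classify each column $i$ by the triple $(\alpha_i,\beta_i,\gamma_i)\in\{0,1\}^3$, and let $s_{abc}$ be the number of columns of type $(a,b,c)$. The Hamming-weight equalities $\#_1(\alpha)=\#_1(\beta)=\#_1(\gamma)=d$, combined with $\#_1(\delta)=s_{100}+s_{010}+s_{001}+s_{111}$, rearrange to
\[
\#_1(\delta)\;=\;d+2(s_{111}-s_{000}),
\]
so $\delta\in\eosl$ forces $s_{000}>s_{111}\geq 0$. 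In particular some column $x$ has $\alpha_x=\beta_x=\gamma_x=0$.

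Now 0-rebalancing at $x$, via Definition~\ref{def:reba}, provides a partner $y=\psi(x)$ such that no support string of $f$ has both $x$ and $y$ equal to $0$; applied to $\alpha,\beta,\gamma$ this forces $\alpha_y=\beta_y=\gamma_y=1$. The signature $g:=f^{x=0,\,y=1}$ is again 0-rebalancing, of arity $2d-2$, and the restrictions $\alpha',\beta',\gamma'\in\su(g)$ have XOR $\delta'$ obtained from $\delta$ by deleting a $0$ at $x$ and a $1$ at $y$, so $\#_1(\delta')-\#_0(\delta')=\#_1(\delta)-\#_0(\delta)<0$ and $g$ is $\exists 3\downarrow$. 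The inductive hypothesis then forces $g$ to be $\exists 3\nrightarrow$; lifting any witness $\alpha'',\beta'',\gamma''\in\su(g)$ (with $\delta''\in\eoe-\su(g)$) back to $\su(f)$ by inserting $0$ at $x$ and $1$ at $y$ produces three strings in $\su(f)$ whose XOR is balanced but lies outside $\su(f)$ (otherwise its restriction would return $\delta''$ to $\su(g)$). This exhibits $f$ as $\exists 3\nrightarrow$, the desired contradiction. The main obstacle is extracting the column-counting identity cleanly; once an all-zero column is pinpointed, the 0-rebalancing partner is automatically of all-one type, and the inductive reduction together with the lift are essentially forced by the definitions.
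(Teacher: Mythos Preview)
Your argument is correct, and the column-counting identity $\#_1(\delta)-\#_0(\delta)=4(s_{111}-s_{000})$ (equivalently $\#_1(\delta)=d+2(s_{111}-s_{000})$) is exactly the engine the paper uses as well. Up to the conclusion $s_{000}>s_{111}$ the two proofs coincide.

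Where you diverge from the paper is in how the contradiction is extracted. You run a full induction on the arity: pass to $g=f^{x=0,\psi(x)=1}$, observe $g$ is still $\exists3\downarrow$, invoke the inductive hypothesis to make $g$ a $\exists3\nrightarrow$ signature, and then lift a $\exists3\nrightarrow$ witness back to $f$. The paper instead finishes in one stroke without the lift-back: once a type-$000$ column $x$ is located, its partner $\psi(x)$ is forced to be of type $111$ (because each of $\alpha,\beta,\gamma$ lies in $\su(f)$ and has a $0$ at $x$, so each must have a $1$ at $\psi(x)$). Iterating this pairing exhausts the $111$-columns after $s_{111}$ steps while at least one $000$-column survives, leaving that column with no admissible partner---directly contradicting $0$-rebalancing. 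So the paper's argument is the same recursive descent you set up, but it stops as soon as the partner count runs out rather than waiting for a $\exists3\nrightarrow$ certificate to appear and be lifted. Your route is a bit longer but perfectly sound, and it has the minor virtue of making the recursion in Definition~\ref{def:reba} fully explicit.
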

\begin{proof}
    We would prove the inverse negative proposition. Suppose $f$ is neither $\exists3\nrightarrow$ nor $\mitsuup$. Consequently, there exist $\alpha,\beta,\gamma\in\su(f)$ and $\delta=\alpha\oplus\beta\oplus\gamma\in \eosl$. Similar to Lemma \ref{lem:4 string all eo column property}, we consider $S_{abc}$ and $s_{abc}$, where $a,b,c\in\{0,1\}$. Since $\alpha,\beta,\gamma\in\eoe$ and $\delta\in\eosl$, we have 
    $$\begin{cases}
s_{000}+s_{001}+s_{010}+s_{011}=s_{100}+s_{101}+s_{110}+s_{111}\\
s_{000}+s_{001}+s_{100}+s_{101}=s_{010}+s_{011}+s_{110}+s_{111}\\
s_{000}+s_{010}+s_{100}+s_{110}=s_{001}+s_{011}+s_{101}+s_{111}\\
s_{000}+s_{011}+s_{101}+s_{110}>s_{001}+s_{010}+s_{100}+s_{111}
    \end{cases}.$$ Solving this we have $s_{111}<s_{000}$. We fix all variables assigned $000$ to $0$, then there would not be enough variables fixed to $1$ to appear. Therefore, $f$ cannot be 0-rebalancing.
\end{proof}

\subsection{Case 3d}
In this section, we prove the following lemma.
\begin{lemma}
    Suppose each signature in $\mathcal{F}$ is a $\mitsuup$ signature. If there exists a signature $f\in\mathcal{F}$ that is not a $\allup$ signature, and there exists a signature $g\in\mathcal{F}-\eo^{\mathscr{A}}$ and a signature $h\in\mathcal{F}-\eo^{\mathscr{P}}$, then $\ceoc(\mathcal{F})$ is \#P-hard.
    \label{3uphard}
\end{lemma}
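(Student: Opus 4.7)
The plan is to reduce Case~3d to the Case~2 hardness already proven in the previous subsection, by constructing a $\exists 3\downarrow$ signature out of $\mathcal{F}\cup\{\Delta\}$ via gadget construction; combined with a $\exists 3\uparrow$ signature from $\mathcal{F}$, Case~2 then delivers \#P-hardness. As a preliminary observation, $f$ itself already supplies a $\exists 3\uparrow$ witness: every $\eom$ signature is $\allup$, because under a pairing $P$ each pair contributes $1$ to the $\mathbb{F}_2$-parity, so the XOR of any odd number of strings from $\eom[P]$ still lies in $\eom[P]\subseteq\eoe$; hence $f\notin\eom$, and since $f$ is $\mitsuup$ (in particular not $\exists 3\nrightarrow$), any triple whose XOR leaves $\eoe$ must land in $\eosg$.

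Next, fix a non-$\allup$ witness $\alpha_1,\dots,\alpha_k\in\su(f)$ with $k\ge 5$ odd and $\delta:=\bigoplus_{i=1}^{k}\alpha_i\in\eosl$. The plan is to take $O(k)$ tensor copies of $f$ and identify pairs of variables across copies using $\Delta$ and $\neq_2$, so that three distinguished strings in the combined support have XOR equal to a consolidated copy of $\delta$, still strictly in $\eosl$. Executed so that the associated products of $f$-values do not vanish, this produces an $\mathcal{F}$-realizable $\exists 3\downarrow$ signature; applying the Case~2 argument to this gadget together with $f$ then yields \#P-hardness of $\ceoc(\mathcal{F})$.

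The remaining hypotheses $g\notin\eom[\mathscr{A}]$ and $h\notin\eom[\mathscr{P}]$ enter exactly when the straightforward pinning collapses one of the three target weights to zero. In such degenerate configurations, I would splice into the gadget a substring of $g$ supplying non-affine weights (witnessed by some pairing $P$ with $g|_{\eom[P]}\notin\mathscr{A}$) or a substring of $h$ supplying non-product weights (witnessed by some pairing $P'$ with $h|_{\eom[P']}\notin\mathscr{P}$) to repair the vanished value. This ensures the constructed gadget is a genuine $\exists 3\downarrow$ signature rather than a degenerate one whose support has been pruned by the pinning identifications.

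The main obstacle is the combinatorial design in the second step: converting a down-witness built from $k$ strings in $\su(f)$ into a down-witness built from only three strings in an $\mathcal{F}$-gate, while preserving non-zero weight on each of the three target strings. The delicate point is that naive tensor-and-pin operations tend to annihilate the very weights one is trying to preserve; the role of the $g$ and $h$ hypotheses is precisely to provide the extra weighted flexibility needed whenever the straightforward identifications would otherwise zero out the desired support.
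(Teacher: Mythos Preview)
Your plan has a fatal structural obstruction: no $\mathcal{F}\cup\{\Delta\}$-gate can ever be $\exists 3\downarrow$ when every signature in $\mathcal{F}$ is $\mitsuup$. Indeed, suppose $F$ is such a gate and $\alpha',\beta',\gamma'\in\su(F)$. Each of these extends to a full assignment $\alpha,\beta,\gamma$ (internal variables included) making every local factor nonzero. Set $\delta=\alpha\oplus\beta\oplus\gamma$. For every RHS vertex $v$ the signature $f_v$ is $\mitsuup$ (this includes $\Delta$, which is $\eom$), hence not $\exists 3\downarrow$, so $\delta|_{E(v)}\in\eog$. Summing over all RHS vertices covers every variable once, giving $\#_1(\delta)\ge\#_0(\delta)$. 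On the other hand the internal variables come in $\neq_2$-pairs, and the XOR of three strings from $\{01,10\}$ is again in $\{01,10\}$, so $\delta$ restricted to the internal variables is perfectly balanced. Subtracting, $\delta'=\alpha'\oplus\beta'\oplus\gamma'\in\eog$. Thus the gadget is never $\exists 3\downarrow$, and your reduction to Case~2 cannot be carried out, no matter how many tensor copies, pinnings, or self-loops you take. This is essentially the same mechanism behind Lemma~\ref{lem:3upeff}.

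Separately, your proposed role for $g$ and $h$ is not right. The hypotheses $g\notin\eom[\mathscr{A}]$ and $h\notin\eom[\mathscr{P}]$ concern the \emph{weights} of $g,h$ restricted to a pairing; they say nothing about repairing support elements that have been zeroed out by pinning. In the paper they are used for a completely different purpose: once the gadget $\ouhe$ is available, one filters $g$ and $h$ down to $g|_{\eom[P]}\notin\mathscr{A}$ and $h|_{\eom[P']}\notin\mathscr{P}$, applies $\tau$, and reduces from $\#\csp$ via Theorem~\ref{thm:csp=eom} and Theorem~\ref{thm:CSPdichotomy} (Lemma~\ref{lem:ouhehard}). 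The actual work specific to Case~3d is Lemma~\ref{lem:obtainouhe}: from a $\mitsuup$ but non-$\allup$ signature $f$ one realizes $\ouhe$ in $\ceoc(\mathcal{F})$. The proof goes by contradiction through Lemma~\ref{lem:noouhe}, showing that if $\ouhe$ is unavailable then pinning $f$ between any two support strings forces the residual support to be a \emph{full} $\eom[P]$; one then repeatedly uses the resulting ``flip a pair'' moves to push all $0$'s into a fixed coordinate across the $k$ witnessing strings, contradicting finiteness of the arity.
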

We first show that, given the $\ouhe$ signature, we can prove the \#P-hardness. 

\begin{lemma}
    Suppose each signature in $\mathcal{F}$ is a $\mitsuup$ signature. If there exist a signature $g\in\mathcal{F}-\eo^{\mathscr{A}}$ and a signature $h\in\mathcal{F}-\eo^{\mathscr{P}}$, then $\ceo(\mathcal{F}\cup\{\ouhe\})$ is \#P-hard.
    \label{lem:ouhehard}
\end{lemma}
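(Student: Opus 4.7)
The plan is to reduce a \#P-hard $\#\csp$ problem to $\ceo(\mathcal{F}\cup\{\ouhe\})$ by simulating the $\pi$-embedding, and then invoke Theorem \ref{thm:CSPdichotomy}. First, Lemma \ref{lem:neq4 to deq} makes the full family $\mathcal{DEQ}$ available from $\ouhe$, and Lemma \ref{ouhetoeoc} promotes the problem to $\ceoc(\mathcal{F}\cup\{\ouhe\})$, so the pinning signature $\pin$ is also available for gadget use. Thus throughout I may assume access to $\mathcal{DEQ}$, $\ouhe$, and $\pin$ in the right-hand side.

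Fix $g\in\mathcal{F}-\eo^{\mathscr{A}}$ and $h\in\mathcal{F}-\eo^{\mathscr{P}}$. By Definition \ref{def:eoaeop}, there exist perfect pairings $P_g$ of $\text{Var}(g)$ and $P_h$ of $\text{Var}(h)$ with $g|_{\eom[P_g]}\notin\mathscr{A}$ and $h|_{\eom[P_h]}\notin\mathscr{P}$. The key gadget step is to realize the restrictions $g|_{\eom[P_g]}$ and $h|_{\eom[P_h]}$ from $g$ and $h$ using $\ouhe$, $\mathcal{DEQ}$, and $\pin$. Once realized as $\mathcal{F}$-gates, Lemma \ref{lem:tau  pi maintain A P} guarantees some $g'\in\tau(g|_{\eom[P_g]})$ satisfies $g'\notin\mathscr{A}$ and some $h'\in\tau(h|_{\eom[P_h]})$ satisfies $h'\notin\mathscr{P}$. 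Moreover $\pi(g')$ coincides with $g|_{\eom[P_g]}$ and $\pi(h')$ with $h|_{\eom[P_h]}$ up to a variable reordering consistent with the pairings. Theorem \ref{thm:csp=eom} then gives $\#\csp(\{g',h'\})\equiv_T\#\eo(\{\pi(g'),\pi(h')\})\leq_T\ceo(\mathcal{F}\cup\{\ouhe\})$. Since $g'\notin\mathscr{A}$ and $h'\notin\mathscr{P}$, the set $\{g',h'\}$ is contained in neither $\mathscr{A}$ nor $\mathscr{P}$, so Theorem \ref{thm:CSPdichotomy} yields the desired \#P-hardness.

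The main obstacle is the realization step. The $\mitsuup$ property of $g$ does not force $g$ itself to be $\eom$: $\su(g)$ may contain strings whose triple-XOR lies strictly above $\eoe$, so entries outside $\eom[P_g]$ must be actively suppressed. My plan is to wire $\ouhe$ together with higher-arity members of $\mathcal{DEQ}$ so as to impose the pair-opposite structure dictated by $P_g$ on the $\text{arity}(g)$ variables, using $\pin$ to absorb any residual dangling variables and zero out inputs not in $\eom[P_g]$. The $\mitsuup$ closure condition should be the essential lever here: it guarantees that any three surviving support entries cannot triple-XOR into a rogue element of $\eoe\setminus\su(g)$, so the gadget's effective support restricts cleanly to $\eom[P_g]$ without introducing spurious non-zero entries. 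Making this rigorous will require a case analysis on the structure of $\su(g)$ relative to $P_g$, analogous in style to the gadget arguments already developed in Sections \ref{sec:generateprocess} and \ref{subsection:generating 2 signature}, but no fundamentally new technique should be needed beyond combining the $\mitsuup$ closure with the freely available disequalities and pinning.
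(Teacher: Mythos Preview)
Your overall strategy matches the paper's: realize $g|_{\eom[P_g]}$ and $h|_{\eom[P_h]}$ as gadgets, pass to $\tau$-representatives $g'\notin\mathscr{A}$ and $h'\notin\mathscr{P}$ via Lemma \ref{lem:tau  pi maintain A P}, and finish with Theorem \ref{thm:csp=eom} and Theorem \ref{thm:CSPdichotomy}. Where you go astray is in assessing the realization step. What you call ``the main obstacle'' is in fact a one-line gadget, and neither $\pin$, nor higher-arity members of $\mathcal{DEQ}$, nor the $\mitsuup$ hypothesis, nor any case analysis on $\su(g)$ is needed.

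Concretely, take $\su(\ouhe)=\{0101,1010\}$. For each pair $\{x_{2i-1},x_{2i}\}\in P_g$, connect the first two variables of a fresh copy of $\ouhe$ to $x_{2i-1}$ and $x_{2i}$ (through $\neq_2$, as always in the $\#\eo$ setting). Since $\ouhe$ forces its first two inputs to be unequal, this imposes $x_{2i-1}\neq x_{2i}$ exactly; the remaining two variables of each copy of $\ouhe$ become the new dangling edges and carry precisely the (complemented) values of the original pair. The resulting $\{g,\ouhe\}$-gate is $g|_{\eom[P_g]}$ up to a variable flip that $\tau$ is already insensitive to. There is no ``suppression'' problem to solve: the pair-opposite constraint is enforced structurally, not by cancellation, so any input to $g$ outside $\eom[P_g]$ is killed regardless of whether $g$ is $\eom$. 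In particular the $\mitsuup$ hypothesis plays no role in this lemma; it is carried in the statement only as context for Case~3. Your preparatory appeals to Lemma \ref{lem:neq4 to deq} and Lemma \ref{ouhetoeoc} are therefore superfluous here, and your last paragraph should simply be replaced by the gadget just described.
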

\begin{proof}
    For clarity, we let $\su(\ouhe)=\{0101,1010\}$. Suppose $g$ is of arity $2k$. As $g\notin \eo^{\mathscr{A}}$, by renaming the variables of $g$, there exists a pairing $P=\{(x_1,x_2),...,(x_{2k-1},x_{2k})\}$ such that $g|_{\eom[P]}\notin \mathscr{A}$. We can construct a gadget for $g|_{\eom[P]}$ by connecting the first and the second variables of a $\ouhe$ signature to each pair $x_{2i-1},x_{2i}$, $1\leq i\leq k$. Similarly we can construct $h|_{\eom[P']}$ satisfying $h|_{\eom[P']}\notin \mathscr{P}$. Therefore we have:
    $$\ceo(\{g|_{\eom[P]},h|_{\eom[P']}\})\le_T\ceo(\mathcal{F}\cup\{\ouhe\}).$$
    Let $g'\in\tau(g|_{\eom[P]})$ be a signature of arity $k$, and we have $g'
    \notin \mathscr{A}$ by Lemma \ref{lem:tau  pi maintain A P}. Similarly let $h'\in\tau(h|_{\eom[P']})$ be a signature of arity $k$ and not in $\mathscr{P}$.

    By Lemma \ref{thm:csp=eom}, $\#\csp(\{g',h'\})\equiv_T \ceo(\{\pi(g'),\pi(h')\})$, and since we can rename the variables in the $\#\eo$ framework, we have:
    $$\#\csp(\{g',h'\})\leq_T\ceo(\{g|_{\eom[P]},h|_{\eom[P']}\})\leq_T\ceo(\mathcal{F}\cup\{\ouhe\}).$$
    By Theorem \ref{thm:CSPdichotomy} we have $\ceo(\mathcal{F}\cup\{\ouhe\})$ is \#P-hard.
   
\end{proof}
Now we prove that the $\ouhe$ signature can be constructed in Lemma \ref{3uphard}. We generalize this result as Lemma \ref{lem:obtainouhe}.
\begin{lemma}
    Suppose $f\in\mathcal{F}$ is a $\mitsuup$ signature, but not a $\allup$ signature. Then $$\ceo(\mathcal{F}\cup\{\ouhe\})\le_T\ceoc(\mathcal{F}).$$
    \label{lem:obtainouhe}
\end{lemma}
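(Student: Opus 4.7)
The plan is to realize $\ouhe$ as an $(\mathcal{F}\cup\{\Delta\})$-gate; the desired Turing reduction then follows by replacing each $\ouhe$ in an input instance with this gadget. Since $f$ is $\mitsuup$ but not $\allup$, Definitions \ref{def:thupthdown} and \ref{def:pure} supply a minimum odd integer $k$ together with strings $\alpha_1,\ldots,\alpha_k\in\su(f)$ whose bitwise XOR $\delta=\alpha_1\oplus\cdots\oplus\alpha_k$ lies in $\eosl$. The $\mitsuup$ hypothesis rules out $k=1$ and $k=3$, so $k\geq 5$.

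My approach is to iteratively reduce the arity of $f$ using $\Delta$ while preserving both the $\mitsuup$ property and a surviving non-$\allup$ witness. Classify the $2d$ coordinates of $f$ by their bit pattern $(\alpha_1^{(i)},\ldots,\alpha_k^{(i)})\in\{0,1\}^k$ across the $k$ witness strings. Using a column-count identity that generalizes Lemma \ref{lem:4 string all eo column property} (derived from $\alpha_i\in\eoe$ together with the specified Hamming profile of $\delta$), one sees that once the arity exceeds a constant depending only on $k$, some column type must occur with multiplicity at least $2$. Locating such a pair, I would either pin those coordinates via $\Delta$, or, when direct pinning would collapse the witness, add a self-loop by $\neq_2$ or by a generalized binary disequality (itself realizable from $\mathcal{F}\cup\{\Delta\}$) between them. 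Each such operation drops the arity by $2$, keeps the $\alpha_i$ distinct and still in the support of the reduced signature, and keeps the image of $\delta$ in $\eosl$. The $\mitsuup$ property transfers to the reduced signature because any new $3$-XOR violation in the quotient would lift back to a $3$-XOR violation in $f$, contradicting the hypothesis on $f$.

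Iteration terminates at a bounded-arity signature $g$ that is still $\mitsuup$, still carries a non-$\allup$ witness, and has small-multiplicity column types. At this base case, inspecting the possible structures, together with a short algebraic argument to extract a quaternary generalized disequality $\neq_4^{a,b}$ with $ab\neq 0$ from the remaining support, and then invoking Lemma \ref{lem:neq4ab obtain ouhe}, produces $\ouhe$. The main obstacle will be organizing the case analysis at each reduction step so that both invariants are preserved simultaneously: a naively chosen pin may merge two of the $\alpha_i$'s, erasing the witness, or may introduce a new support string whose $3$-XOR with two existing support strings lands in $\eosl$, violating $\mitsuup$. Resolving this mirrors the operation-based bookkeeping in the proof of Lemma \ref{lem:3midhard}, but now must track $k$ witness strings in place of $3$, and occasionally must fall back on self-loops by generalized binary disequalities rather than direct pinning by $\Delta$ in order to avoid these collisions.
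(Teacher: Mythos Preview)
Your arity-reduction plan is a different route from the paper, and it has a genuine gap in the step where you claim $\mitsuup$ is preserved. For pinning this is fine: any triple in $\su(f^{xy=ab})$ lifts verbatim to $\su(f)$, so neither a $\exists3\nrightarrow$ nor a $\exists3\downarrow$ witness can appear. But once you fall back on a self-loop (ordinary or by a generalized binary disequality with both weights nonzero), a string $\alpha'\in\su(g)$ only guarantees that \emph{some} lift $01\alpha'$ or $10\alpha'$ lies in $\su(f)$. For the $\exists3\downarrow$ condition this is harmless, since whatever lifts you pick, their XOR has the form $e\delta'$ with $e\in\{01,10\}$, still in $\eosl$. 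The problem is $\exists3\nrightarrow$: if $\delta'=\alpha'\oplus\beta'\oplus\gamma'\in\eoe$ with $g(\delta')=0$, all you know is $a\,f(01\delta')+b\,f(10\delta')=0$; both lifts can perfectly well lie in $\su(f)$ (with cancelling values), and then no lifted triple violates $\mitsuup$ in $f$. So the reduced signature can acquire a $\exists3\nrightarrow$ witness that does \emph{not} pull back, and your induction hypothesis breaks. The analogy to Lemma \ref{lem:3midhard} does not help: there the invariant is the \emph{existence} of one bad triple, which is easy to carry forward; here you are trying to maintain the \emph{absence} of all bad triples, which is a different and harder task. Your base case is also only sketched: with $k\ge 5$ the small-multiplicity configuration can have arity up to $2^k$, and ``inspecting the possible structures'' to extract a $\neq_4^{a,b}$ is not a short argument.

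The paper avoids all of this by arguing for a contradiction rather than performing an arity descent. It first proves a structural lemma (Lemma \ref{lem:noouhe}): if $\ouhe$ is \emph{not} realizable from $\mathcal{F}\cup\{\Delta\}$, then for any two support strings $\alpha=\sigma\tau,\beta=\sigma\overline{\tau}$ of $f$, pinning to $\sigma$ yields a signature whose support is \emph{exactly} $\eo^P$ for some pairing $P$ (not merely contained in one). This rigidity is the key idea you are missing. With it in hand, take the minimal odd $k$ and the $k$ witness strings; pick a pair in $P$ on which $\delta$ reads $00$, and use the $\eo^P$ structure to flip any $\alpha_i$ reading $10$ on that pair to one reading $01$, staying inside $\su(f)$ and keeping $\delta$ unchanged. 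Each such ``$0$-moving step'' strictly increases the number of coordinates that are $0$ across all $\alpha_i$, so the process must terminate; yet as long as two of the $\alpha_i$ differ, the step can always be applied again. This contradiction shows $\ouhe$ must in fact be realizable.
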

We introduce some more definitions before proving Lemma \ref{lem:obtainouhe}. For $\alpha,\beta\in\eoe$ and $\dis\alpha\beta=2s$, where $s\in\mathbb{N}_+$, we say $\gamma\in\eoe$ is \textit{between} $\alpha$ and $\beta$ if there exists a positive integer $t<s$ such that $\dis\alpha\gamma=2t$ and $\dis\beta\gamma=2s-2t$. In other words, by reordering some indices, $\alpha=\sigma\tau\pi,\beta=\sigma\overline{\tau\pi}$ and $\gamma=\sigma\tau\overline{\pi}$, where $|\pi|=2t,|\tau|=2s-2t$.

\begin{lemma}
    Suppose $f\in\mathcal{F}$ is a $\mitsuup$ signature, and $\ceo(\mathcal{F}\cup\{\ouhe\})\le_T\ceoc(\mathcal{F})$ does not hold. Then for any $\alpha=\sigma\tau,\beta=\sigma\overline{\tau}\in\su(f)$ (variables reordered), we have $\su(f^{x_1\cdots x_{\len\sigma}=\sigma})=\eo^P$ for some $P$, where $P$ is a pairing of variables of $f^{x_1\cdots x_{|\sigma|}=\sigma}$. 
    \label{lem:noouhe}
\end{lemma}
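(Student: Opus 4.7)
The plan is to argue by contraposition: assume $\su(g) \ne \eo^P$ for every pairing $P$, where $g := f^{x_1\cdots x_{|\sigma|}=\sigma}$, and construct $\ouhe$ from $\mathcal{F}$ using the available $\Delta$, contradicting the hypothesis $\ceo(\mathcal{F}\cup\{\ouhe\}) \not\le_T \ceoc(\mathcal{F})$. Comparing Hamming weights in $\sigma\tau, \sigma\overline\tau \in \eoe$ forces $\sigma \in \eoe$, so the pinning to $\sigma$ is realizable by $|\sigma|/2$ copies of $\Delta$ and preserves the $\mitsuup$ property; hence $g$ is $\mitsuup$ with $\tau, \overline\tau \in \su(g)$.

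The first step is to establish that $\su(g)$ is affine. Closure under complementation is immediate: for any $\eta \in \su(g)$, the XOR $\tau \oplus \overline\tau \oplus \eta = \overline\eta$ lies in $\eoe$, hence in $\su(g)$ by $\mitsuup$. For 3-XOR closure, apply $\mitsuup$ to both $(\alpha,\beta,\gamma)$ and $(\overline\alpha,\beta,\gamma)$ for arbitrary $\alpha,\beta,\gamma \in \su(g)$: the two resulting XORs, $\alpha\oplus\beta\oplus\gamma$ and its complement $\overline{\alpha\oplus\beta\oplus\gamma}$, must avoid $\eosl$; since they are mutual complements, neither may lie in $\eosg$ either, so both lie in $\eoe \cap \su(g)$. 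Lemma \ref{lem: eo+affine= eom} then gives $\su(g) \subseteq \eo^P$ for some pairing $P$, and the contrapositive hypothesis forces strict inclusion.

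The crux is to exploit this strict inclusion to produce $\ouhe$. Identify $\eo^P$ with $\{0,1\}^d$ (where $2d = \mathrm{arity}(g)$) by taking one variable per pair as representative; $\su(g)$ corresponds to an affine subspace $S = z_0 + V \subsetneq \{0,1\}^d$ with $\mathbf{1}\in V$. The goal is to find a set of pair-indices $I \subseteq \{1,\ldots,d\}$ such that pinning each pair $i \in I$ via $\Delta$ to the values $\tau$ takes there (which is consistent because $\tau \in S$) leaves a sub-signature supported exactly on the complementary pair $\{\tau|_{I^c}, \overline{\tau|_{I^c}}\}$. Algebraically this reduces to the conditions $\mathbf{1}_I \in V$ together with $V \cap \{v : v|_I = 0\} = \langle \mathbf{1}_{I^c}\rangle$; once such $I$ is exhibited, the restricted signature is a generalized disequality $\neq_{2(d-|I|)}^{a,b}$ with $ab \ne 0$, and additional $\Delta$-pinnings reduce it to $\neq_4^{a,b}$, from which Lemma \ref{lem:neq4ab obtain ouhe} delivers $\ouhe$.

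The main obstacle is verifying that such $I$ always exists. When $\dim V = 1$ so $V = \langle\mathbf{1}\rangle$, the support $S$ is already a complementary pair and $g$ is itself a generalized disequality, handled by direct pinning to arity $4$. When $\dim V \ge 2$, I would search for a nonzero $v \in V \setminus \{\mathbf{1}\}$ of minimal weight and set $I = \mathrm{supp}(v)$; this yields $\mathbf{1}_I = v \in V$ for free, and the minimality of $|I|$ is what should squeeze the kernel $V \cap \{v' : v'|_I = 0\}$ down to its unavoidable element $\mathbf{1}_{I^c}$. The delicate case is when no choice of $v$ produces a one-dimensional kernel; I would then analyze the coordinate equivalence relation $i \sim j \iff e_i + e_j \in V$ on the pair-indices, and either exploit a non-singleton class to build $I$, or proceed by induction on $\dim V$, using auxiliary self-loop gadget constructions (via $\neq_2$ or a generalized binary disequality produced from the machinery of Section \ref{sec:generateprocess}) to merge pair-indices and reduce to a smaller-arity instance where a suitable $I$ is readily available.
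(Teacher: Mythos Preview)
Your affinity argument is correct and matches the paper: closure under complement via $\tau\oplus\overline\tau\oplus\eta=\overline\eta$, then 3-XOR closure by applying $\mitsuup$ to both $(\alpha,\beta,\gamma)$ and $(\overline\alpha,\beta,\gamma)$. This gives $\su(g)\subseteq\eo^P$.

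The gap is in your extraction of $\ouhe$ from a strict inclusion. Your heuristic \emph{sets $I$ on the wrong side}: with $I=\mathrm{supp}(v)$ for $v$ of minimal weight, the kernel $V\cap\{w:w|_I=0\}=\{w\in V:\mathrm{supp}(w)\subseteq\mathrm{supp}(\overline v)\}$ lives over the \emph{large} set $\mathrm{supp}(\overline v)$, and minimality of $|v|$ says nothing about it. Concretely, take $d=4$, $V=\langle e_1,e_2,e_3+e_4\rangle$: the minimal-weight choice $v=e_1$ gives a two-dimensional kernel $\langle e_2,e_3+e_4\rangle$, not $\langle\mathbf{1}_{I^c}\rangle$. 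The fix is to swap roles: choose a minimal-\emph{support} element $u\in V\setminus\{0\}$ and set $I^c=\mathrm{supp}(u)$ (so $I=\mathrm{supp}(\overline u)$); then minimality forces $\{w\in V:\mathrm{supp}(w)\subseteq\mathrm{supp}(u)\}=\{0,u\}=\langle\mathbf{1}_{I^c}\rangle$ exactly. You then still owe the check $|u|\ge2$, which is a short but non-obvious extra argument (if every minimal-support element had weight $1$ then $V$ would be coordinate-generated and $\mathbf{1}\in V$ would force $V=\{0,1\}^d$). Your fallback suggestions do not supply any of this.

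The paper avoids this linear algebra entirely by running the induction in the opposite direction. Rather than shrinking $g$ to a $\neq_4^{a,b}$, it shows directly that $\su(g)$ is \emph{all} of $\eo^P$ by induction on $\mathrm{Dist}(\alpha,\beta)$: if no string lies strictly between $\tau$ and $\overline\tau$ then $g$ is already $\neq_{2d}^{a,b}$ and $\ouhe$ follows; otherwise one repeatedly finds $\gamma\in\su(g)$ with $\mathrm{Dist}(\tau,\gamma)=2$, pins the two coordinates where $\tau,\gamma$ differ to $\gamma$'s values, applies the inductive hypothesis to get a full $\eo^{P'}$ on the remaining variables, and then uses $\oplus\tau\oplus\gamma$ (which flips exactly that pair) together with affinity to promote this to a full $\eo^{P'\cup\{(x_1,x_2)\}}$. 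Both routes are valid once completed; the paper's is shorter because the single observation ``no intermediate string $\Rightarrow$ generalized disequality'' replaces the entire search for a pinning set $I$.
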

\begin{proof}
    The lemma is trivial when $\dis\alpha\beta=2$. We begin with proving several statements.
    
    Firstly, we prove that if $\gamma,\delta,\epsilon\in\su(f)$ are between $\alpha$ and $\beta$ or equal to one of them, then $\mu=\gamma\oplus\delta\oplus\epsilon\in \eoe$. Let $\gamma'=\gamma\oplus\alpha\oplus\beta,\delta'=\delta\oplus\alpha\oplus\beta,\epsilon'=\epsilon\oplus\alpha\oplus\beta$ and $\mu'=\gamma'\oplus\delta'\oplus\epsilon'$, then $\gamma',\delta',\epsilon'\in\eoe$. As $f$ is a $\mitsuup$ signature, we also have  $\gamma',\delta',\epsilon'\in \su(f)$ and consequently both $\mu,\mu'\in\eog$. If $\mu\in \eosg$, then $\mu'\in\eosl$, which is a contradiction. Therefore, $\mu\in\eoe$.
    
    Secondly, if $\dis\alpha\beta\ge4$, then there must exist a $\gamma\in\su(f)$ between $\alpha$ and $\beta$. Otherwise, $f^\sigma$ would be equal to $\neq_{\dis\alpha\beta}^{a,b}, ab\neq0$. Without loss of generality, we may assume $\su(\neq_{\dis\alpha\beta}^{a,b})=\{01\dots01,10\dots10\}$. By adding a self-loop on $x_{2k+3},x_{2k+4}$ for each $1\le k\le \dis\alpha\beta/2-2$, we realize $\neq_4^{a,b}$, by Lemma \ref{lem:neq4ab obtain ouhe} we may further realize $\ouhe$, which causes a contradiction. 
    
    By Lemma \ref{lem: eo+affine= eom}, the first statement implies that $\su(f^{x_1\cdots x_{\len\sigma}=\sigma})$ is affine, thus it is a subset of $\eom[P]$ for some pairing $P$ of variables of $f^{x_1\cdots x_{\len\sigma}=\sigma}$. We now use the induction to prove this lemma. When $\dis\alpha\beta=2$, it's obvious that $\su(f^\sigma)=\eo^P$. Assume that this lemma is correct when $\dis\alpha\beta<2k$, where $k\geq2$. Now suppose $\dis\alpha\beta=2k$. By repeatedly using the second statement, there would be a $\gamma\in\su(f^\sigma)$ between $\alpha$ and $\beta$ satisfying $\dis\alpha\gamma=2$. Suppose $\len\sigma=L$. After reordering variables, we may let $\alpha=01\sigma\tau',\beta=10\sigma\overline{\tau'}$ and $\gamma=10\sigma\tau'$. By the induction hypothesis, there exists a pairing $P'$ such that $\su(f^{x_1\cdots x_{L+2}=10\sigma})=\eo^{P'}$. Let $P=P'\cup\{(x_1,x_2)\}$ be a pairing of $f^{x_3\cdots x_{L+2}=\sigma}$. For any $\mu\in \eo^{P'}$, $\delta=10\sigma\mu\in \su(f)$, and $\epsilon=\delta\oplus\alpha\oplus\gamma=01\sigma\mu\in\su(f)$. This implies $\eo^P\subseteq\su(f^{x_3\cdots x_{L+2}=\sigma})$. Since $\su(f^{x_3\cdots x_{L+2}=\sigma})\subseteq\eo^Q$ for some pairing $Q$, we have $|\su(f^{x_3\cdots x_{L+2}=\sigma})|=|\eom[Q]|=|\eom[P]|$. Consequently, $\su(f^\sigma)=\eom[P]$. The induction is done.
\end{proof}

Now we are ready to prove Lemma \ref{lem:obtainouhe}.

\begin{proof}
    Suppose otherwise. As $f$ is not a $\allup$ signature, there exists a minimal odd integer $k$ such that there exist $k$ strings $\alpha_1,...,\alpha_k\in \su(f)$ satisfying $\delta=\bigoplus_{1\le i\le k}\alpha_i\in \eosl$. 
    Without loss of generality, let $\alpha_1=\sigma\tau,\alpha_2=\sigma\overline{\tau}$ and assume that $|\sigma|=s,|\tau|=t$. By Lemma \ref{lem:noouhe}, there exists a pairing $P$ such that $\su(f^{x_1\cdots x_s=\sigma})=\eom[P]$. Assume that $(x_{s+1},x_{s+2})\in P$ and we let $\alpha_1=\sigma01\tau',\alpha_2=\sigma10\overline{\tau'}$. Then by Lemma \ref{lem:noouhe} we have $\beta=\sigma10\tau'\in \su(f)$, and $\oplus\beta\oplus\alpha_1$ can be seen as an operation that flips the two bits at $x_{s+1},x_{s+2}$. For any $1\le l\le k$, $\alpha_l\oplus\beta\oplus\alpha_1\in \eog$, so $x_{s+1}x_{s+2}$ cannot be $11$ at $\alpha_l$.

    Let $\gamma=\bigoplus_{3\le i\le k}\alpha_i\in\eog$. Then we can write $\delta=\gamma\oplus\alpha_1\oplus\alpha_2\in \eosl$. Noticing $\oplus\alpha_1\oplus\alpha_2$ can be seen as an operation that flips all the last $t$ bits, therefore there are more $1'$s in the last $t$ bits of $\gamma$. By the pigeonhole principle, there exists a pair $(x_p,x_q)$ in $P$ such that $\gamma_{p}\gamma_{q}$ is $11$. This also implies that $(\delta)_p=(\delta)_q=0$. Without loss of generality, we still assume this pair $(x_p,x_q)$ to be $(x_{s+1},x_{s+2})$.
    
    Now we apply the following operation on $\alpha_i$ for each $1\le i\le k$, and in the following we use $(\alpha)_i$ to denote the $i$th bit of a string $\alpha$: If $(\alpha_i)_{s+1}(\alpha_i)_{s+2}=10$, we replace $\alpha_i$ with $\alpha'_i=\alpha_i\oplus\beta\oplus\alpha_1$ (i.e. flip the two bits at $x_{s+1},x_{s+2}$); otherwise we keep $\alpha_i$ unchanged. 
    
    We denote all these operations together as a 0-moving step. After this 0-moving step, all the obtained strings, which we denoted as $\alpha'_1,...,\alpha'_k$, still belong to $\su(f)$. Furthermore, let $\delta'=\bigoplus_{1\le i\le k}\alpha'_i$, by definition we have $(\delta')_i=(\delta)_i$ for all $i\neq s+1,s+2$.
    Since $(\alpha'_i)_{s+1}=0$ holds for each $1\le i\le k$ and $(\delta)_{s+1}(\delta)_{s+2}=00$, $(\delta')_{s+1}(\delta')_{s+2}=00$ as well. This gives $\delta'=\delta\in\eosl$, meaning that this property still holds after the 0-moving step.  
    
    Now we repeatedly apply the 0-moving step on $\alpha_1,...,\alpha_k$. Suppose the arity of $f$ is $n$. For a variable $x_s,1\le s\le n$, we say $x_s$ is all-0 if $(\alpha_i)_s=0$ for all $1\le i \le k$. Notice that in each of such steps, the number of all-0 variables increases exactly by 1. 
    This means at most $n$ steps can be applied. However on the other hand, the 0-moving step can be applied as long as there are 2 different strings among the $k$ strings we currently choose, implying that the step can be applied for unlimited times. This contradiction finishes our proof.
\end{proof}

\section{Extensions of the dichotomy}\label{sec:extend}
Although this article focuses on \ceo\ problems, the dichotomy can actually be extended to a wider range of signature sets whose complexity was previously unknown. In this section, we introduce two related dichotomies which can be implied from the dichotomy for \ceo.

\subsection{Dichotomy for \upside/ \downside\ signatures}
In this section we deal with the dichotomy for \upside/\downside\ signatures. Suppose $f$ is a signature of arity $2k-1$ or $2k$ ($k\in \mathbb{N}^+$). If $\su(f)\subseteq\eog$, we say $f$ is an \textit{\upside}\ signature; if $\su(f)\subseteq\eosg$, we say $f$ is a \textit{\supside}\ signature. Similarly, if $\su(f)\subseteq\eol$, we say $f$ is a \textit{\downside}\ signature; if $\su(f)\subseteq\eosl$, we say $f$ is a \textit{\sdownside}\ signature. 

For each signature $f\in \mathcal{F}$ of arity $k$ and $\alpha\in\{0,1\}^k$, let 
    \begin{equation}
        f|_{\eo}(\alpha)=\begin{cases}
            f(\alpha), & \alpha\in\eoe;\\
            0, & \alpha\notin \eoe.\notag
        \end{cases}
    \end{equation}
    and $\mathcal{F}|_{\eo}=\{f|_{\eo}|f\in\mathcal{F}\}$. We remark that for a signature $f$ of odd arity, $f|_{\eo}$ remains constant at 0 by this definition.

Then we have the following corollary:
\begin{corollary}
    Suppose $\mathcal{F}$ is a set of \upside\ (or \downside\ respectively) signatures.  Then $\hol(\neq_2\mid\mathcal{F})$ is \#P-hard, unless all signatures in $\mathcal{F}|_{\eo}$ are $\mitsuup$ signatures or all signatures in $\mathcal{F}|_{\eo}$ are $\mitsudown$ signatures, and $\mathcal{F}|_{\eo}\subseteq \eom[\mathscr{A}]$ or $\mathcal{F}|_{\eo}\subseteq \eom[\mathscr{P}]$, in which cases it is in $\text{FP}^{\text{NP}}$.
    \label{coro:upside}
\end{corollary}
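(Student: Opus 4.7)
The plan is to prove the Turing equivalence $\hol(\neq_2\mid\mathcal{F})\equiv_T \ceo(\mathcal{F}|_{\eo})$ for every set $\mathcal{F}$ of \upside\ signatures, and then invoke Theorem~\ref{thmdetail} to conclude. The intuition is that the bipartite template $\hol(\neq_2\mid \cdot)$ combined with the one-sided support constraint $\su(f)\subseteq\eog$ automatically forces every nonzero assignment to be Eulerian at each RHS vertex, so the \upside\ problem collapses to the \eo\ problem.

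To prove the equivalence, I would analyse an arbitrary signature grid $\Omega(G,\pi)$ of $\hol(\neq_2\mid\mathcal{F})$ via a global counting argument. Let $\sigma$ be an assignment with nonzero contribution. Each LHS vertex carries $\neq_2$, so its two incident edges take opposite values; summing over LHS vertices (and noting that each edge has exactly one LHS endpoint and one RHS endpoint) yields
\[
\sum_{v\in V_R}\#_0(\sigma|_{E(v)}) \;=\; \sum_{v\in V_R}\#_1(\sigma|_{E(v)}).
\]
On the RHS, every $f\in\mathcal{F}$ is \upside, so $\#_1(\sigma|_{E(v)})\ge \#_0(\sigma|_{E(v)})$ at every $v\in V_R$. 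The displayed equality together with termwise nonnegativity forces $\#_1(\sigma|_{E(v)}) = \#_0(\sigma|_{E(v)})$ at every RHS vertex, i.e.\ $\sigma|_{E(v)}\in\eoe$. Hence replacing each $f$ by $f|_{\eo}$ on the RHS preserves every contributing term of the partition function, giving $\hol(\neq_2\mid\mathcal{F})\equiv_T \hol(\neq_2\mid\mathcal{F}|_{\eo})\equiv_T \ceo(\mathcal{F}|_{\eo})$, where the second equivalence is just the identification in the preliminaries.

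With this equivalence in hand, Theorem~\ref{thmdetail} applied to $\ceo(\mathcal{F}|_{\eo})$ gives the dichotomy directly: $\hol(\neq_2\mid\mathcal{F})$ is \#P-hard unless all signatures in $\mathcal{F}|_{\eo}$ are $\mitsuup$ or all are $\mitsudown$, and $\mathcal{F}|_{\eo}\subseteq\eom[\mathscr{A}]$ or $\mathcal{F}|_{\eo}\subseteq\eom[\mathscr{P}]$, in which case the problem lies in $\pnp$. The \downside\ case is symmetric under the involution $0\leftrightarrow 1$, with the inequality $\#_0\ge\#_1$ replacing $\#_1\ge\#_0$ in the counting step.

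No serious obstacle is expected; the only subtlety is the treatment of odd-arity signatures in $\mathcal{F}$. For such an $f$, by construction $f|_{\eo}\equiv 0$, and any instance containing $f$ automatically has partition function zero by the same EO-forcing argument; this is consistent with the dichotomy because a zero signature vacuously lies in every tractable class and such instances are trivially polynomial-time computable.
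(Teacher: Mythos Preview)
Your proposal is correct and follows essentially the same approach as the paper: both arguments use the LHS $\neq_2$ constraints to force a global balance of $0$'s and $1$'s, combine this with the termwise $\eog$ inequality on the RHS to conclude each local assignment lies in $\eoe$, and then invoke the main dichotomy for $\ceo(\mathcal{F}|_{\eo})$. The only cosmetic difference is that the paper phrases the counting step as a contradiction at a single vertex with $\delta_f\in\eosg$, whereas you sum the inequalities directly; the odd-arity remark is handled identically.
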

The proof use a similar idea as Lemma \ref{lem:3upeff}.
\begin{proof}
    We assume that $\mathcal{F}$ is a set of \upside\ signatures. The proof is similar when $\mathcal{F}$ is a set of \downside\ signatures.

    Given any instance $I$, we first show that any support $\delta_f\in\eosg$ of a signature $f\in\mathcal{F}$ cannot be effective. Suppose otherwise and $\delta=\delta_f\delta'$ satisfies that $\omega_I(\delta)\neq 0$.
    Each variable in $I$ is incident to a signature in $\mathcal{F}$ on RHS and a $\neq_2$ signature on LHS, thus both LHS and RHS cover each variable exactly once. For each signature $g\in \mathcal{F}$ on RHS, $\su(g)\subseteq\eog$, thus $\delta'\in\eog$. As $\delta_f\in\eosg$, we also have $\delta\in\eosg$. However, by the restriction from LHS, we have $\delta\in \eoe$, which is a contradiction.

    By the analysis above we have $\hol(\neq_2\mid\mathcal{F})\equiv_T \hol(\neq_2\mid\mathcal{F}|_{\eo})$. Since $\mathcal{F}|_{\eo}$ only consists of $\eo$ signatures and signatures of odd arity remain constant at 0, by Theorem \ref{thm:dicoceo} we are done.
\end{proof}
\subsection{Dichotomy for \sw\ signatures}
In this section, we deal with \sw\ signatures. Suppose $f$ is a signature of arity $k$. If $f(\alpha)$ only takes nonzero values at Hamming weight $d,0\le d\le k$, we say $f$ is a \textit{\sw}\ signature. By definition, $f$ is an \eo\ signature if $2d=k$, a strictly \upside\ signature if $2d>k$, and a strictly \downside\ signature if $2d< k$. 

For each single-weighted signature $f\in \mathcal{F}$ of arity $k$ which only takes nonzero values at Hamming weight $d,0\le d\le k$, let 
    \begin{equation}
        f_{\to\eo}=\begin{cases}
            f\otimes \Delta_0^{2d-k} & 2d\ge k;\\
            f\otimes \Delta_1^{k-2d}, & 2d<k.\notag
        \end{cases}
    \end{equation}
    Let $\mathcal{F}_{\to\eo}=\{f_{\to\eo}|f\in\mathcal{F}\}$. We remark that for an \eo\ signature $f$, $f_{\to\eo}=f$. We also remark that each signature in $\mathcal{F}_{\to\eo}$ is an \eo\ signature.

Then we have the following corollary:
\begin{corollary}
    Suppose $\mathcal{F}$ is a set of \sw\ signatures. Then $\hol(\neq_2\mid\mathcal{F})$ is \#P-hard, unless one of the following holds, in which cases it is in $\text{FP}^{\text{NP}}$.
    \begin{enumerate}
        \item All signatures in $\mathcal{F}$ are $\eog$ (or $\eol$ respectively) signatures.  In addition, all signatures in $\mathcal{F}|_{\eo}$ are $\mitsuup$ signatures or all signatures in $\mathcal{F}|_{\eo}$ are $\mitsudown$ signatures, and $\mathcal{F}|_{\eo}\subseteq \eom[\mathscr{A}]$ or $\mathcal{F}|_{\eo}\subseteq \eom[\mathscr{P}]$;
        \item There exist a signature that is not $\eog$ and a signature that is not $\eol$ belonging to $\mathcal{F}$.  In addition, All signatures in $\mathcal{F}_{\to\eo}$ are $\mitsuup$ signatures or all signatures in $\mathcal{F}_{\to\eo}$ are $\mitsudown$ signatures, and $\mathcal{F}_{\to\eo}\subseteq \eom[\mathscr{A}]$ or $\mathcal{F}_{\to\eo}\subseteq \eom[\mathscr{P}]$.
    \end{enumerate}
    \label{coro:sw}
\end{corollary}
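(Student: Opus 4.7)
The strategy is to reduce the corollary to Theorem \ref{thmdetail} by handling the two cases of the statement separately. In Case 1, I would reuse the effectivity argument from the proof of Corollary \ref{coro:upside} essentially verbatim: assuming all signatures in $\mathcal{F}$ lie in $\eog$ (the $\eol$ case is symmetric), for any instance $I$ and any vertex assigned $f$, every support string $\delta_f \in \eosg$ is ineffective, because the global $\neq_2$ balance on the LHS (equal numbers of $0$'s and $1$'s) combined with the remaining part being in $\eog$ would contradict $\delta_f \in \eosg$. Hence each $f$ can be replaced by $f|_{\eo}$ without altering the partition function, giving $\hol(\neq_2\mid\mathcal{F}) \equiv_T \hol(\neq_2\mid\mathcal{F}|_{\eo}) = \ceo(\mathcal{F}|_{\eo})$, to which Theorem \ref{thmdetail} applies directly to deliver the dichotomy.

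For Case 2, I plan to show $\hol(\neq_2\mid\mathcal{F}) \equiv_T \ceo(\mathcal{F}_{\to\eo})$ and then invoke Theorem \ref{thmdetail} on $\mathcal{F}_{\to\eo}$, whose members are EO signatures by construction. The forward direction is a pairing argument: given an instance $I$ of $\hol(\neq_2\mid\mathcal{F})$, first check whether $\sum_v (2d_v - k_v) = 0$; if not, the global balance forced by the LHS $\neq_2$'s gives $\text{Z}(I) = 0$ and I return $0$. Otherwise, I replace each vertex assigned $f$ by one assigned $f_{\to\eo}$, introducing $a_v = 2d_v - k_v$ extra $\Delta_0$-edges at each strictly $\eosg$ vertex and $b_v = k_v - 2d_v$ extra $\Delta_1$-edges at each strictly $\eosl$ vertex; the balance condition ensures that these totals match, so I pair the extras arbitrarily through fresh LHS $\neq_2$-vertices. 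Since connecting a $\Delta_0$-end to a $\Delta_1$-end through $\neq_2$ contributes a factor of $1$, the resulting instance $I'$ of $\ceo(\mathcal{F}_{\to\eo})$ satisfies $\text{Z}(I') = \text{Z}(I)$.

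The backward direction requires simulating the $\Delta_0$- and $\Delta_1$-tensor factors of $\mathcal{F}_{\to\eo}$ using gadgets built from $\mathcal{F}$. The main building blocks come from a strictly $\eosg$ signature $f^{\ast} \in \mathcal{F}$ with imbalance $a > 0$ and a strictly $\eosl$ signature $g^{\ast} \in \mathcal{F}$ with imbalance $b > 0$, both guaranteed to exist in Case 2. Adding a self-loop through $\neq_2$ drops the arity by $2$ and the weight by $1$, preserving imbalance, so after $(k^{\ast} - a)/2$ self-loops on $f^{\ast}$ the result is supported only on the all-$1$'s string and equals a nonzero scalar multiple of $\Delta_1^{\otimes a}$; symmetrically $g^{\ast}$ yields a scalar multiple of $\Delta_0^{\otimes b}$. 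Given an instance $I'$ of $\ceo(\mathcal{F}_{\to\eo})$, I would replace each vertex assigned $f_{\to\eo}$ with one assigned the underlying $f$, route its extra slots to external edges of sufficiently many $\Delta_0^{\otimes b}$- or $\Delta_1^{\otimes a}$-gadgets, and pair any residual gadget edges against each other. With enough gadget copies the residual counts can be balanced (padding with extra $\Delta_0^{\otimes b}$-versus-$\Delta_1^{\otimes a}$ pairings, and drawing on multiple imbalances from other strictly up or down signatures in $\mathcal{F}$ if $\gcd(a,b)$ obstructions arise), producing an instance $I$ of $\hol(\neq_2\mid\mathcal{F})$ whose partition function equals $\text{Z}(I')$ times a known nonzero constant.

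The principal obstacle will be guaranteeing that the self-loop-realized $\Delta_1^{\otimes a}$ and $\Delta_0^{\otimes b}$ are actually nonzero: the pairs of variables chosen for self-looping must avoid making the signature vanish identically. Since $f^{\ast}$ and $g^{\ast}$ are nontrivial single-weighted signatures, generic pairings survive, and the degenerate cases in which every admissible self-loop cancels force a very rigid support structure (roughly, a $\neq$-type signature with antisymmetric values); such configurations either already place $\mathcal{F}_{\to\eo}$ inside one of the tractable classes of Theorem \ref{thmdetail}, in which case no reduction is needed, or admit alternative gadgets built by first tensoring multiple copies of $f^{\ast}$ and $g^{\ast}$ before contracting. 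Once this technicality is handled, the forward reduction transports the $\pnp$ algorithm of Theorem \ref{thmdetail} for $\mathcal{F}_{\to\eo}$ into the tractable case, while the backward reduction transports \#P-hardness into the hard case, completing the dichotomy.
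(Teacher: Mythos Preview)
Your Case~1 argument matches the paper's exactly (via Corollary~\ref{coro:upside}), and your forward reduction in Case~2 is essentially the same construction the paper uses. The genuine gap is in your backward reduction for Case~2, where both obstacles you flag are left unresolved.

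For the non-vanishing issue, your proposed dichotomy (``either the degenerate cases place $\mathcal{F}_{\to\eo}$ in a tractable class, or tensor powers help'') is unsupported and unnecessary. The correct fix is the three-sum trick already present in the paper (Lemma~\ref{lem:self-loop to nontrivial eo} and its analogue): if $\alpha=100\delta\in\su(f^\ast)$ then among the three self-loops on $\{x_1,x_2\}$, $\{x_1,x_3\}$, $\{x_2,x_3\}$ at least one gives a nonzero value at $0\delta$, since $f(\alpha)+f(\beta)=f(\alpha)+f(\gamma)=f(\beta)+f(\gamma)=0$ would force $f(\alpha)=0$. Iterating this by induction on the Hamming weight of a support string always reaches a nonzero multiple of $\Delta_0^{\otimes b}$ (resp.\ $\Delta_1^{\otimes a}$); degenerate cases simply do not occur.

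For the $\gcd$ obstruction, your suggestion to ``draw on multiple imbalances'' can be made to work but requires building a gadget for \emph{every} strictly up/down signature in $\mathcal{F}$ and then solving a Diophantine system. The paper sidesteps this entirely: once a nonzero multiple of $\Delta_0^{\otimes b}$ is realized, Lemma~\ref{lem:linwang} immediately yields the \emph{unary} $\Delta_0$ (and likewise $\Delta_1$). With individual $\Delta_0,\Delta_1$ in hand there is no arithmetic obstruction whatsoever, and the paper then reduces to $\ceoc(\mathcal{F}_{\to\eo})$ (not $\ceo$) and invokes Theorem~\ref{thm:dicoceoc}. This two-stage route---first extract unary pins, then reduce---is both simpler and complete; your direct route to $\ceo(\mathcal{F}_{\to\eo})$ is salvageable but needs the three-sum argument plus either Lemma~\ref{lem:linwang} or a careful Diophantine analysis you have not supplied.
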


\begin{proof}
    If all signatures in $\mathcal{F}$ are \upside\ (or \downside\ respectively) signatures, we are done by Corollary \ref{coro:upside}. Consequently, we may assume there exist an $f\in \mathcal{F}$ which is not an \upside\ signature and a $g\in \mathcal{F}$ which is not a \downside\ signature. The proof consists of two parts. The first part realize the pinning signatures on the left side, while the second part reduce the problem to a $\ceoc$ problem with the help of the pinning signatures. 

     \subparagraph{Obtaining $\pin_0,\pin_1$ signatures} \par In this part, we first prove the statement $\hol(\neq_2\mid\mathcal{F})\equiv_T \hol(\neq_2\mid\mathcal{F}\cup\{\pin_0\})$. Suppose $f$ is of arity $k$. As $f$ is \sw\ and not \upside, $f$ can only be a \sdownside\ signature and 
     there exists an $\alpha\in\su(f)\cap\eosl$. We prove this statement by induction on $\#_1(\alpha)$. If $\#_1(\alpha)=0$, $f=\pin_0^{\otimes k}$, and we are done by Lemma \ref{lem:linwang}. 
     
     Suppose the statement holds when $\#_1(\alpha)=d\ge 0$, and we prove the statement for the case $\#_1(\alpha)=d+1$. As $d\ge 0$, $\#_1(\alpha)\ge 1$ and $\#_0(\alpha)\ge 2$. Without loss of generality, we may assume $\alpha=100\delta$. We also let $\beta=010\delta,\gamma=001\delta$. If $f(\alpha)+f(\beta)\neq 0$, by adding a self-loop by $\neq_2$ to $x_1,x_2$, we obtain the signature $f_{12}$ satisfying $f_{12}(0\delta)=f(\alpha)+f(\beta)\neq 0$. Since $\#_1(0\delta)=d$, we are done by induction. Similarly, if $f(\alpha)+f(\gamma)\neq 0$ or $f(\beta)+f(\gamma)\neq 0$, by adding a self-loop by $\neq_2$ to $x_1,x_3$ or $x_2,x_3$ we are done as well. As $f(\alpha)\neq 0$, at least one of $\{[f(\alpha)+f(\beta)\neq 0],[f(\alpha)+f(\gamma)\neq 0],[f(\beta)+f(\gamma)\neq 0]\}$ holds, and consequently the statement is proved.

     Similarly, with the signature $g$ we can also obtain the $\pin_1$ signature. Consequently, we have
     
     $$\hol(\neq_2\mid\mathcal{F})\equiv_T \hol(\neq_2\mid\mathcal{F}\cup\{\pin_0,\pin_1\})$$

     \subparagraph{Reducing from $\ceoc$} 
     \iffalse
     Each signature in $\mathcal{F}$ can either be an \eo\ signature, a strictly \upside\ signature, or a strictly \downside\ signature. We obtain the signature set $\mathcal{F}'$ in the following way:
     
     Set $\mathcal{F}'=\{\pin\}$. For each $f\in\mathcal{F}$:
     \begin{enumerate}
         \item If $f$ is an \eo\ signature, add $f$ to $\mathcal{F}'$.
         \item  If $f$ is a strictly  \upside\ signature and there exists an $\alpha\in\su(f)$ satisfying $\#_1(\alpha)-\#_0(\alpha)=a$, add $f\otimes \pin_0^{\otimes a}$ to $\mathcal{F}'$.
         \item  If $f$ is a strictly \downside\ signature and there exists a $\beta\in\su(f)$ satisfying $\#_0(\beta)-\#_1(\beta)=b$, add $f\otimes \pin_1^{\otimes b}$ to $\mathcal{F}'$.
     \end{enumerate}
     
     Since each signature in $\mathcal{F}$ is a \sw\ signature, it is obvious that each signature in $\mathcal{F}'$ is an \eo\ signature and $\hol(\neq_2\mid\mathcal{F}'\cup\{\pin\})$ is exactly $\ceoc(\mathcal{F}')$. 
     \fi
     In this part, we prove that 
     
     $$\hol(\neq_2\mid\mathcal{F}\cup\{\pin_0,\pin_1\})\equiv_T \hol(\neq_2\mid\mathcal{F}_{\to\eo}\cup\{\pin\})$$
     
     Each signature in $\mathcal{F}_{\to\eo}\cup\{\pin\}$ can be constructed from signatures in $\mathcal{F}\cup\{\pin_0,\pin_1\}$  by tensor product, thus $\hol(\neq_2\mid\mathcal{F}_{\to\eo}\cup\{\pin\})\le_T \hol(\neq_2\mid\mathcal{F}\cup\{\pin_0,\pin_1\})$. On the other hand, for each instance $I$ of $\hol(\neq_2\mid\mathcal{F}\cup\{\pin_0,\pin_1\})$, if there exists a $\delta$ such that $\omega_I(\delta)\neq 0$, then $\#_0(\delta)=\#_1(\delta)$ by the restriction from LHS. As each signature from RHS is a \sw\ signature, the values of $\#_0(\delta),\#_1(\delta)$ are also decided by the restriction from RHS. If $\#_0(\delta)\neq\#_1(\delta)$ by the restriction from RHS, it would be a contradiction and for each possible $\delta$ we have $\omega_I(\delta)=0$ and consequently $\text{Z}(I)=0$, making the reduction trivial.

     Otherwise, the tensor product of all signatures from RHS in $I$ form an \eo\ signature. We construct the instance $I'$ of $\hol(\neq_2\mid\mathcal{F}_{\to\eo}\cup\{\pin\})$ in the following way. Here, by replacing a signature $f$ with $f'$, we mean that we increase the arity of $f$ by tensoring a certain number of $\pin_0$ or $\pin_1$, while keep the origin connections that $f$ involves unchanged.
     \begin{enumerate}
         \item Replacing each $\pin_0$ in $I$ with $\pin$, and each $\pin_1$ with $\pin$ as well. 
          \item  If $f$ is an \upside\ signature and there exists an $\alpha\in\su(f)$ satisfying $\#_1(\alpha)-\#_0(\alpha)=a$, replace $f$ with $f\otimes \pin_0^{\otimes a}$ from $\mathcal{F}_{\to\eo}$.
         \item  If $f$ is a \downside\ signature and there exists a $\beta\in\su(f)$ satisfying $\#_0(\beta)-\#_1(\beta)=b$, replace $f$ with $f\otimes \pin_1^{\otimes b}$ from $\mathcal{F}_{\to\eo}$. 
         \item For each introduced $\pin_0$, connect it to an introduced $\pin_1$ via a $\neq_2$ signature on LHS.
     \end{enumerate}

    Each signature in $\mathcal{F}_{\to\eo}\cup\{\pin\}$ is an \eo\ signature, and consequently the tensor product of all signatures from the left side in $I'$ also forms an \eo\ signature. This means that when constructing $I'$, the number of $\pin_0$ introduced equals that of $\pin_1$  introduced, and consequently the fourth step in the construction leaves no dangling signatures. Also notice that in the construction, we only introduce several connecting components formed by a $\pin_0$ and a $\pin_1$, and each of them has no effect to the partition function since the value of each equals 1. Consequently, $\text{Z}(I)=\text{Z}(I')$ and we have $\hol(\neq_2\mid\mathcal{F}\cup\{\pin_0,\pin_1\})\le_T \hol(\neq_2\mid\mathcal{F}_{\to\eo}\cup\{\pin\})$.

     Combining the two parts, we prove that $\hol(\neq_2\mid\mathcal{F})\equiv_T \hol(\neq_2\mid\mathcal{F}\cup\{\pin_0,\pin_1\})\equiv_T \ceoc(\mathcal{F}_{\to\eo})$, and by Theorem \ref{thm:dicoceoc} we are done.
\end{proof}

\section{Conclusion and discussion}\label{sec:conclusion}

In this article, we prove an $\pnp$ vs. \#P dichotomy for complex \ceo\ over Boolean domain, and extend it to dichotomies for \upside, \downside\ and \sw\ signatures respectively. 

We remark that for most of the cases in this article, we have already achieved an FP vs. \#P dichotomy. The only exception is that all signatures in $\mathcal{F}$ are $\mitsuup$ signatures or all signatures in $\mathcal{F}$ are $\mitsudown$ signatures, and $\mathcal{F}\subseteq \eom[\mathscr{A}]$ or $\mathcal{F}\subseteq \eom[\mathscr{P}]$ while $\mathcal{F}$ is neither \ba[0] nor \ba[1]. We remark that such case does exist. The \eo\ signature $f_{56}$, introduced in \cite{meng2024p}, is a $\mitsuup$ $\eom[\mathscr{P}]$ signature, and does not satisfy the \ba[0] or \ba[1] property. In order to eliminate the discrepancy between the $\pnp$ vs. \#P dichotomy and an FP vs. \#P dichotomy, it is necessary to establish a complexity classification for the support identification problem. A decision problem of this type is referred to as a Boolean constraint satisfaction problem. This specific area of study has been the subject of investigation by \cite{feder2006classification}. Nevertheless, a comprehensive complexity classification remains to be established, and the issue addressed in this article is not yet encompassed by extant research. 

Seeking a complete dichotomy for \hol\  is also a worthwhile pursuit. Despite the fact that this problem has been open for over fifteen years, we hope that the results presented in this paper will contribute to the advancement of the field. 
%%
%% Bibliography
%%

%% Please use bibtex, 

\bibliography{ref}

\appendix

\end{document}